\newcommand\wt[1]{\widetilde{#1} }
\pgfplotsset{compat=newest}
\theoremstyle{plain}
\newtheorem{dfn}{Definition} 
\newtheorem{prop}{Proposition}
\newtheorem{lem}{Lemma}
\newtheorem{cor}{Corollary}
\newtheorem*{conject}{Conjecture}
\newtheorem{rmk}{Remark}
\declaretheorem[name=Theorem]{theorem}
\renewcommand{\paragraph}[1]{%
	\par 
	\addvspace{\medskipamount}
	\textit{#1\@addpunct{.}}\enspace\ignorespaces
}
\numberwithin{equation}{section}
\numberwithin{prop}{section}
\numberwithin{rmk}{section}
\numberwithin{lem}{section}
\numberwithin{dfn}{section}
\numberwithin{cor}{section}
\title{A construction of approximately self-similar naked singularities for the spherically symmetric Einstein-scalar field system}
\author[1]{Jaydeep Singh \thanks{jaydeeps@math.princeton.edu}}
\affil[1]{\small  Department of Mathematics, Princeton University, Washington~Road,~Princeton,~NJ~08544,~United~States~of~America \vskip.1pc \ }
\date{December 8, 2022}
\begin{document}
\maketitle
\begin{abstract}
    In this work we investigate the stability and instability properties of a class of naked singularity spacetimes. The first rigorous study of naked singularities in the spherically symmetric Einstein-scalar field system is due to Christodoulou \cite{chris2}, who identified singularity formation in the class $(\overline{g}_k, \overline{\phi}_k)$ of \textit{$k$-self-similar} solutions, for any $k^2 \in (0,\frac{1}{3})$. Here we extend the construction to produce examples of interior and exterior regions of naked singularity spacetimes locally modeled on $k$-self-similar solutions, without requiring exact self-similarity.
    
    The main result is a global stability statement under \textit{fine-tuned} data perturbations, for a class of naked singularity spacetimes satisfying self-similar bounds. Given the well-known blueshift instability for suitably regular naked singularities in the Einstein-scalar field model, we require non-generic conditions on the data perturbations. In particular, the scalar field perturbation along the past lightcone of the singular point $\mathcal{O}$ vanishes to high order near $\mathcal{O}$. Technical difficulties arise from the singular behavior of the background solution, as well as regularity considerations at the axis and past lightcone of the singularity. The interior region is constructed via a backwards stability argument, thereby avoiding activating the blueshift instability. The extension to the exterior region is treated as a global existence problem to the future of $\mathcal{O}$, adapting techniques developed for vacuum spacetimes in \cite{igoryak2}.
\end{abstract}
\tableofcontents
\section{Introduction}
As examples of singular solutions arising from regular, asymptotically flat initial data, naked singularities are model cases for studying the formation of gravitational singularities. In this paper we restrict attention to the spherically symmetric setting, and consider solutions to the Einstein-scalar field system:
\begin{equation}
\begin{cases}
\textbf{Ric}_{\mu\nu}-\frac{1}{2}\textbf{R}\textbf{g}_{\mu\nu} = 2\textbf{T}_{\mu\nu}[\bm{\phi}], \\[.3em]
\Box_{\textbf{g}_{\mu\nu}}\bm{\phi} = 0.
\end{cases}
\label{ref:eqneqn}
\end{equation}
The main results of this paper establish the existence and stability under \textit{finely-tuned} initial data perturbations of a class of naked singularity solutions to (\ref{ref:eqneqn}) satisfying self-similar bounds.

The problem of singularity formation arises already in the simplest case of spherically-symmetric vacuum spacetimes, as the famous Schwarzschild solution illustrates. Here the singularity assumes the form of a spacelike $\{r=0\}$ boundary, across which curvature invariants blowup and the solution fails to be extendible as a regular spacetime. Despite this breakdown in regularity, the presence of the event horizon decouples the singularity from the causal futures of observers who remain in the black hole exterior region. In particular, \enquote{infinitely far away} observers along $\mathcal{I}^+$ exist for infinite proper time.

\begin{figure}[h]
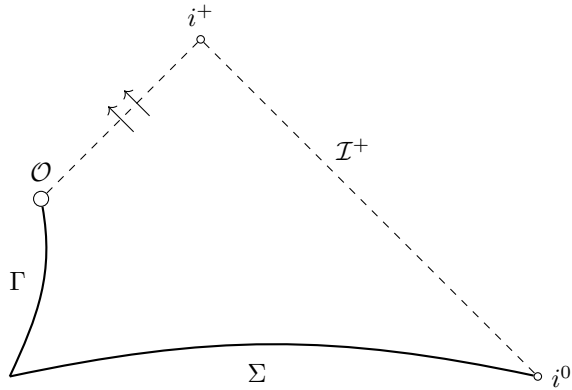

    \centering
  \includestandalone[]{Figures/fig_intronakedsing}
  \caption{Formation of a globally naked singularity from regular Cauchy data on $\Sigma$. Observers (represented by arrows) leave the maximal development associated to data, despite not encountering a breakdown in regularity.}
  \label{fig0}
\end{figure}
In contrast, the spacetimes considered in this paper contain an incomplete $\mathcal{I}^+$ (see Figure \ref{fig0}). Imagined observers along $\mathcal{I}^+$ thus reach the future lightcone of the singularity $\mathcal{O}$ in finite proper time, rendering the singularity \enquote{naked}. The region of spacetime uniquely determined by initial data abruptly comes to an end for these distant observers, even though they see no local breakdown in regularity.

The conceptual issues raised by these pathological spacetimes were taken up by Penrose in \cite{penroseWCC}, leading to the conjecture that naked singularity formation should be \textit{non-generic}. In practice, such a statement must be accompanied by a choice of matter model, a class of initial data and solutions, and a notion of genericity. Setting aside these issues, one form of this so-called weak cosmic censorship (WCC) conjecture, adapted from \cite{chris5}, is the following:
\begin{conject}[Weak Cosmic Censorship]
The maximal development of generic regular, asymptotically flat, initial data admits a complete $\mathcal{I}^+$.
\end{conject}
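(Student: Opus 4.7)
The statement above is the Weak Cosmic Censorship conjecture, one of the central open problems in mathematical general relativity; my \enquote{proof proposal} is necessarily a research plan rather than an executable sketch. A meaningful attempt requires first fixing a matter model, a solution class, and a precise notion of genericity. In line with the present paper, I would restrict to the spherically symmetric Einstein-scalar field model (\ref{ref:eqneqn}) and attempt to reproduce Christodoulou's theorem, which yields WCC in that setting. The solution class would be asymptotically flat data of BV-type, admitting a satisfactory well-posedness theory; genericity would be measured against a suitable function-space topology, or more strongly in a codimension sense on the space of admissible data.

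The plan is to establish a trichotomy for the maximal development of such data: either the solution is \emph{dispersive} and admits a complete future null infinity $\mathcal{I}^+$; or a \emph{trapped surface} forms, in which case the singularity at $\{r=0\}$ is hidden behind an event horizon and $\mathcal{I}^+$ is again complete; or a first singularity $\mathcal{O}$ forms on the axis, from whose future lightcone $\mathcal{I}^+$ terminates in finite retarded time, producing a naked singularity. The first two branches pose no obstruction, so the task reduces to showing that the third branch is non-generic: any data producing a central singularity of this kind must satisfy a finite-codimension condition, typically enforced by tuning a scale-invariant quantity associated to $\mathcal{O}$.

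The technical engine splits into three pieces: (i) low-regularity a priori estimates in double-null coordinates controlling the scalar field's flux and the Hawking mass up to the first singular point; (ii) a classification of possible blowup behaviors on the axis, organized by the scale-invariant ratio $r/(-u)$ where $u$ is retarded time along the past lightcone of $\mathcal{O}$; and (iii) a perturbation theorem showing that any tuning parameter driving the solution into the naked-singularity regime is isolated. The construction of the present paper is in fact complementary to step (iii): it exhibits the naked-singularity branch as a genuinely realized family, stable only under \emph{fine-tuned} perturbations. This confirms that \enquote{non-generic} cannot be strengthened to \enquote{impossible}, and it sharpens the precise codimension count one must establish.

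The principal obstacle is step (iii), upgrading \enquote{isolated} to \enquote{generically avoided} on the full moduli space. In the spherically symmetric scalar field case this is achieved via Christodoulou's BV framework together with a scale-invariant trapped-surface criterion. Outside spherical symmetry, or for other matter models, the picture is presently out of reach: one lacks both a global existence/blowup dichotomy at the required regularity and an adequate replacement for the monotonicity of the Hawking mass that drives the trapped-surface argument. This is the primary reason WCC remains open in the general setting.
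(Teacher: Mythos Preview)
The paper does not prove this statement: it is explicitly presented as a \emph{conjecture} (the Weak Cosmic Censorship conjecture), and the paper remarks only that a positive resolution in the spherically symmetric Einstein--scalar field setting was given by Christodoulou in \cite{chris1}, \cite{chris3}. There is no proof in the paper to compare against.

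You correctly recognize this, and your submission is framed as a research plan rather than a proof. That is the right call. Your outline of the Christodoulou program---BV well-posedness, the trichotomy (dispersion / trapped surface / central naked singularity), and the instability argument showing the third branch is non-generic---is an accurate high-level summary of how WCC was established in this symmetry class, and your assessment of the role the present paper plays (exhibiting the naked-singularity branch as genuinely realized but only fine-tuned stable) is also correct. There is nothing to fault here beyond the obvious: the general conjecture remains open, and your plan does not, and cannot at present, close it outside spherical symmetry.
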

For the spherically symmetric Einstein-scalar field system, a positive resolution of (WCC) was given by Christodoulou in the series of works \cite{chris1}, \cite{chris3}. The foundational work \cite{chris1} identified limited-regularity solution classes in which well-posedness and continuation criteria hold. These classes provide enough flexibility to construct unstable perturbations, and \cite{chris3} established the existence of an unstable two-dimensional plane of directions in the space of initial data for suitably regular naked singularities\footnote{More precisely, \cite{chris3} shows a co-dimension $1$ instability result for absolutely continuous (AC) naked singularities. In the more general bounded variation (BV) setting, he identifies an unstable two-dimensional plane of perturbuations, subject to an infinite blueshift assumption along the past lightcone of the singularity.}. Note that the \enquote{genericity} assumption in the statement of (WCC) was shown to be necessary in the work \cite{chris2}, which provided examples of absolutely continuous naked singularity spacetimes. See also the work \cite{liuli} providing a more robust instability proof.

A key role in the argument is played by the dynamics along the backwards light cone of $\mathcal{O}$, given by $\{v=0\}$ in appropriate double null coordinates. Along this null surface the evolution equations for transversal double null unknowns (e.g. $\partial_v \phi$) satisfy ordinary differential equations (ODEs), with coefficients and source terms determined by the quantities intrinsic to the surface. An important discovery of \cite{chris3} was that such ODEs exhibit a blueshift instability. For generic perturbations of the value of $\partial_v \phi$ at a single point, say $(u,v) = (-1,0)$, the instability leads to rapid growth of the scalar field and trapped surface formation.

An additional feature of the proof in \cite{chris3} is that the unstable perturbations are supported on the exterior region of the naked singularity (i.e. $\{v \geq 0\}$). As a result, the argument says little about the stability properties of the interior. Some questions left unanswered by the analysis of \cite{chris3} include: 
\begin{enumerate}[(i)]
    \item Are there examples of naked singularity interiors that are \enquote{stable} in a suitable sense, either generically or up to perturbations of finite codimension?
    \item Can the support of the unstable perturbations identified by \cite{chris3} be extended into the region $\{v=0\}$?
    \item Under generic perturbations, is the existence of a central singularity stable? Or can appropriate perturbations de-singularize the solution?
\end{enumerate}
This work addresses a simpler question of \textit{constructing} additional examples of naked singularity interiors. Our main result extends the construction of $k$-self-similar naked singularities in \cite{chris2} outside the class of exact self-similarity. The proof moreover guarantees the existence of stable directions in the space of initial data for the forwards problem, but does not give any estimate for the dimension of a stable manifold.

We take a perturbative approach to constructing the interior regions. Namely, given a self-similar naked singularity as in \cite{chris2}, or more generally any admissble spacetime (see Section \ref{sec:assumptionsbackground} for the full list of requirements), our first result establishes the existence of \textit{fine-tuned} data perturbations along the backwards light cone of $\mathcal{O}$, for which the naked singularity interior is backwards stable. The class of allowed data is explicit, and is non-generic in the space of all initial data.  

A preliminary statement of the backwards stability result for naked singularity interiors is given below. For precise statements of all results, see Section \ref{sec:thmstatements}.

\begin{theorem}
\label{thm:introthm1}
Let $(\mathcal{Q}^{(in)}, \overline{g}, \overline{\phi})$ be a given admissible\footnote{Roughly, admissible spacetimes admit an appropriate double null gauge, lie in a fixed regularity class, and satisfy quantitative self-similar bounds. For details, see Definition \ref{dfn:admissibility}.} naked singularity interior. Choose data for the scalar field along the past light cone of $\mathcal{O}$ satisfying
\begin{equation}
    \label{eqn:introtemp3}
    \partial_u \phi(u,0) = \partial_u \overline{\phi}(u,0) + \epsilon f_0(u),
\end{equation}
where $f_0(u) = O(|u|^{\alpha-1})$ as $u \rightarrow 0$, and where $\alpha$ is a large constant depending on the background spacetime. See Figure \ref{fig:0.1}a).

There exists an $\epsilon_0$ depending on the background solution and $f_0$ such that for $\epsilon \leq \epsilon_0$, there exists a bounded variation solution to the spherically symmetric Einstein-scalar field system in $\mathcal{Q}^{(in)}$ achieving the data for $\partial_u \phi$ along $\{v=0\}$. The solution satisfies appropriate gauge and regularity conditions, as well as self-similar bounds. 

Finally, the solution is asymptotic to the background solution as $u \rightarrow 0$, with rates controlled by $\alpha$.
\end{theorem}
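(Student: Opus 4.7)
\emph{Proof proposal.} The plan is to construct the solution via a backwards shooting argument: shoot data from a family of inner cones $\{u = -\delta\}$ close to $\mathcal{O}$ (where the background is prescribed exactly) out toward a fixed past lightcone segment, and then send $\delta \to 0$. First I would fix a double null gauge $(u,v)$ adapted to $\mathcal{O}$ so that $\mathcal{O}$ corresponds to the origin, the past lightcone is $\{v=0\}$, and the axis is $\{u=v\}$, together with self-similar coordinates $s = -\log(-u)$, $z = v/u$ in which $(\ol{g},\ol{\phi})$ is $s$-invariant. For each small $\delta > 0$, work on the truncated interior $\mathcal{Q}_\delta := \mathcal{Q}^{(in)} \cap \{u \leq -\delta\}$, and pose a characteristic IVP with background Cauchy data on the inner cone $\{u = -\delta\}$ and the prescribed perturbation $\partial_u \phi = \partial_u \ol{\phi} + \epsilon f_0$ on $\{v = 0,\, u_0 \leq u \leq -\delta\}$, suitably adjusted to match the background at the corner $(-\delta,0)$. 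By the BV local existence theory of \cite{chris1}, augmented with reflection conditions at the axis, each $\delta$-problem admits a unique BV solution $(\phi_\delta, g_\delta)$ on $\mathcal{Q}_\delta$.

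The heart of the argument is a weighted bootstrap producing uniform-in-$\delta$ estimates on the differences $\psi_\delta := \phi_\delta - \ol{\phi}$ and $h_\delta := g_\delta - \ol{g}$. The assumed decay $f_0(u) = O(|u|^{\alpha-1})$ integrates to $\psi_\delta(u,0) = O(|u|^\alpha)$ along the past lightcone, motivating bootstrap assumptions of the schematic form $|\psi_\delta| + |r\,\partial_u \psi_\delta| + |r\,\partial_v \psi_\delta| + \ldots \lesssim \epsilon\, |u|^\alpha$, with analogous weights for $h_\delta$ and for higher-order derivatives (cleanly expressed via the self-similar variable $z$). These would be recovered by integrating transport/ODE estimates for $\partial_u \psi_\delta$ and $\partial_v \psi_\delta$ coming from the linearized wave equation, together with the evolution equations for the metric functions. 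The coefficients inherit powers of $|u|^{-1}$ from the singular background, but after pairing with the $|u|^\alpha$ weight they become manageable; choosing $\alpha$ sufficiently large relative to the self-similar growth rates of $\ol{g},\ol{\phi}$ provides the margin needed for the Gr\"onwall-type estimates to close rather than blow up.

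With uniform bounds in hand, I would send $\delta \to 0$ and extract a BV limit $(\phi,g)$ on the full interior $\mathcal{Q}^{(in)}$; compactness in $|u|$-weighted BV norms, together with the fact that the past lightcone data is held fixed throughout, shows that the limit attains the prescribed $\partial_u \phi$ along $\{v=0\}$, satisfies self-similar bounds, and is asymptotic to the background at the rate $|u|^\alpha$ as $u \to 0$. Axis regularity persists in the limit because the reflection condition is enforced at each $\delta$-level. The step I expect to be the main obstacle is the bootstrap closure, where one must decompose the linearized system so that the most singular contributions from the background are always paired with enough $|u|^\alpha$-weight. A closely related subtlety is that no data is prescribed for $\partial_v \phi$ on $\{v=0\}$: this transversal derivative must be reconstructed via the constraint equations together with the axis condition, and it is precisely the high-order vanishing at $\mathcal{O}$ (the large-$\alpha$ fine-tuning) that prevents dangerous boundary contributions at the singular vertex from propagating into the bulk.
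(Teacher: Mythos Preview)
Your overall architecture matches the paper's: truncate at $\{u=u_\delta\}$, pose background data on the outgoing cone and the perturbed ingoing data on $\{v=0\}$, close a weighted bootstrap on the truncated region, then send $u_\delta\to 0$. Two points, however, are underdeveloped and are precisely where the paper invests effort.

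First, the bootstrap closure. You say that ``choosing $\alpha$ sufficiently large \ldots\ provides the margin needed for the Gr\"onwall-type estimates to close,'' but you do not identify the mechanism. The $u$-transport equations for quantities like $\lambda_p,\ \partial_v\phi_p$ carry zeroth-order coefficients $\overline{c}/|u|$ of \emph{unfavorable} sign (the blueshift), and the inhomogeneities are borderline non-integrable ($\sim |u|^{-1}$). The point is that the problem is \emph{backwards}: one integrates from $\{u=u_\delta\}$ toward $\{u=-1\}$, i.e.\ away from the singularity. Conjugating by $|u|^{-\alpha}$ then produces a zeroth-order term $-\alpha/|u|$ whose sign is good in this direction of integration; for $\alpha$ large this dominates $\overline{c}$ and the estimate closes with a gain of $\alpha^{-1}$. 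For $v$-equations the analogous weight is $(|u|^{q_1}+|v|)^{-\alpha p_1}$. This is the content of the paper's main transport estimates, and without it a plain Gr\"onwall argument does not close.

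Second, the limit $u_\delta\to 0$. Compactness in weighted BV is not enough here: the paper shows the sequence is \emph{Cauchy} by treating $(r^{(1)}-r^{(2)},\mu^{(1)}-\mu^{(2)},\phi^{(1)}-\phi^{(2)})$ as a perturbation of the approximate solution $(r^{(1)},\mu^{(1)},\phi^{(1)})$ and rerunning the bootstrap with this new background. For that to be legitimate, $(r^{(1)},\mu^{(1)},\phi^{(1)})$ must itself satisfy the admissibility bounds $\overline{\mathfrak{N}}_1$, in particular pointwise control on $\partial_u^2\phi,\ \partial_v^2\phi$. The second-order bootstrap only gives $r\phi\in C^2$, which near the axis yields only $\phi\in C^1$. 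The paper therefore proves auxiliary \emph{third-order} estimates ($\partial_u^3(r\phi_p),\ \partial_u^2\nu_p,\ \partial_v^3(r\phi_p),\ \partial_v^2\lambda_p$) via the averaging-operator machinery, and this is why the background is assumed to have $\overline{\phi}\in C^3$. Your proposal does not anticipate this regularity loss or the extra round of estimates needed to repair it.
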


We emphasize that the property of being a naked singularity is global, and relies on the behavior of the solution in the asymptotically flat region of the exterior. It is thus necessary to verify that there exists a choice of outgoing data for the scalar field along $\{u=-1, \ v \geq 0\}$ for which the above interior regions extend to genuine naked singularities. In particular, the exterior construction is a forwards problem. Our second main result constructs the exterior region globally up to $\mathcal{I}^+$.

\begin{theorem}
\label{thm:introthm2}
Fix an admissible background spacetime $(\mathcal{Q}, \overline{g}, \overline{\phi})$. Choose data for the scalar field along $\{v=0\}$ as in (\ref{eqn:introtemp3}), and let $(\mathcal{Q}^{(in)}, g, \phi)$ denote the solution constructed by Theorem \ref{thm:introthm1}. 

Choose data for the scalar field along $\{u=-1, \ v \geq 0\}$ satisfying
\begin{equation}
    \partial_v \phi(-1,v) = \partial_v \overline{\phi}(-1,v) + \epsilon g_0(v) + O(\epsilon),
\end{equation}
where $g_0(v)$ is a free function vanishing sufficiently quickly as $v \rightarrow 0$ and $v \rightarrow \infty$. The additional $O(\epsilon)$ terms are explicitly determined by the interior solution $(\mathcal{Q}^{(in)}, g, \phi)$. See Figure \ref{fig:0.1}b).

There exists $\epsilon_0$ small depending on the background solution and $g_0$ such that for $\epsilon \leq \epsilon_0$, there exists a solution in the exterior region $\mathcal{Q}^{(ex)}$ attaining the data along $\{v=0\}\cup \{u=-1, \ v \geq 0\}$. The solution satisfies appropriate gauge and regularity conditions, self-similar bounds, and is asymptotically flat. The exterior and interior spacetimes glue across $\{v=0\}$ as a solution of bounded variation, and the resulting solution admits an incomplete $\mathcal{I}^+$.
\end{theorem}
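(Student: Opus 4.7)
The plan is to treat the exterior construction as a characteristic initial value problem in the double null region $\{(u,v) : u \in [-1,0), \, v \in [0,\infty)\}$, with data prescribed on the past light cone $\{v=0\}$, inherited from the interior solution of Theorem \ref{thm:introthm1}, and on the outgoing null cone $\{u=-1\}$. I would work with perturbation variables $\psi := \phi - \overline{\phi}$ together with the analogous double null metric and geometric perturbations, and measure them in scale-invariant weighted norms mirroring the self-similar scaling of the background. Local existence in a small neighborhood of the data cones follows from classical CIVP theory for the reduced Einstein-scalar field system; the content of the theorem lies in closing global estimates out to $\mathcal{I}^+$.

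The first substantive step is to identify the data on $\{v=0\}$ coming from the interior and, in particular, the transversal derivative $\partial_v \phi(u,0)$ for $u \in [-1,0)$. Along $\{v=0\}$, the wave and Raychaudhuri equations reduce to a system of linear ODEs in $u$ for the transversal unknowns, with coefficients and source terms determined by the intrinsic interior data. Precisely as in the instability analysis of \cite{chris3}, this ODE system exhibits a blueshift: as $u \to 0^-$ it has a growing mode and a decaying mode. The $O(\epsilon)$ correction in the prescription of $\partial_v \phi(-1,v)$ is chosen so that its $v \to 0$ limit sets the initial condition of the ODE at $u=-1$ precisely on the decaying branch, killing the growing mode. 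This is the fine-tuning; it is well defined because $g_0$ vanishes as $v \to 0$ and the required correction is explicitly determined by the interior solution constructed in Theorem \ref{thm:introthm1}.

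Next I would set up a bootstrap assuming the perturbation satisfies the target self-similar bounds with constant $C\epsilon$ and recover the same bounds with constant $\tfrac{1}{2}C\epsilon$. In the region near $\{v=0\}$ this uses integration in $v$ along constant-$u$ segments, with scale-invariant weights, combined with the decay of the transversal data on $\{v=0\}$ established above. In the asymptotic region, I would switch to an $r$-weighted hierarchy in the spirit of Dafermos-Rodnianski, adapting the framework developed for vacuum naked singularities in \cite{igoryak2}: a finite sequence of $r^p$-weighted energy estimates together with pointwise estimates on $\phi$ and on the geometric quantities yield the decay needed for asymptotic flatness and a well-defined $\mathcal{I}^+$. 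The rapid vanishing of $g_0$ as $v \to \infty$ provides the initial decay needed to seed this hierarchy, while the self-similar bounds provide the data for the $r$-weighted estimates on an intermediate outgoing cone.

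The principal obstacle is matching the two regimes: near $\{v=0\}$ the background is singular and the blueshift threatens to amplify perturbations exponentially in $\log|u|$, while near $\mathcal{I}^+$ the analysis requires quantitative $r$-decay that the self-similar bounds alone do not provide. Showing that the fine-tuned outgoing data is consistent with both sets of estimates, and that the two weighted frameworks fit together across a transition region, is the technical heart of the argument. Once global existence with the target bounds is established, BV regularity across $\{v=0\}$ follows from continuity of the intrinsic quantities together with the matched transport equations for the transversal ones, and incompleteness of $\mathcal{I}^+$ follows from the self-similar control on the affine length of $\mathcal{I}^+$ as $u \to 0^-$.
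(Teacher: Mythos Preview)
Your proposal captures the broad architecture but has a genuine gap in the near-horizon region $\{v \approx 0\}$. You say you will close estimates there by ``integration in $v$ along constant-$u$ segments, with scale-invariant weights, combined with the decay of the transversal data on $\{v=0\}$.'' This handles only the unknowns satisfying $v$-transport equations ($\nu_p$, $\mu_p$, $\partial_u\phi_p$). The dangerous quantities are $\lambda_p$ and especially $\partial_v\phi_p$, which satisfy \emph{$u$-transport equations} with a bad zeroth-order coefficient of size $\sim |u|^{-1}$ (this is exactly the blueshift you mention). Integrating these forward in $u$ from $\{u=-1\}$ directly fails: the error terms are only $O(\epsilon|u|^{-1})$ and not integrable, so you cannot recover self-similar bounds. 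Having the transversal data decay along $\{v=0\}$ does not by itself cure this, because the integration is in $u$, not $v$.

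The paper's mechanism is different from what you describe. Rather than solving an ODE along $\{v=0\}$ and selecting the decaying branch, one defines renormalized unknowns $\widetilde{\Psi} = \Psi - \overline{\Psi} - \Psi_c$, where $\Psi_c(u) = \lim_{v\to 0^-}\Psi^{(in)}(u,v) - \overline{\Psi}(u,0)$ is the corrector coming from the interior. By construction $\widetilde{\Psi}$ vanishes on $\{v=0\}$, so one can propagate bounds of the form $|w_\Psi \widetilde{\Psi}| \lesssim (v/|u|^{q_1})^l$. Conjugating the $u$-equation by this weight introduces a \emph{favorable} lower-order term of size $\sim l q_1/|u|$, which for appropriate $l$ beats the bad blueshift coefficient. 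This is what the fine-tuned matching condition on $\partial_v\phi(-1,0)$ actually buys you: it makes the renormalized quantity vanish on $\{v=0\}$ so that this weighted scheme applies. Your ODE picture is morally related but does not supply the actual closing mechanism.

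Secondly, your two-region picture (near $\{v=0\}$ versus asymptotically flat) is too coarse. The paper decomposes the exterior into four regions, and Regions II and III (a transition region using exponential weights $\exp(-D V^{q_2}/|U|)$, and a near-$\{u=0\}$ region using largeness of a parameter $V_2$ and a reductive ordering of the equations) each require their own smallness mechanism. Your $r^p$-weighted hierarchy is a plausible alternative for the far region, but you have not explained how to traverse from the self-similar neighborhood of $\{v=0\}$ to a region where $r$-weights are useful; this passage through the neighborhood of $\{u=0\}$ near the singular point is nontrivial and is where the paper spends most of its effort.
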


\begin{figure}[h]
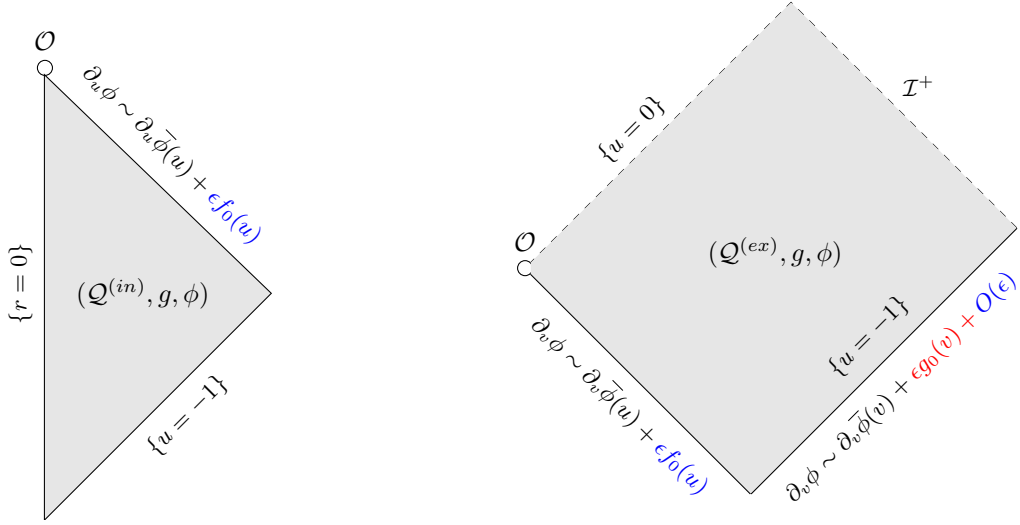

    \centering
    \begin{subfigure}[b]{.4\linewidth}
        \includestandalone[]{Figures/fig_introthm1}
    \end{subfigure}
    \begin{subfigure}[b]{.4\linewidth}
        \includestandalone[]{Figures/fig_introthm2}
    \end{subfigure}
    \caption{(a) Domain of existence for the interior solutions of Theorem \ref{thm:introthm1}. Ingoing data perturbation highlighted in blue. (b) Domain of existence for the interior solutions of Theorem \ref{thm:introthm2}. Outgoing data perturbation highlighted in red. }
    \label{fig:0.1}
\end{figure}

\subsection{Christodoulou's $k$-self-similar solutions}
\label{subsection:chrissolutions}
A detailed discussion of the solutions constructed in \cite{chris2}, including their global expression in double null gauge and sharp regularity properties, is given in Appendix \ref{appA}. In this introduction we review some motivating aspects.

\subsubsection*{Underlying $k$-self-similarity and expression in Bondi coordinates}
We begin by defining $k$-self-similarity. The spherically symmetric Einstein-scalar field system admits a two parameter symmetry group $\mathbb{R}_+ \times \mathbb{R}$, where $a \in \mathbb{R}_+, b \in \mathbb{R} $ act on solutions $(r,m,\phi)$ via
\begin{equation*}
    r \rightarrow ar, \ m \rightarrow am, \ \phi \rightarrow \phi + b.
\end{equation*}
One can also write the action on the quotient metric as 
\begin{equation*}
    g_{\mu \nu} \rightarrow a^2 g_{\mu \nu}.
\end{equation*}
Fixing a parameter $k \in \mathbb{R}$, $k$-self-similarity formalizes the notion of solutions invariant under the above scaling transformation. More precisely, $k$-self-similar solutions admit a homothetic vector field $S$, generating a one-parameter family of diffeomorphisms $f_a$ of $\mathcal{Q}$ under which the solution transforms as 
\begin{equation*}
   ( f_a^*g)_{\mu \nu} = a^2 g_{\mu\nu}, \ \ \ f_a^*r = r, \ \ \ f_a^* \phi = \phi - k\log a.
\end{equation*}
These equations in turn imply 
\begin{equation*}
    (\mathcal{L}_S g)_{\mu\nu} = 2g_{\mu\nu}, \ \ \ Sr = r, \ \ \ S\phi = -k.
\end{equation*}
The parameter $k$ is critical for the underlying mechanism of singularity formation. Roughly, nonzero $k$ corresponds to the logarithmic growth of the scalar field along the past lightcone of $\mathcal{O}$.

The construction of \cite{chris2} takes place in \textit{self-similar Bondi coordinates} $(u,r)$, where $u$ is an outgoing null coordinate and $r$ the area radius. With respect to this gauge the homothetic vector field can be written as
\begin{equation*}
    S = r\partial_r + u\partial_u.
\end{equation*}
The ansatz of $k$-self-similarity implies the following form for the metric $\overline{g}_k$ and scalar field $\overline{\phi}_k$:
\begin{align}
    \overline{g}_k = e^{2\beta(u,r)}du^2 - 2e^{\beta(u,r)+\gamma(u,r)}dudr, \label{intro:e1}
     \ \ \ \overline{\phi}_k = \chi(u,r) - k\log|u|,
\end{align}
where $\beta(u,r) = \mathring{\beta}(-\frac{r}{u}), \ \gamma(u,r) = \mathring{\gamma}(-\frac{r}{u}), \ \chi(u,r) = \mathring{\chi}(-\frac{r}{u})$ reduce to functions of a self-similar parameter $-\frac{r}{u}$.

The subclass of solutions with bounded scalar field ($k = 0$) were studied in the earlier work \cite{chris1} as prototypical examples of bounded variation solutions. These solutions are called scale-invariant, and can in fact be written explicitly in double null gauge; however, they do not model a breakdown in regularity relative to initial data, and are thus not examples of naked singularity formation.

The case with $k \neq 0$ does not lend itself to analytical expressions for the solution. Still, \cite{chris2} shows that the ansatz (\ref{intro:e1}) reduces the full Einstein-scalar field system to an autonomous 2x2 system of ordinary differential equations. A detailed phase plane analysis establishes the global properties of solutions, and for the subrange $k^2 \in (0,\frac{1}{3})$, there exist solutions that extend to the future light cone of $\mathcal{O}$. The resulting spacetimes are not asymptotically flat, as a consequence of self-similarity; however, an appropriate truncation in $\{r \gg 1\}$ remedies this issue, and leads to a naked singularity spacetime $\mathcal{Q}$. In coordinates $\mathcal{Q}$ takes the form
\begin{equation*}
    \mathcal{Q} = \{(u,r) \ | \ u \in [-1,0), \ r \in [0, \infty)\}.
\end{equation*}
We will only consider the region to causal future of a fixed outgoing null ray, e.g. $\{u=-1\}$, although extensions to past null infinity are possible as well.

\subsubsection*{Expression in renormalized double null coordinates}
For technical reasons, it is more convenient in this paper to work in a global double null coordinate system. We thus turn to recasting the existence result of \cite{chris2} in an appropriate double null gauge.

The analogous notion of self-similar double null coordinates, with respect to which $S = \hat{u}\partial_{\hat{u}} + \hat{v}\partial_{\hat{v}}$, is unfortunately not regular across the past self-similar horizon $\{v=0\}$ (see Section \ref{appa:regularcoords} for further discussion of this point). We therefore work in a \textit{renormalized double null gauge} $(u,v)$, with respect to which $\mathcal{Q}$ takes the form (see Figure \ref{fig1})
\begin{equation}
\label{eqn:introtemp10}
\mathcal{Q} = \{(u,v) \ | \ -1 \leq u < 0, \ -|u|^{1-k^2} \leq v < \infty\}.
\end{equation}
The natural self-similar parameter is $-\frac{v}{u},$ and the metric functions and scalar field assume the functional form
\begin{align}
\label{eqn:introtemp11}
    \overline{g}_k = - \mathring{\Omega}^2(-\frac{v}{u}) dudv, \ \ \quad  \overline{r}_k(u,v) =  |u|\mathring{r}(-\frac{v}{u}), \ \ \ \quad \overline{\phi}_k(u,v) = \mathring{\phi}(-\frac{v}{u}) - k\log |u|.
\end{align}

The following theorem collects the key geometric features of Christodoulou's solutions, and motivates the class of admissible spacetimes defined in Section \ref{sec:assumptionsbackground}. See Sections \ref{sec:solnclasses} - \ref{subsec:nakedsingdfn} for precise definitions of terms.
\begin{theorem}[\cite{chris2} and Appendix \ref{appA}]
\label{thm:christodoulou_solutions}
Fix $k \in \mathbb{R}$ with $k^2 \in (0,\frac{1}{3})$. There exists an asymptotically flat solution $S = (\overline{r}_k,\overline{m}_k,\overline{\phi}_k)$ to (\ref{SSESF:1:1})-(\ref{SSESF:1:5}) in the domain $\mathcal{Q}$ satisfying the following properties:
\begin{enumerate} 
    \itemsep0em
    \item There exists a global double null coordinate system with respect to which $\mathcal{Q}$ takes the form (\ref{eqn:introtemp10}). The metric and scalar field assume the form (\ref{eqn:introtemp11}).
    \item The solution $S$ has $C^1$ regularity\footnote{Recall that the singular point $\mathcal{O}$ is not included in spacetime. More precisely, statements of regularity are assumed to exclude an arbitrarily small neighborhood of $\mathcal{O}$} on $\mathcal{Q} \setminus (\{v=0\} \cup \{u=0\}).$ Moreover, $\overline{\phi}_k \in C^{1,\frac{k^2}{1-k^2}}(\mathcal{Q})$, and this regularity is sharp near $\{v=0\}$.
    \item The null derivatives of the scalar field up to second order satisfy modified self-similar bounds. In particular, 
    \begin{equation}
        \label{eq:introtemp11}
        \Big|\frac{1}{\overline{\nu}_k}\partial_u \overline{\phi}_k \Big| \lesssim |u|^{-1}, \ \ \ \Big|\frac{1}{\overline{\lambda}_k}\partial_v \overline{\phi}_k \Big| \lesssim |u|^{-1},
    \end{equation}
    where we have set $\overline{\nu}_k \doteq \partial_u \overline{r}_k, \ \overline{\lambda}_k \doteq \partial_v \overline{r}_k$. Moreover, $\partial_u \phi(u,0) = \frac{k}{u}.$ See (A2)-(A3), (B1)-(B2) in Section \ref{sec:assumptionsbackground} for additional self-similar bounds.
    \item In a self-similar neighborhood of the axis, the solution is smooth.
    \item The solution is exactly $k$-self-similar in $\{v \leq 1\}$. Moreover, the generator of self-similarity takes the form $u\partial_u + (1-k^2)v\partial_v$. The generator is timelike in $v < 0$, null along the past lightcone of $\mathcal{O}$, and spacelike in $v > 0$. 
    \item All outgoing null surfaces $\{u=c\}$ with $c < 0$ are asymptotically flat as $v \rightarrow \infty$.
    \item For $(u,v) \in \Gamma$ with $u < 0$, the Hawking mass satisfies $$\lim_{\delta \rightarrow 0} \overline{\mu}_k(u - \delta,v) = \lim_{\delta \rightarrow 0}\overline{\mu}_k(u, v+\delta) = 0.$$ Along $\{v=0\}$ and along $\{u=0\}$, $ \overline{\mu}_k(u,0)= \frac{k^2}{1+k^2}$. As a consequence, the solution does not extend\footnote{By definition of the BV class, along null lines intersecting $\Gamma$ at a regular point we must have $\mu \rightarrow 0$. The failure of BV extendibility also follows from point (3) above, as any BV extension must have $\partial_u \phi \in L^1(\{v=0\}).$ See \cite{chris1} or Section \ref{sec:solnclasses} for further discussion of solution classes.} as a BV solution in any neighborhood of $\mathcal{O}$.
    \item $\overline{r}_k(u,v)$ extends to a continuous function on $\{u=0, \ v >0\}$. Letting $\overline{r}_k(0,v)$ denote this limit, we have $\lim_{v\rightarrow \infty}\overline{r}_k(0,v) = \infty$.
    \item $\mathcal{Q}$ contains an incomplete null infinity in the sense of Definition \ref{dfn:incompleteI}.
\end{enumerate}
\end{theorem}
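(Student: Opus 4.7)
The strategy is to translate the Bondi-coordinate construction of \cite{chris2} into the renormalized double null gauge $(u,v)$ described above, and then read off the claimed geometric properties. First, I recall the ODE reduction from \cite{chris2}: under the $k$-self-similar ansatz (\ref{intro:e1}), the spherically symmetric Einstein--scalar field system reduces, along orbits of the homothetic vector field $S = r\partial_r + u\partial_u$, to an autonomous system for the profile functions $\mathring{\beta}, \mathring{\gamma}, \mathring{\chi}$ of the self-similar variable $x = -r/u$. A phase-plane analysis identifies, for $k^2 \in (0, \tfrac{1}{3})$, a distinguished orbit originating at a regular axis $\Gamma = \{r=0\}$ and extending through the past self-similar horizon all the way to a future lightcone. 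Truncating in $\{r \gg 1\}$ and gluing a scale-invariant exterior produces the asymptotically flat spacetime $S = (\overline{r}_k, \overline{m}_k, \overline{\phi}_k)$.

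Next I change from self-similar Bondi coordinates to the renormalized double null gauge $(u,v)$. The natural self-similar double null coordinates, in which $S = \hat u \partial_{\hat u} + \hat v \partial_{\hat v}$, degenerate across $\{v=0\}$ because the null generator along the past lightcone of $\mathcal{O}$ is parallel to $S$. I therefore rescale the incoming coordinate so that the new $v$ satisfies $v \sim |u|^{1-k^2}$ near the past lightcone, yielding the domain (\ref{eqn:introtemp10}) and the functional form (\ref{eqn:introtemp11}). The consistency checks required at this stage are that $\mathring{\Omega}^2$ remains positive and finite across $\{v=0\}$, and that the generator of self-similarity takes the form $u\partial_u + (1-k^2) v\partial_v$, so that it is timelike for $v<0$, null at $v=0$, and spacelike for $v>0$, as asserted in item (5).

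The remaining items (1)--(8) then follow from the behavior of the ODE orbit at its critical points, combined with the coordinate change. Specifically, the $C^1$ interior regularity and the sharp H\"older regularity $\overline{\phi}_k \in C^{1, k^2/(1-k^2)}$ come from expanding the profile functions near the past lightcone; the normalized self-similar bounds (\ref{eq:introtemp11}) follow by differentiating (\ref{eqn:introtemp11}) and using boundedness of the profiles away from $\Gamma$, while smoothness near $\Gamma$ in (4) comes from the regular-singular structure of the ODE at $x=0$. The Hawking mass values in (7) are obtained by direct substitution of the limiting profile values, and BV non-extendibility follows from item (3) via $\partial_u \overline{\phi}_k \notin L^1(\{v=0\})$. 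Asymptotic flatness (6) and the continuous extension of $\overline{r}_k$ across $\{u=0\}$ in (8) use the scale-invariant truncation in $\{r \gg 1\}$ together with the asymptotics as $x \to \infty$. Item (9) is obtained by computing affine parameter along outgoing null rays $\{u=c\}$ in the asymptotic region and verifying that the limit ray $\{u=0\}$ is attained at finite affine parameter from below, in the sense of Definition \ref{dfn:incompleteI}.

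The main obstacle is the coordinate change of the second step: the renormalization must simultaneously produce a gauge that is regular across $\{v=0\}$, preserve the self-similar structure in $\{v \leq 1\}$, and encode both the sharp limited regularity of $\overline{\phi}_k$ and the jump in the Hawking mass at the past lightcone. The choice of exponent $1-k^2$ is forced by exactly these three requirements, and once the renormalization is carried out carefully all the remaining items reduce to asymptotic consequences of the ODE analysis and are deferred to Appendix \ref{appA}.
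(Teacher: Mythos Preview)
Your overall strategy---translate Christodoulou's ODE analysis into a renormalized double null gauge and then read off the geometric properties---matches the paper's approach in Appendix~\ref{appA}. However, there are two points worth flagging.

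First, the paper's route differs in an intermediate step: rather than working from the Bondi coordinates of \cite{chris2} directly, the appendix first sets up \emph{self-similar} double null coordinates $(\hat u,\hat v)$ with $S=\hat u\partial_{\hat u}+\hat v\partial_{\hat v}$, re-derives the autonomous $2\times 2$ system in those variables (for quantities $\psi=\mathring r/(\mathring r+\mathring\nu)$ and $\theta=z\psi\mathring\phi'$), and checks that this system is literally the same as the one analyzed in \cite{chris2}. Only then is the renormalization $v=-|\hat v|^{1-k^2}$ applied. This two-step route lets the paper invoke the phase-plane results of \cite{chris2} as a black box and then do the regularity analysis purely in double null language; your direct Bondi-to-renormalized transition would require more care in translating Christodoulou's asymptotic expansions. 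Your description of the truncation as ``gluing a scale-invariant exterior'' is also not quite what the paper does: the appendix truncates the outgoing scalar field data along $\{U=U_{\min}\}$ with a cutoff and then solves a characteristic initial value problem forward, using smallness of $|U_{\min}|$ to close a bootstrap.

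Second, your description of item~(9) has the null directions reversed. Incompleteness of $\mathcal I^+$ in Definition~\ref{dfn:incompleteI} concerns \emph{ingoing} null geodesics emanating from points $p_i$ on a fixed outgoing ray with $v(p_i)\to\infty$, and asks that their affine length toward $\{u=0\}$ be uniformly bounded. The computation is an integral in $u$ of $\Omega^2(u,v)/\Omega^2(u_0,v)$ along constant-$v$ rays, not an affine-parameter computation along outgoing rays $\{u=c\}$.
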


\subsubsection*{Instability under (rough) exterior perturbations}
We briefly turn to the instability of $k$-self-similar solutions under \textit{generic} perturbations of initial data. Without loss of generality we consider perturbations of the outgoing scalar field derivative $\partial_v \phi(-1,v)$, and work with characteristic initial data. 

As mentioned above, key to the instability are ODEs for various double null quantities along $\{v=0\}$. Self-similarity allows for explicit computations of the intrinsic geometry along $\{v=0\}$, leading to the following ODE for the scale invariant quantity $\frac{1}{\lambda}\partial_v \phi$:
\begin{equation}
    \partial_u \Big(\frac{1}{\lambda}\partial_v \phi \Big) - \frac{1+k^2}{|u|}\Big(\frac{1}{\lambda}\partial_v \phi \Big) = F(u).
\end{equation}
The source term $F(u) \sim |u|^{-2}$ is determined by the tangential derivative of the scalar field. The solutions of such singular transport equations are discussed in \cite{igoryak1}, a consequence of which is the following: for $k^2 > 0$, the generic solution to such an ODE satisfies non-self-similar bounds $\lambda^{-1}\partial_v \phi \sim |u|^{-1+k^2}$. Moreover, given any point along $\{v=0\}$, there is a \textit{unique} initial value of $\lambda^{-1}\partial_v \phi$ there that leads to the self-similar bound $|u|^{-1}$ consistent with (\ref{eq:introtemp11}) above.

A strategy for showing nonlinear instability is outlined in Figure \ref{fig0.5}. Choosing a perturbation for outgoing data $\partial_v \phi(-1,v)$ with support on $\{v \geq 0\}$, one can simultaneously arrange that 1) the geometry and scalar field along $\{v=0\}$ are unchanged, by domain of dependence arguments, and 2) the value of $\lambda^{-1}\partial_v \phi(-1,0)$ is shifted away from the unique value leading to self-similar bounds. The ODE analysis above implies the resulting solution cannot remain globally close to the background self-similar solution. We emphasize here that the perturbation thus described has a low regularity, with $\partial_v \phi (-1,v)$ experiencing a jump across $v=0$.

A priori this growth occurs only along $\{v=0\}$; however, coupled with \textit{stability} estimates for the solution in an open set containing $\{v=0\}$, one can show that the non-self-similar growth propagates to a full region near the singularity. Careful consideration of the behavior of the Hawking mass $m$ and use of the trapped surface formation result in \cite{chris1.5} implies a trapped surface must form to the future of $\{v=0\}$. As a consequence of the classification given in \cite{komm}, we conclude the spacetime has complete $\mathcal{I}^+$. Further details on the stability analysis near $\{v=0\}$ is given in \cite{liuli}.

\begin{figure}
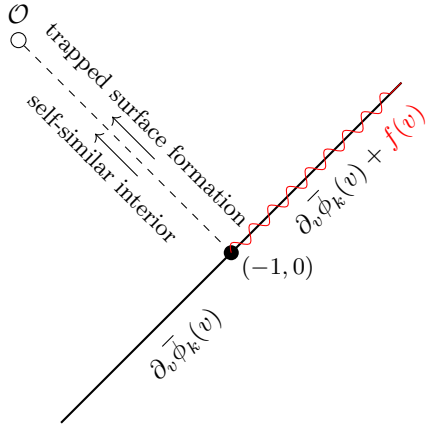

    \centering
  \includestandalone[]{Figures/fig_instability}
  \caption{Establishing non-linear instability of the $k$-self-similar solutions by applying a perturbation (red) to the scalar field supported on $\{v\geq 0\}$.}
  \label{fig0.5}
\end{figure}

\subsection{Further constructions of naked singularities}
A wide variety of naked singularity solutions have been either constructed, or numerically conjectured to exist, leading to a complicated phenomenology. Here we review several examples for different Einstein-matter systems, and comment on the relation to the self-similar examples considered in this paper.

Explicit examples are provided by over-extremal Kerr-Newman spacetimes or negative mass Schwarzschild. Work in the physics literature (see \cite{wald}) suggests that these examples may not be dynamically relevant, and that \enquote{super-charging} initially sub-extremal black holes should not be possible. In various spherically symmetric settings, this argument has been validated in \cite{komm}.

A rigorous construction of naked singularities to the Einstein equations coupled to inhomogeneous dust clouds was given by Christodoulou in \cite{chris4}. In fact, the set of initial data giving rise to such qualitative behavior is open, a violation of (WCC). Still, this behavior is believed to reflect a defective choice of matter model, and may not be representative of the situation in vacuum.

More closely connected to the construction in \cite{chris2} are \cite{anzhang}, \cite{guohadzicjang}. The work \cite{anzhang} lifts spherically symmetric, self-similar solutions of the Einstein-scalar field system with potential to generate vacuum naked singularities in $5+1$ dimensions. Similarly, \cite{guohadzicjang} applies a detailed ODE analysis in self-similarity to construct naked singularities in the $3+1$ dimensional Einstein-Euler system. We also mention the recent work \cite{serban}, studying lifts of $k$-self-similar solutions to vacuum solutions in $4+1$ dimensions. Although the aim is not to directly construct naked singularity solutions, a consequence of the analysis is the existence of suitably defined \enquote{locally} naked singularities.

The remarkable works \cite{igoryak2}-\cite{yakov1} construct the exterior and interior regions respectively of vacuum naked singularities in $3+1$ dimensions, outside of any exact symmetry. The authors identify an analog of $k$-self-similarity in the vacuum setting, and several qualitative features of Christodoulou's solutions carry over. However, the mechanism of singularity formation is unrelated to growth in a matter field (indeed, there is no scalar field in vacuum), and relies on twisting of the shift vector as the singular point is approached. In particular, the mechanism is highly non-spherically symmetric. Although the construction is beyond the scope of this introduction, we note that \cite{igoryak2} provides a very general framework for proving existence of naked singularity exteriors given data satisfying self-similar bounds. The constructions in this paper, in particular the proof of Theorem \ref{thm:introthm2}, draw heavily from the methods developed in \cite{igoryak2}.

We conclude this section by returning to the spherically symmetric Einstein-scalar field model, for which an intriguing picture of naked singularity formation has emerged in the literature surrounding critical collapse; for a review, see \cite{gundlach}. Investigations of $1$-parameter families of initial data on the threshold of dispersion and black hole formation, beginning with the numerical work of Choptuik \cite{choptuik1}, suggest the existence of universal \enquote{critical} solutions containing naked singularities. In contrast to the solutions of \cite{chris2}, which are continuously self-similar and have limited regularity, the solutions observed by Choptuik are discretely self-similar and analytic across the past self-similarity horizon, with a distinguished timescale on which the solution exhibits \enquote{echoes.} The precise relationship between Christodoulou's continuously self-similar solutions and the solutions associated to critical collapse is not yet clear.

Studying threshold solutions to general matter systems provides an alternative path to constructing naked singularities. Critical behavior and naked singularity formation have been numerically illustrated for higher dimensional vacuum spacetimes and a variety of matter models, including kinetic models and massive/charged scalar fields \cite{gundlach}. 

\subsection{Related works and future directions}
While the literature on singularity formation in general relativity is vast, the study of naked singularities and their evolutionary properties is comparatively less developed. Given an expected positive resolution of (WCC) for physically reasonable matter models (and vacuum), generic stability for globally naked singularities cannot be expected. Still, many questions remain related to the existence of stable subspaces, instability mechanisms, and the stability of qualitative features of such spacetimes (e.g. the existence of a central singularity), as well as the critical collapse picture described in the previous section. 

Similar problems of stable singularity formation arise in the context of black hole interiors and cosmological spacetimes. A related work is \cite{greg1}, which shows a backward stability result for the Schwarzschild spacelike singularity in vacuum. This singularity is also unstable under generic perturbations due to the presence of the Kerr family; however, the backwards construction employed there allows identification of a subspace of data along the final timeslice for which the backwards problem admits a solution. The choice of data, and reliance on a high rate of vanishing for the perturbation data near the singular point, is similar to the data used here in the proof of Theorem \ref{thm1:interior}. 

Work on backwards stability of the Schwarzschild singularity is complemented by the work \cite{greg2}, which studies the corresponding \textit{forward} problem. In a suitable symmetry class, the authors pose data on a timeslice to the past of the putative singular surface. A result on the forward problem for naked singularity formation is not yet available, even in spherical symmetry.

We finally remark that although the main motivation for this paper an investigation of the stability properties of Christodoulou's continuously self-similar spacetimes, it is natural to ask whether a wider class of solutions satisfies the assumptions of Section \ref{sec:assumptionsbackground}. The techniques used here do require a double null gauge and self-similar bounds consistent with Christodoulou's spacetimes, but importantly do not require continuous self-similarity. It is feasible that the results proved here could be extended to apply to Choptuik's critical collapse spacetime, although closer study of the latter is required.

\subsection{Guide to the paper}
We conclude this section with an outline to the paper. Section \ref{sec:prelims} provides a series of analytical preliminaries, including the formulations of the Einstein-scalar field system needed here, a review of solution classes, and useful integration lemmas. Section \ref{sec:solnclasses} reviews local existence results for the Einstein-scalar field system, and derives an existence result for $C^1$ solutions in a non-standard geometry. The latter result, Proposition \ref{prop:weirdlocalex}, is necessary for the proof of Theorem \ref{thm1:interior}.

Section \ref{sec:mainresults} covers the paper's main results. In Sections \ref{sec:assumptionsbackground}, \ref{sec:admissibledata}, we outline the class of admissible background spacetimes and initial data perturbations respectively. Section \ref{sec:thmstatements} states the main theorems, asserting the existence of perturbed naked singularity interiors and exteriors. A detailed outline of the proof is given in \ref{subsec:introproofoutline}.

Section \ref{sec:proofthm1} completes the proof of Theorem \ref{thm1:interior}, and Section \ref{sec:proofthm2} the proof of Theorem \ref{thm2:exterior}. 

Appendix \ref{appA} discusses Christodoulou's $k$-self-similar solutions in detail, translating the existence proof of \cite{chris2} to a double null gauge. Moreover, it is shown that Christodoulou's solutions satisfy all the assumptions required for Theorems \ref{thm1:interior} and \ref{thm2:exterior}.

\subsection{Acknowledgements} 
The author is grateful to his advisor, Igor Rodnianski, for suggesting this problem and providing guidance throughout the project. The author also acknowledges valuable conversations with Yakov Shlapentokh-Rothman and Mihalis Dafermos.
\section{Preliminaries}
\label{sec:prelims}
\subsection{Outline of spherical symmetry and the solution manifold $\mathcal{Q}$}
\label{subsec:conseqsphersym}
The starting point for the study of the Einstein-scalar field system is a Lorentzian $3+1$ dimensional manifold $(\mathcal{M}, \textbf{g}_{\mu \nu}),$ with $\textbf{g}_{\mu \nu}$ a Lorentzian metric. The manifold is equipped with a real valued massless scalar field, $\phi: \mathcal{M} \rightarrow \mathbb{R}$. 

We consider the subclass of spherically symmetric spacetimes. These spacetimes admit an isometric action of $SO(3)$, such that the orbits of points $x \in \mathcal{M}$ are either 1) a single point, or 2) a sphere in $\mathcal{M}$ that is spacelike with respect to $\textbf{g}_{\mu \nu}$. Under these conditions one can consider the $1+1$ dimensional quotient spacetime $\mathcal{Q} \doteq \mathcal{M} / SO(3)$, with Lorentzian quotient metric $g_{\mu \nu}$. We additionally require that $\phi$ descend to a function on $\mathcal{Q}$, for which we use the same symbol.

The quotient spacetime is a manifold with boundary, with the boundary corresponding to the projection of fixed points of the $SO(3)$ action. Define the projection map $\pi: \mathcal{M} \rightarrow \mathcal{Q}$, and label this boundary $\Gamma$. This boundary, alternatively called the axis or center, will be assumed to be a connected, timelike curve in $\mathcal{Q}$.

A convenient characterization of the boundary $\Gamma$ is given by the introduction of the geometric radius function $r: \mathcal{Q} \rightarrow \mathbb{R}_{\geq 0},$ where
\begin{equation*}
    r(p) \doteq \sqrt{ \frac{A(\text{proj}^{-1}(p))}{4\pi}}.
\end{equation*}
Here $A(U)$ is the area of a set $U \subset \mathcal{M}$. The boundary is then characterized by the condition $\Gamma = \{p\ |\ r(p) = 0\}.$

The discussion thus far has been coordinate independent. We specialize to the case where the quotient spacetime $\mathcal{Q}$ is covered by a global double null coordinate system. In such a coordinate system the metrics on $\mathcal{M}$ and $\mathcal{Q}$ respectively take the form
\begin{align*}
    \textbf{g}_{\mu\nu} &= -\Omega^2(u,v) du dv + r(u,v)^2 d\sigma_{S^2}, \\
    g_{\mu\nu} &= -\Omega^2(u,v) du dv.
\end{align*}
Here $\Omega^2$ is the coordinate-dependent null lapse, and $d\sigma_{S^2}$ is the round metric on $S^2$. It follows that the metric and scalar field together determine three functions on $\mathcal{Q}$, namely $(r, \Omega, \phi)$. In the next section the Einstein-scalar field system is recast as a closed system for these unknowns.

Additionally define the Hawking mass 
\begin{equation}
    \label{eq:defnofm}
    m \doteq \frac{r}{2}(1 - g(\nabla r, \nabla r )) = \frac{r}{2}(1 + \frac{4\partial_u r \partial_v r}{\Omega^2}).
\end{equation}
The mass is a geometric quantity, and may be used in place of $\Omega^2$ as a primitive unknown quantity. We largely take such an approach in this paper.

A convenient proxy for studying the mass, which is often more useful when approaching the axis, is the mass ratio 
\begin{equation*}
    \mu \doteq \frac{2m}{r}.
\end{equation*}
Finally, define the coordinate derivatives
\begin{equation}
    \label{eq:defnofcoordders}
    \nu \doteq \partial_u r, \ \ \ \lambda \doteq \partial_v r.
\end{equation}

\begin{figure}
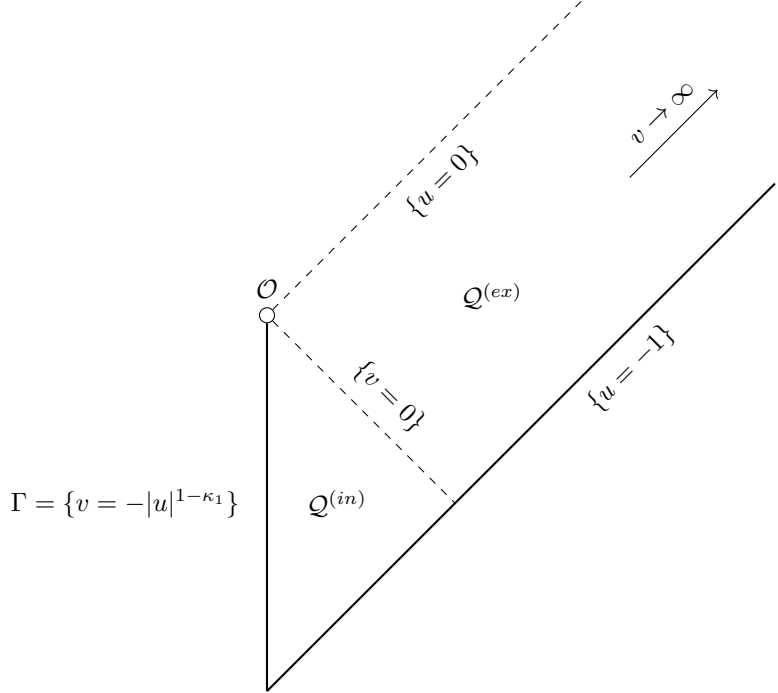

    \centering
  \includestandalone[]{Figures/fig_mainpenrose}
  \caption{The solution manifold, with divisions into the interior $\mathcal{Q}^{(in)}$ and exterior $\mathcal{Q}^{(ex)}$. Note the parameter $\kappa_1,$ which is a property of the admissible background solution. Christodoulou's solutions have $\kappa_1 = k^2$.}
  \label{fig1}
\end{figure}
\vspace{-1.5em}
\subsubsection{The solution manifold $\mathcal{Q}$}
The normalizations for the global double null coordinate system $(u,v)$ used in this paper are adapted to the $k$-self-similar models. For further discussion of this point, see Section \ref{sec:assumptionsbackground}. Associated to the spacetime is a parameter $\kappa_1 \in (0, \frac{1}{2})$ capturing the properties of the underlying self-similar model near the past lightcone of the singular point $
\mathcal{O}$. With respect to these coordinates $\mathcal{Q}$ takes the form in Figure \ref{fig1}, with coordinate description 
\begin{equation}
    \mathcal{Q} = \{(u,v) \ | \ -1 \leq u < 0, \ -|u|^{1-\kappa_1} \leq v < \infty\}.
\end{equation}
The spacetime contains a regular center $\Gamma$ given by 
\begin{equation}
    \Gamma = \{v = -|u|^{1-\kappa_1}, \ u < 0\}.
\end{equation}
The assumptions on the background solution will imply that $\Gamma$ is future incomplete, terminating at $\mathcal{O}$. It will follow that no suitably regular extensions exist in a neighborhood of $\mathcal{O}$, and in particular, $\mathcal{O}$ cannot be included as a point of the spacetime. We will often still represent $\mathcal{O}$ in coordinates as the point $(u,v) = (0,0)$. $\mathcal{Q}$ will be the maximal globally hyperbolic development of characteristic data along $\{u=-1\}$.

We now develop notation for distinguished subsets of $\mathcal{Q}$. The main decomposition of the spacetime introduces the \enquote{interior} and \enquote{exterior} region of the singular point $\mathcal{O}$, defined as 
\begin{equation*}
    \mathcal{Q}^{(in)} = \mathcal{Q} \cap \{v \leq 0 \} , \ \ \ \mathcal{Q}^{(ex)} = \mathcal{Q} \cap \{v \geq 0 \}.
\end{equation*}
For fixed $(u,v) \in \mathcal{Q}$ we define the characteristic subset
\begin{equation*}
    \mathcal{Q}_{u,v} \doteq \mathcal{Q} \cap \{u' \leq u, \ v' \leq v \}.
\end{equation*}
For an arbitrary subset $\mathcal{U} \subset \mathcal{Q}$ we define $\mathcal{U}_{u,v}$ similarly. 

Constant $u$ and $v$ null hypersurfaces are denoted by 
\begin{equation*}
    \Sigma_u \doteq \{(u',v') \in \mathcal{Q} \ | \ u' = u \}, \ \ \ \underline{\Sigma}_v \doteq \{(u',v') \in \mathcal{Q} \ | \ v' = v \}.
\end{equation*}
Denote the restriction of $\Sigma_u$ to $\mathcal{Q}^{(in)}$ by $\Sigma^{(in)}_u$, with similar definitions for $\Sigma^{(ex)}_u, \underline{\Sigma}^{(in)}_v, \underline{\Sigma}^{(ex)}_v$.

For working near the axis, define $u_\Gamma(v)$ to denote the unique $u$ value (if one exists) satisfying $(u_\Gamma(v), v) \in \Gamma$. Similarly define $v_\Gamma(u)$.

We finally introduce the self-similar coordinate $s_1 \doteq \frac{v}{|u|^{q_1}}$. With respect to this coordinate we can identify various important surfaces in $\mathcal{Q}$ as level sets of $s_1$:
\begin{equation*}
    \Gamma = \{s_1 = -1 \}, \ \ \ \underline{\Sigma}_0 = \{s_1 = 0\}.
\end{equation*}
Although $\Sigma_0$ is not a subset of spacetime, we will still suggestively write $\Sigma_0 = \{s_1 = \infty \}$. Neighborhoods with respect to the $s_1$ coordinate are called \textit{self-similar neighborhoods}, and many of our estimates will be adapted to such regions.

We give names to the following self-similar neighborhoods of $\underline{\Sigma}_0, \ \Gamma$, respectively, which will be required in the proof of Theorem \ref{thm1:interior}:
\begin{align}
    \mathcal{S}_{near} \doteq \mathcal{Q}^{(in)} \cap \{s_1 \leq \frac{1}{2} \} \label{dfn:snear}, \ \ \ \mathcal{S}_{far} \doteq \mathcal{Q}^{(in)} \cap \{s_1 \geq \frac{1}{2} \}. 
\end{align}

\subsection{Formulations of the Einstein-scalar field system and consequences}

The condition of spherical symmetry reduces the full Einstein-scalar field system on $(\mathcal{M},\textbf{g}_{\mu\nu})$ to a simpler $1+1$ system on $(\mathcal{Q},g_{\mu\nu})$. Several equivalent formulations exist, emphasizing different primitive functions of the metric and scalar field.

Written as a system for the unknowns $(r, \Omega, \phi)$, the relevant equations become
\begin{align}
    r \partial_u \partial_v r &= -\partial_u r \partial_v r - \frac{1}{4}\Omega^2, \label{SSESF:1:1} \\[.3em]
    \label{SSESF:1:2}
     r^2 \partial_u \partial_v \log \Omega &= \partial_u r \partial_v r + \frac{1}{4}\Omega^2 - r^2 \partial_u \phi \partial_v \phi, \\[.3em]
     \label{SSESF:1:3}
    r\partial_u \partial_v \phi &= -\partial_u r \partial_v \phi - \partial_v r \partial_u \phi, \\[.3em]
    \label{SSESF:1:4}
    2 \Omega^{-1} \partial_u r \partial_u \Omega &= \partial_u^2 r + r (\partial_u \phi)^2, \\[.3em]
     \label{SSESF:1:5}
     2 \Omega^{-1} \partial_v r \partial_v \Omega &= \partial_v^2 r + r (\partial_v \phi)^2.
\end{align}
Schematically, (\ref{SSESF:1:1})-(\ref{SSESF:1:3}) consist of wave equations for $r, \Omega, \phi,$ and (\ref{SSESF:1:4})-(\ref{SSESF:1:5}) transport equations along constant $u$ and $v$ hypersurfaces. The latter importantly only concern the geometry and scalar field quantities intrinsic to the hypersurfaces, and may thus be thought of as constraint equations.

It will often be convenient to estimate the geometric quantities $(r,m,\phi)$ and their coordinate derivatives. Making use of the definitions (\ref{eq:defnofm})-(\ref{eq:defnofcoordders}), we recast the system in the following form:
\begin{align}
    \label{SSESF:2:1}
    \partial_u \lambda &= \frac{\mu \lambda \nu}{(1-\mu)r}, \\[.3em]
    \label{SSESF:2:2}
    \partial_v \nu &= \frac{\mu \lambda \nu}{(1-\mu)r},\\[.3em]
    \label{SSESF:2:3}
    r\partial_u \partial_v \phi &= -\nu \partial_v \phi - \lambda \partial_u \phi,\\[.3em]
    \label{SSESF:2:4}
    2\nu\partial_u m &= r^2(1-\mu)(\partial_u \phi)^2,\\[.3em]
    \label{SSESF:2:5}
    2\lambda\partial_v m &= r^2(1-\mu)(\partial_v \phi)^2.
\end{align}
An alternative form of the wave equation (\ref{SSESF:2:3}), more suitable for study near the axis and in asymptotically flat regions, is given by considering the equation for $r\phi$:
\begin{equation}
    \label{SSESF:2:6}
    \partial_u \partial_v (r\phi) = \frac{\mu \lambda \nu}{(1-\mu)r^2}(r\phi) = (\partial_v \nu) \phi.
\end{equation} 
We will also make use of equations for the auxiliary quantities $\mu, \ \frac{\lambda}{1-\mu}, \ $and $\frac{\nu}{1-\mu}$ due to their favorable structure. Straightforward calculations with (\ref{SSESF:2:1})-(\ref{SSESF:2:5}) yield
\begin{align}
    \label{SSESF:2:7}
    \partial_u \mu &= -\frac{\nu}{r}\mu + \frac{r}{\nu}(1-\mu)(\partial_u \phi)^2, \ \ \ \
    \partial_v \mu = -\frac{\lambda}{r}\mu + \frac{r}{\lambda}(1-\mu)(\partial_v \phi)^2,
\end{align}
as well as the Raychaudhuri equations 
\begin{align}
    \label{SSESF:2:9} 
    \partial_u \log \Big(\frac{\lambda}{1-\mu} \Big) &= \frac{r}{\nu}(\partial_u \phi)^2, \ \ \ \ 
    \partial_v \log \Big(\frac{\nu}{1-\mu} \Big) = \frac{r}{\lambda}(\partial_v \phi)^2.
\end{align}
The system is complemented by the boundary conditions along $\Gamma$, assuming a regular center:
\begin{equation}
     \label{SSESF:2:10.5}
     (r\phi)|_\Gamma = r|_\Gamma = m|_\Gamma = 0.
\end{equation}

We conclude this section by remarking how the above system specializes in the regime of this paper, namely the setting of naked singularity spacetimes. 

We will work in regions containing no trapped and no anti-trapped surfaces, for which the following sign conditions hold pointwise:
\begin{equation*}
    \nu < 0, \ \ \ \lambda > 0, \ \ \  0 < (1-\mu) < 1.
\end{equation*}
Inspecting (\ref{SSESF:2:1}), (\ref{SSESF:2:2}), (\ref{SSESF:2:4}), (\ref{SSESF:2:5}) shows that under these sign conditions various error terms have definite signs. We will be able to make use of this monotonicity information in deriving a priori estimates.

It is also important to note that despite the factors of $r^{-1}$ appearing in the system, we will work with a regular center to the past of $\mathcal{O}$, i.e. double null unknowns will have finite limits as $r \rightarrow 0$. Propagating such finiteness requires the solution to satisfy regularity conditions at the center, and requires one to propagate bounds on $m, \mu$ consistent with vanishing as $r \rightarrow 0$. The solution classes with which we work are regular enough to ameliorate difficulties at the axis.

It remains to define solution classes that incorporate the above system, appropriate boundary/initial conditions, and gauge conditions.

\subsection{Solution classes and local existence results}
\label{sec:solnclasses}

The sharp regularity of the solutions constructed here, just as for Christodoulou's solutions, varies across the spacetime (cf. Theorem \ref{thm:christodoulou_solutions}). In addition to the severe loss of regularity as $u \rightarrow 0$, the solutions also display limited regularity across the horizons $\{v=0\} \cup \{u=0\}$. For this reason we will work with two classes of solutions introduced in \cite{chris1}, namely the low regularity bounded variation (BV) class and the $C^1$ class. After introducing essential features of the solution classes, we recall well-posedness results for the characteristic initial value problem. We finally discuss well-posedness in a certain non-standard geometry, which will be of use in the proof of Theorem \ref{thm1:interior}.

In the following, let $AC(U), \ BV(U)$ denote the space of absolutely continuous and bounded variation functions on a subset $U$, respectively. We denote the total variation norm by $T.V$. Recall the parameter $\kappa_1$ reflecting the choice of double null gauge, and define $q_1 = 1-\kappa_1, \ p_1 = (1-\kappa_1)^{-1}$.

Fix a point $(u,v) \in \mathcal{Q}$. We first define the notion of a BV solution on the characteristic domain $\mathcal{Q}_{u,v}.$

\begin{dfn}
\label{dfn:BV}
A \textbf{BV solution} to the system (\ref{SSESF:1:1})-(\ref{SSESF:1:5}) in a domain $\mathcal{Q}_{u,v}$ consists of a triple $(r,m,\phi)$ satisfying the equations distributionally, subject to the following requirements:
\begin{enumerate}
    \item $\sup_{\mathcal{Q}_{u,v}} (-\nu) < \infty$, \ \text{and} \ \ $\sup_{\mathcal{Q}_{u,v}} \lambda^{-1} < \infty$, 
    \item $\lambda \in BV(\Sigma_{u} \cap \mathcal{Q}_{u,v})$ uniformly in $u$, and $\nu \in BV(\underline{\Sigma}_{v} \cap \mathcal{Q}_{u,v})$ uniformly in $v$.
    \item $\phi \in AC(\Sigma_{u} \cap  \mathcal{Q}_{u,v})$, and has $T.V.$ norm bounded uniformly in $u$. Similarly, $\phi \in AC(\underline{\Sigma}_{v} \cap  \mathcal{Q}_{u,v})$, and has $T.V.$ norm bounded uniformly in $v$.
    \item  $\partial_v(r\phi) \in BV(\Sigma_{u} \cap \mathcal{Q}_{u,v} )$ uniformly in $u$, and $\partial_u(r\phi) \in BV(\underline{\Sigma}_{v} \cap \mathcal{Q}_{u,v})$ uniformly in $v$.
     \item For any point $(u,v) \in \Gamma$, we have the regularity conditions
    \begin{equation*}
        \lim_{\epsilon \rightarrow 0} \sup_{0 < \delta \leq \epsilon} T.V._{\{u-\delta\} \times (v_\Gamma(u-\delta),v)}[\phi] = 0, \ \ \  \lim_{\epsilon \rightarrow 0} \sup_{0 < \delta \leq \epsilon} T.V._{(u_\Gamma(v-\epsilon), u_\Gamma(v-\delta)) \times \{v - \delta \} }[\phi] = 0,
    \end{equation*}
    \begin{equation*}
        \lim_{\epsilon \rightarrow 0} \sup_{0 < \delta \leq \epsilon} T.V._{(u,u_\Gamma(v + \delta)) \times \{v+\delta \} }[\phi] = 0, \ \ \ \lim_{\epsilon \rightarrow 0} \sup_{0 < \delta \leq \epsilon} T.V._{\{u+\delta\} \times (v_\Gamma(u+\delta), v_\Gamma(u+\epsilon)) }[\phi]=0.
    \end{equation*} 
    \item The boundary conditions (\ref{SSESF:2:10.5}) hold along $\Gamma \cap \mathcal{Q}_{u,v}$.
    \item  For any point $(u,v)\in \Gamma$, 
    \begin{align*}
        &\lim_{\delta \rightarrow 0} \ (\lambda + p_1|u|^{\kappa_1} \nu)(u,v+\delta) = 0, \\[.3em]
        &\lim_{\delta \rightarrow 0} \ (\partial_v(r\phi) +p_1|u|^{\kappa_1} \partial_u(r\phi))(u,v+\delta) = 0.
    \end{align*}
\end{enumerate}
\end{dfn}
For the solutions considered in this paper, the key requirement of BV regularity is that $\partial_u(r\phi), \nu$ be uniformly integrable along ingoing null hypersurfaces, and similarly for $\partial_v(r\phi), \lambda$. It is \textit{not} required that these quantities have bounded derivatives. The solutions considered in this paper will have this property for the outgoing derivatives $\partial_v^2(r\phi), \partial_v \lambda$, which will generically only be integrable functions across $\{v=0\}$. For this reason BV is a natural regularity class in this region.

Away from the horizons, the higher order derivatives of $r, \phi$ enjoy better bounds. This added regularity is captured in the notion of a $C^1$ solution below:

\begin{dfn}
\label{dfn:C1}
A $\mathbf{C^1}$ \textbf{solution} to the system (\ref{SSESF:1:1})-(\ref{SSESF:1:5}) in a domain $\mathcal{Q}_{u,v}$ consists of a triple $(r,m,\phi)$ satisfying the equations pointwise, subject to the following requirements:
\begin{enumerate}
    \item $ \sup_{\mathcal{Q}_{u,v}} (-\nu) < \infty$, \ \text{and} \ \ $\sup_{\mathcal{Q}_{u,v}} \lambda^{-1} < \infty$, 
    \item $\nu, \ \lambda, \ \partial_u(r\phi), \ \partial_v(r\phi) \in C^1(\mathcal{Q}_{u,v})$.
    \item The boundary conditions (\ref{SSESF:2:10.5}) hold along $\Gamma \cap \mathcal{Q}_{u,v}$.
    \item For any point $(u,v)\in \Gamma$, 
    \begin{align*}
        &\lim_{\delta \rightarrow 0} (\lambda + p_1|u|^{\kappa_1} \nu)(u,v+\delta) = \lim_{\delta \rightarrow 0} (\lambda + p_1|u|^{\kappa_1} \nu)(u-\delta,v)=0, \\[.3em]
        &\lim_{\delta \rightarrow 0} (\partial_v(r\phi) + p_1|u|^{\kappa_1} \partial_u(r\phi))(u,v+\delta) = \lim_{\delta \rightarrow 0} (\partial_v(r\phi) + p_1|u|^{\kappa_1} \partial_u(r\phi))(u-\delta,v)=0.
    \end{align*}
\end{enumerate}
\end{dfn}
Given a BV ($C^1$) solution on $\mathcal{Q}_{u,v}$ for all $(u,v) \in \mathcal{Q}$, we will speak of a BV ($C^1$) solution on $\mathcal{Q}$. We will often speak of BV ($C^1$) solutions in self-similar neighborhoods in $\mathcal{Q}$, by which we mean a corresponding solution in the neighborhood intersected with all $\mathcal{Q}_{u,v}.$

\subsubsection{Standard local existence results}
We now discuss local existence theory for (\ref{SSESF:1:1})-(\ref{SSESF:1:5}), or equivalently, for the Hawking mass formulation (\ref{SSESF:2:1})-(\ref{SSESF:2:5}).  The spherically symmetric Einstein-scalar field system admits well-posed initial value problems in both of the above classes, with propagation of higher regularity. 

\vspace{5pt}
\noindent
A standard local existence problem is that of a single outgoing null surface $\Sigma_{u_0}$, for some $u_0 \in [-1,0)$. Here $\Sigma_{u_0}$ is assumed to intersect the axis at a point $(u_0, v_\Gamma(u_0)).$ The free data consists of a gauge choice for $\lambda(u_0,v),$\footnote{In the following we assume gauges along initial data are always selected to satisfy $\nu < 0, \ \ \lambda >0$. It follows that we do not allow trapped surfaces already in data.} and a choice of scalar field $\partial_v(r\phi)(u_0,v)$, for $v \in [v_\Gamma(u_0),\infty)$. Up to a coordinate change one may in fact assume $\lambda(u_0,v)  =\frac{1}{2}$. Along with the condition $r(u_0, v_\Gamma(u_0)) = 0$, we derive $r(u_0,v)$ along the outgoing surface. Given $\partial_v(r\phi),$ integrating in $v$ gives $\phi$ and $\partial_v\phi$ along the initial data surface. Therefore, it is equivalent to give $\phi$ or $\partial_v(r\phi)$ initially. Finally, one can integrate (\ref{SSESF:2:5}) and use the boundary condition $m(u_0, v_\Gamma(u_0)) = 0$ to compute $m(u_0,v)$. 

In this case, cf. \cite{chris1}, given data for $\partial_v (r\phi)$ in $BV(\Sigma_{u_0}) $, there exists a domain of the form $\{u_0 \leq u \leq u_0 +\delta\}$ and a unique BV solution achieving the data. For data that is additionally $C^1$, the solution is in the $C^1$ class (indeed, for the natural notion of $C^k$ solution, propagation of higher regularity holds as well). A key challenge in this setting is developing a robust scheme of higher order estimates for establishing regularity near the axis, and avoiding singularities due to $r^{-1}$ factors in the scalar field system.

\vspace{5pt}
\noindent
The second local existence result we need is the fully characteristic initial value problem, in which data is posed on intersecting null hypersurfaces $\Sigma_{u_0} \cup \underline{\Sigma}_{v_0},$ intersecting at a point $(u_0,v_0)$. The appropriate data is a choice of $r(u_0,v_0) > 0, \ m(u_0,v_0), \ \phi(u_0,v_0)$ at the intersection point, gauge conditions for $\lambda(u_0,v), \ \nu(u,v_0)$, and choices of scalar field derivatives $\partial_v(r\phi)(u_0,v), \ \partial_u(r\phi)(u,v_0)$ along the data surfaces. From the boundary conditions at the intersection point and the specified data, one can fully compute the value of $r, m, \phi$, and tangential derivatives thereof, along the initial data surfaces. 

With appropriate gauge choice, and given BV initial data for the scalar field derivatives, there exists a neighborhood $\{u_0 \leq u \leq u_0 + \delta, \ v_0 \leq v \leq v_0 + \delta \}$ and a unique BV solution achieving the data. A similar propagation of regularity statement for $C^1$ and $C^k$ solutions holds. Moreover, provided $\inf_{\Sigma_{u_0}\cup\underline{\Sigma}_{v_0}} r \geq c > 0$, one can solve in a full neighborhood $\{u_0 \leq u \leq u_0+\delta\} \cup \{v_0 \leq v \leq v_0 +\delta\}$ of data. 

To establish existence in the \textit{full} neighborhood of a null hypersurface, say the ingoing $\underline{\Sigma}_{v_0}$, one can no longer integrate $u$ transport equations and apply smallness of the region in the $u$ direction. A resolution of this problem in the vacuum setting is due to \cite{luk1}, relying on a reductive structure in the equations. One need only utilize smallness in one direction, applying Grönwall in the long direction to estimate all remaining quantities. The argument applies equally well to the spherically symmetric Einstein-scalar field system, provided one stays a finite distance away from the axis. In the next section we discuss an extension to the setting where $r$ is not bounded below.

\subsubsection{Local existence in a mixed timelike-characteristic region}

\begin{figure}[h]
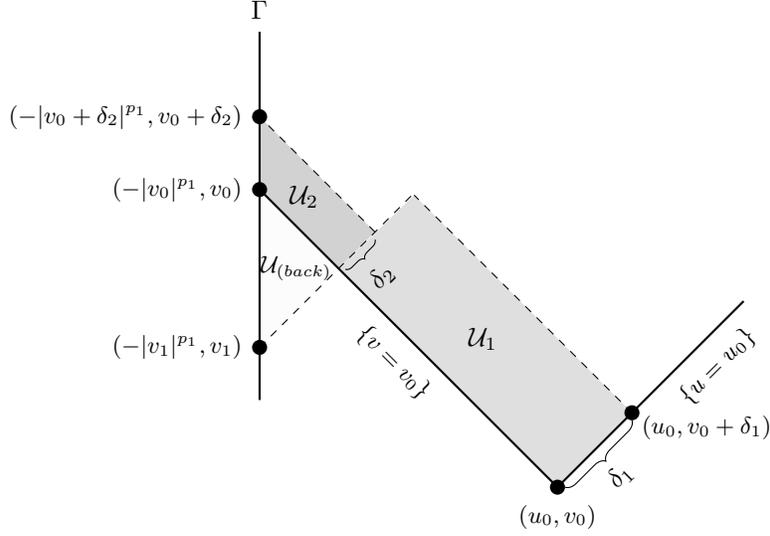

    \centering
  \includestandalone[]{Figures/fig_newlocalexistence}
  \caption{Diagram for the local existence result of Proposition \ref{prop:weirdlocalex}}
  \label{figidk}
\end{figure}

In the sequel we need a slight extension of the previous local existence results to a setting with both a bifurcate null hypersurface and an axis. See Figure \ref{figidk}. Assume a point $(u_0,v_0) \in \mathcal{Q}$ is fixed, along with a pair of ingoing and outgoing null surfaces $\underline{\Sigma}_{v_0}, \ \Sigma_{u_0}$. Pose the data  $\lambda(u_0,v), \ \partial_v (r\phi)(u_0,v) \in C^{1}(\Sigma_{u_0}),$ and $\nu(u,v_0),  \partial_u(r\phi)(u,v_0)\in C^{1}(\underline{\Sigma}_{v_0})$. It follows from this choice of data, the scalar field system, and boundary conditions at the axis that the values of $r,m,\phi$ are everywhere determined on $\underline{\Sigma}_{v_0} \cup \Sigma_{u_0}$.

Unlike the problem considered above, we assume $\underline{\Sigma}_{v_0}$ intersects the axis at a single point, given by $(-|v_0|^{p_1},v_0)$ with respect to our double null coordinates. Local existence in a full neighborhood of the ingoing cone $\underline{\Sigma}_{v_0}$ must contend with two issues, already suggested in the previous section: 1) integrating $u$ equations does not come with any smallness, and hence estimates for quantities only satisfying $u$ equations (e.g. $\partial_v \phi$)  require additional structure; 2) the standard scheme for estimating scalar field derivatives near the axis require estimating \textit{higher} derivatives of $r\phi$. Commuting the system in order to establish these bounds complicates the structure of the equations, rendering estimates for quantities like $\partial_v^2(r\phi)$ difficult.

Christodoulou overcame these challenges in \cite{chris1}, which introduces a flexible extension principle for establishing regularity near the axis. We first recall this result, rephrased to apply in our non-standard gauge. Assume as above that $\Gamma = \{-|u|^{q_1} = v, \ u < 0\}.$ For fixed $(\wt{u},\wt{v}) \in \mathcal{Q}$ define
\begin{equation*}
    \mathcal{D}(\wt{u},\wt{v}) \doteq \{(u,v)\ | \ \wt{u} \leq u \leq -|v|^{p_1}, \ -|\wt{u}|^{q_1} \leq v \leq \wt{v}\},
\end{equation*}
This set may be interpreted as the domain of dependence of $\Sigma_u \cap \{v \leq \wt{v}\}.$ 

\begin{prop}[\cite{chris2}]
Fix an outgoing null surface $\Sigma_u$ with past endpoint along $\Gamma$, and data $\lambda(v), \partial_v(r\phi) \in C^1(\Sigma_{u})$. Assume there exists a $v_* > -|u|^{q_1}$ and a $C^1$ solution on $\mathcal{D}(u,v)$ for all $v < v_*$. There exists a constant $\epsilon > 0$  such that the following extension principle holds:

Suppose that as $(u,v_*)$ is approached, $\mu$ satisfies the smallness assumption
\begin{equation}
    \label{eq:extensionprinciple}
    \lim_{u \rightarrow -|v_*|^{p_1}} \sup_{\mathcal{D}(u,v_*)} \mu(u,v) < \epsilon.
\end{equation}
Then there exists a $v_{**} > v_*$ and a $C^1$ extension of the solution on the region $\mathcal{D}(u,v_{**})$.
\end{prop}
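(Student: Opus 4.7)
I would approach this as a bootstrap argument. The goal is to upgrade the smallness of $\mu$ near the axis to uniform $C^1$ bounds on the primary double null unknowns up to $\{v=v_*\}$, and then apply the standard characteristic local existence theorem to extend past $v_*$.

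On the compact sub-regions $\mathcal{D}(u, v_* - \eta)$ with $\eta > 0$, I would make bootstrap assumptions $\lambda \leq M$, $|\nu| \leq M$, $\mu \leq 2\epsilon$ together with $C^1$ bounds on $\partial_u(r\phi), \partial_v(r\phi)$. The sign conditions $\nu < 0, \lambda > 0, 1-\mu > 0$ are preserved so long as $\epsilon$ is small. The Raychaudhuri equations (\ref{SSESF:2:9}) have favorable signs: $\lambda/(1-\mu)$ is monotone in $u$ and $\nu/(1-\mu)$ in $v$, so both are controlled pointwise by their initial values along $\Sigma_u$ and at the axis respectively, combined with the uniform bound $(1-2\epsilon)^{-1}$ on $(1-\mu)^{-1}$. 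The mass transport equations (\ref{SSESF:2:4})-(\ref{SSESF:2:5}), together with integrated control of $\partial_u\phi, \partial_v\phi$, then improve the bound on $\mu$ back to $\mu \leq \epsilon + O(\eta)$, closing the primary bootstrap.

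The scalar field estimates are most delicate near $\Gamma$, and here I would work exclusively with the regularized variable $r\phi$ via equation (\ref{SSESF:2:6}). The apparent singular coefficient $\mu \lambda \nu /((1-\mu)r)$ is bounded once $\mu$ vanishes at an appropriate rate as $r \to 0$, which follows from the boundary condition $m|_\Gamma = 0$ in (\ref{SSESF:2:10.5}) and the Raychaudhuri monotonicity. Integrating (\ref{SSESF:2:6}) along each of the null directions, using the axis boundary data to close off the integrations that terminate on $\Gamma$, yields uniform control of $\partial_u(r\phi), \partial_v(r\phi)$; differentiating once more and applying Gr\"onwall closes the $C^1$ portion of the bootstrap. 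The $u$-direction comes equipped with smallness from the compact exhaustion, while the $v$-direction can be handled by Gr\"onwall in the long direction, mirroring the reductive structure of the argument used in \cite{luk1}.

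With uniform $C^1$ bounds up to $\{v = v_*\}$, the standard local existence theorem applies at each point of the limiting ingoing cone and yields a $C^1$ extension into $\mathcal{D}(u, v_{**})$ for some $v_{**} > v_*$; uniqueness then glues this extension to the original solution. The main obstacle, as anticipated, is the interaction between the axis $\Gamma$ and the extension direction: one must verify that the gauge-compatibility relations in item (4) of Definition \ref{dfn:C1}, namely $\lambda + p_1|u|^{\kappa_1}\nu \to 0$ and the analogous condition for $\partial_v(r\phi)$, propagate forward under the evolution as $v \to v_*$. This is handled by differentiating these identities tangentially along $\Gamma$ and observing that the resulting quantities satisfy homogeneous transport equations with the coefficients bounded in the regime of small $\mu$, so that vanishing along $\Gamma$ is preserved. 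This axis issue is precisely why one works with $r\phi$ rather than $\phi$ and with $\nu, \lambda$ as the primitive geometric unknowns, rather than attempting to bound $\phi$ and its derivatives directly.
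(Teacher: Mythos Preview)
The paper does not prove this proposition: it is stated as a result of Christodoulou (attributed to \cite{chris2}, with the surrounding text also pointing to \cite{chris1}) and invoked as a black box in the subsequent proof of Proposition~\ref{prop:weirdlocalex}. There is therefore no in-paper argument to compare your proposal against.

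That said, your sketch is broadly in the spirit of Christodoulou's original extension principle, which does proceed by converting smallness of $\mu$ near the axis into uniform a priori bounds and then invoking local existence. Two points where your outline drifts from the actual mechanism: first, the smallness that closes the bootstrap is not ``smallness from the compact exhaustion'' in the $u$-direction but the assumed smallness of $\mu$ itself, which enters multiplicatively in the coefficient $\mu\lambda\nu/((1-\mu)r)$ of (\ref{SSESF:2:6}) and is what makes the near-axis scheme contract. Second, the axis compatibility conditions in item~(4) of Definition~\ref{dfn:C1} are not something one propagates via ``homogeneous transport equations along $\Gamma$''; they are automatic once $r$ and $r\phi$ are shown to vanish on $\Gamma$ and to be $C^1$ up to $\Gamma$, since they simply record that the tangential derivative of a function vanishing identically on a curve is zero. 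The genuine work, as the present paper's Section~\ref{subsubsec:axisintlemmas} indicates, lies in the averaging-operator estimates needed to recover $\partial_u\phi$, $\partial_v\phi$ from $\partial_u(r\phi)$, $\partial_v(r\phi)$ near $r=0$, which your sketch does not address.
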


The local existence result for data in the form of Figure \ref{figidk} will now follow easily from the extension principle, alongside the aforementioned standard local existence problems.
\begin{prop}
\label{prop:weirdlocalex}
Assume data along $\underline{\Sigma}_{v_0} \cup \Sigma_{u_0}$ is specified as above, and is $C^1$. Then there exists a $\delta > 0$ depending on the $C^1$ norms of the initial data, and a unique local $C^1$ solution to the Einstein-scalar field system on the domain 
\begin{equation}
    \{v_0 \leq v \leq v_0 + \delta , \ u_0 \leq u \leq -|v|^{p_1}\}.
\end{equation}
\end{prop}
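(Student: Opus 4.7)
The plan is to combine a Luk-type bifurcate characteristic local existence in the bulk of the domain with Christodoulou's extension principle to handle the axis boundary. The target region splits naturally into two pieces: a \emph{bulk} region
\[
\mathcal{R}_\eta \doteq \{(u,v) : u_0 \leq u \leq -|v_0|^{p_1} - \eta, \ v_0 \leq v \leq v_0 + \delta\},
\]
bounded away from $\Gamma$ by some small $\eta > 0$, and a \emph{sliver} $\mathcal{S}_\eta \doteq \{(u,v) : -|v_0|^{p_1} - \eta < u \leq -|v|^{p_1}, \ v_0 \leq v \leq v_0 + \delta\}$ adjacent to the axis.

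In the bulk, $r$ is uniformly bounded below by some $c_\eta > 0$, so the $r^{-1}$ factors in the Hawking mass formulation (\ref{SSESF:2:1})-(\ref{SSESF:2:5}) cause no difficulty. A Luk-style argument for the bifurcate problem at the corner $(u_0, v_0)$, relying on smallness only in the $v$-direction and Grönwall in $u$, yields a $C^1$ solution on $\mathcal{R}_\eta$ for some $\delta = \delta(\eta) > 0$, with $C^1$ bounds controlled by the initial data norm. Since the Grönwall argument uses a finite $u$-length rather than $r$ quantitatively, $\delta$ can in fact be chosen uniform in $\eta$ for $\eta$ sufficiently small.

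To extend the solution across $\mathcal{S}_\eta$ up to the axis, I would invoke the mechanism of the extension principle. The required smallness $\mu < \epsilon$ near $\Gamma$ is established in two stages. First, the boundary conditions $m|_\Gamma = r|_\Gamma = 0$ together with the $C^1$ regularity of the data force $\mu \to 0$ quantitatively along $\underline{\Sigma}_{v_0}$ as the axis endpoint $(-|v_0|^{p_1}, v_0)$ is approached. Second, the mass equations (\ref{SSESF:2:7}) and the Raychaudhuri equations (\ref{SSESF:2:9}), combined with the bulk $C^1$ bounds, propagate this smallness into a neighborhood of $\Gamma$ provided $\delta$ is small relative to $\epsilon$. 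The extension principle then produces a $C^1$ solution on $\mathcal{S}_\eta$ agreeing with the bulk solution along $\Sigma_{-|v_0|^{p_1} - \eta}$ by uniqueness, yielding the desired $C^1$ solution on the full region. Uniqueness in the $C^1$ class itself follows from a standard Grönwall estimate on the difference of two candidate solutions.

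The main technical obstacle will be coordinating uniform $C^1$ bounds with smallness of $\mu$ at the axis. The equations contain coefficients $r^{-1}$ and $(1-\mu)^{-1}$ which individually degenerate at $\Gamma$, and the cancellations that save the estimates hinge on the precise boundary conditions $\lambda + p_1|u|^{\kappa_1}\nu \to 0$ and $\partial_v(r\phi) + p_1|u|^{\kappa_1}\partial_u(r\phi) \to 0$ from Definition \ref{dfn:C1}. These pin down the relative normalization of incoming and outgoing null derivatives at the axis in the renormalized gauge; exploiting them to cancel singular terms is the heart of Christodoulou's extension principle argument, and is what allows the estimates to close up to $\Gamma$ in the present mixed bifurcate-cum-axis geometry.
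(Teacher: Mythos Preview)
Your decomposition into a bulk region $\mathcal{R}_\eta$ and a sliver $\mathcal{S}_\eta$ is natural, and the bulk part is fine. The gap is in how you invoke the extension principle. As stated, the extension principle takes as input \emph{outgoing} data on some $\Sigma_{u_*}$ with past endpoint on $\Gamma$, together with a $C^1$ solution already defined on the domain-of-dependence regions $\mathcal{D}(u_*,v)$ for all $v < v_*$; it then extends the solution past the axis point $(-|v_*|^{p_1}, v_*)$ to $\mathcal{D}(u_*, v_{**})$. It does not take a solution on a region bounded away from $\Gamma$ and fill in a sliver toward the axis. Your bulk solution $\mathcal{R}_\eta$ does not contain any $\mathcal{D}(u_*,v)$ region, so the hypotheses of the extension principle are not met. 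Your second step---propagating smallness of $\mu$ from $\underline{\Sigma}_{v_0}$ into a neighborhood of $\Gamma$ via (\ref{SSESF:2:7}), (\ref{SSESF:2:9})---is circular: to propagate anything there you would already need a solution in that neighborhood, which is exactly what you are trying to construct.

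The paper supplies the missing ingredient by first solving the \emph{backwards} problem with $\underline{\Sigma}_{v_0}$ viewed as an outgoing cone (for the time-reversed orientation) with vertex on $\Gamma$. Standard local existence then yields a $C^1$ solution on a region $\mathcal{U}_{(back)} = \mathcal{D}(-|v_1|^{p_1}, v_0)$ for some $v_1 < v_0$, which \emph{does} touch the axis. This backwards region, glued to the forward bulk solution $\mathcal{U}_1$ (your $\mathcal{R}_\eta$ with $\eta$ chosen so the boundary sits at $u = -|v_1|^{p_1}$), furnishes both the outgoing data on $\Sigma_{-|v_1|^{p_1}}$ and the pre-existing solution on $\mathcal{D}(-|v_1|^{p_1}, v)$ for $v < v_0$ needed to invoke the extension principle at $v_* = v_0$. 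Smallness of $\mu$ near $(-|v_0|^{p_1}, v_0)$ then comes for free from the $C^1$ regularity of the backwards solution up to its cone point, giving $\mu \lesssim r$ there. The backwards construction is the idea you are missing.
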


\begin{proof}
To apply the extension principle, we require a local solution in a region given by $\mathcal{D}(u,v)$ for some point $(u,v)$. We construct such a solution in the region $\mathcal{U}_{(back)}$ by considering the backwards problem with data posed along $\underline{\Sigma}_{u_0}$. One may think of $\underline{\Sigma}_{u_0}$ as an \textit{outgoing} cone for the backwards problem with past vertex along $\Gamma$. The above local existence theory provides a $v_1 < v_0$ and a $C^1$ solution on the region $\mathcal{U}_{(back)},$ equivalently given as $\mathcal{D}(-|v_1|^{p_1},v_0).$

We next consider the solution in a region $\mathcal{U}_1$, at a fixed distance away from the axis. Restricting to the problem of characteristic data along $(\underline{\Sigma}_{u_0} \cap \{u \leq -|v_1|^{p_1}\}) \cup \Sigma_{u_0}$, we apply the local existence for characteristic initial data with $r$ bounded away from $0$. It follows that we have a local $C^1$ solution in the domain $$\mathcal{U}_1 = \{u_0 \leq u \leq -|v_1|^{p_1}, \ v_0 \leq v \leq v_0+\delta_1\},$$
for some $\delta_1 > 0$. In particular, the solution in $\mathcal{U}_1$ generates $C^1$ outgoing data along $\{u=-|v_1|^{p_1}\}$ to the future of $\{v = v_0\}$. We now wish to apply the extension principle and generate an additional piece of the solution in $$\mathcal{U}_2 = \{-|v_1|^{p_1} \leq u \leq -|v|^{p_1}, \ v_0 \leq v \leq v_0 + \delta_2 \},$$
for some $\delta_2 > 0$. The smallness condition on $\mu$ in $\mathcal{U}_{(back)}$ is simple to check, as the local existence theorem guarantees that the solution is $C^1$ there uniformly up to the cone point $(-|v_0|^{p_1},v_0).$ The bound $\mu \lesssim r$ then follows by standard arguments.

The extension principle thus suffices to generate a $C^1$ solution in $\mathcal{U}_2$. Taking the union $(\mathcal{U}_1 \cap \{v \leq v_0+\delta_2 \}) \cup \mathcal{U}_2$ gives the desired local solution.
\end{proof}

\subsection{Asymptotic flatness and naked singularities}
\label{subsec:nakedsingdfn}
Naked singularities capture the notion of singularities \enquote{visible} to infinitely far away observers. Formalizing this notion requires the concept of asymptotically flat spacetimes, for which an ideal boundary at infinite distance (future null infinity) is well-defined.

Definitions of asymptotic flatness are not standard in the literature. We will only work on $\mathcal{Q}$, for which a sufficient definition is the following:
\begin{dfn}
A BV solution to the scalar field system (\ref{SSESF:1:1})-(\ref{SSESF:1:5}), $(\mathcal{Q},g_{\mu\nu},\phi)$,  is \textbf{asymptotically flat} if for all $u \in [-1,0)$ the following conditions hold:
\begin{enumerate}
    \item $\lim_{v\rightarrow \infty} r(u,v) = \infty$,
    \item $\lim_{v \rightarrow \infty} m(u,v)$ exists and is finite.
    \item $\lim_{v \rightarrow \infty} (r\phi)(u,v)$ exists and is finite.
\end{enumerate}
\end{dfn}
For such spacetimes we formally think of the set $\{(u,\infty)| \ u \in [-1,0)\}$ as an ideal boundary of all future directed null geodesics, denoted $\mathcal{I}^+$. In a conformal compactification of the spacetime, it makes sense to think of future null infinity $\mathcal{I}^+$ as a bona fide boundary of the spacetime embedded in $\mathbb{R}^{1+1}$. It then follows by \cite{daf1} that $\mathcal{I}^+$ is a connected, ingoing null ray. 

One can ask whether this ray is future geodesically complete, i.e. whether geodesic generators of $\mathcal{I}^+$ exist for infinite proper time. Failure to satisfy this condition is referred to as an incomplete null infinity, and is the method by which we detect the presence of naked singularities in the spacetime. The following formalization of this notion, without requiring a conformal compactification, is drawn from \cite{igoryak1}.
\begin{dfn}
\label{dfn:incompleteI}
Let $\mathcal{H}$ be an outgoing null ray in the asymptotically flat spacetime $(\mathcal{Q},g_{\mu\nu},\phi)$. The spacetime is said to contain an \textbf{incomplete null infinity} if there exists a sequence $p_i \in \mathcal{H}, \ v(p_i) \rightarrow \infty$, such that the ingoing null geodesics emanating from $p_i$ have uniformly bounded affine length. 

A spacetime with incomplete null infinity arising as the maximal globally hyperbolic development of regular Cauchy data is said to possess a \textbf{naked singularity}.
\end{dfn}

Following \cite{daf1}, it is easy to check in our setting that such affinely parameterized null geodesics have tangent vector 
\begin{equation*}
    X \doteq \frac{\Omega^2(u,v)}{\Omega^2(u_0,v)}\frac{\partial}{\partial u},
\end{equation*}
where $u_0$ is the value of the $u$ coordinate along $\mathcal{H}$. To establish incompleteness of $\mathcal{I}^+$, it suffices to check that there exists a constant $A < \infty$ such that 
\begin{equation}
    \limsup_{i \rightarrow \infty} \int\limits_{(u_0,v(p_i))}^{(0,v(p_i))}(Xu)(u',v(p_i))du' \leq A.
\end{equation}

\subsection{Einstein-scalar field system for differences}
The main results of the paper are of an essentially perturbative nature, probing the behavior of the scalar field system in a neighborhood of an exact solution denoted $(\overline{g},\overline{\phi})$. It will therefore be useful to derive equations for the differences of two solutions to the system (\ref{SSESF:2:1})-(\ref{SSESF:2:5}). Let $(r,m,\phi)$ and $(\overline{r},\overline{m},\overline{\phi})$ denote two solutions to (\ref{SSESF:1:1})-(\ref{SSESF:1:5}) in a domain $\mathcal{U}$. Define the differences
\begin{align}
  \label{renormdef1}
     r_p &\doteq r - \overline{r}, \ \  \nu_p \doteq \nu - \overline{\nu}, \ \ \lambda_p \doteq \lambda - \overline{\lambda}, \\[.3em]
     \label{renormdef2}  
    m_p &\doteq m - \overline{m},\ \  \mu_p \doteq \mu - \overline{\mu}, \ \ \phi_p \doteq \phi - \overline{\phi}.
\end{align}
More generally, if $A,B,C,D$ are double null unknowns $\in \{r,\nu,\lambda,m, \mu,1-\mu,\phi\}$ (or derivatives of these), define 
\begin{equation*}
   \Big( \frac{AB}{CD}\Big)_p \doteq \frac{AB}{CD} - \frac{\overline{A}\overline{B}}{\overline{C}\overline{D}}.
\end{equation*}
We will often find it convenient to work with the mass ratio $\mu$ rather than the mass $m$ itself. On the level of differences, these quantities are related by 
\begin{equation}
    \label{eq:mpvsmup}
    \mu_p = \frac{2m_p}{r} - \frac{2\overline{m}r_p}{r\overline{r}}.
\end{equation}
Inserting these definitions in (\ref{SSESF:2:1})-(\ref{SSESF:2:7}) gives equations for the differences:
\begin{align}
    \label{PSSESF:1}
    \partial_u \lambda_p &= \frac{\mu \nu}{(1-\mu) r}\lambda_p + \Big(\frac{\mu \nu}{(1-\mu)r}\Big)_p \overline{\lambda}, \\
    \label{PSSESF:2}
    \partial_v \nu_p &= \frac{\mu \lambda}{(1-\mu) r}\nu_p + \Big(\frac{\mu \lambda }{(1-\mu)r}\Big)_p \overline{\nu}, \\
    \label{PSSESF:3}
    \partial_u \mu_p &= -\Big(\frac{\nu}{r} + \frac{r}{\nu}(\partial_u \phi)^2  \Big)\mu_p  -\Big(\frac{\nu}{r} + \frac{r}{\nu}(\partial_u \phi)^2  \Big)_p \overline{\mu} + \Big(\frac{r}{\nu}(\partial_u \phi)^2 \Big)_p, \\
    \label{PSSESF:4}
     \partial_v \mu_p &= -\Big(\frac{\lambda}{r} + \frac{r}{\lambda}(\partial_v \phi)^2  \Big)\mu_p  -\Big(\frac{\lambda}{r} + \frac{r}{\lambda}(\partial_v \phi)^2  \Big)_p \overline{\mu} + \Big(\frac{r}{\lambda}(\partial_v \phi)^2 \Big)_p, \\
     \label{PSSESF:5.5}
    \partial_u \partial_v \phi_p &= -\frac{\nu}{r}\partial_v \phi_p - \Big(\frac{\nu}{r}\Big)_p\partial_v \overline{\phi} - \frac{\lambda}{r}\partial_u \phi_p - \Big(\frac{\lambda}{r} \Big)_p \partial_u \overline{\phi}.
\end{align}
We note here other forms of the wave equation that will be useful in the interior construction: 
\begin{align}
   \label{PSSESF:6}
    \partial_u\partial_v (r\phi_p) &= \frac{\mu\lambda\nu}{(1-\mu)r^2}(r\phi_p) - (r_p \partial_u \partial_v \overline{\phi} + \nu_p \partial_v \overline{\phi} + \lambda_p \partial_u \overline{\phi} ), \\
    \label{PSSESF:5}
    \partial_u (r\partial_v \phi_p) &= -r_p\partial_u \partial_v \overline{\phi} - \nu_p \partial_v \overline{\phi} - \lambda_p \partial_u \overline{\phi}.
\end{align}
To avoid expressions becoming unwieldy, we give special names to many of the terms appearing in the above difference system:
\begin{align}
    \mathcal{G}_1 &\doteq \frac{\mu \lambda}{(1-\mu)r}, \ \ \  \mathcal{G}_2 \doteq \frac{\mu \nu}{(1-\mu)r}, \\
    \mathcal{G}_3 &\doteq \frac{\lambda}{r}+\frac{r}{\lambda}(\partial_v \phi)^2, \ \ \  \mathcal{I}_3 \doteq \frac{r}{\lambda}(\partial_v \phi)^2,\\
    \mathcal{G}_4 &\doteq \frac{\nu}{r}+\frac{r}{\nu}(\partial_u \phi)^2 , \ \ \ \mathcal{I}_4 \doteq \frac{r}{\nu}(\partial_u \phi)^2,\\
    \mathcal{G}_5 &\doteq \frac{\mu\lambda\nu}{(1-\mu)r^2}, \ \ \  \mathcal{I}_5 \doteq r_p\partial_u\partial_v \overline{\phi} + \nu_p \partial_v \overline{\phi} + \lambda_p \partial_u \overline{\phi},\\
    \mathcal{T}_1 &\doteq \frac{\lambda}{r}, \ \ \  \mathcal{T}_2 \doteq \frac{\nu}{r}.
\end{align}
Define $\overline{(\mathcal{G}_1)}, (\mathcal{G}_1)_p$ in the natural manner, and similarly for the other named terms.

We will need to commute various equations above by $\partial_u$ or $\partial_v$ in order to estimate the solution at the $C^1$ level. The relevant commuted equations are 
\begin{align}
    \label{PSSESF:7}
     \partial_u (\partial_v\lambda_p) &= \mathcal{G}_2 \partial_v \lambda_p + \partial_v \mathcal{G}_2 \lambda_p + (\mathcal{G}_2)_p \partial_v \overline{\lambda} + \partial_v (\mathcal{G}_2)_p \overline{\lambda}, \\[.3em] 
      \label{PSSESF:8}
    \partial_v (\partial_u \nu_p) &= \mathcal{G}_1 \partial_u \nu_p + \partial_u \mathcal{G}_1 \nu_p + (\mathcal{G}_1)_p \partial_u \overline{\nu} + \partial_u(\mathcal{G}_1)_p\overline{\nu}, \\[.3em] 
    \partial_u (\partial_v^2 \phi_p) &= - \mathcal{T}_2 \partial_v^2 \phi_p -\mathcal{T}_1 (\partial_v \partial_u \phi)_p - \partial_v \mathcal{T}_1 \partial_u \phi_p - \partial_v (\mathcal{T}_1)_p \partial_u \overline{\phi} - (\mathcal{T}_1)_p \partial_v\partial_u \overline{\phi} \nonumber\\[.3em] 
    &- \partial_v \mathcal{T}_2 \partial_v\phi_p - \partial_v (\mathcal{T}_2)_p \partial_v\overline{\phi}-(\mathcal{T}_2)_p \partial_v^2\overline{\phi}, \label{PSSESF:9} \\[.3em] 
    \partial_v (\partial_u^2 \phi_p) &= -\mathcal{T}_1 \partial_u^2 \phi_p - \partial_u \mathcal{T}_1 \partial_u \phi_p - \partial_u(\mathcal{T}_1)_p\partial_u\overline{\phi} - (\mathcal{T}_1)_p \partial_u^2\overline{\phi} - \partial_u \mathcal{T}_2 \partial_v \phi_p \nonumber \\[.3em] 
    &- \mathcal{T}_2 (\partial_u\partial_v \phi_p) - \partial_u(\mathcal{T}_2)_p \partial_v \overline{\phi} - (\mathcal{T}_2)_p \partial_u\partial_v \overline{\phi}, \label{PSSESF:10} \\[.3em] 
    \label{PSSESF:11}
    \partial_u(\partial_v^2(r\phi_p)) &= \mathcal{G}_5 \partial_v(r\phi_p) + \partial_v \mathcal{G}_5 (r\phi_p) - \partial_v \mathcal{I}_5, \\[.3em] 
    \label{PSSESF:12}
    \partial_v(\partial_u^2(r\phi_p)) &= \mathcal{G}_5 \partial_u(r\phi_p) + \partial_u \mathcal{G}_5 (r\phi_p) - \partial_u \mathcal{I}_5.
\end{align}
\subsection{Integration lemmas}

\subsubsection{Basic transport estimate}
The following lemma estimates transport equations containing zeroth order terms with a favorable sign. Although straightforward, this lemma will be used repeatedly.
\begin{lem}
\label{lem:integrationmain}
Fix $v_0 < v_1$. Let $f(v):[v_0,v_1] \rightarrow \mathbb{R}$ be continuous on $[v_0,v_1]$ and $C^1$ on $(v_0,v_1]$. Assume $f$ satisfies the equation 
$$\partial_v f + c(v) f = E(v),$$
in $(v_0,v_1],$ where $c(v), E(v)$ are given functions with $c(v) \geq 0$. Assume $E(v) \in L^\infty([v_0,v_1])$, and $c(v) \in L^\infty([v',v_1])$for all $v' > v_0$. Then 
\begin{equation}
\label{eq:basictransportestimate}
\sup_{v \in [v_0,v_1]} |f(v)| \leq |f(v_0)|+\int_{v_0}^{v_1} |E|(v')dv'.
\end{equation}
An analogous result holds in the $u$ direction, for functions $g(u):[u_0,u_1] \rightarrow \mathbb{R}$ satisfying the equation
$$\partial_u g + c(u) g = E(u),$$
with $u_0 < u_1 $ and $c(u) \geq 0$.
\end{lem}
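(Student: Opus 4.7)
The plan is to use the standard integrating factor technique for first-order linear ODEs, exploiting the sign condition $c(v) \geq 0$ to produce a bound without requiring any quantitative control on $c$. The only mild subtlety is that $c(v)$ need not be integrable near $v_0$, so I would carry out the argument on $[v', v_1]$ for $v' > v_0$ first and then pass to the limit $v' \to v_0^+$ using continuity of $f$ at $v_0$.

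More precisely, fix $v' \in (v_0, v_1]$. Since $c \in L^\infty([v', v_1])$, the function $\mu(v) \doteq \exp\!\bigl(\int_{v'}^{v} c(w)\, dw\bigr)$ is well-defined, positive, nondecreasing (because $c \geq 0$), and bounded on $[v', v_1]$, with $\mu(v') = 1$. Multiplying the ODE by $\mu$ yields $\partial_v(\mu f) = \mu E$ on $(v', v_1]$, and integrating from $v'$ to any $v \in [v', v_1]$ gives the representation formula
\begin{equation*}
f(v) = \frac{\mu(v')}{\mu(v)} f(v') + \int_{v'}^{v} \frac{\mu(w)}{\mu(v)} E(w)\, dw.
\end{equation*}
Because $c \geq 0$, both quotients $\mu(v')/\mu(v)$ and $\mu(w)/\mu(v)$ lie in $(0,1]$, so taking absolute values and the supremum over $v \in [v', v_1]$ yields
\begin{equation*}
\sup_{v \in [v', v_1]} |f(v)| \leq |f(v')| + \int_{v'}^{v_1} |E(w)|\, dw.
\end{equation*}

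Now send $v' \to v_0^+$. By continuity of $f$ at $v_0$, $|f(v')| \to |f(v_0)|$, and by monotone convergence (or simply the boundedness of $E$) the integral $\int_{v'}^{v_1} |E|$ converges to $\int_{v_0}^{v_1} |E|$. Combined with the fact that $\sup_{[v', v_1]} |f|$ increases to $\sup_{[v_0, v_1]} |f|$ as $v' \downarrow v_0$ (again using continuity of $f$ at $v_0$), this produces the claimed bound \eqref{eq:basictransportestimate}. The $u$-direction statement is identical up to reversing the orientation of the independent variable.

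The only real obstacle worth naming is the mild one already flagged above: the hypotheses allow $c$ to fail to be in $L^\infty$ or $L^1$ on all of $[v_0, v_1]$, so one cannot directly write an integrating factor based at $v_0$. Working on $[v', v_1]$ and then using continuity of $f$ at $v_0$ circumvents this cleanly; no quantitative information about the blowup of $c$ at $v_0$ is needed because the sign $c \geq 0$ always makes the integrating-factor ratios $\leq 1$.
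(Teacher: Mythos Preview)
Your proof is correct and follows essentially the same approach as the paper: apply the integrating factor on an interval $[v',v_1]$ bounded away from $v_0$, use the sign condition $c\geq 0$ to bound the exponential ratios by $1$, and then send $v'\to v_0^+$ using continuity of $f$ at $v_0$. The only cosmetic difference is that the paper fixes $v$ first and then sends the left endpoint to $v_0$, whereas you take the supremum first; both orderings give the same result.
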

\begin{proof}
Fix $v \in (v_0,v_1]$, as well as a $\delta \ll 1$ sufficiently small satisfying $v_0 + \delta < v$. Since $c(v) \in L^\infty([v_0+\delta,v))$, we may apply the method of integrating factors to give
\begin{equation*}
    f(v) = e^{-\int_{v_0+\delta}^v c(v')dv'}f(v_0+\delta) +  e^{-\int_{v_0+\delta}^v c(v')dv'}\int_{v_0+\delta}^v e^{\int_{v_0+\delta}^{v'} c(v'')dv''} E(v')dv'.
\end{equation*}
By the non-negativity of $c(v),$ we estimate
\begin{align*}
    |f(v)| &\leq |f(v_0 + \delta)| + \int_{v_0+\delta}^v |E(v')|dv' \\
    &\leq |f(v_0 + \delta)| + \int_{v_0}^{v_1} |E(v')|dv'.
\end{align*}
Taking $\delta \rightarrow 0$ and taking supremum over $v \in [v_0,v_1]$ yields (\ref{eq:basictransportestimate}). The estimate for $g(u)$ follows in a similar manner.
\end{proof}

\begin{rmk}
The point of the above lemma is that zeroth order terms with favorable signs may be dropped from transport estimates, even when the coefficient is singular at one endpoint. We will often integrate in the \textit{decreasing} $u$ or $v$ direction, for which the relevant sign condition is $c(u), c(v) \leq 0$.
\end{rmk}

\subsubsection{Integration lemmas near the axis}
\label{subsubsec:axisintlemmas}
The spacetimes considered in this paper contain a regular center $\Gamma$ to the past of $\mathcal{O}$. With the goal of deriving pointwise bounds on double null unknowns, the axis presents challenges due to the various factors of $r^{-1}$ in the scalar field system. In this section we discuss one approach to deriving uniform estimates near the axis.

Consider the problem of showing boundedness of $\frac{m}{r^3},$ which is a desirable statement of regularity near $\Gamma$ for $C^1$ solutions. Assume bounds on $r\phi$ and its derivatives are given, say by estimating the wave equation (\ref{SSESF:2:6}). Inspecting (\ref{SSESF:2:4}) or (\ref{SSESF:2:5}) shows that an estimate on $m$ consistent with $O(r^3)$ behavior requires pointwise bounds on $\partial_u \phi$ or $\partial_v \phi$.

In regions with $r > 0$, the relation
\begin{equation*}
    \partial_u \phi = \frac{1}{r}(\partial_u(r\phi) - \nu \phi)
\end{equation*}
is sufficient to estimate $\partial_u \phi$ given a bound at the same order on $\partial_u(r\phi)$. However, it is clear that such a procedure cannot work up to the axis. 

An alternative approach is discussed in \cite{chris1} and \cite{lukohyang1}, with the latter presenting a general framework for near-axis analysis using the language of \textit{averaging operators}. Observe that the BV (or $C^1$) boundary conditions require $r\phi|_\Gamma =0$ pointwise, and hence one can write
\begin{equation}
    \label{eq:phiasanaverage}
    \phi(u,v) = -\frac{1}{r(u,v)}\int\limits_{(u,v)}^{(u_\Gamma(v),v)} \partial_u(r\phi)(u',v)du'.
\end{equation}
$\phi$ can thus be interpreted as an average in $u$ of $\partial_u(r\phi)$ (a corresponding formula holds in the $v$ direction). Differentiating averages of this form gives the required bounds on $\partial_u \phi$, at the cost of requiring bounds on $\partial^2_u(r\phi)$. 

More generally, let $(r,m,\phi)$ be a $C^1$ solution to the system (\ref{SSESF:1:1})-(\ref{SSESF:1:5}) in some domain $\mathcal{U} \subset \mathbb{R}^2$. Fix $(u,v) \in \mathcal{U}$, and assume that both the future directed constant $v$ line, and the past directed constant $u$ line, intersect the axis $\Gamma$ and are contained in $\mathcal{U}$. Let $f \in C^1(\mathcal{U})$ be a given function, and consider the quantities
\begin{equation}
    I_u[f](u,v) \doteq r(u,v)^{-1}\int\limits_{(u,v)}^{(u_\Gamma(v),v)} f(u',v)du', \ \ \
    I_v[f](u,v) \doteq r(u,v)^{-1}\int\limits_{(u,v_\Gamma(u))}^{(u,v)} f(u,v')dv'.
\end{equation}
The following lemma illustrates how to bound derivatives of these averages, assuming suitable control on the geometry through $\nu, \lambda$.
\begin{lem}
\label{lem:axisaverage}
\begin{equation} 
    \label{eq:averageestu}
   | \partial_u I_u[f]|(u,v)  \lesssim  \frac{\sup_{u' \in [u,u_\Gamma(v)]}|\nu|(u',v) }{\inf_{u' \in [u,u_\Gamma(v)]}|\nu|(u',v)} \sup_{u' \in [u,u_\Gamma(v)]}|\partial_u(\nu^{-1}f)|(u',v),
\end{equation}
\begin{equation} 
     \label{eq:averageestv}
   | \partial_v I_v[f]|(u,v)  \lesssim \frac{\sup_{v' \in [v_\Gamma(u),v]}| \lambda|(u,v') }{\inf_{v' \in [v_\Gamma(u),v]}|\lambda|(u,v')}\sup_{v' \in [v_\Gamma(u),v]}|\partial_v(\lambda^{-1}f)|(u,v').
\end{equation}
\end{lem}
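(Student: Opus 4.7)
The plan is to prove (\ref{eq:averageestu}) by an integration-by-parts trick that exploits $\nu = \partial_u r$ together with the axis boundary condition $r(u_\Gamma(v),v)=0$; (\ref{eq:averageestv}) then follows by the symmetric argument with $\lambda = \partial_v r$. Throughout I work in the non-trapped regime where $\nu < 0$ is bounded away from zero, so that $G \doteq \nu^{-1}f$ inherits $C^1$ regularity from $f$ and $\nu$.

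First, in the integrand of $I_u[f]$, I would write $f = \nu \cdot G$ and integrate by parts in $u'$, using $\nu\, du' = d r$. Since $r$ vanishes at $u_\Gamma(v)$, the boundary term at the axis drops, leaving
$$
\int_{u}^{u_\Gamma(v)} f(u',v)\, du' = -r(u,v)\,G(u,v) - \int_u^{u_\Gamma(v)} r(u',v)\,\partial_{u'}G(u',v)\,du'.
$$
Dividing by $r(u,v)$ yields the rewriting
$$
I_u[f](u,v) = -G(u,v) - \frac{1}{r(u,v)}\int_u^{u_\Gamma(v)} r(u',v)\,\partial_{u'}(\nu^{-1}f)(u',v)\,du'.
$$
Differentiating in $u$ and using $\partial_u r = \nu$, the $-G(u,v)$ term cancels against the boundary contribution produced when differentiating the lower limit of the integral, leaving the clean identity
$$
\partial_u I_u[f](u,v) = \frac{\nu(u,v)}{r(u,v)^2}\int_u^{u_\Gamma(v)} r(u',v)\,\partial_{u'}(\nu^{-1}f)(u',v)\,du'.
$$

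It remains to bound the integral geometrically. Pulling $\sup_{[u,u_\Gamma(v)]}|\partial_{u'}(\nu^{-1}f)|$ outside and using that $r(u',v) \leq r(u,v)$ for $u' \in [u,u_\Gamma(v)]$ (since $r$ decreases to $0$ along the $v=\text{const}$ line as the axis is approached), I get
$$\int_u^{u_\Gamma(v)} r(u',v)\,du' \leq r(u,v)\,(u_\Gamma(v)-u).$$
On the other hand, writing $r(u,v) = \int_u^{u_\Gamma(v)}(-\nu)(u',v)\,du'$ and bounding below by the infimum of $|\nu|$ gives $u_\Gamma(v)-u \leq r(u,v)/\inf_{[u,u_\Gamma(v)]}|\nu|$. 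Combining these two inequalities yields $\int_u^{u_\Gamma(v)} r(u',v)\,du' \leq r(u,v)^2/\inf|\nu|$. Substituting back into the identity above and bounding $|\nu(u,v)| \leq \sup|\nu|$ produces the stated bound (\ref{eq:averageestu}), with the ratio $\sup|\nu|/\inf|\nu|$ appearing as a geometric distortion factor.

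I do not anticipate a serious obstacle; the computation is essentially algebraic once the integration by parts is spotted. The only mild subtleties are (i) verifying that $\nu^{-1}f \in C^1$ so that the IBP is justified, which is immediate from the $C^1$ assumption and the non-trapped hypothesis guaranteeing $\inf|\nu|>0$, and (ii) tracking the sign cancellation when differentiating the rewritten expression for $I_u[f]$. The $v$-direction statement (\ref{eq:averageestv}) follows from the identical argument with $\nu$ replaced by $\lambda$ and the $u$ integration replaced by $v$ integration from $v_\Gamma(u)$ to $v$.
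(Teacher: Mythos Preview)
Your proposal is correct and essentially identical to the paper's proof: both arrive at the same key identity
\[
\partial_u I_u[f](u,v) = \frac{\nu(u,v)}{r(u,v)^2}\int_u^{u_\Gamma(v)} r(u',v)\,\partial_{u'}(\nu^{-1}f)(u',v)\,du',
\]
and then estimate it by pulling out $\sup|\partial_u(\nu^{-1}f)|$, bounding $r(u',v)\leq r(u,v)$, and converting the remaining $u_\Gamma(v)-u$ into $r(u,v)/\inf|\nu|$. The only cosmetic difference is that the paper differentiates $I_u[f]$ first and then integrates by parts, whereas you integrate by parts in the original integral first and then differentiate; the resulting cancellation and final identity are the same.
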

\begin{proof}
The proofs of (\ref{eq:averageestu}) and (\ref{eq:averageestv}) are nearly identical, so we just discuss (\ref{eq:averageestu}). Explicitly differentiating $I_u[f](u,v)$ and integrating by parts gives 
\begin{align}
    \partial_u I_u[f](u,v) &= -r(u,v)^{-2}\nu(u,v) \int\limits_{(u,v)}^{(u_\Gamma(v),v)} f(u',v)du' - r(u,v)^{-1}f(u,v) \nonumber\\
    &= -r(u,v)^{-2}\nu(u,v) \int\limits_{(u,v)}^{(u_\Gamma(v),v)}\partial_u r(u',v) \Big(\frac{f}{\nu}\Big)(u',v)du'- r(u,v)^{-1}f(u,v) \nonumber\\
    &= r(u,v)^{-2}\nu(u,v)\int\limits_{(u,v)}^{(u_\Gamma(v),v)} r(u',v) \partial_u \Big(\frac{f}{\nu}\Big)(u',v)du'. \label{eq:averagingtemp}
\end{align}
Pulling out the supremum of $\partial_u(\nu^{-1}f)$ and factor of $r$, the integral can then be explicitly evaluated to give $|u-u_\Gamma(v)|$. Given a positive lower bound on $-\nu$, this difference can be bounded by a factor of $r(u,v)$. All the singular factors of $r$ cancel, and we are left with the estimate (\ref{eq:averageestu}).
\end{proof}

The above framework immediately gives bounds on derivatives of the scalar field. 
\begin{cor}
\label{cor:integralestimateinterior}
Introduce the shorthand notation $\sup_{u'} \doteq \sup_{u' \in [u,u_\Gamma(v)]}$, $\sup_{v'} \doteq \sup_{v' \in [v_\Gamma(u),v]}$, and similarly for $\inf_{u'}, \inf_{v'}$. With the running assumptions on $(r,m,\phi)$ and $\mathcal{U}$, we have
\begin{equation}
    \label{eq:lukohestimateu}
    |\partial_u \phi|(u,v) \lesssim \frac{\sup_{u'}|\nu|}{\inf_{u'}|\nu|^3}\sup_{u'}|\partial_u \nu| \sup_{u'}|\partial_u(r\phi)| + \frac{\sup_{u'}|\nu|}{\inf_{u'}|\nu|^2}\sup_{u'}|\partial_u^2(r\phi)|,
\end{equation}
\begin{equation}
    \label{eq:lukohestimatev}
    |\partial_v \phi|(u,v) \lesssim \frac{\sup_{v'}|\lambda|}{\inf_{v'}|\lambda|^3}\sup_{v'}|\partial_v \lambda| \sup_{v'}|\partial_v(r\phi)| + \frac{\sup_{v'}|\lambda|}{\inf_{v'}|\lambda|^2}\sup_{v'}|\partial_v^2(r\phi)|.
\end{equation}
We have dropped the arguments $(u',v), (u,v')$ of the terms appearing in the above estimates.
\end{cor}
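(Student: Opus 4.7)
The plan is to derive both estimates directly from Lemma \ref{lem:axisaverage} applied to $f = \partial_u(r\phi)$ and $f = \partial_v(r\phi)$ respectively, using the axis boundary condition $(r\phi)|_\Gamma = 0$ to express $\phi$ as an average.

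First I would observe that because $(r,m,\phi)$ is a $C^1$ solution on $\mathcal{U}$ with $(r\phi)|_\Gamma = 0$, the fundamental theorem of calculus gives the identity (\ref{eq:phiasanaverage}), which rewrites as $\phi(u,v) = -I_u[\partial_u(r\phi)](u,v)$. Differentiating this identity in $u$ yields
\begin{equation*}
    \partial_u \phi(u,v) = -\partial_u I_u\bigl[\partial_u(r\phi)\bigr](u,v),
\end{equation*}
and I can immediately invoke (\ref{eq:averageestu}) from Lemma \ref{lem:axisaverage} with the choice $f = \partial_u(r\phi) \in C^1$ (which is permissible since $\partial_u(r\phi) \in C^1(\mathcal{U})$ by the $C^1$ regularity hypothesis). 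The resulting bound is
\begin{equation*}
    |\partial_u \phi|(u,v) \lesssim \frac{\sup_{u'}|\nu|}{\inf_{u'}|\nu|}\sup_{u'}\bigl|\partial_u\bigl(\nu^{-1}\partial_u(r\phi)\bigr)\bigr|.
\end{equation*}

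Next I would expand the inner derivative using the product rule,
\begin{equation*}
    \partial_u\bigl(\nu^{-1}\partial_u(r\phi)\bigr) = -\nu^{-2}\partial_u\nu \cdot \partial_u(r\phi) + \nu^{-1}\partial_u^2(r\phi),
\end{equation*}
and apply the triangle inequality followed by taking suprema over $u' \in [u, u_\Gamma(v)]$. The two resulting terms are controlled, respectively, by $\inf_{u'}|\nu|^{-2}\sup_{u'}|\partial_u\nu|\sup_{u'}|\partial_u(r\phi)|$ and $\inf_{u'}|\nu|^{-1}\sup_{u'}|\partial_u^2(r\phi)|$. Combining these with the outer factor $\sup_{u'}|\nu|/\inf_{u'}|\nu|$ produces exactly (\ref{eq:lukohestimateu}). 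The bound (\ref{eq:lukohestimatev}) is obtained by the symmetric argument: use $\phi(u,v) = -I_v[\partial_v(r\phi)](u,v)$, apply (\ref{eq:averageestv}) with $f = \partial_v(r\phi)$, and expand $\partial_v(\lambda^{-1}\partial_v(r\phi))$ analogously.

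There is no real obstacle here — the corollary is purely a corollary of Lemma \ref{lem:axisaverage} once one has (\ref{eq:phiasanaverage}). The only care required is verifying that the hypotheses of the lemma apply (namely, $\partial_u(r\phi), \partial_v(r\phi) \in C^1$ and the relevant characteristic segments intersect $\Gamma$ inside $\mathcal{U}$), both of which are already built into the assumption that $(r,m,\phi)$ is a $C^1$ solution with regular center and the geometric setup of $\mathcal{U}$.
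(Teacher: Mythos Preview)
Your proposal is correct and follows essentially the same approach as the paper: the paper's proof simply states ``Apply Lemma \ref{lem:axisaverage} with $f = \partial_u(r\phi), \partial_v(r\phi)$ and use the formula (\ref{eq:phiasanaverage}) along with its variant in the $v$ direction,'' and you have filled in precisely those details. One minor slip: in the $v$-direction the identity is $\phi = I_v[\partial_v(r\phi)]$ (positive sign, since the integral runs from the axis to $(u,v)$), but this is irrelevant once absolute values are taken.
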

\begin{proof}
Apply Lemma \ref{lem:axisaverage} with $f = \partial_u(r\phi), \partial_v(r\phi)$ and use the formula (\ref{eq:phiasanaverage}) along with its variant in the $v$ direction.
\end{proof}

\begin{rmk}
An analogous result to Corollary \ref{cor:integralestimateinterior} will be needed when the future directed constant $v$ line does not intersect the axis, but rather intersects data along some null line $\{u = u_\delta\}$. The proof of the previous lemma and corollary still goes through, because 
\begin{enumerate}
    \item The data term $r(u_\delta,v)\phi(u_\delta,v)$ drops out during the integration by parts step.
    \item The $u$ difference produced by evaluating the integral is now $|u - u_\delta|$, which can be bounded by $|u - u|_\Gamma|$, and therefore one still gains the extra power of $r$. 
\end{enumerate}
\end{rmk}
The argument extends to higher derivatives on $I_u, I_v$. We will need estimates at one derivative higher, captured in the following lemma:
\begin{lem}
\label{lem:axisaverage2}
\begin{align}
    | \partial_u^2 I_u[f]|(u,v)  &\lesssim \frac{\sup_{u'}|\partial_u \nu| }{\inf_{u'}|\nu|}\sup_{u'}|\partial_u(\nu^{-1}f)| 
    + \frac{\sup_{u'}|\nu|^2 }{\inf_{u'}|\nu|}\sup_{u'}|\partial_u(\nu^{-1}\partial_u(\nu^{-1}f))|, \label{eq:averageestu2}
\end{align}
\begin{align}
    | \partial_u^2 I_v[f]|(u,v)  &\lesssim \frac{\sup_{v'}|\partial_v \lambda| }{\inf_{v'}|\lambda|}\sup_{v'}|\partial_v(\lambda^{-1}f)|
    + \frac{\sup_{v'}|\lambda|^2 }{\inf_{v'}|\lambda|}\sup_{v'}|\partial_v(\lambda^{-1}\partial_v(\lambda^{-1}f))|. \label{eq:averageestv2}
\end{align}
\end{lem}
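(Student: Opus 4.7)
}
The strategy is to repeat and refine the integration by parts (IBP) already executed in the proof of Lemma \ref{lem:axisaverage}. The guiding principle is that each $u$-derivative of $I_u[f]$ introduces an additional factor of $r^{-1}$ near the axis, which must be absorbed by an IBP that trades a factor of $\nu\,du'$ for a factor of $dr$, promoting one power of $r$ in the integrand. Since here we want $\partial_u^2 I_u[f]$, the prefactor will be of order $r^{-3}$, and we will need to IBP once more than in the proof of (\ref{eq:averageestu}) in order to see a full factor of $r^2$ in the remaining integrand.

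Concretely, I would start from the already-reduced identity (\ref{eq:averagingtemp}),
\begin{equation*}
\partial_u I_u[f](u,v) = \frac{\nu(u,v)}{r(u,v)^2}\int_u^{u_\Gamma(v)} r(u',v)\,\partial_{u'}(\nu^{-1}f)(u',v)\,du',
\end{equation*}
and differentiate once more in $u$. Writing $h := \partial_{u'}(\nu^{-1}f)$ and using $\partial_u r = \nu$, this yields
\begin{equation*}
\partial_u^2 I_u[f] \;=\; \Big(\frac{\partial_u \nu}{r^2} - \frac{2\nu^2}{r^3}\Big)\int_u^{u_\Gamma(v)} r\,h\,du' \;-\; \frac{\nu(u,v)\,h(u,v)}{r(u,v)}.
\end{equation*}
The $r^{-3}$ and $r^{-1}$ terms are each individually singular at the axis; the key is to perform one more IBP on $\int r\,h\,du'$ so that the resulting expression is regular. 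Using $\nu = \partial_{u'} r$ and the vanishing of $r$ at $u_\Gamma(v)$, I would obtain the identity
\begin{equation*}
\int_u^{u_\Gamma(v)} r\,h\,du' \;=\; -\frac{r(u,v)^2\,h(u,v)}{2\nu(u,v)} \;-\; \frac{1}{2}\int_u^{u_\Gamma(v)} r^2\,\partial_{u'}\!\big(\nu^{-1}h\big)\,du'
\end{equation*}
(after collecting the two copies of $\int rh\,du'$ that arise when one expands $\partial_{u'}(rh/\nu) = h + r\,\partial_{u'}(h/\nu)$).

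Substituting this back, the main technical point—and in my view the crucial step—is the cancellation of the $\nu h/r$ singularities: the boundary term $-\nu h(u,v)/r$ from differentiation, and the contribution $+\nu h(u,v)/r$ produced by pairing the factor $-2\nu^2/r^3$ with the boundary piece $-r^2h/(2\nu)$, exactly cancel. What remains is the regular expression
\begin{equation*}
\partial_u^2 I_u[f] \;=\; -\frac{\partial_u \nu}{2\nu}(u,v)\cdot h(u,v) \;+\; \Big(\frac{\nu^2}{r^3} - \frac{\partial_u \nu}{2r^2}\Big)\int_u^{u_\Gamma(v)} r^2\,\partial_{u'}\!\big(\nu^{-1}\partial_{u'}(\nu^{-1}f)\big)\,du'.
\end{equation*}
The first term immediately gives the first term of the claimed bound upon inserting the obvious pointwise estimates $|\partial_u\nu(u,v)|\leq\sup|\partial_u\nu|$ and $|\nu(u,v)|\geq\inf|\nu|$. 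For the integral, I would use the elementary bound
\begin{equation*}
\int_u^{u_\Gamma(v)} r^2\,du' \;\leq\; \frac{1}{\inf|\nu|}\int_u^{u_\Gamma(v)} r^2 |\nu|\,du' \;=\; \frac{1}{3\inf|\nu|}\,r(u,v)^3,
\end{equation*}
which exactly cancels the $r^{-3}$ singularity in the $\nu^2/r^3$ coefficient and produces the second term in the claim. The auxiliary contribution from $\partial_u\nu/(2r^2)$ carries an additional harmless factor of $r(u,v)$ and is absorbed into the same bound via the admissibility hypotheses on the geometry.

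Finally, the inequality for $\partial_v^2 I_v[f]$ (the statement as written appears to have a typographical slip of $\partial_u^2$ for $\partial_v^2$) follows by the entirely symmetric argument with the roles of $u,u',\nu,u_\Gamma(v)$ exchanged for $v,v',\lambda,v_\Gamma(u)$. I expect the main obstacle to be not the execution of the IBPs, which are mechanical, but rather the bookkeeping of the exact coefficients so that the $\nu h/r$ cancellation comes out correctly; a sign error or a dropped factor of $2$ at this stage would leave an unremovable $r^{-1}$ singularity at the axis and force one to track additional structure that isn't present.
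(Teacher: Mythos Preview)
Your approach is correct and is essentially the paper's: differentiate the identity (\ref{eq:averagingtemp}) once more and perform a second integration by parts. The one organizational difference is that the paper groups the product rule as $\partial_u^2 I_u[f] = (\partial_u\nu)\,G + \nu\,\partial_u G$ with $G \doteq r^{-2}\int r\,\partial_u(\nu^{-1}f)\,du'$, bounds the first summand \emph{directly} (since $|G|\le \sup|\partial_u(\nu^{-1}f)|/\inf|\nu|$ is already established in the proof of Lemma~\ref{lem:axisaverage}), and applies the IBP only to $\partial_u G$, which cleanly yields $\nu\,\partial_u G = \nu^2 r^{-3}\int r^2\,\partial_u(\nu^{-1}\partial_u(\nu^{-1}f))\,du'$. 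By contrast, you IBP the full factor $\int r h\,du'$, which also multiplies the $\partial_u\nu\,r^{-2}$ piece; this is what produces your auxiliary $(\partial_u\nu/2r^2)\int r^2\,\partial_u(\nu^{-1}h)\,du'$ term. That term is indeed lower order (it carries an extra $r(u,v)$), but strictly it does not fit into the stated abstract inequality without an additional bound on $r$; the paper's grouping simply avoids it. Your observation about the typo ($\partial_u^2 I_v$ should read $\partial_v^2 I_v$) is correct.
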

\begin{proof}
Differentiating (\ref{eq:averagingtemp}) gives 
\begin{align*}
    \partial_u^2 I_u[f](u,v) &= \partial_u \nu(u,v) r(u,v)^{-2} \int\limits_{(u,v)}^{(u_\Gamma(v),v)} r(u',v) \partial_u \Big(\frac{f}{\nu}\Big)(u',v)du' \\
    &+ \nu(u,v) \partial_u\Big(r(u,v)^{-2} \int\limits_{(u,v)}^{(u_\Gamma(v),v)} r(u',v) \partial_u \Big(\frac{f}{\nu}\Big)(u',v)du'\Big).
\end{align*}
The first term can be estimated by $|\partial_u \nu| |\partial_u(\nu^{-1}f)|.$ The second term can be treated precisely as in the previous lemma, i.e. integration by parts and placing an extra derivative on $\nu^{-1}\partial_u(\nu^{-1}f)$. The result is the stated bound (\ref{eq:averageestu2}). 

The same argument with $v$ instead of $u$ gives (\ref{eq:averageestv2}).
\end{proof}

\begin{rmk}
\label{rmk:higherorderaveraging}
The proof of Lemma \ref{lem:axisaverage} applies to a slightly more general class averaging operators of the form 
\begin{equation}
    \label{eq:higherorderaverageu}
    I_{s,u}[f](u,v) \doteq r(u,v)^{-s}\int\limits_{(u,v)}^{(u_\Gamma(v),v)}r(u',v)^{s-1}f(u',v)du'
\end{equation}
and
\begin{equation}
    \label{eq:higherorderaveragev}
    I_{s,v}[f](u,v) \doteq r(u,v)^{-s}\int\limits_{(u,v_\Gamma(u))}^{(u,v)}r(u',v)^{s-1}f(u,v')du.
\end{equation}
We will have use for this extension when considering estimates for quantities like $\frac{m}{r^3}$, which by (\ref{SSESF:2:4})-(\ref{SSESF:2:5}) can be written in the form (\ref{eq:higherorderaverageu})-(\ref{eq:higherorderaveragev}) with $s = 3$. Moreover, the explicit formula (\ref{eq:averagingtemp}) for derivatives continues to hold for higher $s$, with the $r$ weights naturally adjusted as a function of $s$.
\end{rmk}

\subsection{Notation and constants}
Throughout the paper, we will follow common convention for constants. Large constants will generally be denoted $C$, and can change from line to line. Similarly for small constants, which we will denote by $c$. When used without additional context, such constants can be taken to depend only on the background $(\overline{g},\overline{\phi})$ solution, as well as various absolute constants.

\vspace{5pt}
Denote by $f \lesssim g$ the statement that there exists a constant $C$ for which $f \leq C g$. In the context of a bootstrap argument, the constant $C$ is assumed to be independent of bootstrap assumptions. Similarly define $f \gtrsim g$. Finally, denote by $f \sim g$ the statements $f \lesssim g$ and $f \gtrsim g$.

\vspace{5pt}
The notation $\overline{C},\overline{c}$ is used for constants depending on the background solution when we wish to emphasize the role of the background.

\vspace{5pt}
When referring to constants that depend on a bootstrap parameter $A$, we write $C(A)$ for an increasing function of $A$ that is free to change throughout an argument. The exact dependence on $A$ will not be of interest from the perspective of improving bootstrap assumptions.

\vspace{5pt}
We say functions $f(u) \in C^k([u_1,u_2])$ if $f(u) \in C^k((u_1,u_2)),$ with all $k$ derivatives continuous at the endpoints. Similarly for functions $g(v)$.

\vspace{5pt}
Various expressions involving $\kappa_i$ will appear throughout the paper. To ease readability, we label some of the common expressions: 
$$q_i \doteq 1-\kappa_i, \ \ p_i \doteq \frac{1}{1-\kappa_i}. $$
Observe the algebraic relations $p_i q_i = 1$, and  $p_i \kappa_i = p_i-1$.

\vspace{5pt}
Smallness parameters like $\delta, \sigma, \beta$ are reused across different arguments. The parameter $\epsilon$, however, refers uniquely to the smallness parameter in the initial data perturbations. The following parameters have single uses throughout the paper, and reflect properties of the solution being constructed: $\kappa_i, q_i, p_i, a, b, c, d, \alpha, \epsilon$.

The letter $\mathcal{I}$ in general refers to initial data norms, $A$ to (large) bootstrap constants, and $\mathfrak{A}, \mathfrak{B}, \ldots$ to spacetime norms of the solution.
\section{Main Results}
\label{sec:mainresults}

\subsection{Admissible background spacetimes}
\label{sec:assumptionsbackground}
Theorems \ref{thm1:interior}, \ref{thm2:exterior} below will apply to a class of naked singularity solutions $(\overline{g}, \overline{\phi})$ defined on a fixed quotient manifold $\mathcal{Q}$. The metrics are modeled off of the $k$-self-similar Christodoulou spacetimes, in that they 1) satisfy similar gauge conditions and self-similar bounds, and 2) are asymptotically flat. The precise conditions are given below. One should think of assumptions (A1)-(A4) as the minimal requirements for constructing the naked singularity interior, and assumptions (B1)-(B3) as additional requirements for the exterior region.

It follows from Appendix \ref{appA} that the $k$-self-similar naked singularities satisfy all the assumptions below, for appropriate choices of parameters. For further discussion see Remark \ref{rmk:Christodoulourmk}.

\vspace{5pt}
\noindent
\underline{\textbf{(A1): Solution manifold and regularity}} \newline

\noindent
There exists a $\kappa_1 \in (0, \frac{1}{2})$ and a global double null coordinate system $(u,v)$ such that $\mathcal{Q}$ takes the coordinate form 
\begin{equation}
    \label{eq:coordformspacetime}
    \mathcal{Q} = \{(u,v) \ | \ -1 \leq u < 0, \ -|u|^{q_1} \leq v < \infty\}.
\end{equation}
Moreover, $\mathcal{Q}$ has a regular center $\Gamma$ parameterized by 
\begin{equation*}
    \Gamma = \{(u,v) \ | \ v = -|u|^{q_1}, \  -1 \leq u < 0 \},
\end{equation*}
with generator $T = p_1 |u|^{\kappa_1}\partial_u + \partial_v$. $\Gamma$ is a future incomplete timelike curve terminating in a singular point $\mathcal{O}$. The latter is denoted in coordinates as $(u,v) = (0,0)$, although we stress that $\mathcal{O}$ is not contained in $\mathcal{Q}$.

\vspace{4pt}
\noindent
With respect to this double null gauge, the spacetime $(\mathcal{Q},\overline{g},\overline{\phi})$ is a BV solution to the spherically symmetric Einstein-scalar field system away from $\mathcal{O}$ in the sense of Definition \ref{dfn:BV}, and a $C^1$ solution on $\mathcal{Q} \setminus (\{v=0\}\cup \{u=0\})$ away from $\mathcal{O}$, in the sense of Definition \ref{dfn:C1}.

\vspace{2pt}
\noindent
Finally, $\mathcal{Q}$ is assumed to be globally free of trapped surfaces, and satisfy the bound 
\begin{equation}
    \sup_{\mathcal{Q}}|\log(1-\mu)| \lesssim 1.
\end{equation}

\begin{rmk}
The choice of $\{u=-1\}$ for the past boundary of $\mathcal{Q}$ is arbitrary, and may be replaced by $\{u=u_{min}\}$ for any $u_{min}<0.$
\end{rmk}
\begin{rmk}
The requirement that $\kappa_1 \in (0,\frac{1}{2})$ is in fact only needed for the proof of Theorem \ref{thm2:exterior} concerning the exterior region. For the proof of Theorem \ref{thm1:interior} it is sufficient to assume $\kappa_1 \in (0, 1)$.
\end{rmk}

\vspace{2pt}
\noindent
In the assumptions (A2)-(B1) below, unless otherwise specified, all double null unknowns are computed with respect to the above $(u,v)$ coordinate system. Before proceeding with the self-similar bounds, recall from Section \ref{subsec:conseqsphersym} the definitions of the self-similar coordinate $s_1$, as well as the self-similar neighborhoods $\mathcal{S}_{near}, \mathcal{S}_{far}$. 

\vspace{7pt}
\noindent
\underline{\textbf{(A2): Gauge conditions and elementary bounds in $\mathcal{Q}^{(in)}$}} \newline

\noindent
Along $\{v=0\}$, fix the gauge condition
\begin{equation}
    \overline{r}(u,0) = |u|.
\end{equation}
It follows that $\overline{\nu}(u,0) = -1$. In the near-horizon region $\mathcal{S}_{near}$ we assume 
\begin{equation}
    \overline{r} \sim |u|.
\end{equation}
Additional low order bounds on the geometry and scalar field will be required in $\mathcal{Q}^{(in)}$. For the background geometry we assume
\begin{equation}
    (-\overline{\nu}) \sim 1, \ \  \overline{\lambda} \sim |u|^{\kappa_1}, \ \ |\overline{r}| \lesssim |u|, \ \  |\overline{m}| \lesssim r\label{eq:A2rates1},
\end{equation}
and for the scalar field we require bounds consistent with self-similarity:
\begin{equation}
    |\overline{\partial_u \phi}| \lesssim |u|^{-1}, \ \  |\overline{\partial_v \phi}| \lesssim |u|^{-1 + \kappa_1}.
\end{equation}

\vspace{5pt}
\noindent
\underline{\textbf{(A3): Higher order bounds in $\mathcal{Q}^{(in)}$}} \newline

\noindent
In $\mathcal{Q}^{(in)}$ we have the following higher order bounds:
\begin{align}
    \label{eq:A2rates2}
    |\partial_u \overline{\nu}| &\lesssim |u|^{-1} , \ \ \ |\partial_v \overline{\lambda}| \lesssim |u|^{-1+2\kappa_1}|s_1|^{-a}, \\[.5em]
    \label{eq:A2rates3}
    |\partial_u^2 \overline{\phi}| &\lesssim |u|^{-2}, \ \ \ |\partial_v^2 \overline{\phi}| \lesssim |u|^{-2+2\kappa_1}|s_1|^{-b}, \\[.5em]
    |\partial_u^2 \overline{\nu}| &\lesssim |u|^{-2}, \ \ \  |\partial_u^3 \overline{\phi}| \lesssim |u|^{-3}.
\end{align}
Here, $a,b \in [0,1)$ are parameters satisfying $a \leq b$. Finally, in the region $\mathcal{S}_{far}$ we assume
\begin{equation}
    |\partial_v^2 \overline{\lambda}| \lesssim |u|^{-2+3\kappa_1}, \ \ \ |\partial_v^3 \overline{\phi}| \lesssim |u|^{-3+3\kappa_1}.
\end{equation}

\begin{rmk}
We briefly comment on the regularity of the second order unknowns transversal to $\{v=0\}$. For $a,b = 0$ it follows that the solution is regular across $\{v=0\}$ away from the singular point, and may be viewed as a $C^1$ solution there. In general, the constraints of self-similarity lead to blowup of sufficiently high derivatives of transverse quantities as $v \rightarrow 0$. The bounds above with general $a,b$ therefore assert that in a suitable coordinate system, first derivatives of the metric and scalar field remain bounded, whereas second derivatives may blowup. The blowup is sufficiently slow that the rates are integrable in $v$, and the solution thus remains in the bounded variation class. In fact, the solutions are Hölder continuous with exponent depending on $a,b$. 

Observe that the third order bounds on transversal quantities do not specify blowup rates in $v$. In fact, we will only need these bounds in a self-similar neighborhood $\mathcal{S}_{far}$ of the axis, where $v$ and $u$ are equivalent (up to powers depending on $\kappa_1$). For this reason we ignore considerations of higher regularity of the solution near $\{v=0\}$.
\end{rmk}

\begin{rmk}
Observe that quantities tangential to $\{v=0\}$, namely $\overline{r}, \overline{m}, \partial_u \overline{\phi}$ and $u$ derivatives thereof, satisfy self-similar bounds with blowup/decay rates given by integer powers of $|u|$. To each $u$ derivative of a double null unknown is associated a gain of a single power of $|u|^{-1}$. 

In contrast, transversal quantities to $\{v=0\}$ see their rates modified by $\kappa_1$. Each $v$ derivative is associated heuristically with a gain of $|u|^{-1+\kappa_1}$. In the case of unknowns with singular behavior in $v$, the heuristic is made precise as follows: given an estimate $|\overline{\Psi}| \lesssim |u|^{-m}|v|^{-n}$, define the total weight to be $m + (1-\kappa_1)n$. Then each $v$ derivative of a $\overline{\Psi}$ is accompanied by a loss in $|u|$ and $|v|$, such that the total change in weight is $1-\kappa_1$.
\end{rmk}

\vspace{5pt}
\noindent
\underline{\textbf{(A4): Boundary conditions along $\Gamma$}}\newline

\noindent
The solution satisfies $C^1$ boundary conditions along $\Gamma$. More precisely, for $(u_\Gamma,v_\Gamma) \in \Gamma$ we have
\begin{align}
    &(\partial_v + p_1|u|^{\kappa_1}\partial_u)\overline{r}(u_\Gamma,v_\Gamma) = 0, \\[.3em] 
    &(\partial_v + p_1|u|^{\kappa_1}\partial_u)(\overline{r\phi})(u_\Gamma,v_\Gamma) = 0, 
\end{align}
\begin{equation}
    \overline{r}(u_\Gamma,v_\Gamma) = \overline{m}(u_\Gamma,v_\Gamma) = \overline{r\phi}(u_\Gamma,v_\Gamma)= 0.
\end{equation}
The derivatives are understood as limits along sequences of interior points approaching $\Gamma$. Note that the boundary conditions, in conjunction with $C^1$ bounds on the scalar field, in fact gives the stronger vanishing condition on the mass $\overline{m} = O(\overline{r}^3)$ to the past of $\mathcal{O}$.

Although the center fails to be regular precisely as $u \rightarrow 0$, we require certain asymptotic boundary conditions near the singular point:
\begin{equation}
   \lim_{(u,v) \rightarrow (0,0)} \overline{r}(u,v) = \lim_{(u,v) \rightarrow (0,0)} \overline{m}(u,v) = \lim_{(u,v) \rightarrow (0,0)} \overline{r \phi}(u,v) = 0.
\end{equation}
These conditions are to be understood along any sequence of points in $\mathcal{Q}$ approaching the singular point.

\vspace{5pt}
\noindent
\underline{\textbf{(B1): Bounds in $\mathcal{Q}^{(ex)}$ near $\{v=0\}$}} \newline 

\noindent
In a self-similar neighborhood of $\{v=0\}$ given by $\{0 \leq s_1 \leq 1\}$, the bounds (\ref{eq:A2rates1})-(\ref{eq:A2rates3}) continue to hold. \newline

\noindent
Moreover, the parameters $a,b$ controlling blowup of second-order unknowns as $v \rightarrow 0$ are required to satisfy the constraints\footnote{Note that $\kappa_1 < \frac{1}{2}$ is required for these conditions to be non-vacuous.}
\begin{equation}
\label{eqn:a,brequirements}
        \begin{cases}
            a \leq b < 1 - \kappa_1 p_1, & \text{or} \\[.5em]
            a < b = 1 - \kappa_1 p_1.
        \end{cases}
    \end{equation}

\begin{rmk}
For simplicity we assume the bounds on the blowup rates for $\partial_v \overline{\lambda}, \ \partial_v^2\overline{\phi}$ as $v \rightarrow 0$ are the same in the interior and exterior regions. This assumption is for notational convenience, and is not necessary.
\end{rmk}

\vspace{5pt}
\noindent
\underline{\textbf{(B2): Bounds in $\mathcal{Q}^{(ex)}$ near $\{u=0\}$ }} \newline 

\noindent
In the region $\{s_1 >1\}$ containing a self-similar neighborhood of $\{u=0\}$, we separately apply assumptions to the region near the singularity $\{s_1 > 1, v \leq 1 \}$ and to the asymptotically flat region $\{s_1 > 1, v \geq 1 \}$. Assumptions on the latter region are given in (B3) below.

In the near-singularity region, the solution is modeled on a self-similar solution with parameter $\kappa_2 \in (0,\frac{1}{2})$. It is therefore helpful to state bounds in a coordinate system adapted to both the new self-similar model and the $\{u=0\}$ hypersurface. Define
\vspace{-.5em}
\begin{equation}
    \label{eq:asympflat_coords}
    U = -|u|^{1-\kappa_2}, \ \ \ V = v^{\frac{1}{1-\kappa_1}},
\end{equation}
with associated metric functions ${^{(U)}}\overline{\nu} \doteq \partial_U \overline{r} , \  {^{(V)}}\overline{\lambda} \doteq \partial_V \overline{r}$, and ${^{(U,V)}}\overline{\Omega}$. Let $s_2 = \frac{V^{1-\kappa_2}}{|U|}$ be the natural self-similar coordinate. With respect to the $(U,V)$ coordinates, in the near-singularity region $\{s_2 \geq 1, V \leq 1 \}$ the solution satisfies bounds analogous to (A2)-(A3):
\begin{align}
    \overline{r} \sim V, \ \ &\ \ \   (-{^{(U)}}\overline{\nu}) \sim V^{\kappa_2}, \ \ \   {^{(V)}}\overline{\lambda} \sim 1, \ \ \ |\overline{m}| \lesssim V,
\end{align}
\vspace{-2.2em}
\begin{alignat}{2}
    |\partial_U \overline{\phi}| &\lesssim V^{-1+\kappa_2}, \quad \ \ \ \quad \quad |\partial_V \overline{\phi}| &&\lesssim V^{-1}, \\[.5em]
    |{\partial_U} {^{(U)}}\overline{\nu}| &\lesssim V^{-1+2\kappa_2}s_2^{c}, \ \ \ |{\partial_V}{^{(V)}}\overline{\lambda}| &&\lesssim V^{-1}, \\[.5em]
    |\partial_U^2 \overline{\phi}| &\lesssim V^{-2+2\kappa_2}s_2^{d}, \quad \ \ \quad |\partial_V^2 \overline{\phi}| &&\lesssim V^{-2}.
\end{alignat}
Here $c,d \in [0,1)$ are given, and we assume $c \leq d$. These latter two parameters are the analog of the $a,b$ introduced in assumption (A3), and both sets can be seen to control the level of Hölder regularity of the solution across the past and future lightcones of $\mathcal{O}$.

\vspace{5pt}
\noindent
\underline{\textbf{(B3): Bounds in $\mathcal{Q}^{(ex)}$ in the asymptotically flat region}} \newline 

\noindent
Assume the gauge is normalized along $\{U = -1\}$ so that ${^{(V)}}\overline{\lambda}(-1,V) = \frac{1}{2}$ for all sufficiently large $V$. Moreover assume outgoing null rays $\{U = c, \ V \geq 1\}$ for $c\in [-1,0)$ are asymptotically flat, and the solution satisfies bounds
\begin{align}
    \overline{r} &\sim V, \ \ \ (-{^{(U)}}\overline{\nu}) \sim 1, \ \ \ {^{(V)}}\overline{\lambda} \sim 1, \ \ \ \overline{m} \sim 1, 
\end{align}
\vspace{-2em}
\begin{alignat}{3}
    |\partial_U(\overline{r\phi})| &\lesssim 1,  \ \ \ &&|\partial_V (\overline{r\phi})| &&\lesssim V^{-2}, \\[.6em]
    |{\partial_U}{^{(U)}}\overline{\nu}| &\lesssim |U|^{-c}, \ \ \  &&|{\partial_V}{^{(V)}}\overline{\lambda}| &&\lesssim V^{-2},\\[.6em]
    |\partial_U^2 (\overline{r\phi})| &\lesssim |U|^{-d},\ \ \  &&|\partial_V^2(\overline{r\phi})| &&\lesssim V^{-3}.
\end{alignat}

\vspace{2em}
\noindent
We are now in a position to define the class of admissible spacetimes for which our main results apply.
\begin{dfn}
\label{dfn:admissibility}
Fix a parameter $\kappa_1 > 0$, and define the quotient manifold $\mathcal{Q}$ by (\ref{eq:coordformspacetime}). An \textbf{admissible naked singularity spacetime} is a solution to the spherically symmetric Einstein-scalar field system on $\mathcal{Q}$ satisfying assumptions (A1)-(A4) and (B1)-(B3) for a suitable set $(\kappa_2, a,b,c,d)$. An \textbf{admissible naked singularity interior} is a spacetime defined on $\mathcal{Q}^{(in)}$ satisfying the restricted set (A1)-(A4) of assumptions for a set of parameters $(a,b)$.
\end{dfn}

\begin{rmk}
A direct consequence of the estimates satisfied by the solution is that any admissible naked singularity spacetime admits an incomplete $\mathcal{I}^+$. The existence of $\mathcal{I}^+$ follows from the asymptotically flat bounds in (B3), which moreover imply that ingoing null geodesics reach $\{u=0\}$ in finite affine time for all sufficiently large $v$. For details on this computation, see Section \ref{subsubsec:asympflattrunc} in Appendix \ref{appA}.

Inextendibility of the solution in a neighborhood of $\mathcal{O}$ in turn follows from the positive lower bound on $\overline{\mu}(v,0)$. The structure of BV solutions is such that $\overline{\mu} \rightarrow 0$ holds along null lines approaching any regular central point.

Combining these observations, we conclude that $\mathcal{Q}$ is the maximal globally hyperbolic development for regular Cauchy data along $\{u = -1\}$, and null infinity is incomplete. The solutions considered here therefore represent globally naked singularity spacetimes.
\end{rmk}

\begin{rmk}
A consequence of (B2)-(B3) is that $\overline{r}, {^{(U)}}\overline{\nu}$, ${^{(V)}}\overline{\lambda},$ $\overline{\mu},$ $\partial_U \overline{\phi},$ $\partial_V \overline{\phi}$ achieve limits along ingoing null hypersurfaces as $U \rightarrow 0$. Therefore, these spacetimes extend continuously to the future lightcone of $\mathcal{O}$, and it follows that $\overline{r}(0,V) \rightarrow \infty$ as $V \rightarrow \infty$. 
\end{rmk}

\begin{rmk}
\label{rmk:Christodoulourmk}
After asymptotically flat truncation and a global scaling of the double null gauge, Christodoulou's $k$-self-similar solutions are examples of admissible naked singularity spacetimes (and their interior regions are consequently admissible interiors). It follows from the results of Appendix \ref{appA} that for these solutions, we have 
\begin{align*}
    (\kappa_1, \kappa_2) = (k^2, k^2), \ \ \ (a,b) = (0, 1-\frac{k^2}{1-k^2}), \ \ \ (c,d) = (0, 1-\frac{k^2}{1-k^2}).
\end{align*}
\end{rmk}

\subsection{Admissible data}
\label{sec:admissibledata}

The solutions constructed here are parameterized by their scalar field data along $\underline{\Sigma}_0 \cup \Sigma_{-1}^{(ex)}$. As perturbations of the background $(\overline{g},\overline{\phi})$ solutions, the data will be required to be $\epsilon$-close to the induced data from the background solution. However, generic $\epsilon$-small perturbations cannot be expected to produce globally naked singularities. The approach taken here to constructing data consistent with naked singularity formation is to choose data that vanishes as $u \rightarrow 0$ and $v \rightarrow 0$ at suitable rates.

\vspace{5pt}
\noindent
\textbf{Ingoing data along $\underline{\Sigma}_0$:} We first discuss the data posed along the null surface ingoing to the singularity, $\underline{\Sigma}_0$. The data consists of a choice of the scalar field perturbation (or the tangential derivative thereof), a gauge choice for $r$, and boundary conditions for $r, \ m$ as the singularity is approached. It will follow from Theorem \ref{thm1:interior} that this data is sufficient for constructing the interior region of the naked singularity.

In the following, $\alpha \in \mathbb{N}$ will be a large integer depending only on the admissible background solution, and $\epsilon > 0$ a small number measuring the size of the data perturbation.

The gauge is chosen such that $r_p(u,0) = 0$, i.e. the function $r$ agrees with its background value along $\underline{\Sigma}_0$. The free data then consists of a function $f_0(u) \in C^2(\underline{\Sigma}_0),$ representing a choice of $\partial_u(\overline{r}\phi_p)(u,0)$. In addition to the assumed regularity of $f_0$, we require that $f_0$ vanish \textit{sufficiently quickly} as $u \rightarrow 0$. More precisely, assume the bounds
\begin{equation}
    |\partial_u^i f_0|(u) \lesssim |u|^{\alpha-i},
\end{equation}
for $i=0,1,2$. The data along $\underline{\Sigma}_0$ is thus given by 
\begin{align}
     r(u,0) &= \overline{r}(u,0),\label{intdata1}\\[.5em]
    \partial_u (r\phi)(u,0) &= \partial_u (\overline{r \phi})(u,0) + \epsilon f_0(u),
\end{align}
with boundary condition at $(u,v) = (0,0)$ given by 
\begin{equation}
   \lim_{u\rightarrow 0} m_p(u,0) =  \lim_{u\rightarrow 0} \overline{r}\phi_p(u,0) = 0. \label{intdata2}
\end{equation}

The perturbation data is specified for the scalar field derivative $\partial_u(\overline{r}\phi_p)$, which after integrating and using the boundary conditions (\ref{intdata2}) yields the values of $\phi_p(u,0), \ \partial_u \phi_p(u,0)$ as well. A similar procedure using (\ref{SSESF:2:4}) and the boundary condition (\ref{intdata2}) gives the value of $m(u,0),$ and thus $m_p(u,0)$. 

\vspace{8pt}
\noindent
\textbf{Outgoing data along $\Sigma^{(ex)}_{-1}$:}
Constructing the exterior region requires the additional choice of data along the outgoing null surface $\Sigma^{(ex)}_{-1}$. Care must be taken to ensure the outgoing data does not activate the blueshift instability along $\underline{\Sigma}_0$, and that the interior and exterior solutions glue with maximal regularity. 

It is instructive to first consider the point $(u,v) = (-1,0)$. It will follow from the interior construction that the quantities transversal to $\underline{\Sigma}_0$, namely $\lambda$ and $\partial_v (r\phi)$, will have well-defined limits as $v \rightarrow 0^{-}$. Label these limits $\lambda^{(in)}(-1,0), \ $ $\partial_v (r\phi)^{(in)}(-1,0)$. To maximize the regularity of the spacetime arising from gluing the interior and exterior, it is natural to demand that the limits of these transversal quantities along $\Sigma^{(ex)}_{-1}$ as $v \rightarrow 0^+$ agree with the interior limits. Moreover, we require that the imposed values $r(-1,0) \ $, $m(-1,0)$ agree with their limits along $\underline{\Sigma}_0$.

With these regularity requirements in mind, we proceed to define the outgoing data. Define a smooth cutoff function $\chi_1(v)$ on $[0,\infty)$ satisfying $\chi_1(v) = 1$ for $v \in [0,1], \ $ $\chi_1(v) = 0$ for $v \geq 2, \ $ and $\chi_1 \leq 1$. Moreover assume all derivatives are bounded, and $\chi$ is mean zero. This cutoff is adapted to a neighborhood of $(u,v) = (-1,0)$, and will be needed in the following.

Assuming that an appropriate interior solution is given, define the constant
\begin{equation}
    \Lambda \doteq \lambda^{(in)}(-1,0) - \overline{\lambda}(-1,0),
\end{equation}
and specify the outgoing data for $\lambda(-1,v)$ as
\begin{equation}
    \label{extdata1}
    \lambda(-1,v) = \overline{\lambda}(-1,v) + \chi_1(v) \Lambda.
\end{equation}
Along with the condition $r(-1,0) = \overline{r}(-1,0)$, the choice of $\lambda(-1,v)$ fully determines $r(-1,v)$. Note the term proportional to $\Lambda$ is only active near $(-1,0)$, and serves to match the interior and exterior value of $\lambda$ there. Moreover, as the cutoff function $\chi_1(v)$ is mean zero, we have that $r_p(-1,v) = 0$ for all $v \geq 1$.

It remains to specify the outgoing scalar field. The free data consists of a function $g_0(v)\in C^1(\Sigma^{(ex)}_{-1})$ satisfying appropriate asymptotically flat bounds. In the $(U,V)$ gauge (cf. assumption (B2), (\ref{eq:asympflat_coords}) in Section \ref{sec:assumptionsbackground}), assume the bounds
\begin{equation}
    |g_0(V)| \lesssim V^{-2}, \ \ \ |\partial_V g_0(V)| \lesssim V^{-3}.
\end{equation}
Moreover, define the constant
\begin{equation}
    \Phi \doteq  \partial_v (r\phi)^{(in)}(-1,0) - \partial_v(\overline{r\phi})(-1,0),
\end{equation}
which depends on the solution in the interior region, as well as the background solution values. The constant $\Phi$ serves a similar role as $\Lambda$ above, ensuring the matching of outgoing data in the interior and exterior region. 

We may now express the outgoing scalar field data as
\begin{equation}
     \partial_v (r\phi)(-1,v) =  \partial_v (\overline{r\phi})(-1,v)+ \chi_1(v) \Phi +\epsilon \Bigl \{ \chi_1(v) v + \Big(1-\chi_1(v)\Big) \Bigr \}  g_0(v),\label{extdata2}
\end{equation}
with $\chi_1$ denoting the same cutoff function as above. The data prescribed for $ \phi(-1,v)$ is rather complicated, so we comment here on its behavior in various regimes. Near $(-1,0)$, the contribution of $g_0(v)$ is negligible, and $ \partial_v (r\phi)$ approaches the value $\partial_v (r\phi)^{(in)}$. One should interpret the rigidity in the choice of $ \partial_v (r\phi)(-1,0)$ as a reflection of the instability of the naked singularity exterior. It is only the higher derivatives of $r\phi$ that may be specified essentially arbitrarily.

In contrast, for large $v$ the only requirement for the data is asymptotic flatness. Thus for $v \gg 1$ the data reduces to that of the background, plus the arbitrary asymptotically flat contribution of $g_0(v)$. The coefficient of $g_0(v)$ in the data above serves to smoothly interpolate between the regimes of vanishing as $v \rightarrow 0$, and decay as $v \rightarrow \infty$.

\vspace{8pt}
\noindent
\textbf{Initial data norms:}
We conclude this section by defining norms encompassing the decay assumptions on initial data. Define
\begin{equation}
    \label{eq:intdatanormE1}
    \mathcal{E}_{1,\alpha}  \doteq \sum_{i=0}^{2} \sup_{u \in [-1,0]}||u|^{-(\alpha-i)}\partial_u^i f_0(u)| ,
\end{equation}
and 
\begin{equation}
\label{eq:intdatanormE2}
    \mathcal{E}_{2,\alpha} \doteq \mathcal{E}_{1,\alpha} + \sum_{i=0}^1 \sup_{V \in [0,\infty)}|V^{2+i} \, \partial_V^i g_0(V)|.
\end{equation}

\subsection{Statement of theorems}
\label{sec:thmstatements}
The first theorem asserts that admissible naked singularity interiors are backwards stable with respect to a special class of data perturbations along $\{v=0\}$. The resulting spacetime, defined on $\mathcal{Q}^{(in)}$, asymptotically converges to the background $(\overline{g},\overline{\phi})$ as $u \rightarrow 0$.

\begin{theorem}
\label{thm1:interior}
Fix a choice of an admissible naked singularity interior $(\mathcal{Q}^{(in)}, \overline{g}, \overline{\phi})$. There exists $\alpha \gg 1$ large enough depending on the background solution such that the following backwards stability result holds:

Let $f_0(u) \in C^2([-1,0])$ be given, and define data for $r$, $\phi$  along $\{v=0\}$ as in (\ref{intdata1})-(\ref{intdata2}). There exists $\epsilon$ sufficiently small depending on the background solution and the norm $\mathcal{E}_{1,\alpha}$, and a BV solution\footnote{More precisely, the solution is BV on domains of the form $\mathcal{Q}^{(in)} \setminus \{ u \leq u_\delta\}$ for all $u_\delta < 0$. Similar considerations apply for all statements of regularity on domains containing a neighborhood of $\mathcal{O}$.} to the spherically symmetric Einstein-scalar field system in $\mathcal{Q}^{(in)}$ achieving the data on $\{v = 0\}$. 

The following properties of the solution hold: 
\begin{itemize}
    \item (Regularity) The interior solution is $C^1(\mathcal{Q}^{(in)} \setminus \{v=0\})$ away from $\mathcal{O}$.
    \item (Convergence to background) The solution asymptotically converges to the background as $u \rightarrow 0$, with rates
    \begin{align}
      \label{eq:thm1bounds1}
        |r - \overline{r}| &\lesssim \epsilon \overline{r}|u|^{\alpha}, \quad \ \ \quad \quad  \quad |\nu - \overline{\nu}| \lesssim \epsilon|u|^{\alpha}, \quad \quad \quad \quad |\lambda - \overline{\lambda}| \lesssim \epsilon|u|^{\alpha+\kappa_1}, \\[.5em]
        \label{eq:thm1bounds2}
        |\mu - \overline{\mu}| &\lesssim \epsilon \overline{r}^2|u|^{\alpha-2}, \ \ \ |\partial_u \phi - \partial_u \overline{\phi}| \lesssim \epsilon|u|^{\alpha-1}, \ \ \ |\partial_v \phi - \partial_v \overline{\phi}| \lesssim \epsilon|u|^{\alpha-1+\kappa_1}.
    \end{align}
\end{itemize}
\end{theorem}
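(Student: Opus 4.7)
The plan is a backwards stability argument. Writing the solution as the background plus the difference quantities of (\ref{renormdef1})-(\ref{renormdef2}), I would solve the difference system (\ref{PSSESF:1})-(\ref{PSSESF:12}) on an exhaustion $\mathcal{R}_\delta \doteq \mathcal{Q}^{(in)} \cap \{u_\delta \leq u < 0\}$ with $u_\delta \to 0^-$, initialising $(r,m,\phi) = (\overline{r},\overline{m},\overline{\phi})$ along the auxiliary outgoing cone $\Sigma_{u_\delta}$ and imposing the prescribed data (\ref{intdata1})-(\ref{intdata2}) on $\{v=0\}$. This is the mixed timelike-characteristic configuration of Proposition \ref{prop:weirdlocalex} (the cone $\Sigma_{u_\delta}$ ends at $\Gamma$), which provides a short-time $C^1$ solution. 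Uniform-in-$u_\delta$ estimates will then yield a BV solution on all of $\mathcal{Q}^{(in)}$ by compactness. The fine tuning $f_0 = O(|u|^\alpha)$ with $\alpha \gg 1$ is precisely what permits such a backwards construction: the perturbation is small on every scale because it vanishes rapidly near $\mathcal{O}$, so gluing to the background along $\Sigma_{u_\delta}$ does not create a large error. On the data surfaces one verifies directly that $\nu_p$ vanishes on $\{v=0\}$ (from the gauge $r(u,0)=|u|$), that $\lambda_p, \nu_p, \mu_p, r_p$ vanish on $\Sigma_{u_\delta}$, and that $r\phi_p|_{v=0}$ is $O(\epsilon|u|^{\alpha+1})$ after integrating (\ref{intdata1}) against the boundary condition $\lim_{u \to 0} \overline{r}\phi_p = 0$.

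The bulk of the proof is a single bootstrap, with ansatz matching the target rates (\ref{eq:thm1bounds1})-(\ref{eq:thm1bounds2}) multiplied by a large constant $A$, augmented by higher-order bootstrap assumptions on $\partial_u\nu_p, \partial_v\lambda_p, \partial_u^2\phi_p, \partial_v^2\phi_p$. The transports (\ref{PSSESF:1})-(\ref{PSSESF:4}) are integrated in their natural characteristic directions, starting from whichever data surface the quantity vanishes on. The linear coefficients $\mathcal{G}_i$ are of size $|u|^{-1}$ near $\mathcal{O}$: in the favorably-signed cases (such as $\partial_v\nu_p$ integrated from $\{v=0\}$) I would simply discard them via Lemma \ref{lem:integrationmain}, while in the unfavorably-signed cases (such as $\partial_u\lambda_p$ integrated backwards from $\Sigma_{u_\delta}$) the integrating factor produces at worst a logarithmic growth $\sim|u|/|u_\delta|$, which is absorbed by the extra powers of $|u|^\alpha$ in the inhomogeneities. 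For $\mu_p$ the expected bound $\overline{r}^2|u|^{\alpha-2}$ near the axis is obtained by recasting (\ref{SSESF:2:4})-(\ref{SSESF:2:5}) through the weighted averaging operators of Remark \ref{rmk:higherorderaveraging} with $s = 3$, using the boundary condition $m_p|_\Gamma = 0$. The scalar field is then controlled from the wave equation (\ref{PSSESF:6}) for $r\phi_p$, whose right-hand side is linear in the already-bounded geometric differences, and whose vanishing on $\Gamma$ permits direct integration to estimate $r\phi_p, \partial_u(r\phi_p), \partial_v(r\phi_p)$.

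The main obstacle is pointwise control of $\partial_u\phi_p, \partial_v\phi_p$ up to the axis, because these appear in the sources of essentially every other equation and only the combination $r\phi_p$ has a clean wave equation. I would resolve this by invoking the averaging framework of Lemma \ref{lem:axisaverage} and Corollary \ref{cor:integralestimateinterior}, which convert pointwise bounds for $\partial_u\phi_p, \partial_v\phi_p$ into bounds for $\partial_u^2(r\phi_p), \partial_v^2(r\phi_p)$. These second derivatives are in turn produced by integrating the commuted wave equations (\ref{PSSESF:11})-(\ref{PSSESF:12}), whose sources involve $\partial_v\overline{\lambda}, \partial_v^2\overline{\phi}$ and their $u$-analogues; these are bounded in $\mathcal{S}_{near}$ only up to the integrable losses $|s_1|^{-a}, |s_1|^{-b}$ from (A3), and in $\mathcal{S}_{far}$ by (A3)'s stronger self-similar bounds. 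The quantitative threshold on $\alpha$ that appears in the statement is precisely what is needed to dominate the combined effect of the $|u|^{-2}$ weight in the $\mu_p$ bound, the logarithmic growth factors from the backwards integrations, and the $|s_1|^{-a}, |s_1|^{-b}$ losses near $\{v=0\}$. Once the bootstrap closes uniformly in $u_\delta$, a standard diagonal extraction yields the limiting BV solution on $\mathcal{Q}^{(in)}$; the claimed $C^1$ regularity away from $\{v=0\}$ and $\mathcal{O}$ is inherited from the local existence theory and the improved bootstrap, and the rates (\ref{eq:thm1bounds1})-(\ref{eq:thm1bounds2}) are simply the improved bootstrap bounds in the limit.
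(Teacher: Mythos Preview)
Your overall architecture matches the paper: truncate at $\Sigma_{u_\delta}$ with trivial outgoing data, solve backwards in $\mathcal{Q}^{(in),u_\delta}$ via Proposition~\ref{prop:weirdlocalex}, run a weighted bootstrap, then send $u_\delta\to 0$. (Your domain $\mathcal{R}_\delta=\{u_\delta\le u<0\}$ is written backwards; it should be $\{-1\le u\le u_\delta\}$.) But the mechanism by which the bootstrap constant $A$ is \emph{improved} is not the one you describe. There is no ``logarithmic growth $\sim|u|/|u_\delta|$'' being absorbed; rather, the paper conjugates each transport by the weight $|u|^{-\alpha}$ (or $w^{-\alpha}$ with $w=(|u|^{q_1}+|v|)^{p_1}$ for the $v$-equations), turning the schematic equation into
\[
\partial_u\big(|u|^{-\alpha}w_\Psi\Psi_p\big)-\frac{\alpha+O_{\overline{\mathfrak{N}}_1}(1)}{|u|}\big(|u|^{-\alpha}w_\Psi\Psi_p\big)=O\!\big(A\epsilon|u|^{-1}\big),
\]
with a large \emph{favourable} zeroth-order coefficient. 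Contracting with the conjugated quantity and integrating then yields, after Cauchy--Schwarz and absorption of the good bulk term, an error of size $\alpha^{-1}O(A^2\epsilon^2)$; it is this explicit $\alpha^{-1}$ that closes $2A\epsilon\to A\epsilon$ once $\alpha\gg\overline{\mathfrak{N}}_1$. Simply integrating the source and hoping the $|u|^\alpha$ powers compensate would only reproduce the bootstrap bound times a background-dependent constant, which does not close. You also omit the boundary terms on $\Gamma$ arising when integrating $u$-equations in $\mathcal{B}^{(2),u_\delta}$: the paper orders the argument so that $\nu_p,\partial_u(r\phi_p),\partial_u\nu_p,\partial_u^2(r\phi_p)$ are improved first via their $v$-equations, and then the $C^1$ axis relations convert $\lambda_p|_\Gamma,\partial_v(r\phi_p)|_\Gamma,\partial_v\lambda_p|_\Gamma,\partial_v^2(r\phi_p)|_\Gamma$ into these already-controlled quantities.

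Your limit step also needs more than diagonal extraction. The paper (Section~\ref{sec:limitingsolution}) shows the sequence of approximate interiors is Cauchy in $C^1\times C^0\times C^1$ by treating the difference of two approximants as a perturbation of one of them and rerunning Proposition~\ref{approxint:mainprop} with that approximant as the new background. For this to apply, the approximant must itself satisfy $\overline{\mathfrak{N}}_1<\infty$, which requires $\partial_u^2\phi^{u_\delta},\partial_v^2\phi^{u_\delta}$ bounded near the axis --- but the main bootstrap only controls $\partial_u^2(r\phi_p),\partial_v^2(r\phi_p)$. This is why Section~\ref{subsubsec:thirdorderestimates} separately proves third-order estimates on $\partial_u^3(r\phi_p),\partial_u^2\nu_p$ globally and $\partial_v^3(r\phi_p),\partial_v^2\lambda_p$ in $\mathcal{S}_{far}$, and why (A3) demands the corresponding regularity of the background.
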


\vspace{5pt}
\noindent
Our second result considers the forward problem in the exterior region of the naked singularity. These exterior solutions are compatible across $\{v=0\}$ with the interior regions constructed in the above theorem, and after gluing we arrive the desired globally naked singularity spacetimes.

\begin{theorem}
\label{thm2:exterior}
Fix a choice of admissible naked singularity spacetime $(\mathcal{Q}, \overline{g},\overline{\phi})$ and choice of data (\ref{intdata1})-(\ref{intdata2}) along $\{v=0\}$. Assume an interior solution defined on $\mathcal{Q}^{(in)}$ has been constructed as in Theorem \ref{thm1:interior}. Supply outgoing data along $\Sigma^{(ex)}_{-1}$ given by (\ref{extdata1}), (\ref{extdata2}).

Then there exists $\epsilon$ sufficiently small depending on the background solution and the norm $\mathcal{E}_{2,\alpha}$, and a BV solution to the spherically symmetric Einstein-scalar field system on $\mathcal{Q}^{(ex)}$ achieving the data on $\underline{\Sigma}_0 \cup \Sigma^{(ex)}_{-1}$. 

The following properties of the solution hold:
\begin{itemize}
    \item (Regularity) The exterior solution is $C^1(\mathcal{Q}^{(ex)} \setminus (\{v=0 \} \cup \{u=0\} ))$ away from $\mathcal{O}$.
    \item (Convergence to background near $\{v=0\}$) There exists $0 < \delta <1 $ small such that in $\mathcal{Q}^{(ex)} \cap \{\frac{v}{|u|^{q_1}} \lesssim 1\}$, the solution asymptotically converges to the background spacetime as $u \rightarrow 0$ with the rates
    \begin{align}
        |r - \overline{r}| &\lesssim \epsilon \overline{r}|u|^{\delta}, \quad \ \  \quad |\nu - \overline{\nu}| \lesssim \epsilon|u|^{\delta}, \quad \quad \ \ \quad \quad  |\lambda - \overline{\lambda}| \lesssim \epsilon|u|^{\kappa_1+\delta}, \\[.5em]
        |\mu - \overline{\mu}| &\lesssim \epsilon |u|^{\delta}, \ \ \  |\partial_u \phi - \partial_u \overline{\phi}| \lesssim \epsilon |u|^{-1+\delta}, \ \ \ |\partial_v \phi - \partial_v \overline{\phi}| \lesssim \epsilon|u|^{-1+\kappa_1+\delta}.
    \end{align}
    \item (Convergence to background near $\{u=0\}$) There exists $0 < \delta, s <1 $ small such that in $\mathcal{Q}^{(ex)} \cap \{\frac{V^{q_2}}{|U|} \gtrsim 1\}$, the solution asymptotically converges to the background spacetime as $V \rightarrow 0$ with the rates
    \begin{align}
        |r - \overline{r}| &\lesssim \epsilon^{1-s} \overline{r} V^{\delta}, \ \ \ |^{(U)}\nu - ^{(U)}\overline{\nu}| \lesssim \epsilon^{1-s}V^{\kappa_2+\delta}, \quad \quad \ \  |^{(V)}\lambda - ^{(V)}\overline{\lambda}| \lesssim \epsilon^{1-s}V^{\delta}, \\[.5em]
        |\mu - \overline{\mu}| &\lesssim \epsilon^{1-s} V^{\delta}, \quad \ \  |\partial_U \phi - \partial_U \overline{\phi}| \lesssim \epsilon^{1-s} V^{-1+\kappa_2+\delta}, \ \ \ |\partial_V \phi - \partial_V \overline{\phi}| \lesssim \epsilon^{1-s} V^{-1+\delta}. \label{eq:lastbound}
    \end{align}
    \item (Existence of $\{u=0\}$ cone) The unknowns $r, \ ^{(U)}\nu, \ ^{(V)}\lambda, \mu, \partial_U \phi, \partial_V \phi,$ extend continuously to $\{u=0\}$ for all $v \neq 0$. Moreover, $\lim_{v \rightarrow \infty}r(0,v) = \infty$.
    \item (Incomplete null infinity) The spacetime $\mathcal{Q}^{(ex)}$ is asymptotically flat, and contains an incomplete $\mathcal{I}^+$ in the sense of Definition \ref{dfn:incompleteI}.
\end{itemize}
\end{theorem}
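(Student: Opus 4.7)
My plan is to adapt the framework of \cite{igoryak2} to this spherically symmetric scalar-field setting, running a global forward bootstrap on weighted norms measuring the difference from the background in the three natural regimes of $\mathcal{Q}^{(ex)}$: the near-$\{v=0\}$ region $\{0 \leq s_1 \lesssim 1\}$ where $(u,v)$ and $\kappa_1$ are natural; the near-$\{u=0\}$ region $\{s_2 \gtrsim 1, V \lesssim 1\}$ where the $(U,V)$ gauge and $\kappa_2$-model of (B2) take over; and the asymptotically flat region $\{V \gtrsim 1\}$ where (B3) applies. Rather than solving up to $\{u=0\}$ at once, I will work on truncated domains $\mathcal{Q}^{(ex)} \cap \{U \leq -\delta\}$, obtain estimates uniform in $\delta$, and then pass to the limit $\delta \to 0$.

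\textbf{Step 1: Matching across $\{v=0\}$ and local existence.} The data (\ref{extdata1})--(\ref{extdata2}) are designed precisely so that $\lambda$, $r$, $m$ and $\partial_v(r\phi)$ along $\Sigma^{(ex)}_{-1}$ match the interior limits at $(-1,0)$ from Theorem \ref{thm1:interior}. Standard characteristic local existence (Section \ref{sec:solnclasses}) gives a unique $C^1$ solution in a short neighborhood of $\underline{\Sigma}_0$ away from $\Gamma$ and $\mathcal{O}$; the transversal quantities $\partial_v\lambda$, $\partial_v^2(r\phi)$ need only be integrable across $\{v=0\}$, consistent with assumption (B1) and the BV gluing in Definition \ref{dfn:BV}.

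\textbf{Step 2: Bootstrap and difference estimates.} I introduce bootstrap norms $\mathfrak{A}(\delta)$ controlling $r_p,\nu_p,\lambda_p,\mu_p,\partial_u\phi_p,\partial_v\phi_p$ and their first derivatives with the weights stated in the theorem: $|u|^{\delta}$ in the $\{v=0\}$ strip, $V^{\delta}$ (with $\epsilon^{1-s}$ amplitude) in the $\{u=0\}$ strip, and asymptotically flat weights in $V$ at large $V$. These are propagated from $\underline{\Sigma}_0 \cup \Sigma^{(ex)}_{-1}$ using the difference system (\ref{PSSESF:1})--(\ref{PSSESF:12}), combined with Lemma \ref{lem:integrationmain} applied in each region with the appropriate monotone coordinate. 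The favorable sign of $\mu\nu/(1-\mu)r$ in the untrapped regime lets me drop the zeroth-order term from the Raychaudhuri transport for $\lambda_p$ integrated in $u$, and symmetrically for $\nu_p$ in $V$. To treat the $\{u=0\}$ cone I switch to the $(U,V)$ gauge of (B2), rewrite the difference system, and close the analogous bootstrap with weights $V^\delta$; the asymptotically flat region is handled via the $(r\phi)$ equation (\ref{SSESF:2:6}) and the $V^{-2}$ decay of $g_0$. The fine-tuning constants $\Lambda, \Phi$, the cutoff $\chi_1$, and the matching at $(-1,0)$ ensure that the fastest-growing modes of the ODEs along $\{v=0\}$ are switched off, so that $\lambda^{-1}\partial_v\phi$ and $\nu^{-1}\partial_u\phi$ stay on the unique stable branch identified in the instability discussion following (\ref{eq:introtemp11}).

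\textbf{Step 3: Limit at $\{U=0\}$ and incompleteness.} Uniformity of $\mathfrak{A}(\delta)$ in $\delta$ allows me to take $\delta \to 0$ and obtain continuous extensions of $r, {}^{(U)}\nu, {}^{(V)}\lambda, \mu, \partial_U\phi, \partial_V\phi$ to $\{U=0, V\neq 0\}$; the limit $r(0,V)\to\infty$ as $V\to\infty$ and asymptotic flatness follow by comparison with the background using (B3). Incompleteness of $\mathcal{I}^+$ then reduces, via the criterion at the end of Section \ref{subsec:nakedsingdfn}, to a uniform upper bound on $\int_{u_0}^{0}(\Omega^2/\Omega^2(u_0,v))\,du'$ along ingoing null rays with $v\to\infty$, which is inherited from the admissibility of the background up to an $\epsilon$-small correction.

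\textbf{Main obstacle.} The hard part is closing the bootstrap in the near-$\{u=0\}$ region $\{s_2 \gtrsim 1, V \lesssim 1\}$: the background itself is degenerate there (${}^{(U)}\overline{\nu}\sim V^{\kappa_2}\to 0$), the coefficients $\mathcal{G}_i$ in (\ref{PSSESF:1})--(\ref{PSSESF:4}) are comparable to these degenerate scales, and one must integrate \emph{toward} the singular cone $\{U=0\}$ in the blueshift direction while keeping $\nu_p$ and $\partial_U\phi_p$ at the claimed $V^{\kappa_2+\delta}$ and $V^{-1+\kappa_2+\delta}$ rates. Handling this region is exactly where the methods of \cite{igoryak2} must be imported; the additional difficulty here, relative to vacuum, is that the scalar field and the mass ratio remain nontrivially coupled through the Raychaudhuri equations (\ref{SSESF:2:9}) and must be estimated simultaneously, with near-axis losses controlled by Lemmas \ref{lem:axisaverage}--\ref{lem:axisaverage2} and Remark \ref{rmk:higherorderaveraging}.
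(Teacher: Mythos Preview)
Your overall architecture (local existence, difference system, bootstrap, limit to $\{U=0\}$, incompleteness) matches the paper, but there is a genuine gap in Step~2 for the near-$\{v=0\}$ region, and the main obstacle is misidentified.

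\textbf{The gap.} You propose to bootstrap directly on the differences $\Psi_p$ with $|u|^{\delta}$ weights, relying on the favorable sign of $\mathcal{G}_2 = \mu\nu/((1-\mu)r)$ to drop zeroth-order terms. This works for $\lambda_p$, but not for $\partial_v\phi_p$: the wave equation gives
\[
\partial_u(\partial_v\phi_p) = -\tfrac{\nu}{r}\,\partial_v\phi_p + \cdots,
\]
and $-\nu/r > 0$ in the untrapped region, so after self-similar scaling the zeroth-order coefficient is $+(\kappa_1+O(v_1))/|u|$ --- the \emph{wrong} sign when integrating forward in $u$ toward $\mathcal{O}$. This is precisely the blueshift, and integrating directly does not close (the error is $O(\epsilon/|u|)$, non-integrable). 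Your remark that the matching constants $\Lambda,\Phi$ switch off the growing mode is correct only as an ODE statement \emph{along} $\{v=0\}$; it does not by itself propagate into the open region $\{0 < v/|u|^{q_1} \lesssim 1\}$. The paper's mechanism is different: one introduces \emph{renormalized} unknowns $\wt{\Psi} = \Psi - \overline{\Psi} - \Psi_c$, where the correctors $\Psi_c(u) = \lim_{v\to 0^-}\Psi^{(in)}(u,v) - \overline{\Psi}(u,0)$ come from the interior solution of Theorem~\ref{thm1:interior}. These vanish on $\{v=0\}$, which allows one to propagate bounds $|w_\Psi\wt{\Psi}| \lesssim (v/|u|^{q_1})^{\ell}$; the extra $|u|^{q_1\ell}$ in the weight, conjugated through the $u$-equation, \emph{creates} a favorable zeroth-order term that beats the blueshift coefficient (this requires the regularity constraint (\ref{eqn:a,brequirements}) on $a,b$). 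Without this renormalization-plus-vanishing-weight scheme, your Region~I bootstrap will not close.

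\textbf{Secondary points.} The paper actually uses a four-region decomposition, and the transition region $\mathcal{R}_{\text{II}} = \{V_1 \leq V^{q_2}/|U| \leq V_2\}$ has its own mechanism: an exponential weight $w^* = \exp(-D\,V^{q_2}/|U|)$ whose $u$- and $v$-derivatives generate large good zeroth-order terms of size $\sim D$, with $D$ chosen large in terms of $V_2$. Your three-region scheme omits this. Also, the near-$\{u=0\}$ region you flag as the main obstacle is in fact the easier one: the $V$-equations for $^{(U)}\nu_p,\partial_U\phi_p$ are integrated \emph{away} from the singular point (increasing $V$), so the $V^{\delta}$ weight renders errors integrable and a reductive structure (estimate quantities in a fixed order) closes without smallness; the $U$-equations there gain the smallness parameter $V_2^{-1}$. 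Finally, Lemmas~\ref{lem:axisaverage}--\ref{lem:axisaverage2} are axis estimates and play no role in the exterior, which contains no center.
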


We conclude this section with remarks on the proof statement. 

\begin{rmk}
A consequence of the proof of Theorem \ref{thm1:interior} is that the interior solutions may not be unique. Besides the specified data along $\{v=0\}$, the proof requires additional (arbitrary) choices of data throughout the construction. 
\end{rmk}

\begin{rmk}
All the solutions considered in this paper (including the original $k$-self-similar solutions of \cite{chris2}) lie in the absolutely continuous class of solutions for which the instability proof of \cite{chris3} holds. It follows that the solutions constructed by Theorems \ref{thm1:interior}, \ref{thm2:exterior} are unstable to trapped surface formation with respect to a two dimensional family of perturbations supported in the exterior region.
\end{rmk}

\begin{rmk}
The theorems assert increased regularity of the $\Psi_p$ as $(u,v) \rightarrow (0,0)$ with respect to the background $\overline{\Psi}$. One can also ask about the optimal regularity across $\{v=0\}$. Recall $\partial_v \overline{\lambda}, \ \partial_v^2 \overline{\phi}$ are allowed to be singular across $\{v=0\}$ at integrable rates. In fact, one can show using the methods here that $\partial_v^2 \phi_p $ is less singular than the corresponding background solution, and provided $\partial_v \overline{\lambda}$ is bounded across $\{v=0\}$, so is $\partial_v^2 \phi_p$. In particular, all perturbations of $k$-self-similar solutions constructed in this paper have $\partial_v \lambda_p, \ \partial_v^2 \phi_p$ bounded across $\{v=0\}$.
\end{rmk}

\begin{rmk}
The value of $\alpha$ is not optimized in the proof of Theorem \ref{thm1:interior}, but the requirements are explicit in terms of norms of the background solution. For Christodoulou's $k$-self-similar solutions with $\kappa_1 = k^2$, it is instructive to note that pointwise norms of the background blowup as $k \rightarrow 0$ due to the largeness of terms in the background solution, e.g. $\partial_v \overline{\phi} \sim \frac{1}{k}.$ One could restate the bounds (\ref{eq:thm1bounds1})-(\ref{eq:lastbound}) with appropriate norms of the background solution to arrive at a result uniform as $k \rightarrow 0$, although we do not do this here.
\end{rmk}

\begin{rmk}
The rates of convergence to the background in both theorems are stated only for a subset of all double null unknowns. Similar statements exist for the higher order quantities, although subtleties exist at the axis and future horizon $\{u=0\}$. More precisely,
\begin{enumerate}[(i)]
    \item In $\mathcal{Q}^{(in)}$, corresponding estimates for $\partial_u \nu_p, \partial_v \lambda_p, \partial_u^2 \phi_p, \partial_v^2 \phi_p, \partial_u^3(r\phi_p), \partial_v^3(r\phi_p)$ are given in the course of the proof. However we are not able to estimate $\phi_p \in C^3$ near the axis, despite assuming such a bound on $\overline{\phi}$. The loss in regularity is related to the techniques used to estimate non-degenerate derivatives near $\Gamma$, and potentially could be avoided using different techniques.
    \item In $\mathcal{Q}^{(ex)}$, as $U \rightarrow 0$ we lose control on the singular behavior of transversal quantities ${\partial_U}{^{(U)}}\nu_p, \partial_U^2 \phi_p$ relative to the background solution. Although we assume only that ${\partial_U} {^{(U)}}\overline{\nu} \lesssim |U|^{-c}, \ \partial_U^2 \overline{\phi} \lesssim |U|^{-d}$ for integrable rates $c,d$, the rates propagated on the differences are of the form $|U|^{-1+\sigma},$ for a small constant $\sigma$. It follows that the solution remains in BV up to $\{U=0\},$ but higher order transversal quantities are not guaranteed to stay $\epsilon$-close to the background solution.
\end{enumerate}
\end{rmk}

\begin{rmk}
An important simplification of the interior construction is the backwards nature of the problem, namely data is supplied along the \enquote{final} surface $\{v=0\}$ containing the singularity. A backwards formulation allows one to avoid discussing singularity formation, and at key steps in the argument it is shown that integrating \textit{away} from the singular point is what allows the strong control on the perturbations $\Psi_p$ as $u \rightarrow 0$.

The exterior problem has the added complications of being a forwards construction, as well as including both similarity horizons $\{v=0\}$ and $\{u=0\}$. To decompose the exterior into subregions, each with controllable dynamics, we rely heavily on the corresponding decomposition of \cite{igoryak2}. The forwards problem relies more delicately on the size of constants associated to the background solution, and thus an unfortunate feature of the proof is that the strong $|u|^{\alpha}$ decay proven for the differences in the interior cannot be propagated into any full self-similar neighborhood of $\{v=0\}$. It is possible to propagate only a small $|u|^{\delta}$ improvement for metric and scalar field perturbations.
\end{rmk}

\subsection{Proof outline}
\label{subsec:introproofoutline}

\subsubsection{Theorem \ref{thm1:interior}: Interior construction}

\textit{Approximate interiors: }Due to the low regularity of ingoing scalar field data along $\underline{\Sigma}_0$, standard local existence results for domains of the form $\mathcal{Q}^{(in)}$ do not apply. Indeed, the ultimate goal is to construct a solution in a neighborhood of the cone point $\mathcal{O}$, but we will have to approach the solution in a different manner. 

Instead we consider a series of \textit{approximate interior solutions}, adapted to truncated domains of the form $\mathcal{Q}^{(in),u_\delta} \doteq \mathcal{Q}^{(in)} \cap \{u \leq u_\delta\}$ (see Figure \ref{fig:approxint}). Here, $u_\delta < 0$ is a small parameter that will eventually be taken to $0$. We proceed to solve the problem with trivial outgoing data for the perturbations $\Psi_p$ along $\{v \leq 0, \ u=u_\delta \}$, and cutoff ingoing data along $\{v=0, \ u\leq u_\delta\}$. The resulting problem takes place a finite distance from $\mathcal{O}$, and local existence theory (see Proposition \ref{prop:weirdlocalex}) gives a local solution in a full neighborhood of $\{v \leq 0, \ u=u_\delta \}$. The key analytical step of the proof is extending this local solution to the domain $\mathcal{Q}^{(in),u_\delta}$, which we turn to next.

\begin{figure}[h]
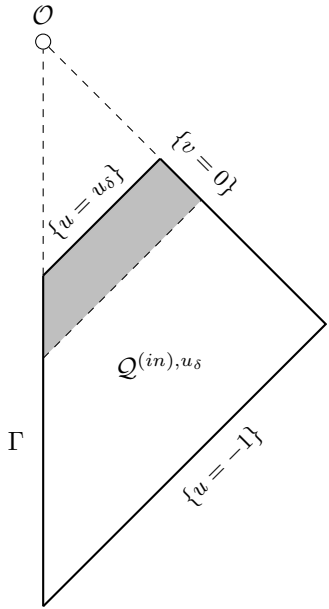

    \centering
  \includestandalone[]{Figures/fig_approxint}
  \caption{Setup for the approximate interior construction. Region guaranteed by local existence theorem shaded.}
  \label{fig:approxint}
\end{figure}

\noindent
\textit{Main transport estimates: }Although $\mathcal{Q}^{(in),u_\delta}$ occupies a compact domain in $(u,v)$ coordinates, proving existence of a solution uniformly in $u_\delta$ requires solving a global existence problem. The admissible backgrounds are in general large in pointwise norms (and, due to self-similar bounds, may be quite singular) for $|u|$ small, and thus we must use the structure of the background solution in an essential way. 

The main difficulty is apparent already in the linearized problem. Let $\Psi \in \{ \nu, \lambda, \phi, \mu, \partial_u \phi, \partial_v \phi\}$ denote a double null unknown\footnote{This list is only for illustrative purposes. In practice a larger set of quantities are required, including derivatives of $r\phi$ and higher derivatives of the solution.}. We wish to show estimates on the $\Psi_p$ consistent with $\epsilon$-smallness.

Associated to each $\Psi$ is a self-similar bound, which roughly captures the blowup/decay rate of the corresponding unknown in the $k$-self-similar model solutions. For example, $w_{\partial_u \phi} = |u|$ and $w_{\partial_v \phi} = |u|^{1-\kappa_1}$. We stress that the discussion here is informal, and in the bootstrap some weights are modified by factors of $\log |u|, r,$ and $v$ as required.

The linearized system for the weighted $w_\Psi \Psi_p$ becomes a coupled system of transport equations, of the schematic form
\begin{align}
    \partial_u (w_\Psi \Psi_p) + \frac{\overline{c}+ O(\epsilon)}{|u|} (w_\Psi \Psi_p) = O\Big(\frac{\epsilon}{|u|}\Big), \ \ \ \partial_v (w_\Psi \Psi_p) + \frac{\overline{d} + O(\epsilon)}{|u|^{1-\kappa_1}} (w_\Psi \Psi_p) = O\Big(\frac{\epsilon}{|u|}\Big). \label{eq:1.4pfoutline}
\end{align}
The constants $\overline{c}, \overline{d}$ vary from equation to equation, and reflect properties of the background solution. 

The $u$ equations are problematic for two reasons: 1) the zeroth order terms may have unfavorable signs, reflecting a tendency for quantities to grow, and 2) the error terms just fail to be integrable. For $v$ equations the integrability of singular $u$ coefficients is no longer a problem. Still, without knowledge of the signs of the $\overline{c}, \overline{d}$, deriving pointwise estimates from the above equations alone may not be possible.

The strategy is to exploit both 1) the rapidly decaying initial data for the $\Psi_p$, and 2) the backwards nature of the problem. The latter implies that integration along characteristics proceeds \textit{away} from the singular point. Taken together, these factors allow us to propagate strong $|u|$ decay on all the $\Psi_p$ uniformly as $u \rightarrow 0$. 

We thus attempt to propagate bounds of the form $|\Psi_p| \lesssim w_\Psi |u|^{\alpha}$. Conjugating the $u$ transport equation in (\ref{eq:1.4pfoutline}) by $|u|^{-\alpha}$ yields an new low order term, giving 
\begin{equation*}
    \partial_u (|u|^{-\alpha}w_\Psi \Psi_p) - \frac{\alpha+\bar{c}+ O(\epsilon)}{|u|}(|u|^{-\alpha}w_\Psi \Psi_p) = O\Big(\frac{\epsilon}{|u|}\Big).
\end{equation*}
For $\alpha$ large the zeroth order coefficient becomes strictly negative. Contracting with $|u|^{-\alpha}w_\Psi \Psi_p$ and integrating in $u$ from $\{u = u_\delta\}$, we conclude
\begin{align*}
    ||u|^{-\alpha}w_\Psi \Psi_p|^2(u,v) +& (\alpha - \bar{c} + O(\epsilon))\int\limits_{(u,v)}^{(u_\delta,v)}\frac{1}{|u'|}(|u'|^{-\alpha}w_\Psi \Psi_p)^2(u',v)du' \\
    &\lesssim O(\epsilon) \int\limits_{(u,v)}^{(u_\delta,v)} \frac{1}{|u'|} ||u'|^{-\alpha}w_\Psi \Psi_p|(u',v)du' \\
    &\lesssim \frac{\alpha}{2}\int\limits_{(u,v)}^{(u_\delta,v)} \frac{1}{|u'|}(|u'|^{-\alpha}w_\Psi \Psi_p)^2(u',v)du' + \alpha^{-1}O(\epsilon^2)\int\limits_{(u,v)}^{(u_\delta,v)}\frac{1}{|u'|}du'.
\end{align*}
We have used that the $\Psi_p$ vanish along $\{u=u_\delta\}$. Absorbing the first integral on the right hand side leaves an error term of size $\alpha^{-1}O(\epsilon^2)$, which up to a logarithmic divergence, gives an improvement for $\alpha$ large as a function of the implied constants. Importantly all implied constants can be chosen to depend only on the background solution. It is worth noting that the ability to cure the logarithmic divergence is again a function of integrating away from the singular point.

This pattern is carried through for unknowns satisfying $u$ transport equations. A similar pattern works for those only satisfying $v$ equations, after replacing the weight $|u|^{-\alpha}$ with a weight $(|u|^{q_1} + |v|)^{-\alpha p_1}.$ The geometry of the interior region $\mathcal{Q}^{(in)}$ implies that this weight is comparable to $|u|^{-\alpha}$, and also produces good lower order terms after conjugating.

We conclude this section by remarking on various technical features of the proof:\\

\vspace{4pt}
\noindent
\textit{Boundary conditions and regularity at the axis: }Although we only discussed boundary terms appearing along $\{u=u_\delta\}$ above, integrating the $u$ transport equations often produces boundary terms along $\Gamma$. The values along $\Gamma$ are not known a priori; however, boundary conditions relate these terms to the values of \textit{different} unknowns, which themselves satisfy $v$ equations. It follows that as long as $v$ unknowns are estimated first, their improved bootstrap assumptions are sufficient to control boundary terms on $\Gamma$.

An additional complication due to $\Gamma$ concerns the presence of various singular factors in the system proportional to $r^{-1}$. We have ignored these issues above, only identifying terms with singular $|u|$ weights. As is already apparent in proofs of BV well-posedness in \cite{chris1}, key to avoiding singularities at the axis is control on the second order unknowns $\partial_u \nu, \ \partial_v \lambda, \ \partial_u^2(r\phi), \ $ and $\partial_v^2(r\phi)$ near $\Gamma$. Combined with the averaging estimates of \cite{lukohyang1}, we will have sufficient regularity to bound factors of the form $\frac{m}{r^3}$, which in turn leads to estimates on all unknowns near $\Gamma$.

\vspace{5pt}
\noindent
\textit{Convergence of approximate solutions: }The final step in the proof is extracting a limit of the truncated solutions as $u_\delta \rightarrow 0$. By extending the solutions on $\mathcal{Q}^{(in),u_\delta}$ appropriately on $u \geq u_\delta$, we arrive at a sequence of solutions defined on the common coordinate domain $\mathcal{Q}^{(in)}$. 

In studying the differences of solutions along the sequence, we arrive at a problem similar to that of estimating the $\Psi_p$ above, but with non-trivial data along outgoing hypersurfaces $\{u=u_\delta\}$. For this reason, in the construction of approximate interiors we allow for non-trivial outgoing data from the start, and derive all estimates subject only to finiteness of an appropriate initial data norm.

An additional complication arises due to a loss of regularity occurring in the construction of the approximate interior solutions. Given that $\overline{\phi} \in C^2(\mathcal{Q}^{(in)})$ for the admissible background solution, we will only be able to propagate $r\phi \in C^2(\mathcal{Q}^{(in)})$ on the approximate interior. In particular, near the axis such an estimate only gives $\phi \in C^1(\mathcal{Q}^{(in)})$. In order to apply the regularity results on approximate interiors to the \textit{difference} of approximate interiors, we revisit the proof of scalar field bounds and propagate higher regularity. This analysis takes place in Section \ref{subsubsec:thirdorderestimates}. It is here that the additional requirement $\overline{\phi} \in C^3(\mathcal{Q}^{(in)})$ on the background solution (modulo blowup as $v \rightarrow 0$) is needed, and we are then able to conclude  $\phi \in C^2(\mathcal{Q}^{(in)})$ uniformly along the sequence of approximate solutions. With this added regularity, we are able to take the $u_\delta \rightarrow 0$ limit in Section \ref{sec:limitingsolution} and complete the proof of Theorem \ref{thm1:interior}.

\subsubsection{Theorem \ref{thm2:exterior}: Exterior construction}

The starting point for the proof of Theorem \ref{thm2:exterior} is an application of characteristic local well-posedness to guarantee the existence of a solution in a neighborhood of $(u,v) = (-1,0)$. The region of existence can be extended to a full neighborhood of the set $\{v = 0, \ u \leq -u_\delta\} \cup \{u=-1, \ v \geq 0\}$ for all $u_\delta < 0$. However, the presence of the singular point $\mathcal{O}$ rules out the possibility of using local existence to take $u_\delta \rightarrow 0$. As for the interior, a hypothetical extension to a neighborhood of $\mathcal{O}$ requires proving uniform bounds. Fortunately, one can avoid working with approximate solutions as was required above, and it is sufficient to estimate the solution on $\mathcal{Q}^{(ex)}.$

We note here that unlike in the interior region, where the estimates for first and second order double null unknowns are mutually coupled (in particular, near $\Gamma$), in the exterior it is strictly easier to estimate the higher order quantities $\partial_u \nu, |v|^{a}\partial_v \lambda, \partial_u^2 \phi, |v|^{b}\partial_v^2 \phi$. The latter will thus not form a central part of the bootstrap arguments.

An additional difference in the analysis of the exterior concerns the strength of pointwise bounds we are able to propagate on the $\Psi_p$. Whereas in the interior region, the backwards nature of the integration allows us to conjugate with large ($\approx |u|^{-\alpha}$) weights, in the exterior we cannot estimate $u$ transport equations this way. As a result, the solution is constructed instead in a piecemeal fashion, using a decomposition of $\mathcal{Q}^{(ex)}$ into regions and choosing weights locally. Ultimately we are only able to propagate a control of the form $|\Psi_p| \lesssim w_\Psi ||u|+v^{p_1}|^{\delta},$ where $\delta \ll 1$ is a small parameter.

A decomposition of $\mathcal{Q}^{(ex)}$ conducive to analysis of solutions with self-similar bounds was introduced in \cite{igoryak2}. In a form modified for our spherically symmetric setting, we write $\mathcal{Q} = \mathcal{R}_{\text{I}} \cup \mathcal{R}_{\text{II}} \cup \mathcal{R}_{\text{III}} \cup \mathcal{R}_{\text{IV}},$ where the regions are depicted below in Figure \ref{fig:exttotal}. We briefly comment on the role of each region:

\begin{figure}[h]
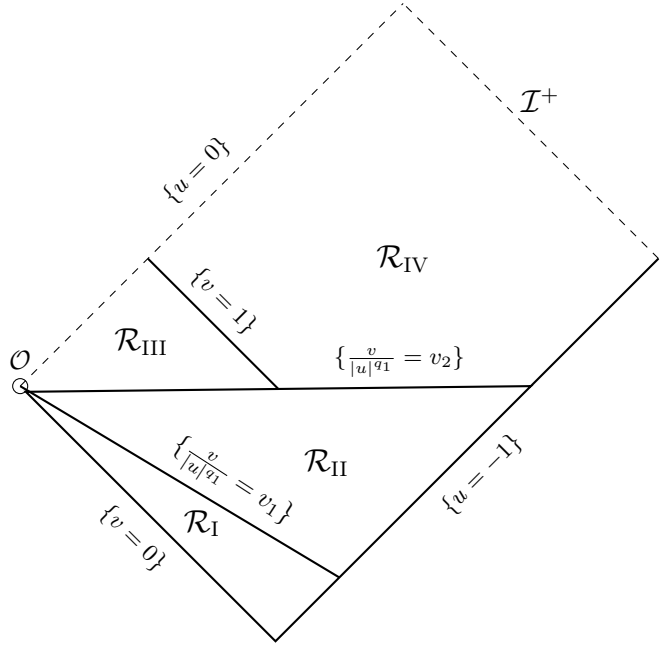

    \centering
  \includestandalone[]{Figures/fig_wholeext}
  \caption{Decomposition of the exterior into subregions.}
  \label{fig:exttotal}
\end{figure}

\vspace{10pt}
\noindent
\underline{$\mathcal{R}_{\text{I}}$: Self-similar neighborhood of $\underline{\Sigma}_0$} \newline

\vspace{-.5em}
In this region we extend the local solution to a full self-similar neighborhood of $\underline{\Sigma}_0$, with a parameter $v_1$ controlling the size of the region. The parameter $v_1$ can be chosen small as a function of the size of the background solution (measured in appropriate weighted norms), implying that existence in $\mathcal{R}_{\text{I}}$ is essentially a local existence result.

The transport system for the $\Psi_p$ takes a form similar to (\ref{eq:1.4pfoutline}). For unknowns $\nu, \mu, \partial_u \phi$ which satisfy $v$ equations, the integration is straightforward: all singularities appearing are expressible in the $u$ coordinate and are no worse than $|u|^{-q_1}$, and thus integration in $v$ leads to factors $\frac{v}{|u|^{q_1}} \leq v_1 \ll 1$.

Recall the approach to estimating $u$ equations in the interior consisted of conjugating by large weights $|u|^{-\alpha}$ enforcing decay, contributing \textit{good} lower order terms. In the exterior data for $u$ equations is given along $\Sigma^{(ex)}_{-1}$, and such a procedure cannot work. Weights of the form $|u|^{\alpha}$ would solve this problem, but are not consistent with the $\Psi_p$ remaining $\epsilon$-small up to $\mathcal{O}$.

The main problematic quantity is $\partial_v \phi,$ satisfying an equation of the form
\begin{equation}
    \label{eq:proofoutlineregion1ext1}
    \partial_u (|u|^{1-\kappa_1}\partial_v \phi_p) - \frac{\kappa_1+O(v_1)}{|u|}(|u|^{1-\kappa_1}\partial_v \phi_p) = O\Big(\frac{\epsilon}{|u|}\Big),
\end{equation}
where the right hand side depends on the bootstrap assumptions for the other double null unknowns. Directly integrating this equation in $u$ does not close.

Following \cite{igoryak1} and \cite{igoryak2}, we instead consider the system for a renormalized set of unknowns 
\begin{equation*}
    \wt{\Psi}_p(u,v) \doteq \Psi(u,v) - \overline{\Psi}(u,v) - \lim_{v\rightarrow 0^{-}} \Psi^{(in)}(u,v),
\end{equation*}
where $\Psi^{(in)}(u,v)$ denotes the value of the corresponding unknown in the interior region. We refer to these quantities arising from the interior construction as \enquote{correctors,} the function of which is to ensure the renormalized unknowns vanish along $\{v=0\}$. We are then able to propagate weighted bounds of the form
\begin{equation*}
    | w_\Psi \wt{\Psi}_p| \lesssim \Big(\frac{v}{|u|^{q_1}} \Big)^l,
\end{equation*}
for powers $l$ depending on the unknown. The effect of the modified weights is to introduce positive powers of $|u|,$ which in turn lead to favorable lower order terms when conjugated through equations of the type (\ref{eq:proofoutlineregion1ext1}). It is important that all unknowns, including those satisfying $v$ equations, are estimated in a manner consistent with vanishing along $\{v=0\}$; however, establishing such estimates in the case of quantities satisfying $v$ equations is much simpler.

Subject to conditions on $\kappa_1,$ as well as the regularity of various transversal quantities to $\{v=0\}$ (cf. assumption (B1) in Section \ref{sec:assumptionsbackground}), this scheme closes. Note that along the future boundary of $\mathcal{R}_{\text{I}}$ the weight $\frac{|u|^{q_1}}{v}$ is a positive constant, and thus these additional weights do not contribute to the bounds in later regions.

Moreover, there is some flexibility in choosing $l$, allowing us to to propagate an extra decaying $|u|^{\delta}$ weight on the $\wt{\Psi}_p$. This added decaying $|u|$ weight is the source of enhanced regularity in the exterior region, and should be compared with the added $|u|^{\alpha}$ weights in the interior.

\vspace{1.5em}
\noindent
\underline{$\mathcal{R}_{\text{II}}$: Transition region} 
\newline

\vspace{-.5em}
The analysis in $\mathcal{R}_{\text{II}}$ takes place away from both $\{v=0\}$ and $\{u=0\}$, and serves to transition between the solution in these two regions. It is convenient for the remainder of the proof to work in a modified double null gauge, $(U,V)$, which we recall is given by 
\begin{equation*}
    (U,V) = (-|u|^{1-\kappa_2}, \ v^{p_1}).
\end{equation*}
This gauge is motivated by a corresponding coordinate transformation in Christodoulou's $k$-self-similar solutions (with $\kappa_2 = \kappa_2 = k^2$), with respect to which various derivatives of $r, \phi$ transversal to $\{u=0\}$ remain uniformly bounded. The self-similar bounds in the $(U,V)$ gauge satisfied in a neighborhood of $\{u=0\}$ are then formally identical to those satisfied in the $(u,v)$ gauge near $\{v=0\}$, with the roles of the ingoing and outgoing null coordinate reversed. It follows from the definition above that surfaces of constant self-similar parameter $\frac{v}{|u|^{q_1}}$ correspond to those of constant $\frac{V^{q_2}}{|U|}$ in the new gauge, and that we have $\{u=0\} = \{U=0\}, \ \{v=0\} = \{V =0\}$. For the purposes of this introduction the coordinate systems may be thought of interchangeably.

In $\mathcal{R}_{\text{II}}$ we have the coordinate condition $|U| \sim V^{q_2}$, where the implied constants depend on a parameter $V_2$ defining the future boundary $\{\frac{V^{q_2}}{|U|} = V_2\}$ of the region. The coordinate condition renders the transport equations directly integrable, as we can ``trade" $U$ and $V$ weights to render all singular terms harmless. However, we aim to prove existence independently of $V_2$, which will be chosen large in the later regions. Therefore there is no smallness in the estimates coming directly from the size of the region.

The key to improving bootstrap bounds in this region is an exponential weight $w^* \doteq \exp(-D\frac{V^{q_2}}{|U|})$. Conjugating through the equations by $w^*$ leads to favorable zeroth order terms of size $\approx D$. For $D$ large, the weights thus serve a similar role as $|u|^{\alpha}$ weights in the interior region. For any $V_2 < \infty$, $D$ can be chosen appropriately to improve the bounds. 

\vspace{1.5em}
\noindent
\underline{$\mathcal{R}_{\text{III}}$: Near-singularity region} 
\newline

\vspace{-.5em}
In keeping with the division of the spacetime into an approximately self-similar region for $V \lesssim 1$, and an asymptotically flat region for $V \gtrsim 1$, we first consider the near-singularity region $\mathcal{R}_{\text{III}}$. Importantly, singular weights appear only with respect to the $V$ weight. It will follow that $U$ transport equations are directly integrable, and integration leads to factors of $V_2^{-1}$. With the flexibility to choose $V_2$ large still remaining from the analysis of the previous region, we now use $V_2^{-1}$ as a smallness parameter to improve the bootstrap bounds.

The $V$ transport equations for $\nu, \ \partial_U \phi$ now see the singular weight. However, these equations are integrated \textit{away} from the singularity, and therefore the additional regularity of $|u|^{\delta}$ (which in $\mathcal{R}_{\text{III}}$, becomes an additional $V^{\delta}$ weight) goes towards rendering error terms integrable. A renormalization scheme similar to $\mathcal{R}_{\text{I}}$ is therefore not needed here. Importantly, since no natural smallness appears in this direction, we exploit a reductive structure in the equations allowing the scheme to only rely on quantities that have been previously estimated. 

It is helpful at points in this region to normalize the double null gauge yet again, in order to fix the gauge \enquote{along} $\{U=0\}$. Fixing the gauge gives explicit control on various zeroth order coefficients. See Section \ref{subsec:regionIII} for details.

\vspace{1.5em}
\noindent
\underline{$\mathcal{R}_{\text{IV}}$: Asymptotically flat region} 
\newline

\vspace{-.5em}
The remainder of the construction takes place a fixed $V$ distance away from the singular point, and therefore the analysis simplifies considerably. Establishing global bounds in the unbounded domain $\mathcal{R}_{\text{IV}}$ with asymptotically flat data is a well-studied problem, see \cite{pricelaw}, \cite{lukoh1}. A helpful observation, clarified in \cite{pricelaw}, is that the wave equation for $r\phi$ has error terms that are strongly decaying in $r$, and are conducive to estimating the scalar field in regions with $r \gg 1$ (as is the case here). We therefore work primarily with $r\phi$ and its derivatives in this region.

As is the case for $\mathcal{R}_{\text{III}},$ the $U$ equations gain smallness factors in the form of $V_2^{-1}$. In particular, these unknowns naturally inherit the fast decay of initial data along $\{U=-1\}$. In the $V$ direction we again utilize the structure of the equations to estimate outgoing quantities in a specific order, and thus at each point only rely on quantities that have previously been estimated.
\section{Interior solution: Proof of Theorem \ref{thm1:interior} }
\label{sec:proofthm1}

\subsection{Approximate interiors}
\label{subsec:approxint}
Assume a $u_\delta \in (-1,0)$ is fixed, and recall the definition of the truncated domain 
\begin{equation}
\label{def:intgeom}
\mathcal{Q}^{(in),u_\delta} \doteq \mathcal{Q}^{(in)} \cap \{u \leq u_\delta \}.
\end{equation}
For \textbf{this section alone}, we apply the convention that the surfaces $\Gamma, \Sigma_u,  \underline{\Sigma}_v$ are restricted to $\mathcal{Q}^{(in)} \cap \{u \leq u_\delta\}$. This convention will simplify notation.

Let $\mathcal{L}$ denote the set 
\begin{equation*}
    \mathcal{L} \doteq \Gamma \cup \Sigma_{u_\delta},
\end{equation*}
containing the future endpoints of ingoing null lines. $\mathcal{L}$ will be the natural boundary from which transport equations in the $u$ direction are integrated. For $v$ transport equations, any boundary terms will lie on the initial data surface $\underline{\Sigma}_0$. Given any $(u,v) \in \mathcal{Q}^{(in),u_\delta}$, let $u_\mathcal{L}(v)$ be the future endpoint of $\underline{\Sigma}_v$ on $\mathcal{L}$. Similarly one can define $v_\mathcal{L}(u).$

Additional subregions of $\mathcal{Q}^{(in),u_\delta}$ that will be important in the following are
\begin{equation*}
    \mathcal{B}^{(1),u_\delta} \doteq \mathcal{Q}^{(in),u_\delta} \cap \{v \geq -|u_\delta|^{q_1} \}, \ \ \ \ 
     \mathcal{B}^{(2),u_\delta} \doteq \mathcal{Q}^{(in),u_\delta} \cap \{v \leq -|u_\delta|^{q_1}  \},
\end{equation*}
The pair $\mathcal{B}^{(1),u_\delta}$ and $\mathcal{B}^{(2),u_\delta}$ divides $\mathcal{Q}^{(in),u_\delta}$ along the ingoing null line $v = -|u_\delta|^{q_1},$ separating points for which the boundary terms of $u$ transport equations lie on $ \Sigma_{u_\delta}$ and $\Gamma$ respectively. 

\subsubsection{Initial data and local existence}
We now prescribe data on the initial data surface $\underline{\Sigma}_0 \cup \Sigma_{u_\delta}$. The gauge is normalized along $\{v=0\}$ so that $r_p(u,0) = 0$, and we therefore are left to specify the gauge along the outgoing piece $\Sigma_{u_\delta}$, as well as the scalar field. Recall that we allow non-trivial data along $\Sigma_{u_\delta}$, which we require to be consistent with the regularity of the background spacetime.

The free data is parameterized by 
\begin{equation*}
    f_\delta(u) \in C^1(\underline{\Sigma}_0), \ \ \ k_\delta(v) \in C^2(\Sigma_{u_\delta} \cap \{v < 0\}), \ \ \ h_\delta(v) \in C^1(\Sigma_{u_\delta} \cap \{v < 0\}).
\end{equation*}
The functions $k_\delta, h_\delta$ will correspond to data for $r_p, \partial_v(r\phi_p)$ along $\Sigma_{u_\delta}$. We allow for singular behavior as $v \rightarrow 0$ consistent with that of the background solution. We thus require that
\begin{equation*}
    |v|^{a}\partial_v^2 k_\delta(v), \ |v|^b \partial_v h_\delta(v) \in L^\infty(\Sigma_{u_\delta}).
\end{equation*}

Our goal in this section is to construct solutions of the scalar field system in $\mathcal{Q}^{(in),u_\delta}$, uniformly in $u_{\delta}$, achieving the data
\begin{align}
        &r(u,0) = \overline{r}(u,0), \label{intdata2 } \\[.3em]
        &\partial_u(r\phi)(u,0) = \partial_u(r\overline{\phi})(u,0) + \epsilon f_\delta(u) , \\[.3em]
        &r(u_\delta, v) = \overline{r}(u_\delta,v) + \epsilon k_\delta(v), \\[.3em]
        &\partial_v(r\phi)(u_{\delta},v) = \partial_v(r\overline{\phi})(u_\delta,v) + \epsilon  h_\delta(v),\label{intdata3 }
\end{align}
and boundary conditions
\begin{align}
    \label{eq:axisbc1}
    &r|_\Gamma = (r\phi)|_\Gamma = m|_\Gamma = 0, \\[.3em]
    \label{eq:axisbc2}
    &(\partial_v + p_1|u|^{\kappa_1}\partial_u)r|_\Gamma = (\partial_v + p_1|u|^{\kappa_1}\partial_u)(r\phi)|_\Gamma  = 0.
\end{align}

Much of the analysis will proceed on the level of differences $\Psi_p$. The initial data can equivalently be expressed as 
\begin{align}
    r_p(u,0) &= 0 , \ \ \quad \quad \quad \partial_u(r\phi_p)(u,0) =  \epsilon f_\delta(u), \\[.3em]
    r_p(u_{\delta},v) &= \epsilon k_\delta (v) ,\ \ \  \partial_v(r\phi_p)(u_\delta,v) = \epsilon h_\delta(v) , 
\end{align}
with corresponding boundary conditions.

\vspace{10pt}
The starting point of the construction is a local solution for the mixed characteristic-timelike problem with data on $\underline{\Sigma}_0 \cup \Sigma_{u_\delta}$ and boundary conditions on $\Gamma$. Local existence is guaranteed\footnote{Note that although the statement of Proposition \ref{prop:weirdlocalex} considers only $C^1$ solutions, one may proceed by first using standard local well-posedness away from $r=0$ to solve in a region $\mathcal{Q}^{(in),u_\delta} \cap \{v \geq -\sigma\}$ for a small $\sigma >0$. The resulting induced data on $\{v=-\sigma\}$ will be $C^1$, and Proposition \ref{prop:weirdlocalex} may then be applied to continue the solution to the axis.} by Proposition \ref{prop:weirdlocalex}, applied here with reversed time orientation (Section \ref{sec:solnclasses} considered the \textit{forwards} problem). 

The following result is an easy consequence of the local existence of solutions (with $C^1$ bounds away from $\{v=0\}$), and estimates for the difference system (\ref{PSSESF:1})-(\ref{PSSESF:5}).

\begin{prop} 
\label{prop:localexint}
There exists a $u_* \in [-1,u_\delta)$ and a BV solution to the system (\ref{SSESF:2:1})-(\ref{SSESF:2:5}) in the domain $\mathcal{Q}^{(in),u_\delta} \cap \{u_* \leq u \leq u_\delta \}$ achieving the data (\ref{intdata2 })-(\ref{intdata3 }) and boundary conditions (\ref{eq:axisbc1})-(\ref{eq:axisbc2}). 

Moreover, there exists a constant $A$ depending on the initial data, but independent of $\epsilon$, for which the following estimates hold pointwise in $\mathcal{Q}^{(in),u_\delta} \cap \{u_* \leq u \leq u_\delta \}$:
\begin{enumerate}
    \item (Control on the geometry) \ $ |\nu_p|, \ \ |\lambda_p|, \ \ |\partial_u \nu_p|, \ \ ||v|^a\partial_v \lambda_p| \leq A\epsilon, $ 
    \item (Control on the Hawking mass) \ $|\mu_p|, \ \ |\frac{\mu_p}{r^2}|  \leq A\epsilon,$
    \item (First order control on the scalar field)  \ $ |\phi_p|,\ \  |\partial_u \phi_p|, \ \ |\partial_v \phi_p|, \ \ |\partial_u(r\phi_p)|, \ \ |\partial_v (r\phi_p)| \leq A\epsilon,$
    \item (Second order control on the scalar field)  \ $|\partial_u^2(r\phi_p)|, \ \ ||v|^b \partial_v^2(r\phi_p) | \leq A\epsilon.$
\end{enumerate}
\end{prop}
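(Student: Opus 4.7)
The plan is to establish the proposition in two stages: first produce a BV solution on a sufficiently narrow slab $\{u_* \leq u \leq u_\delta\}$ by combining the two local existence results recalled in Section \ref{sec:solnclasses}, and then bootstrap the bounds (1)--(4) directly from the difference system (\ref{PSSESF:1})--(\ref{PSSESF:5}). Throughout, I would work in reversed time orientation so that $\mathcal{L} = \Gamma \cup \Sigma_{u_\delta}$ plays the role of \enquote{initial} data for $u$-transport equations.

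For local existence, the complication is that $\partial_v\lambda$ and $\partial_v^2(r\phi)$ can only be controlled after multiplication by $|v|^a$, $|v|^b$, so the induced data along $\underline{\Sigma}_0$ lies in BV but not $C^1$. I would therefore first solve the purely characteristic problem on a small BV slab of the form $\mathcal{Q}^{(in),u_\delta} \cap \{u_\delta - \sigma' \leq u \leq u_\delta, \; -\sigma \leq v \leq 0\}$ for some $\sigma, \sigma' > 0$, using the BV well-posedness of \cite{chris1} away from the axis. Choosing $\sigma < |u_\delta|^{q_1}$ ensures the slab stays a positive distance from $\Gamma$, and the resulting solution is $C^1$ on $\{v = -\sigma\}$. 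Proposition \ref{prop:weirdlocalex}, applied with reversed time orientation with $\underline{\Sigma}_{-\sigma}$ as the timelike-adjacent characteristic hypersurface and $\Sigma_{u_\delta}$ as the other, then yields a $C^1$ continuation down to $\Gamma$. The two pieces glue across $\{v=-\sigma\}$ to give a BV solution on $\mathcal{Q}^{(in),u_\delta} \cap \{u_* \leq u \leq u_\delta\}$ for some $u_* \in [-1, u_\delta)$.

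For the bounds, I would run a continuity argument: assume (1)--(4) hold on a sub-slab $\{u_0 \leq u \leq u_\delta\}$ with constant $2A$ and show they can be improved to $A$ provided $u_0$ is close enough to $u_\delta$. Each $\Psi_p$ satisfies a transport equation in $u$ or $v$ of the form
\begin{equation*}
\partial_u \Psi_p + c_1\, \Psi_p = E_1, \qquad \text{or} \qquad \partial_v \Psi_p + c_2\, \Psi_p = E_2,
\end{equation*}
where $c_i, E_i$ are polynomial expressions in the background and perturbed unknowns. Since $\overline{g},\overline{\phi}$ are $C^1$ on $\mathcal{Q}^{(in)} \setminus \{v=0\}$ with uniform bounds on $\{u \leq u_\delta\}$, and since the bootstrap controls $\Psi_p$, both $c_i$ and $E_i$ are pointwise bounded on the slab (with $L^1(dv)$ singularities at $\{v=0\}$ absorbed by the weights $|v|^a$, $|v|^b$ via the constraint $a,b<1$). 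Lemma \ref{lem:integrationmain} applied over the slab produces a data contribution $\lesssim \epsilon$ plus a bulk contribution $\lesssim (u_\delta - u_0)\,A\epsilon$, and for $u_0$ sufficiently close to $u_\delta$ the latter is absorbed. The commuted equations (\ref{PSSESF:7}), (\ref{PSSESF:11}) handle $\partial_u\nu_p, |v|^a\partial_v\lambda_p, \partial_u^2(r\phi_p), |v|^b\partial_v^2(r\phi_p)$ in the same way.

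The main obstacle is the weighted mass ratio $\mu_p/r^2$ near $\Gamma$, where (\ref{PSSESF:3})--(\ref{PSSESF:4}) produce apparent $r^{-1}$ singularities. I would bypass this by controlling $m_p/r^3$ directly via the averaging framework of Section \ref{subsubsec:axisintlemmas}: integrating (\ref{SSESF:2:4}) from the axis and subtracting the corresponding background identity writes $m_p/r^3$ as an $s=3$ average (cf.\ Remark \ref{rmk:higherorderaveraging}) of a difference of the form $((1-\mu)(\partial_u\phi)^2)_p$, which by Lemma \ref{lem:axisaverage} and Corollary \ref{cor:integralestimateinterior} is controlled by $\partial_u^2(r\phi_p), \partial_u\nu_p$, and the already-bounded lower-order quantities. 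The identity (\ref{eq:mpvsmup}) together with the admissibility bound $\overline{m} \lesssim \overline{r}^3$ near $\Gamma$ then converts this into the stated bound on $\mu_p/r^2$. Closing this step first, the remaining bounds (1), (3), (4) follow from the bootstrap as above.
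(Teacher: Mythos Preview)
Your proposal is correct and follows the same approach the paper sketches: the paper states this proposition as ``an easy consequence of the local existence of solutions (with $C^1$ bounds away from $\{v=0\}$), and estimates for the difference system (\ref{PSSESF:1})--(\ref{PSSESF:5}),'' with a footnote describing exactly your two-step procedure of solving a BV problem in $\{v \geq -\sigma\}$ away from the axis and then applying Proposition~\ref{prop:weirdlocalex} in reversed time orientation from $\{v=-\sigma\}$. Your treatment of the estimates via a bootstrap exploiting slab-width smallness, and your use of the averaging framework for $\mu_p/r^2$ near $\Gamma$, simply fill in details the paper leaves implicit.
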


We emphasize that the interval of existence depends in an essential way on pointwise bounds for the background solution, which in the context of singular spacetimes are not uniform as $u_\delta \rightarrow 0$. For this reason one cannot take the $u_\delta \rightarrow 0$ limits on the solutions constructed by the above proposition to deduce the proof of Theorem \ref{thm1:interior}.
\vspace{-.5em}
\subsubsection{Norms}
\label{subsection:interiornorms}
For the construction of the approximate interior, we require only a subset of the bounds specified in assumptions (A1)-(A4) of Section \ref{sec:assumptionsbackground}. The relevant bounds are captured by the $\overline{\mathfrak{N}}_1$ below, which satisfies $\overline{\mathfrak{N}}_1 < \infty$.
\begin{align}
    \label{eq:totalbackgroudnorm}
    \overline{\mathfrak{N}}_1 \doteq &\sup_{\mathcal{Q}^{(in),u_\delta}}|\log |u|^{-\kappa_1}\overline{\lambda}| + \sup_{\mathcal{Q}^{(in),u_\delta}}|\log (-\overline{\nu})| + \sup_{\mathcal{Q}^{(in),u_\delta}}|\log (1-\overline{\mu})|+\sup_{\mathcal{Q}^{(in),u_\delta}}||u|^2\frac{\overline{m}}{\overline{r}^3}| \nonumber \\[.5em]
    &+ \sup_{\mathcal{Q}^{(in),u_\delta}}||u| \partial_u \overline{\nu}| + \sup_{\mathcal{Q}^{(in),u_\delta}}||u|^{1-2\kappa_1}s_1^{a} \partial_v \overline{\lambda}| + \sup_{\mathcal{Q}^{(in),u_\delta}}||u| \partial_u \overline{\phi}| + \sup_{\mathcal{Q}^{(in),u_\delta}}||u|^{1-\kappa_1} \partial_v \overline{\phi}|\nonumber \\[.5em]
    &+\sup_{\mathcal{Q}^{(in),u_\delta}}||u|^2 \partial_u^2 \overline{\phi}| + \sup_{\mathcal{Q}^{(in),u_\delta}}||u|^{2-2\kappa_1}s_1^{b} \partial_v^2 \overline{\phi}|.
\end{align}
In particular, higher order bounds are not needed at this stage of the construction, and will appear in Section \ref{subsubsec:thirdorderestimates}.

The main result of this section will be a semi-global existence result in $\mathcal{Q}^{(in),u_\delta}$ for data having finite initial data norm $\mathcal{I}_\alpha^{(in)}$:
\begin{align}
     \mathcal{I}_\alpha^{(in)} \doteq \sum_{i=0}^{1}&\sup_{\Sigma_{u_\delta}}|u|^{-\alpha+i}|\partial_u^i f_\delta(u)| +\sup_{\Sigma_{u_\delta}} |u_\delta|^{-\alpha}|r^{-1} k_\delta(v)| \nonumber\\[.5em]
     &+  \sup_{\Sigma_{u_\delta}} |u_\delta|^{-\alpha-\kappa_1}| \partial_v k_\delta(v)|
    + \sup_{\Sigma_{u_\delta}} |u_\delta|^{-\alpha+1-2\kappa_1-q_1a}|v|^a|\partial_v^2 k_\delta(v)| \nonumber \\[.5em]
    &+ \sup_{\Sigma_{u_\delta}} |u_\delta|^{-\alpha-\kappa_1}|h_\delta(v)|  + \sup_{\Sigma_{u_\delta}} |u_\delta|^{-\alpha+1-2\kappa_1-q_1b}|v|^{b}|\partial_v h_\delta(v)|. \label{eq:intinitialdatanorm}
\end{align}
In the case of trivial outgoing data along $\Sigma_{u_\delta},$ the norm reduces to $\||u|^{-\alpha}f_\delta\|_{C^0(\underline{\Sigma}_0)} + \||u|^{-\alpha+1}\partial_u f_\delta\|_{C^0(\underline{\Sigma}_0)}$. Note that the norms on the outgoing piece include an explicit scaling in the parameter $u_\delta$. This added technicality plays no role in the construction for fixed cutoff $u_\delta$, but will be helpful when considering the convergence of approximate interiors. To simplify notation in this section we will simply denote the initial data norm simply by $\mathcal{I}$.

The solution will be controlled by various pointwise norms of the $\Psi_p$. In addition to an $\approx \epsilon$ smallness, we propagate a strong decay in $|u|$. Roughly, the bounds correspond to rates $|u|^{\alpha}$ faster than the self-similar bounds on the $\overline{\Psi}$. To close a scheme with these norms, we also track vanishing along the axis for the mass ratio $\mu$, as well as potential blow up as $v \rightarrow 0$ for the second order quantities $\partial_v \lambda_p,\  \partial_v^2 (r\phi_p)$. 

Define the following pointwise norms:
\begin{align}
    \mathfrak{N}_{r_p} &\doteq \sup_{\mathcal{Q}^{(in),u_\delta}}\Big|\frac{1}{\overline{r}|u|^{\alpha}}r_p \Big|, \ \ \ \quad \mathfrak{N}_{\nu_p} \doteq \sum_{i=0}^1 \sup_{\mathcal{Q}^{(in),u_\delta}}\Big| \frac{1}{|u|^{\alpha-i}}\partial_u^i \nu_p \Big|, \\[.5em]
     \mathfrak{N}_{\lambda_p} &\doteq \sup_{\mathcal{Q}^{(in),u_\delta}}\Big|\frac{1}{|u|^{\alpha+\kappa_1}} \lambda_p\Big| + \sup_{\mathcal{Q}^{(in),u_\delta}}\Big|\frac{1}{|u|^{\alpha-1+2\kappa_1}}\Big(\frac{|v|}{|u|^{q_1}} \Big)^a\partial_v \lambda_p \Big| ,\\[.5em]
     \mathfrak{N}_{\mu_p} &\doteq \sup_{\mathcal{Q}^{(in),u_\delta}}\Big|\frac{1}{\overline{r}|u|^{\alpha-1}}\mu_p \Big|, \ \ \  \quad \mathfrak{N}_{\phi_p} \doteq \sup_{\mathcal{Q}^{(in),u_\delta}}\Big|\frac{1}{|u|^{\alpha}}\phi_p \Big|, \\[.5em] 
     \mathfrak{N}_{\partial_u(r\phi_p)} &\doteq \sum_{i=1}^2 \sup_{\mathcal{Q}^{(in),u_\delta}}\Big|\frac{1}{|u|^{\alpha-i}}\partial_u^i(r\phi_p) \Big|, \\[.5em]
     \mathfrak{N}_{\partial_v(r\phi_p)} &\doteq  \sup_{\mathcal{Q}^{(in),u_\delta}}\Big|\frac{1}{|u|^{\alpha+\kappa_1}}\partial_v(r\phi_p) \Big|+\sup_{\mathcal{Q}^{(in),u_\delta}}\Big|\frac{1}{|u|^{\alpha-1+2\kappa_1}}\Big(\frac{|v|}{|u|^{q_1}} \Big)^b \partial_v^2(r\phi_p) \Big|,
\end{align}
as well as the total spacetime norm
\begin{align}
\label{eq:inttotalspacetimenorm}
\mathfrak{N}^{(in)}_{tot} &\doteq \mathfrak{N}_{r_p}+ \mathfrak{N}_{\nu_p}+\mathfrak{N}_{\lambda_p}+ \mathfrak{N}_{\mu_p}
+\mathfrak{N}_{\phi_p}+\mathfrak{N}_{\partial_u (r\phi_p)}+\mathfrak{N}_{\partial_v (r\phi_p)}.
\end{align}

\subsubsection{Main result and bootstrap assumptions}
We now state the main result on the existence of our approximate interior solutions.
\begin{prop}
\label{approxint:mainprop}
There exists an $\alpha \gg 1$ large, such that for any $f_\delta, h_\delta, k_\delta$ with $\mathcal{I} < \infty$, there exists $\epsilon$ small enough depending only on $\overline{\mathfrak{N}}_1$ and $\mathcal{I}$ such that the solution constructed in Proposition \ref{prop:localexint} extends to $\mathcal{Q}^{(in),u_\delta}$. Moreover, the norm $\mathfrak{N}^{(in)}_{tot}$ is bounded in terms of $\mathcal{I}$.
\end{prop}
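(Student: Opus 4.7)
The plan is to establish Proposition \ref{approxint:mainprop} by a continuity argument: from the local solution given by Proposition \ref{prop:localexint}, the domain of existence with uniform $C^1$ regularity away from $\{v=0\}$ is an open subset of $\mathcal{Q}^{(in),u_\delta}$ bounded in the past by a maximal ingoing null line $\{v = v_*\}$, and I will prove this set is all of $\mathcal{Q}^{(in),u_\delta}$ by propagating uniform pointwise bounds via a bootstrap on $\mathfrak{N}^{(in)}_{tot}$. Specifically I make the bootstrap assumption
\begin{equation*}
\mathfrak{N}^{(in)}_{tot} \leq A(\mathcal{I} + \epsilon)
\end{equation*}
for a large constant $A$ depending only on $\overline{\mathfrak{N}}_1$, and show that for $\alpha$ sufficiently large (depending only on $\overline{\mathfrak{N}}_1$) and then $\epsilon$ sufficiently small (depending on $\overline{\mathfrak{N}}_1$ and $\mathcal{I}$), each term in $\mathfrak{N}^{(in)}_{tot}$ is bounded by $\tfrac{A}{2}(\mathcal{I} + \epsilon)$. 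Combined with the $C^1$ extension principle of Section~\ref{sec:solnclasses}, the improved bounds then rule out breakdown at any $v_* \in (-|u_\delta|^{q_1}, 0]$, giving the full extension.

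The core of the proof is the estimation scheme for the difference system \eqref{PSSESF:1}--\eqref{PSSESF:12}. Every unknown $\Psi_p$ satisfies either an ingoing or outgoing transport equation with singular $|u|^{-1}$-type coefficients and error terms of the form $O(\epsilon/|u|)$ modulated by the background norm $\overline{\mathfrak{N}}_1$. For quantities satisfying $u$-equations integrated backwards from $\mathcal{L}$ (namely $\lambda_p, \mu_p, \partial_v\phi_p, \partial_v(r\phi_p), \partial_v\lambda_p, \partial_v^2(r\phi_p)$), I conjugate by the weight $|u|^{-\alpha}$ (or a comparable $v$-weighted version on $\mathcal{B}^{(2),u_\delta}$), pair with the renormalized unknown, and integrate: the shifted zeroth-order coefficient $-(\alpha + O(1))/|u|$ becomes strictly negative for $\alpha \gg \overline{\mathfrak{N}}_1$, producing a good spacetime term that absorbs the error after Cauchy--Schwarz at a cost of $\alpha^{-1}$. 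This is the mechanism by which the otherwise logarithmically divergent $\int |u|^{-1}\,du$ is tamed, and it is the source of the required smallness in $\alpha$. For quantities satisfying $v$-equations (namely $\nu_p, \partial_u\phi_p, \partial_u\nu_p, \partial_u(r\phi_p), \partial_u^2(r\phi_p)$), I integrate from $\{v=0\}$ using Lemma~\ref{lem:integrationmain}; here the coefficients are controlled $v$-integrably (recall $\partial_v\overline{\lambda}, \partial_v^2\overline{\phi}$ are integrable in $v$ thanks to $a,b < 1$), so no conjugation is needed and the weight $|u|^{-\alpha}$ simply passes through.

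The ordering of estimates is delicate because of axis regularity and the coupling to $u_\Gamma$-boundary terms. I would proceed in roughly this order: (i) estimate the geometric differences $\nu_p, \lambda_p$ and their transversal derivatives $\partial_u\nu_p, \partial_v\lambda_p$ (the latter absorbing the blowup weight $(|v|/|u|^{q_1})^a$) from the Raychaudhuri-like equations \eqref{PSSESF:1}--\eqref{PSSESF:2} and their commuted versions \eqref{PSSESF:7}--\eqref{PSSESF:8}, together with $r_p$ from the relation $r_p(u,v) = r_p(u,0) + \int_0^v \lambda_p\,dv'$; (ii) estimate $\mu_p$ from \eqref{PSSESF:3}--\eqref{PSSESF:4}, crucially preserving the factor $\overline{r}$ to handle $r^{-1}$ singularities, using the boundary conditions $\mu_p\to 0$ at $\Gamma$; (iii) estimate $\partial_u(r\phi_p), \partial_v(r\phi_p)$ and their higher $\partial_u^2, \partial_v^2$ counterparts via \eqref{PSSESF:6}, \eqref{PSSESF:5}, \eqref{PSSESF:11}--\eqref{PSSESF:12}; (iv) recover pointwise bounds on $\partial_u\phi_p, \partial_v\phi_p$ and $\phi_p$ near $\Gamma$ via the averaging machinery of Corollary~\ref{cor:integralestimateinterior}, which requires the second-order control of step (iii) as input. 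At each axis-adjacent step, $u$-boundary terms appearing on $\Gamma$ are converted into $v$-boundary terms on the same point via the boundary conditions \eqref{eq:axisbc2} and then controlled by the already-established bounds on the conjugate unknown.

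The main obstacle I anticipate is the interplay of three features at once: (a) the near-axis estimates of Lemma~\ref{lem:axisaverage}, which require $\sup/\inf$ control of $\nu$ and $\lambda$ and bounds on $\partial_u\nu_p, \partial_v\lambda_p$; (b) the singularity in $v$ of $\partial_v\lambda_p$ and $\partial_v^2(r\phi_p)$ across $\{v=0\}$, which must be absorbed into the $(|v|/|u|^{q_1})^{a,b}$ weights and must remain integrable against the $v$-equations; and (c) the blow-up in $|u|$ of background norms, which is overcome only by the $|u|^{\alpha}$ margin in the weights. Coordinating these means showing at each step that the error terms produced are at worst $O(A^2 \overline{\mathfrak{N}}_1 \epsilon^2 / \alpha)$, which improves the bootstrap constant once $\alpha$ is fixed large relative to $A\overline{\mathfrak{N}}_1$ and $\epsilon$ is then taken small in terms of $\mathcal{I}$. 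The resulting bound on $\mathfrak{N}^{(in)}_{tot}$ propagates up to the cutoff $\{u = u_\delta\}$ without dependence on $u_\delta$, finishing the proposition.
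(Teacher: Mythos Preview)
Your overall architecture—continuity argument, bootstrap on $\mathfrak{N}^{(in)}_{tot}$, conjugation by $|u|^{-\alpha}$ for the $u$-transport equations to manufacture a good lower-order term of size $\alpha$—matches the paper. The gap is in your treatment of the $v$-equations (for $\nu_p$, $\partial_u\nu_p$, $\partial_u(r\phi_p)$, $\partial_u^2(r\phi_p)$). You write that ``no conjugation is needed and the weight $|u|^{-\alpha}$ simply passes through,'' relying on Lemma~\ref{lem:integrationmain} and $v$-integrability of the coefficients. But integrating, say, $(\mathcal{G}_1)_p\overline{\nu}$ over the $v$-range $[-|u|^{q_1},0]$ yields $|u|^{-\alpha}|\nu_p|\lesssim C(\overline{\mathfrak{N}}_1)\,A\epsilon$ with no small parameter: Lemma~\ref{lem:integrationmain} only lets you drop a favorable-sign zeroth-order term and produces no $\alpha^{-1}$ factor. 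Your claimed error size $O(A^2\overline{\mathfrak{N}}_1\epsilon^2/\alpha)$ therefore fails precisely here, and the bootstrap does not close.

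The paper's remedy is to conjugate the $v$-equations by $w^{-\alpha}$ with $w\doteq(|u|^{q_1}+|v|)^{p_1}$. This weight is pointwise comparable to $|u|$ on $\mathcal{Q}^{(in)}$ but is not constant in $v$: conjugation generates a lower-order term $-\tfrac{p_1\alpha}{|u|^{q_1}+|v|}(w^{-\alpha}\nu_p)$, and the same contract--integrate--Cauchy--Schwarz argument you outline for the $u$-direction then yields the $\alpha^{-1}$ improvement. This in turn dictates the ordering: improve the $v$-equation quantities first, then the $u$-equation quantities, because the boundary terms of the latter on $\Gamma$ (e.g.\ $\lambda_p|_\Gamma=-p_1|u|^{\kappa_1}\nu_p|_\Gamma$, and analogously for $\partial_v(r\phi_p)$, $\partial_v\lambda_p$, $\partial_v^2(r\phi_p)$ via $T^2 r_p=T^2(r\phi_p)=0$) require the already-improved bounds on $\nu_p$, $\partial_u\nu_p$, $\partial_u(r\phi_p)$, $\partial_u^2(r\phi_p)$. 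Your ordering (i)--(iv), which mixes the two directions inside step~(i), would run into circularity at that boundary step.
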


A continuity argument applied in conjunction with local existence theory shows that Proposition \ref{approxint:mainprop} will follow from estimates for the double null unknowns (and a fixed number of derivatives) uniformly in $\mathcal{Q}^{(in),u_\delta}$. The following proposition reduces the desired result to one about improving bootstrap assumptions on $\mathfrak{N}^{(in)}_{tot}$.

\begin{prop}
\label{approxint:mainboot}
Let $(r,m,\phi)$ be a solution constructed by Proposition \ref{prop:localexint} in a region $\mathcal{Q}^{(in),u_\delta} \cap \{u \geq u_* \}$ for some $u_* \in (-1,u_\delta)$. Assume that for some large constant $A$, there is the bound
\begin{equation}
    \label{int:bootassump}
    \mathfrak{N}^{(in)}_{tot} \leq 2A\epsilon,
\end{equation} 
holding on the region of existence. Then for $\alpha$ large enough depending on the background solution through $\overline{\mathfrak{N}}_1$, and $\epsilon$ sufficiently small depending on $\mathcal{I}$ and $\overline{\mathfrak{N}}_1$, the bound can be improved to 
\begin{equation}
    \mathfrak{N}^{(in)}_{tot} \leq A\epsilon.
\end{equation} 
\end{prop}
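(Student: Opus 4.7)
The plan is to improve each constituent norm of $\mathfrak{N}^{(in)}_{tot}$ separately by running a weighted transport estimate on the difference equations (\ref{PSSESF:1})--(\ref{PSSESF:12}), using the bootstrap (\ref{int:bootassump}) to control all nonlinear source terms. The crucial device, indicated already in Section \ref{subsec:introproofoutline}, is to conjugate by the weight $|u|^{-\alpha}$ (or, for $v$-transport equations, by the comparable weight $(|u|^{q_1}+|v|)^{-\alpha p_1}$ on $\mathcal{Q}^{(in)}$). After conjugation, each equation acquires an additional zeroth order term of size $\sim \alpha/|u|$ with a sign that is favorable because $u$-integration proceeds \emph{away} from $\mathcal{O}$, starting on $\mathcal{L}$. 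Taking the $L^2$-pairing of the weighted equation with the weighted unknown, Cauchy--Schwarz then absorbs the $\epsilon^2/\alpha$ error terms and simultaneously cures the borderline logarithmic divergence $\int |u|^{-1}\,du$ coming from the inhomogeneity. Choosing $\alpha$ large depending only on $\overline{\mathfrak{N}}_1$, and then $\epsilon$ small depending on $(\alpha, \mathcal{I}, \overline{\mathfrak{N}}_1)$, converts each $2A\epsilon$ bound into $A\epsilon$.

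The estimates must be carried out in a definite order so that the boundary terms along $\Gamma$ produced by $u$-transport equations can always be re-expressed, via the boundary conditions (\ref{eq:axisbc1})--(\ref{eq:axisbc2}), in terms of quantities that satisfy $v$-transport equations and have already been controlled. The order I would follow is: (i) bound $\nu_p$ and $\partial_u\nu_p$ from (\ref{PSSESF:2}), (\ref{PSSESF:8}) integrated in $v$ from $\underline{\Sigma}_0$; (ii) obtain $r_p$ by integrating $\nu_p$ in $u$; (iii) bound $\lambda_p$ and $\partial_v\lambda_p$ from (\ref{PSSESF:1}), (\ref{PSSESF:7}) integrated in $u$ from $\mathcal{L}$, whose $\Gamma$-values are supplied by the $C^1$ boundary conditions and step (i); (iv) control $r\phi_p$ and its $\partial_u^i(r\phi_p), \partial_v^i(r\phi_p)$ via the wave equations (\ref{PSSESF:6}), (\ref{PSSESF:11}), (\ref{PSSESF:12}), with inhomogeneities $\mathcal{I}_5$ and its derivatives bounded by steps (i)--(iii); (v) bound $\mu_p$ (equivalently $m_p/\overline{r}$, consistent with cubic axis vanishing) by rewriting (\ref{SSESF:2:4})--(\ref{SSESF:2:5}) as integrals from $\Gamma$ and invoking the higher-order averaging operators of Remark \ref{rmk:higherorderaveraging} with $s=3$; and (vi) produce the non-degenerate derivatives $\partial_u\phi_p, \partial_v\phi_p$ near the axis from $\partial_u^i(r\phi_p)$ and $\partial_v^i(r\phi_p)$ via Corollary \ref{cor:integralestimateinterior}.

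The main technical difficulty will be coupling the axis regularity to the low regularity as $v\to 0^-$. The factors of $r^{-1}$ throughout the system --- particularly in $\mathcal{G}_1, \mathcal{G}_2, \mathcal{G}_5$ and in the mass equations (\ref{PSSESF:3})--(\ref{PSSESF:4}) --- must be compensated by the cubic axis vanishing of the mass and by the averaging estimates used at multiple orders, while the weights $(v/|u|^{q_1})^a, (v/|u|^{q_1})^b$ dictated by the bootstrap on $\partial_v\lambda_p, \partial_v^2(r\phi_p)$ must be shown compatible both with the assumed background blowup (A3) and with the $|u|^{-\alpha}$ conjugation. A second delicate point is that several linear-in-$\Psi_p$ contributions --- e.g.\ $\mathcal{G}_2 \lambda_p$ in (\ref{PSSESF:1}) or $\mathcal{G}_5\partial_v(r\phi_p)$ in (\ref{PSSESF:11}) --- do not carry an explicit $\epsilon$, and must be disposed of either through a favorable sign (via Lemma \ref{lem:integrationmain}) or by absorption into the $\alpha$-sized good term. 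Provided $\alpha$ dominates the size of all such background coefficients, which are uniformly bounded by $\overline{\mathfrak{N}}_1$, the scheme closes; the standard continuity argument, applied together with the local existence Proposition \ref{prop:localexint} and the reduction of Proposition \ref{approxint:mainprop}, then extends the solution to all of $\mathcal{Q}^{(in),u_\delta}$ with $\mathfrak{N}^{(in)}_{tot}\leq A\epsilon$.
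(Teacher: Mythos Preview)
Your proposal captures the paper's strategy: weighted conjugation by $|u|^{-\alpha}$ (or $w^{-\alpha}=(|u|^{q_1}+|v|)^{-\alpha p_1}$ for $v$-equations), $L^2$-pairing with the weighted unknown, and absorption of the $O(A\epsilon/|u|)$ sources into the $\alpha$-sized good lower-order term, with axis boundary values for $u$-transported quantities supplied by the already-improved $v$-transported ones via (\ref{eq:axisbc1})--(\ref{eq:axisbc2}). Two minor differences from the paper are harmless: the paper obtains $r_p$ by integrating $\lambda_p$ in $v$ from $\Gamma$ (giving $\mathfrak{N}_{r_p}\lesssim\mathfrak{N}_{\lambda_p}$) rather than $\nu_p$ in $u$, and it improves $\mathfrak{N}_{\mu_p}$ directly from the $u$-transport equation (\ref{PSSESF:3}), whose zeroth-order coefficient $-\mathcal{G}_4$ has a favorable sign when integrating backwards in $u$, rather than through $s=3$ averaging.

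There is, however, a real ordering issue in your listed plan. The commuted equation (\ref{PSSESF:8}) in your step (i) has sources $\partial_u\mathcal{G}_1\,\nu_p$ and $\partial_u(\mathcal{G}_1)_p\,\overline{\nu}$ containing $\mu/r^2$ and $\mu_p/r^2$; the bootstrap norm $\mathfrak{N}_{\mu_p}$ only gives linear axis vanishing $|\mu_p|\lesssim A\epsilon\,\overline{r}\,|u|^{\alpha-1}$, so $\mu_p/r^2$ is still singular at $\Gamma$. The paper therefore first derives, as \emph{preliminary consequences of the bootstrap} rather than as improvements, the non-degenerate bounds $|\partial_u\phi_p|\lesssim A\epsilon|u|^{\alpha-1}$ and $|\partial_v\phi_p|\lesssim A\epsilon|u|^{\alpha-1+\kappa_1}$ via Corollary~\ref{cor:integralestimateinterior}, and then the quadratic vanishing $|\mu_p/r^2|\lesssim A\epsilon|u|^{\alpha-2}$ by integrating (\ref{PSSESF:3}) in $u$ from $\mathcal{L}$ (Lemmas~\ref{lem:intdphibounds}--\ref{lem:intmubounds}); only afterwards does it run the improvement lemmas. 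In your numbering these are steps (vi) and (v), placed after the improvements; they must instead be executed first, at the unimproved $\lesssim A\epsilon$ level, before (i).
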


\subsubsection{Bounds on initial data}

\begin{lem}
    \label{lem:intinitialdata}
For $\epsilon$ sufficiently small, we have
\begin{equation}
    \label{e:id1}
    |\partial_v \phi_p|(u_\delta,v) \lesssim \epsilon \mathcal{I} |u_\delta|^{\alpha-1+\kappa_1} ,
\end{equation}
\begin{equation}
    \label{e:id2}
    |\frac{\mu_p}{r^2}|(u_\delta,v) \lesssim \epsilon \mathcal{I} |u_\delta|^{\alpha-2}.
\end{equation}
\end{lem}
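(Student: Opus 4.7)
The proof proceeds by direct calculation on the initial data surface $\Sigma_{u_\delta}$, where all double null unknowns are determined by the prescribed data $k_\delta,h_\delta$ together with the boundary conditions at $\Gamma$. Throughout we freely use the bounds $|h_\delta|,|\partial_v k_\delta|\lesssim\mathcal{I}|u_\delta|^{\alpha+\kappa_1}$, $|r^{-1}k_\delta|\lesssim\mathcal{I}|u_\delta|^{\alpha}$, and the background bounds encoded in $\overline{\mathfrak{N}}_1$, in particular $\bar{\lambda}\sim|u_\delta|^{\kappa_1}$, $|\partial_v\bar{\phi}|\lesssim|u_\delta|^{-1+\kappa_1}$, and $|\bar m|\lesssim\bar{r}^3/|u_\delta|^2$.

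For (\ref{e:id1}), start from the algebraic identity (obtained by subtracting the background version of $r\partial_v\phi=\partial_v(r\phi)-\lambda\phi$)
\begin{equation*}
r\,\partial_v\phi_p=\partial_v(r\phi_p)-\bar{\lambda}\phi_p-\lambda_p\phi-r_p\,\partial_v\bar{\phi}.
\end{equation*}
First I would establish the preliminary bound $|\phi_p|(u_\delta,v)\lesssim\epsilon\mathcal{I}|u_\delta|^{\alpha}$ by writing $r\phi_p=(r\phi)_p-r_p\bar{\phi}$, using the boundary condition $(r\phi)_p|_\Gamma=0$ to integrate $\partial_v(r\phi)_p$ from the axis, and applying the averaging lemma (Lemma \ref{lem:axisaverage}) to control the division by $r$ uniformly up to $\Gamma$. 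With this in hand, the region $r\gtrsim|u_\delta|$ is handled by dividing the identity above by $r$ and estimating each piece termwise; each contribution matches the target weight $|u_\delta|^{\alpha-1+\kappa_1}$. Near the axis, where the naive division by $r$ fails, I would instead invoke the averaging representation from the proof of Lemma \ref{lem:axisaverage}, namely
\begin{equation*}
\partial_v\phi(u_\delta,v)=r^{-2}\lambda\int_{v_\Gamma(u_\delta)}^{v}r(v')\,\partial_v\bigl(\lambda^{-1}\partial_v(r\phi)\bigr)(v')\,dv',
\end{equation*}
and its background counterpart, and bound the difference by separately estimating the difference of the integrand $(\lambda^{-1}\partial_v(r\phi))_p$ and the differences of the geometric prefactors $r^{-2}\lambda$ vs.~$\bar{r}^{-2}\bar{\lambda}$ and $r(v')$ vs.~$\bar r(v')$. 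Since $|v-v_\Gamma|/r$ is bounded uniformly on $\Sigma_{u_\delta}$, all terms collapse to the desired estimate.

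For (\ref{e:id2}), integrate the constraint (\ref{SSESF:2:5}) along $\Sigma_{u_\delta}$ starting from the axis, using $m|_\Gamma=0$, to obtain
\begin{equation*}
m(u_\delta,v)=\int_{v_\Gamma(u_\delta)}^{v}\frac{r^2(1-\mu)(\partial_v\phi)^2}{2\lambda}\,dv',
\end{equation*}
together with the analogous formula for $\bar m$. Using (\ref{eq:mpvsmup}), the bound on $\mu_p/r^2$ reduces to estimating $m_p/r^3$ and $\bar m\,r_p/(r^3\bar r)$. The latter is immediate: $|\bar m|\lesssim\bar r^3/|u_\delta|^2$ and $|r_p|\lesssim\epsilon\mathcal{I}|u_\delta|^{\alpha}r$ give $|\bar m\,r_p/(r^3\bar r)|\lesssim\epsilon\mathcal{I}|u_\delta|^{\alpha-2}$. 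For $m_p/r^3$, decompose $(\partial_v\phi)^2-(\partial_v\bar{\phi})^2=\partial_v\phi_p\cdot(\partial_v\phi+\partial_v\bar{\phi})$ and linearize the geometric prefactor $r^2(1-\mu)/\lambda$ in $r_p,\lambda_p,\mu_p$. Plugging in (\ref{e:id1}) and the preceding pointwise bounds yields an integrand of size $\epsilon\mathcal{I}|u_\delta|^{\alpha-2+2\kappa_1}r^2$, and the identity $\int_{v_\Gamma}^{v}r^2\,dv'\sim r^3/\lambda$ near the axis produces the correct $r^3$ factor. The cubic and quadratic remainder terms are absorbed into a smaller error using smallness of $\epsilon$.

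The main obstacle is the near-axis analysis in (\ref{e:id1}): the algebraic formula for $\partial_v\phi_p$ contains a factor of $r^{-1}$ that is not cancelled by the boundary values of the numerator in an explicit way, and one must either exploit the axis boundary conditions carefully or — as I prefer — pass through the averaging representation, where the subtraction between the perturbed and background solution must be tracked piece by piece because the averaging operator itself depends on $r$. Away from the axis and throughout (\ref{e:id2}), the argument is a direct calculation from the initial data bounds and $\overline{\mathfrak{N}}_1$.
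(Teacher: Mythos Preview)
For (\ref{e:id1}) your approach is essentially the paper's: split into a near-horizon region where $r\sim|u_\delta|$ and direct division works, and a near-axis region handled by an averaging estimate. Two simplifications are available. First, the identity you write holds for $\partial_v(r\phi)_p$, not $\partial_v(r\phi_p)$; the paper works throughout with $r\phi_p$ (whose $\partial_v$-derivative is the prescribed data $\epsilon h_\delta$) via the cleaner relation $r\partial_v\phi_p=\partial_v(r\phi_p)-\lambda\phi_p$, which avoids the extra $\lambda_p\phi$ and $r_p\partial_v\bar\phi$ terms and the logarithm in $\bar\phi$. Second, near the axis the paper applies Corollary~\ref{cor:integralestimateinterior} directly to $\phi_p$, feeding in $\partial_v(r\phi_p)$ and $\partial_v^2(r\phi_p)$ from $\mathcal{I}$, rather than subtracting averaging representations for $\partial_v\phi$ and $\partial_v\bar\phi$ and tracking all the geometric prefactors separately. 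For (\ref{e:id2}) the paper takes a genuinely different and shorter route: instead of subtracting integrals of (\ref{SSESF:2:5}) and passing through (\ref{eq:mpvsmup}), it integrates the transport equation (\ref{PSSESF:4}) for $\mu_p$ from the axis, where $\mu_p|_\Gamma=0$. The zeroth-order coefficient $-\mathcal{G}_3=-\big(\tfrac{\lambda}{r}+\tfrac{r}{\lambda}(\partial_v\phi)^2\big)$ is nonpositive, so by Lemma~\ref{lem:integrationmain} the $\mu_p$ feedback term drops for free---this is exactly the term you plan to ``absorb by smallness of $\epsilon$''. The remaining inhomogeneities $(\mathcal{G}_3)_p\bar\mu$ and $(\mathcal{I}_3)_p$ are each $O(\epsilon\mathcal{I}|u_\delta|^{\alpha-2+\kappa_1}\bar r)$ via (\ref{e:id1}) and the data bounds, and one integration in $v$ over $|v-v_\Gamma|\lesssim\bar r|u_\delta|^{-\kappa_1}$ gives $|\mu_p|\lesssim\epsilon\mathcal{I}|u_\delta|^{\alpha-2}\bar r^2$ directly, with no bootstrap or Gr\"onwall on $\Sigma_{u_\delta}$.
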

\begin{proof}
To estimate $\partial_v \phi_p(u_\delta,v)$ in a neighborhood of $\{v=0\}$, it suffices to use the identity
\begin{equation}
    \label{eq:inttemp-1}
    \partial_v \phi_p(u_\delta,v) = \frac{1}{r(u_\delta,v)}(\partial_v (r\phi_p)(u_\delta,v) - \lambda(u_\delta,v) \phi_p(u_\delta,v) ).
\end{equation}
The bound $|r_p(u_\delta,v)| \lesssim \epsilon \mathcal{I} |u_\delta|^{\alpha}$ implies that for $\epsilon$ sufficiently small, $r = \overline{r}+r_p \sim \overline{r} \sim |u_\delta|$, the final equivalence holding near $\{v=0\}$.

Similarly $\lambda(u_\delta,v) \sim \overline{\lambda}(u_\delta,v)$. To estimate $\phi_p$, apply boundary conditions along the axis and the fundamental theorem of calculus in the $v$ direction to see 
\begin{align*}
    |(r\phi_p)(u_\delta,v)|  &\lesssim \int\limits_{(u_\delta,v_\Gamma(u_\delta))}^{(u_\delta,v)}|\partial_v(r\phi_p)|(u_\delta,v')dv' \\
    &\lesssim \epsilon \mathcal{I} |u_\delta|^{\alpha+\kappa_1}|v - v_\Gamma(u_\delta)|  \\
    &\lesssim \epsilon \mathcal{I} |u_\delta|^{\alpha} r(u_\delta,v).
\end{align*}
Dividing through by $r(u_\delta,v)$ gives the bound $|\phi_p| \lesssim \epsilon \mathcal{I} |u_\delta|^{\alpha}$. Inserting these estimates into (\ref{eq:inttemp-1}) gives
\begin{equation*}
    |\partial_v \phi_p(u_\delta,v)| \lesssim \epsilon \mathcal{I} |u_\delta|^{\alpha-1+\kappa_1}.
\end{equation*}
To show a similar bound near the axis we apply (\ref{eq:lukohestimatev}). Choose $\epsilon$ small such that $|\partial_v \lambda| \lesssim |u|^{-1+2\kappa_1}s_1^{-a}.$ In a self-similar neighborhood of $\Gamma$ we have $s_1 \sim 1$, and thus $|\partial_v \lambda| \lesssim |u|^{-1+2\kappa_1}$. Inserting bounds for $\lambda, \partial_v \lambda, \partial_v(r\phi_p), \partial_v^2(r\phi_p)$ into (\ref{eq:lukohestimatev}) gives
\begin{equation*}
    |\partial_v\phi_p(u_\delta,v)| \lesssim \epsilon \mathcal{I} |u_\delta|^{\alpha-1+\kappa_1},
\end{equation*}
as desired.

\vspace{5pt}
The strategy for deriving (\ref{e:id2}) is to integrate (\ref{PSSESF:4}) from $(u,v) = (u_\delta,v_\Gamma(u_\delta))$. The coefficient of the zeroth order term
\begin{equation*}
-\mathcal{G}_3 \mu_p \doteq -\Big(\frac{\lambda}{r} + \frac{r}{\lambda}(\partial_v \phi)^2) \Big)\mu_p 
\end{equation*}
appearing on the right hand side of (\ref{PSSESF:4}) has a favorable sign (cf. Lemma \ref{lem:integrationmain}). For $\epsilon$ small enough, estimate 
$$|\big(\mathcal{G}_3\big)_p \overline{\mu}|(u_\delta,v) \lesssim \epsilon \mathcal{I} |u_\delta|^{\alpha-2+\kappa_1}\overline{r}(u_\delta,v),$$
$$|\big(\mathcal{I}_3\big)_p|(u_\delta,v) \lesssim  \epsilon \mathcal{I} |u_\delta|^{\alpha-2+\kappa_1}\overline{r}(u_\delta,v).$$
Integrating  (\ref{PSSESF:4}) thus gives
\begin{align*}
    |\mu_p|(u_\delta,v) &\lesssim |\mu_p|(u_\delta, v_\Gamma(u_\delta)) + \epsilon \mathcal{I} |u_\delta|^{\alpha-2+\kappa_1}\overline{r}(u_\delta,v) |v - v_\Gamma(u_\delta)|  \\
    &\lesssim |\mu_p|(u_\delta, v_\Gamma(u_\delta)) +  \epsilon \mathcal{I} |u_\delta|^{\alpha-2}\overline{r}^2(u_\delta,v).
\end{align*}
It remains to show $|\mu_p|(u_\delta, v_\Gamma(u_\delta)) = 0$. This is a direct consequence of local existence theory and $C^1$ bounds near the axis.
\end{proof}

\subsubsection{Consequences of the bootstrap assumptions}
The first lemma establishes that for $\epsilon$ small, the solution $(\overline{r} + r_p, \ \overline{\mu}+\mu_p, \ \overline{\phi} + \phi_p)$ is quantitatively close to the background solution in the region of existence. In particular, the bounds propagated on the $\Psi_p$ are \textit{at least as} regular as those satisfied by the $\overline{\Psi}.$
\begin{lem}
\label{int:lem1}
Assume (\ref{int:bootassump}). For $\epsilon$ sufficiently small, we have 
\begin{align}
    r \sim \overline{r},\  \ \ r\lesssim |u|, \ \ \ (-\nu) \sim 1, \ \ \ \lambda \sim |u|^{\kappa_1}, \ \ \ (1-\mu) \sim 1.
\end{align}
Moreover, the following higher order bounds hold:
\begin{equation}
    |\partial_u \nu| \lesssim |u|^{-1}, \ \ \ |\partial_v \lambda| \lesssim |u|^{-1+2\kappa_1}s_1^{-a}.
\end{equation}
\end{lem}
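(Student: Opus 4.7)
The plan is to prove each bound by writing $\Psi = \overline{\Psi} + \Psi_p$, applying the triangle inequality, and verifying that the bootstrap-controlled perturbation $\Psi_p$ is negligible compared to the background $\overline{\Psi}$ when $\epsilon$ is small. The crucial (and essentially only) structural input is that $\alpha \gg 1$ is large and positive, and $|u| \leq 1$ throughout $\mathcal{Q}^{(in),u_\delta}$, so every factor of $|u|^\alpha$ appearing in the weight definitions of Section \ref{subsection:interiornorms} is bounded by $1$, giving uniform smallness.

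First I would handle the zeroth-order geometric quantities. From the bootstrap assumption on $\mathfrak{N}_{r_p}$, we have $|r_p| \leq 2A\epsilon\, \overline{r} |u|^\alpha \leq 2A\epsilon\, \overline{r}$, so for $\epsilon$ sufficiently small depending on $A$ we obtain $r \sim \overline{r}$; combining with $\overline{r} \lesssim |u|$ from $\overline{\mathfrak{N}}_1$ (cf.\ (A2)) yields $r \lesssim |u|$. Similarly $|\nu_p| \leq 2A\epsilon |u|^\alpha \leq 2A\epsilon$ together with $(-\overline{\nu}) \sim 1$ gives $(-\nu) \sim 1$, and $|\lambda_p| \leq 2A\epsilon |u|^{\alpha+\kappa_1} \leq 2A\epsilon |u|^{\kappa_1}$ together with $\overline{\lambda} \sim |u|^{\kappa_1}$ gives $\lambda \sim |u|^{\kappa_1}$.

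Next I would treat the mass ratio. From $\mathfrak{N}_{\mu_p}$ one obtains $|\mu_p| \leq 2A\epsilon\, \overline{r}|u|^{\alpha-1} \lesssim \epsilon |u|^{\alpha}$, again using $\overline{r} \lesssim |u|$; in particular $|\mu_p| \lesssim \epsilon$. The background bound $|\log(1-\overline{\mu})| \lesssim 1$ from $\overline{\mathfrak{N}}_1$ implies $(1-\overline{\mu}) \sim 1$, hence for $\epsilon$ small $(1-\mu) \sim 1$ as well.

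Finally, for the two higher-order transversal bounds the same triangle inequality argument works: $|\partial_u \nu_p| \leq 2A\epsilon |u|^{\alpha-1}$ from $\mathfrak{N}_{\nu_p}$, combined with $|\partial_u \overline{\nu}| \lesssim |u|^{-1}$ from $\overline{\mathfrak{N}}_1$, gives $|\partial_u \nu| \lesssim |u|^{-1}(1 + \epsilon|u|^\alpha) \lesssim |u|^{-1}$. For $\partial_v \lambda$, unpacking the definition of $\mathfrak{N}_{\lambda_p}$ gives $|\partial_v \lambda_p| \leq 2A\epsilon |u|^{\alpha-1+2\kappa_1} s_1^{-a}$, which added to the background bound $|\partial_v \overline{\lambda}| \lesssim |u|^{-1+2\kappa_1} s_1^{-a}$ yields $|\partial_v \lambda| \lesssim |u|^{-1+2\kappa_1} s_1^{-a}$. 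There is no genuine analytic obstacle here: the lemma is purely a bookkeeping statement packaging the bootstrap assumption into a form that can be reused downstream. The only point to be mindful of is ensuring $\epsilon_0$ is chosen small depending on $A$ (and through $\overline{\mathfrak{N}}_1$, on the background) so that the constants absorbed into $\sim$ are uniform; this smallness is compatible with the $\epsilon_0$ allowed in Proposition \ref{approxint:mainboot}.
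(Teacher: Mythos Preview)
Your proposal is correct and takes exactly the same approach as the paper, which proves the lemma in a single sentence by writing $\Psi = \overline{\Psi} + \Psi_p$, inserting the background bounds from $\overline{\mathfrak{N}}_1$ and the bootstrap bounds on $\Psi_p$, and choosing $\epsilon$ small. Your write-up simply spells out each case explicitly; the underlying argument is identical.
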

\begin{proof}
The bounds follow by writing $\Psi = \overline{\Psi} + \Psi_p$, inserting bounds for $\overline{\Psi}$ from $\overline{\mathfrak{N}}_1$ and bootstrap bounds for $\Psi_p$, and finally choosing $\epsilon$ sufficiently small.
\end{proof}

\begin{lem}
\label{lem:intdphibounds}
For $\epsilon$ sufficiently small, we have
\begin{align}
    \label{eq:int10}
    |\partial_u \phi_p| &\lesssim A\epsilon |u|^{\alpha-1},\\[.5em]
    \label{eq:int10.5}
    |\partial_v \phi_p| &\lesssim A\epsilon |u|^{\alpha-1+\kappa_1}.
\end{align}
Therefore
\begin{equation}
   |\partial_u \phi| \lesssim |u|^{-1}, \ \ \ |\partial_v \phi| \lesssim |u|^{-1+\kappa_1}.
\end{equation}
\end{lem}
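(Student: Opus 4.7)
The plan is to control $\partial_u \phi_p$ and $\partial_v \phi_p$ using the averaging machinery of Section \ref{subsubsec:axisintlemmas}, combined with the integral representation of $\phi_p$ coming from the BV boundary condition $r\phi_p|_\Gamma = 0$. The naive algebraic identities $r\partial_u\phi_p = \partial_u(r\phi_p) - \nu\phi_p$ and $r\partial_v\phi_p = \partial_v(r\phi_p) - \lambda\phi_p$ reduce the task to bounding the right-hand sides, but because of the singular $r^{-1}$ factor, these identities alone fail near $\Gamma$; this forces the use of the averaging lemmas. The geometric input $|\nu|\sim 1$, $|\lambda|\sim|u|^{\kappa_1}$, $|\partial_u\nu|\lesssim|u|^{-1}$, and $|\partial_v\lambda|\lesssim|u|^{-1+2\kappa_1}s_1^{-a}$ is supplied by Lemma \ref{int:lem1}, and all source bounds come from the bootstrap weights on $\partial_u^i(r\phi_p)$ and $\partial_v^j(r\phi_p)$.

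For (\ref{eq:int10}), split the domain into $\mathcal{B}^{(2),u_\delta}$, where the future-directed constant-$v$ ray from $(u,v)$ terminates on $\Gamma$, and $\mathcal{B}^{(1),u_\delta}$, where it terminates on $\Sigma_{u_\delta}$. In $\mathcal{B}^{(2),u_\delta}$, the boundary condition $r\phi_p|_\Gamma = 0$ makes $\phi_p$ an averaging operator applied to $\partial_u(r\phi_p)$, and Corollary \ref{cor:integralestimateinterior} applies directly:
\begin{equation*}
|\partial_u\phi_p|(u,v) \lesssim \sup_{u'\in[u,u_\Gamma(v)]}|\partial_u \nu|\,|\partial_u(r\phi_p)|(u',v) + \sup_{u'\in[u,u_\Gamma(v)]}|\partial_u^2(r\phi_p)|(u',v).
\end{equation*}
Inserting the bootstrap weights $|\partial_u(r\phi_p)|\lesssim A\epsilon|u'|^{\alpha-1}$, $|\partial_u^2(r\phi_p)|\lesssim A\epsilon|u'|^{\alpha-2}$ and using $|u'|\leq|u|$ over the integration interval gives the claimed rate. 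In $\mathcal{B}^{(1),u_\delta}$ I invoke the modification described in the remark after Corollary \ref{cor:integralestimateinterior}: the data term $r(u_\delta,v)\phi_p(u_\delta,v)$ cancels during the integration-by-parts step, and the leftover contribution at $u' = u_\delta$ comes with a favorable $r(u_\delta,v)^2$ weight that can be absorbed using the initial-data bounds of Proposition \ref{prop:localexint}. The bound (\ref{eq:int10.5}) for $\partial_v\phi_p$ is handled analogously by the $v$-direction averaging; the past-directed constant-$u$ ray always reaches $\underline{\Sigma}_0$, the singular factor $s_1^{-a}$ in $|\partial_v\lambda|$ is integrable since $a<1$, and contributes only a bounded multiplicative loss.

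The consequences $|\partial_u\phi|\lesssim|u|^{-1}$ and $|\partial_v\phi|\lesssim|u|^{-1+\kappa_1}$ then follow from the triangle inequality $|\partial\phi|\leq|\partial\overline{\phi}|+|\partial\phi_p|$, absorbing $|\partial_u\overline{\phi}|\lesssim|u|^{-1}$ and $|\partial_v\overline{\phi}|\lesssim|u|^{-1+\kappa_1}$ from $\overline{\mathfrak{N}}_1$, with the perturbation contribution negligible for $\alpha\geq 1$ and $\epsilon$ small. The main obstacle is the asymmetry in the averaging formula: $|\partial_u\nu|\sim|u'|^{-1}$ is maximized at the axis endpoint $u' = u_\Gamma(v)$, while $|\partial_u^i(r\phi_p)|$ is maximized at $u' = u$, and since $|u_\Gamma(v)| = |v|^{p_1}$ can be much smaller than $|u|$ when $(u,v)$ lies deep in the interior of $\mathcal{Q}^{(in),u_\delta}$, one must track the product carefully — ultimately dominating it by the $\sup |\partial_u^i(r\phi_p)|$ term alone using the integration-by-parts form of the averaging proof, which internally pairs $r(u',v)$ against $\partial_u(\nu^{-1}\partial_u(r\phi_p))$ and thereby trades the apparently singular $|u'|^{-1}$ sup against the vanishing of $r$ near $\Gamma$.
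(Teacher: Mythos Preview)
Your treatment of $\partial_u\phi_p$ via Corollary~\ref{cor:integralestimateinterior} matches the paper's argument exactly.

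For $\partial_v\phi_p$, however, the paper does \emph{not} use the $v$-direction averaging. Instead it integrates the transport equation (\ref{PSSESF:5}),
\[
\partial_u(r\partial_v\phi_p) = -r_p\partial_u\partial_v\overline{\phi} - \nu_p\partial_v\overline{\phi} - \lambda_p\partial_u\overline{\phi},
\]
in $u$ from $\mathcal{L}$. The right-hand side contains only first-order perturbation quantities and background derivatives, all of which are bounded by $A\epsilon|u|^{\alpha-1+\kappa_1}$ with no singular $v$-weight; the boundary term on $\Gamma$ vanishes because $r=0$ there, and on $\Sigma_{u_\delta}$ it is controlled by Lemma~\ref{lem:intinitialdata}. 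Dividing by $r$ and using $|u-u_\delta|\lesssim r(u,v)$ finishes the bound.

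Your proposed $v$-averaging has a genuine gap near $\{v=0\}$. Applying (\ref{eq:lukohestimatev}) requires $\sup_{v'\in[v_\Gamma(u),v]}|\partial_v^2(r\phi_p)|$, and the bootstrap weight in $\mathfrak{N}_{\partial_v(r\phi_p)}$ only gives
\[
|\partial_v^2(r\phi_p)|(u,v')\lesssim A\epsilon|u|^{\alpha-1+2\kappa_1}\Big(\frac{|u|^{q_1}}{|v'|}\Big)^{b}.
\]
Over $v'\in[v_\Gamma(u),v]$ this supremum is achieved at $v'=v$ and equals $A\epsilon|u|^{\alpha-1+2\kappa_1}(|u|^{q_1}/|v|)^{b}$, so the corollary yields $|\partial_v\phi_p|\lesssim A\epsilon|u|^{\alpha-1+\kappa_1}(|u|^{q_1}/|v|)^{b}$, which blows up as $v\to 0$ and does \emph{not} give the uniform bound (\ref{eq:int10.5}). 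Your remark that ``$s_1^{-a}$ is integrable'' addresses only the $\partial_v\lambda$ factor (exponent $a$), not the more relevant $\partial_v^2(r\phi_p)$ factor (exponent $b$), and in any case the corollary as stated takes a supremum rather than an integral. The route can be salvaged---either by splitting into $\mathcal{S}_{near}$ (where $r\sim|u|$ and the naive identity $\partial_v\phi_p=r^{-1}(\partial_v(r\phi_p)-\lambda\phi_p)$ suffices) and $\mathcal{S}_{far}$ (where $s_1\sim 1$ and the corollary applies), or by returning to the integral formula (\ref{eq:averagingtemp}) and exploiting $b<1$---but as written the argument is incomplete. The paper's route via (\ref{PSSESF:5}) sidesteps the singular $v$-weights entirely. (Your claim that the relevant ray ``reaches $\underline{\Sigma}_0$'' is also incorrect: the averaging formula $I_v$ integrates from $\Gamma$, not from $\underline{\Sigma}_0$.)
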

\begin{proof}
(\ref{eq:int10}) follows from (\ref{eq:lukohestimateu}) after inserting upper and lower bounds on $-\nu$, as well as bootstrap bounds on $\partial_u \nu$, $\partial_u(r\phi_p)$, and $\partial_u^2(r\phi_p)$.

To estimate $\partial_v \phi_p$ we apply (\ref{PSSESF:5}). Estimating the right hand side gives $|\partial_u(r \partial_v \phi_p)| \lesssim A \epsilon |u|^{\alpha-1+\kappa_1}$. For $(u,v) \in \mathcal{B}^{(1),u_\delta}$, integrating in $u$ gives
\begin{align*}
    |r \partial_v \phi_p|(u,v) &\lesssim |r \partial_v \phi_p|(u_\delta,v) + A \epsilon |u|^{\alpha+\kappa_1}|u - u_\delta| \\[.3em]
    &\lesssim \epsilon r(u_\delta,v)  |u_\delta|^{\alpha-1+\kappa_1}\mathcal{I} +A \epsilon r(u,v) |u|^{\alpha-1+\kappa_1}.
\end{align*}
Dividing through by $r(u,v)$ and observing $r(u_\delta,v) \leq r(u,v)$ gives the result, after potentially choosing $A$ larger as a function of $\mathcal{I}$.

For $(u,v) \in \mathcal{B}^{(2),u_\delta}$ the integration is similar, and the boundary term vanishes along the axis. 
\end{proof}

\begin{lem}
\label{lem:intmubounds}
For $\epsilon$ sufficiently small depending on $A$, 
\begin{equation}
    \label{eq:axisfullreg}
    |\frac{\mu_p}{r^2}| \lesssim A\epsilon |u|^{\alpha-2}.
\end{equation}
Therefore 
\begin{equation}
    \mu \lesssim r^2|u|^{-2}.
\end{equation}
\end{lem}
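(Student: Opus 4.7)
The natural strategy is to move from $\mu_p$ back to $m_p$ via the identity (\ref{eq:mpvsmup}), which rearranges to
\begin{equation*}
\frac{\mu_p}{r^2} = \frac{2m_p}{r^3} - \frac{2\overline{m}}{\overline{r}^3} \cdot \frac{r_p}{\overline{r}} \cdot \frac{\overline{r}^3}{r^3},
\end{equation*}
and then to estimate $m_p/r^3$ by integrating from the axis, where $m_p$ vanishes as a consequence of the $C^1$ boundary conditions along $\Gamma$ satisfied by both $m$ and $\overline{m}$. The second term on the right is easy: using Lemma \ref{int:lem1} ($r \sim \overline{r}$), the bootstrap bound on $r_p$, and the $\overline{m}/\overline{r}^3 \lesssim |u|^{-2}$ bound contained in $\overline{\mathfrak{N}}_1$, it is controlled by $A\epsilon |u|^{\alpha - 2}$, which is the desired rate.

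For the main term $m_p / r^3$, I would write $m_p(u,v) = \int_{v_\Gamma(u)}^{v} \partial_v m_p(u,v')\, dv'$ and bound $\partial_v m_p$ by taking the difference of (\ref{SSESF:2:5}) for $(r,m,\phi)$ and $(\overline{r},\overline{m},\overline{\phi})$:
\begin{equation*}
\partial_v m_p = \tfrac{1}{2}\Big[\tfrac{r^2(1-\mu)(\partial_v \phi)^2}{\lambda}\Big]_p.
\end{equation*}
Expanding this into four contributions ($(r^2)_p$, $\mu_p$, $((\partial_v \phi)^2)_p$, $(1/\lambda)_p$) and inserting the bootstrap bounds on $r_p,\mu_p,\lambda_p$, the bound on $\partial_v\phi_p$ from Lemma \ref{lem:intdphibounds}, and Lemma \ref{int:lem1} together with the relevant components of $\overline{\mathfrak{N}}_1$, every term is controlled by
\begin{equation*}
|\partial_v m_p|(u,v') \lesssim A\epsilon\, r(u,v')^2 |u|^{\alpha - 2 + \kappa_1}.
\end{equation*}
Here the apparent loss of $|u|^{\kappa_1}$ compared to the target will be recovered by the integration: since $\lambda \sim |u|^{\kappa_1}$, a change of variables $dv' = dr/\lambda$ gives $\int_{v_\Gamma(u)}^{v} r^2\, dv' \lesssim r(u,v)^3 |u|^{-\kappa_1}$. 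This is exactly the averaging operator $I_{3,v}$ from Remark \ref{rmk:higherorderaveraging}, which is the precise tool designed to turn $\int r^2 (\,\cdot\,)\,dv'$ into an $r^3$ weight. Combining, $|m_p|(u,v) \lesssim A\epsilon\, r(u,v)^3 |u|^{\alpha - 2}$, hence $|m_p/r^3| \lesssim A\epsilon |u|^{\alpha-2}$, and (\ref{eq:axisfullreg}) follows. The second stated bound $\mu \lesssim r^2 |u|^{-2}$ is then immediate from (\ref{eq:axisfullreg}) combined with $\overline{\mu} \lesssim \overline{r}^2|u|^{-2}$ (from $\overline{\mathfrak{N}}_1$) and $r\sim\overline{r}$.

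The only real obstacle is checking the bookkeeping: one must confirm that every one of the four difference terms in $\partial_v m_p$ actually gives the claimed $r^2 |u|^{\alpha-2+\kappa_1}$ weight rather than something worse, in particular that the $\mu_p$ term — where one only has $|\mu_p| \lesssim A\epsilon \overline{r}|u|^{\alpha-1}$ from the bootstrap assumption rather than the sharper $r^2|u|^{\alpha-2}$ being proved — still fits, which it does after using $\overline{r} \lesssim |u|$. No circularity arises because in this term the $|u|^{\alpha-1}\overline{r}$ factor is integrated against $r(\partial_v\overline{\phi})^2/\lambda \lesssim r|u|^{-2+\kappa_1}$, producing the correct total weight without needing the improved estimate on $\mu_p$ itself.
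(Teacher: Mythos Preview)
Your argument is correct, and it takes a genuinely different route from the paper. The paper works directly with $\mu_p$ and integrates the $\partial_u$ equation (\ref{PSSESF:3}) \emph{backwards in $u$} from the future boundary $\mathcal{L}$, exploiting the favorable sign of the zeroth order coefficient $-\mathcal{G}_4$ (cf.\ Lemma \ref{lem:integrationmain}); the gain of an extra factor of $\overline{r}$ comes from bounding the $u$-length $|u|-|u_\delta|$ (or $|u|-|u_\Gamma(v)|$) by $\overline{r}$ via the lower bound on $-\nu$. You instead pass to $m_p$ and integrate the $\partial_v m$ equation \emph{forward in $v$} from the axis, using the $I_{3,v}$ averaging structure to convert $\int r^2\,dv'$ into $r^3|u|^{-\kappa_1}$. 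Both approaches yield the same bound. The paper's route is thematically aligned with the rest of Section \ref{subsec:approxint}, where backwards $u$-integration is the dominant tool and allows one to reuse the initial-data estimate (\ref{e:id2}) on $\Sigma_{u_\delta}$; your route is more self-contained (no split into $\mathcal{B}^{(1),u_\delta}$ versus $\mathcal{B}^{(2),u_\delta}$ is needed, since constant-$u$ lines always reach $\Gamma$) and makes the regularity gain at the axis completely transparent via the averaging-operator formalism.
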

\begin{proof}
    Lemma \ref{lem:intinitialdata}, along with boundary conditions on $\Gamma$, bounds $\mu_p(u,v)$ for any $(u,v) \in \mathcal{L}$. We proceed by integrating (\ref{PSSESF:3}) in the direction of decreasing $u$, using that the bootstrap assumptions imply the zeroth order term has a good sign. Apply bootstrap assumptions and (\ref{eq:int10}) with $\epsilon$ sufficiently small to estimate
    \begin{equation}
        \label{est:temp1}
        |\big(\mathcal{G}_4\big)_p \overline{\mu}|,\ \  |\big(\mathcal{I}_4\big)_p | \lesssim A\epsilon |u|^{\alpha-2} \overline{r}.
    \end{equation}
    In the above estimate, there is potential danger due to the $r^{-1}$ factors appearing in (\ref{PSSESF:3}). Note however the presence of $\overline{\mu}$ factors in these terms, which allows one to bypass the singularity at the axis and write 
    $$\frac{\overline{\mu}}{r} \lesssim \frac{r}{|u|^2}. $$
    Manipulations of this form are used repeatedly in the later analysis.
    
    For $(u,v) \in \mathcal{B}^{(1),u_\delta}$, integrating (\ref{PSSESF:3}) in $u$ and applying (\ref{est:temp1}) gives 
    \begin{align*}
        |\mu_p|(u,v) & \lesssim |\mu_p|(u_\delta,v) + A\epsilon |u|^{\alpha-2} \overline{r}(|u| - |u_\delta|) \\
        &\lesssim |\mu_p|(u_\delta,v) + A\epsilon |u|^{\alpha-2} \overline{r}^2.
    \end{align*}
    Dividing by $\overline{r}(u,v)^2$ and applying (\ref{e:id2}) gives the result.
    
    For $(u,v) \in \mathcal{B}^{(2),u_\delta}$ the integration is the same, and we simply note that the boundary term vanishes.
\end{proof}

\subsubsection{Closing the main bootstrap}
We now proceed to the main set of estimates. First we discuss the exceptional cases of $r_p, \phi_p$, for which bounds follow simply from the fundamental theorem of calculus.
\begin{lem}
For $\epsilon$ sufficiently small we have the estimates
\begin{align}
     \mathfrak{N}_{r_p} &\lesssim \mathfrak{N}_{\lambda_p}, \\[.5em]
     \mathfrak{N}_{\phi_p} &\lesssim \mathfrak{N}_{\partial_v(r\phi_p)}.
\end{align}
\end{lem}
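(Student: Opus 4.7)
The plan is to obtain both estimates by integrating in $v$ from the axis $\Gamma$, where all relevant quantities vanish as a consequence of the boundary conditions. Since $r|_\Gamma = \overline{r}|_\Gamma = 0$ we have $r_p|_\Gamma = 0$, and similarly $(r\phi)|_\Gamma = (\overline{r\phi})|_\Gamma = 0$ gives $(r\phi_p)|_\Gamma = 0$. Recall that $v_\Gamma(u) = -|u|^{q_1}$.

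For the first estimate, fix $(u,v) \in \mathcal{Q}^{(in),u_\delta}$ and integrate the identity $\partial_v r_p = \lambda_p$ from the axis endpoint $(u,-|u|^{q_1})$:
\begin{equation*}
    r_p(u,v) = \int_{-|u|^{q_1}}^{v} \lambda_p(u,v')\,dv'.
\end{equation*}
The definition of $\mathfrak{N}_{\lambda_p}$ gives $|\lambda_p| \leq \mathfrak{N}_{\lambda_p}|u|^{\alpha+\kappa_1}$, so $|r_p(u,v)| \leq \mathfrak{N}_{\lambda_p}|u|^{\alpha+\kappa_1}(v+|u|^{q_1})$. Integrating $\partial_v \overline{r} = \overline{\lambda}$ from the axis using $\overline{\lambda} \sim |u|^{\kappa_1}$ (assumption (A2)) produces the matching lower bound $\overline{r}(u,v) \sim |u|^{\kappa_1}(v+|u|^{q_1})$. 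The common factor $(v+|u|^{q_1})$ cancels after dividing, giving $|r_p|/(\overline{r}|u|^{\alpha}) \lesssim \mathfrak{N}_{\lambda_p}$, and taking the supremum yields the claim.

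The second estimate is structurally identical. Integrating $\partial_v(r\phi_p)$ from the axis,
\begin{equation*}
    (r\phi_p)(u,v) = \int_{-|u|^{q_1}}^{v}\partial_v(r\phi_p)(u,v')\,dv',
\end{equation*}
and applying $|\partial_v(r\phi_p)| \leq \mathfrak{N}_{\partial_v(r\phi_p)}|u|^{\alpha+\kappa_1}$ gives $|(r\phi_p)(u,v)| \leq \mathfrak{N}_{\partial_v(r\phi_p)}|u|^{\alpha+\kappa_1}(v+|u|^{q_1})$. Dividing by $r \sim \overline{r} \sim |u|^{\kappa_1}(v+|u|^{q_1})$ (using Lemma \ref{int:lem1} for the first equivalence) produces $|\phi_p(u,v)| \lesssim \mathfrak{N}_{\partial_v(r\phi_p)}|u|^{\alpha}$, i.e.\ $\mathfrak{N}_{\phi_p} \lesssim \mathfrak{N}_{\partial_v(r\phi_p)}$.

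No real obstacle arises — both bounds are just a fundamental theorem of calculus argument with matching geometric weights. The only subtlety worth flagging is the choice of integration endpoint: integrating from the data surface $\{v=0\}$ (where $r_p$ also vanishes) would yield $|r_p| \lesssim \mathfrak{N}_{\lambda_p}|u|^{\alpha+\kappa_1}|v|$, which fails to reproduce the $\overline{r}|u|^\alpha$ rate near $\Gamma$ because $|v|/\overline{r} \sim (|u|^{q_1}-|v|)^{-1}|v|$ degenerates at the axis. Integrating from the axis instead cancels this degeneracy against the identical factor appearing in the denominator $\overline{r}$.
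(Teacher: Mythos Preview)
Your proof is correct and follows essentially the same approach as the paper: both integrate $\partial_v r_p = \lambda_p$ and $\partial_v(r\phi_p)$ in $v$ from the axis $\Gamma$, use the pointwise bounds encoded in $\mathfrak{N}_{\lambda_p}$ and $\mathfrak{N}_{\partial_v(r\phi_p)}$, and convert the resulting factor $v - v_\Gamma(u) = v + |u|^{q_1}$ into $\overline{r}$ via $\overline{\lambda} \sim |u|^{\kappa_1}$. Your closing remark on why integrating from $\{v=0\}$ would fail near $\Gamma$ is a useful clarification not spelled out in the paper.
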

\begin{proof}
For all $(u,v) \in \mathcal{Q}^{(in),u_\delta}$, the past directed constant $u$ line intersects $\Gamma$ at some $(u,v_\Gamma(u)).$ Integrating $\partial_v r_p = \lambda_p$ from the axis, where $r_p(u,v_\Gamma(u)) = 0$, we conclude
    \begin{align*}
        |r_p|(u,v) &\leq \int_{(u,v_\Gamma(u))}^{(u,v)}|\lambda_p|(u,v')dv' \\
        &\leq \mathfrak{N}_{\lambda_p} |u|^{\alpha+\kappa_1} \int_{(u,v_\Gamma(u))}^{(u,v)} dv' \\
        &= \mathfrak{N}_{\lambda_p} |u|^{\alpha+\kappa_1}(v - v_\Gamma(u)) \\
        &\lesssim \mathfrak{N}_{\lambda_p} |u|^{\alpha}\overline{r}(u,v).
    \end{align*}
Dividing by $|u|^{\alpha}\overline{r}$ gives the desired result.
    
An analogous argument for $\phi_p$ proceeds by integrating $\partial_v(r\phi_p)$ from the axis, where boundary conditions require $r\phi_p(u,v_\Gamma(u)) =0$. Using the positive lower bound on $\lambda$ to exchange a factor of $v - v_\Gamma(u)$ for $r(u,v)|u|^{-\kappa_1}$, and dividing by $r$, we conclude a bound on $\phi_p$. 
\end{proof}

We next estimate those unknowns satisfying $v$ equations. In following with the outline sketched in Section \ref{subsec:introproofoutline}, the key is to conjugate by powers of $w \doteq (|u|^{q_1} + |v|)^{p_1}$ to gain good low order terms. Note $\mu$ satisfies both a $u$ and $v$ equation; in this section it is convenient to estimate $\mu$ via its $u$ equation.

\begin{lem}
For $\epsilon$ sufficiently small,
\begin{equation}
    \mathfrak{N}_{\nu_p},\  \mathfrak{N}_{\partial_u (r\phi_p)} \leq \frac{1}{10}A\epsilon.
\end{equation}
\end{lem}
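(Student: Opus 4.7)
The four quantities to estimate---$\nu_p$, $\partial_u\nu_p$, $\partial_u(r\phi_p)$, $\partial_u^2(r\phi_p)$---all satisfy $v$-transport equations, namely (\ref{PSSESF:2}), (\ref{PSSESF:8}) for the first two, and (\ref{PSSESF:6}), (\ref{PSSESF:12}) read as $v$-equations for $\partial_u(r\phi_p)$ and $\partial_u^2(r\phi_p)$ for the last two. The initial data on $\underline{\Sigma}_0$ is exceptionally favorable: the gauge choice $r_p(u,0) = 0$ forces $\nu_p(u,0) = \partial_u\nu_p(u,0) = 0$, while $\partial_u^i(r\phi_p)(u,0) = \epsilon\partial_u^{i-1}f_\delta(u)$ is of size $\lesssim \epsilon\mathcal{I}\,|u|^{\alpha-i+1}$---a full power of $|u|$ sharper than required by $\mathfrak{N}^{(in)}_{tot}$. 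The plan, following the outline of Section~\ref{subsec:introproofoutline}, is to exploit this favorable data together with integration \emph{away} from the singularity in the decreasing $v$ direction, via a conjugation that produces the crucial $1/\alpha$ saving.

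Concretely, for each quantity I would conjugate by $w^{-\beta}$, where $w \doteq (|u|^{q_1}+|v|)^{p_1}$ and $\beta = \alpha - k$ is chosen so that $w^{-\beta}$ reproduces the $|u|$-rate encoded in $\mathfrak{N}^{(in)}_{tot}$ (thus $k = 0$ for $\nu_p$, $k = 1$ for $\partial_u\nu_p$ and $\partial_u(r\phi_p)$, $k = 2$ for $\partial_u^2(r\phi_p)$). In $\mathcal{Q}^{(in),u_\delta}$ one has $w \sim |u|$, and setting $W \doteq |u|^{q_1}+|v|$,
\begin{equation*}
    \partial_v (\log w^{-\beta}) \;=\; \frac{\beta p_1}{W},
\end{equation*}
so each conjugated equation takes the schematic form
\begin{equation*}
    \partial_v X \;-\; \Big(\mathcal{G} + \tfrac{\beta p_1}{W}\Big) X \;=\; w^{-\beta} E,
\end{equation*}
with $\mathcal{G} \in \{\mathcal{G}_1,\mathcal{G}_5\}$. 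The consequences of the bootstrap (Lemmas~\ref{int:lem1}--\ref{lem:intmubounds}) give $|\mathcal{G}| \lesssim |u|^{\kappa_1-1}$, while $\beta p_1/W \sim \alpha\,|u|^{-q_1}$; so for $\alpha$ large depending on $\overline{\mathfrak{N}}_1$, the bracketed coefficient is strictly positive and Lemma~\ref{lem:integrationmain} applies in the decreasing $v$ direction from $\underline{\Sigma}_0$. Solving with the integrating factor $(W'/W)^{\beta p_1}$ and computing
\begin{equation*}
    W^{-\beta p_1}\int\limits_0^{|v|}(|u|^{q_1}+\tau')^{\beta p_1}\,d\tau' \;\lesssim\; \frac{W}{\beta p_1 + 1} \;\lesssim\; \frac{|u|^{q_1}}{\alpha}
\end{equation*}
converts a source $|E| \lesssim A\epsilon\,|u|^{\alpha-i+\kappa_1-1}$ (which, after weighting, is of size $A\epsilon\,|u|^{\kappa_1-1}$) into a contribution to $|X|$ of order $A\epsilon/\alpha$, owing to $q_1+\kappa_1-1 = 0$.

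The source terms themselves are estimated from the bootstrap together with Lemma~\ref{lem:intdphibounds} and Lemma~\ref{lem:intmubounds}; the bound $|\mu_p/r^2| \lesssim A\epsilon\,|u|^{\alpha-2}$ is essential to neutralize the apparent $r^{-1}$ singularities appearing in $\mathcal{G}_1$, $\mathcal{G}_5$, by rewriting $\mu/r = (\mu/r^2)\cdot r$, and analogously for $\overline{\mu}/r$. The estimates are closed sequentially: first $\nu_p$ and $\partial_u(r\phi_p)$ via (\ref{PSSESF:2}), (\ref{PSSESF:6}); then $\partial_u\nu_p$ and $\partial_u^2(r\phi_p)$ via the commuted equations (\ref{PSSESF:8}), (\ref{PSSESF:12}), whose source terms now couple back only to the already-controlled quantities $\nu_p$, $(r\phi_p)$, $\partial_u(r\phi_p)$. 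Adding the initial-data contribution $\lesssim \epsilon\mathcal{I}$ to the transport contribution $A\epsilon/\alpha$ yields the pointwise bound
\begin{equation*}
    \mathfrak{N}_{\nu_p} + \mathfrak{N}_{\partial_u(r\phi_p)} \;\lesssim\; \epsilon\mathcal{I} \,+\, \tfrac{A\epsilon}{\alpha},
\end{equation*}
which for $\alpha$ large (depending on $\overline{\mathfrak{N}}_1$) and $\epsilon$ small (depending on $\mathcal{I}$, $\overline{\mathfrak{N}}_1$) improves the bootstrap to $\tfrac{1}{10}A\epsilon$.

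The main technical obstacle is the commuted source terms $\partial_u\mathcal{G}_1 \cdot \nu_p$ in (\ref{PSSESF:8}) and $\partial_u\mathcal{G}_5 \cdot (r\phi_p)$ in (\ref{PSSESF:12}), which contain factors like $\partial_u\mu/r$ that are not directly captured by $\mathfrak{N}^{(in)}_{tot}$. I would handle these by expanding $\partial_u\mu$ via (\ref{SSESF:2:7}), separating the tame pieces (controlled by $\partial_u\overline{\phi}$ and the bootstrap on $\partial_u\phi_p$) from the $\mu\nu/r$ piece, and then applying Lemma~\ref{lem:intmubounds} together with the axis-regularity framework of Section~\ref{subsubsec:axisintlemmas} to recast all remaining $r^{-1}$ denominators in terms of quantities vanishing at the right rate on $\Gamma$. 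Beyond this coupling, the scheme is otherwise straightforward: the transport structure is classical, and the $1/\alpha$ gain from the backwards integration is more than enough to close the bootstrap.
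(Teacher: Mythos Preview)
Your approach is essentially the paper's: conjugate the $v$-transport equations by $w^{-\beta}$, exploit the resulting favorable lower-order term of size $\sim \alpha/(|u|^{q_1}+|v|)$ from the conjugation, and extract a $1/\alpha$ factor when integrating backward in $v$ from $\underline{\Sigma}_0$. The paper implements this via an energy estimate (contract with the weighted unknown, integrate, split the source by Cauchy--Schwarz with parameter $\alpha$, absorb the bulk term) whereas you solve directly with the integrating factor; both routes are valid and give the same gain.

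One correction worth noting: your schematic places $\mathcal{G}_5$ as a zeroth-order coefficient in the equations for $\partial_u(r\phi_p)$ and $\partial_u^2(r\phi_p)$, but in (\ref{PSSESF:6}) and (\ref{PSSESF:12}) the factor $\mathcal{G}_5$ multiplies $r\phi_p$ and $\partial_u(r\phi_p)$ respectively, \emph{not} the unknown being transported. So in those two cases the unconjugated equation has no zeroth-order term at all, and the conjugation weight alone supplies the good sign. This matters because your claimed bound $|\mathcal{G}_5|\lesssim |u|^{\kappa_1-1}$ is off by a full power of $|u|$ (in fact $|\mathcal{G}_5|\lesssim |u|^{\kappa_1-2}$ from $\mu/r^2 \lesssim |u|^{-2}$), so the ``$\alpha$ large dominates $\mathcal{G}$'' step as written would fail for $\mathcal{G}_5$; fortunately it is simply not needed. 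Relatedly, a bare appeal to Lemma~\ref{lem:integrationmain} would drop the good lower-order term entirely and lose the $1/\alpha$; it is your subsequent explicit integrating-factor computation that actually delivers the gain, and that computation is correct.
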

\begin{proof}
Start by estimating the right hand side of (\ref{PSSESF:2}). Inserting the bootstrap assumptions, and choosing $\epsilon$ small to absorb terms of $O(\epsilon^2)$ gives
\begin{equation*}
    |\mathcal{G}_1 \nu_p|, \ |(\mathcal{G}_1)_p \overline{\nu}| \lesssim  A\epsilon |u|^{\alpha-1+\kappa_1} \lesssim A\epsilon(|u|^{q_1}+|v|)^{-1+ \alpha p_1}.
\end{equation*}
It follows that $\partial_v \nu_p = O(A \epsilon (|u|^{q_1}+|v|)^{-1+ \alpha p_1})$. Conjugate through by the weight $w^{-\alpha},$ giving
\begin{equation*}
    \partial_v (w^{-\alpha} \nu_p) - \frac{p_1\alpha}{|u|^{q_1}+|v|} (w^{-\alpha} \nu_p) = O(A\epsilon(|u|^{q_1}+|v|)^{-1}).
\end{equation*}
Contracting by $w^{-\alpha}\nu_p$ and integrating backwards in $v$ from data gives
\begin{align*}
    |w^{-\alpha} \nu_p|^2(u,v) &+ p_1\alpha \int\limits_{(u,v)}^{(u,0)}\frac{1}{|u|^{q_1}+|v'|}(w^{-\alpha}\nu_p)^2(u,v')dv' \lesssim O(A\epsilon)\int\limits_{(u,v)}^{(u,0)}\frac{1}{|u|^{q_1}+|v'|}w^{-\alpha}|\nu_p|dv' \nonumber\\
    &\lesssim \frac{p_1\alpha}{2} \int\limits_{(u,v)}^{(u,0)}\frac{1}{|u|^{q_1}+|v'|}(w^{-\alpha}\nu_p)^2(u,v')dv' + \alpha^{-1} O(A^2\epsilon^2) \int\limits_{(u,v)}^{(u,0)}\frac{1}{|u|^{q_1}+|v'|}dv' \nonumber\\
    &\lesssim \frac{p_1\alpha}{2} \int\limits_{(u,v)}^{(u,0)}\frac{1}{|u|^{q_1}+|v'|}(w^{-\alpha}\nu_p)^2(u,v')dv' + \alpha^{-1} O(A^2\epsilon^2).
\end{align*}
In the above, we have used that $\nu_p(u,0) = 0$ to drop the initial data term. After absorbing the remaining integral into the left hand side, we arrive at an estimate for $w^{-\alpha}\nu_p$. Since $w \sim |u|$, the bound on $w^{-\alpha} \nu_p$ translates to a bound on $|u|^{-\alpha} \nu_p$. We finally choose $\alpha$ sufficiently large to improve the bootstrap assumption.

To estimate $\partial_u \nu_p$, consult the $u$-commuted equation (\ref{PSSESF:8}). The main new terms to estimate are of the following form:
\begin{enumerate}
    \item Terms $\frac{\mu}{r^2}$ and $\frac{\mu_p}{r^2}$, which have been estimated in Lemma \ref{lem:intmubounds}.
    \item Derivatives $\partial_u \mu, \partial_u \lambda, \partial_u \mu_p, \partial_u \lambda_p$ which may be estimated by the respective transport equations satisfied by these quantities. Note the quantity $\partial_u \phi_p$ appearing in the equation for $\partial_u \mu_p$ has already been estimated in Lemma \ref{lem:intdphibounds}.
\end{enumerate}
The result is an estimate 
\begin{equation*}
    |\partial_v \partial_u\nu_p| \lesssim A\epsilon w^{\alpha-2+\kappa_1} \
\end{equation*}
Conjugating by $w^{-(\alpha-1)}$ and proceeding as for $\nu_p$ gives the desired improvement. Note $\partial_u \nu_p(u,0) = 0$, and hence the data term vanishes.

We next estimate $\partial_u (r\phi_p)$ using the wave equation (\ref{PSSESF:6}). Inserting the bootstrap assumptions and bounds on the derivatives of the background scalar field yields
\begin{equation*}
    |\partial_v \partial_u (r\phi_p)| \lesssim A\epsilon w^{\alpha-1+\kappa_1}.
\end{equation*}
Conjugating by $w^{-\alpha}$, integrating in $v$, and proceeding as for $\nu_p$ gives
\begin{equation*}
    |w^{-\alpha} \partial_u(r\phi_p)|^2(u,v) \lesssim |w^{-\alpha }\partial_u(r\phi_p)|^2(u,0) + \alpha^{-1} O(A^2\epsilon^2),
\end{equation*}
where the data term may be estimated by $\epsilon^2 \mathcal{I}^2$. It follows that for $\alpha$ large we again improve the bootstrap assumption.

Finally we consider $\partial_u^2(r\phi_p)$. Estimating the terms appearing in (\ref{PSSESF:12}) yields
\begin{equation*}
    |\partial_v \partial_u^2(r\phi_p)| \lesssim A\epsilon w^{\alpha-2+\kappa_1}.
\end{equation*}
It follows that we can conjugate by $w^{-(\alpha-1)}$ and integrate in $v$, giving the desired estimate.
\end{proof}

It remains to estimate the quantities satisfying $u$ equations. As mentioned in the proof outline, integration in $u$ must account for boundary terms appearing along the axis.

\begin{lem}
    \label{lem:inttheonlylemmareferredto}
For $\epsilon$ sufficiently small,
\begin{equation}
    \mathfrak{N}_{\lambda_p},\  \mathfrak{N}_{\mu_p}, \ \mathfrak{N}_{\partial_v (r\phi_p)} \leq \frac{1}{10}A\epsilon.
\end{equation}
\end{lem}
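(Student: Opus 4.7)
The remaining unknowns $\lambda_p$, $\mu_p$, $\partial_v(r\phi_p)$ and their commutators $\partial_v\lambda_p, \partial_v^2(r\phi_p)$ all satisfy $u$-transport equations, namely (\ref{PSSESF:1}), (\ref{PSSESF:3}), (\ref{PSSESF:6}), (\ref{PSSESF:7}), and (\ref{PSSESF:11}). The plan is to integrate each equation in the direction of decreasing $u$ starting from the boundary $\mathcal{L} = \Gamma \cup \Sigma_{u_\delta}$, carrying out the estimates in the order $\mu_p$, then $\lambda_p$ and $\partial_v\lambda_p$, then $\partial_v(r\phi_p)$ and $\partial_v^2(r\phi_p)$, so that at each stage the bootstrap quantities already improved in the previous lemma and at previous steps of this one are available on the right-hand side.

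\textbf{Boundary terms.} Along $\Sigma_{u_\delta}$, boundary values are directly controlled by the initial data norm $\mathcal{I}$; the scaling factor $|u_\delta|^{-\alpha-\text{wt}}$ appearing in (\ref{eq:intinitialdatanorm}) is absorbed without loss using $|u_\delta| \leq |u|$. Along $\Gamma$, we invoke the axis boundary conditions (\ref{eq:axisbc1})--(\ref{eq:axisbc2}) to convert transversal quantities into tangential ones that have already been bounded: $\lambda + p_1|u|^{\kappa_1}\nu = 0$ on $\Gamma$ gives $|\lambda_p|_\Gamma| \lesssim |u|^{\kappa_1}|\nu_p|_\Gamma| \lesssim A\epsilon|u|^{\alpha+\kappa_1}$ via the improved bound on $\nu_p$, and the analogous relation for $r\phi$ converts $\partial_v(r\phi_p)|_\Gamma$ into a quantity controlled by $\mathfrak{N}_{\partial_u(r\phi_p)}$. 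The regularity of the center together with the stronger asymptotic $\overline{m} = O(\overline{r}^3)$ gives $\mu_p|_\Gamma = 0$. Commuted quantities are handled similarly, either by differentiating the axis boundary conditions or by using the near-axis averaging arguments of Section \ref{subsubsec:axisintlemmas} applied to $\partial_u\nu_p$ and $\partial_u^2(r\phi_p)$.

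\textbf{Main estimates and obstacle.} The $\mu_p$ bound is essentially a refinement of Lemma \ref{lem:intmubounds}: the error terms are of size $A\epsilon|u|^{\alpha-2}\overline{r}$, and the integration $\int_u^{u_\mathcal{L}(v)}|u'|^{\alpha-2}\,du' \leq |u|^{\alpha-1}/(\alpha-1)$ supplies the $1/\alpha$ smallness needed to improve the bootstrap constant from $2A$ down to $A/10$. For $\lambda_p$ and $\partial_v(r\phi_p)$, direct integration is insufficient, since the self-coupling terms $\mathcal{G}_2 \lambda_p$ and $\mathcal{G}_5(r\phi_p)$ contribute coefficients of order $|u|^{-1}$ with the wrong sign for backwards integration. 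We therefore conjugate (\ref{PSSESF:1}) and (\ref{PSSESF:6}) by $|u|^{-(\alpha+\kappa_1)}$, yielding schematic equations
\begin{equation*}
    \partial_u \widetilde{\Psi} \;-\; \frac{\alpha+\kappa_1+O(1)}{|u|}\widetilde{\Psi} \;=\; O\!\left(\tfrac{A\epsilon}{|u|}\right),
\end{equation*}
where the $O(1)$ term collects both the $\mathcal{G}_{?}$ coefficients (bounded by Lemmas \ref{int:lem1} and \ref{lem:intmubounds}) and the bootstrap perturbation. For $\alpha$ large enough relative to $\overline{\mathfrak{N}}_1$, the bracketed coefficient is strictly negative; pairing the equation with $\widetilde{\Psi}$, integrating in $u$ from $\mathcal{L}$, and applying Cauchy--Schwarz exactly as in the $\nu_p$ estimate above produces the improvement $\mathfrak{N} \leq A\epsilon/10$, modulo a residual $\alpha^{-1}O(A^2\epsilon^2 \log(|u|/|u_\delta|))$ that is finite for each fixed $u_\delta$ and harmless once $\alpha$ is chosen large. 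The commuted quantities $\partial_v\lambda_p$ and $\partial_v^2(r\phi_p)$ are treated analogously via (\ref{PSSESF:7}) and (\ref{PSSESF:11}); the potentially singular background quantities $\partial_v\overline{\lambda}, \partial_v^2\overline{\phi}$ are tamed by the built-in weights $|s_1|^a, |s_1|^b$ in the norms, and since $|s_1| \leq 1$ throughout $\mathcal{Q}^{(in)}$ these extra weights introduce no new sign issues in the conjugation. The core obstacle is precisely the marginal character of the bad zeroth-order terms: the singular coefficient saturates the $|u|^{-1}$ scale of the conjugation weight, and the scheme closes only thanks to the freedom to take $\alpha$ as large as $\overline{\mathfrak{N}}_1$ demands, which is in turn made possible by integrating away from the singular point.
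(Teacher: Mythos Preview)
Your overall strategy matches the paper's: conjugate the $u$-transport equations by a large weight to manufacture a dominant good zeroth-order term, contract, integrate from $\mathcal{L}$, and use the already-improved bounds on $\nu_p$, $\partial_u\nu_p$, $\partial_u(r\phi_p)$, $\partial_u^2(r\phi_p)$ to control axis boundary terms via the relations $(\partial_v+p_1|u|^{\kappa_1}\partial_u)^j(r\phi_p)|_\Gamma=0$.

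There is, however, a genuine gap in your handling of the residual error. You write that the leftover $\alpha^{-1}O(A^2\epsilon^2\log(|u|/|u_\delta|))$ is ``finite for each fixed $u_\delta$ and harmless once $\alpha$ is chosen large.'' This is not correct: the whole point of Proposition \ref{approxint:mainboot} is that $\alpha$ depends only on the background norm $\overline{\mathfrak{N}}_1$, \emph{not} on $u_\delta$, so that the approximate interiors carry bounds uniform in $u_\delta$ and a limit can be taken in Section \ref{sec:limitingsolution}. But $\sup_{\mathcal{Q}^{(in),u_\delta}}\log(|u|/|u_\delta|)=\log(1/|u_\delta|)\to\infty$ as $u_\delta\to 0$, so no fixed $\alpha$ absorbs this. (Contrast the $\nu_p$ estimate, where the analogous $v$-integral $\int_v^0 (|u|^{q_1}+|v'|)^{-1}dv'\leq \log 2$ because $|v|\leq |u|^{q_1}$ in $\mathcal{Q}^{(in)}$; the $u$-direction has no such built-in cutoff.)

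The paper cures this by conjugating not by $|u|^{-(\alpha+\kappa_1)}$ but by $|u|^{-(\alpha+\kappa_1)+\sigma}$ for a small fixed $\sigma>0$. The error then becomes $O(A\epsilon|u|^{-1+\sigma})$, the Cauchy--Schwarz residual is $\alpha^{-1}O(A^2\epsilon^2)(|u|^{2\sigma}-|u_{\mathcal{L}}|^{2\sigma})$, and after dividing by $|u|^{2\sigma}$ and using $|u_{\mathcal{L}}|\leq|u|$ one gets a genuinely uniform bound $\alpha^{-1}O(A^2\epsilon^2)$. The same $\sigma$-device is applied to $\partial_v\lambda_p$, $\partial_v(r\phi_p)$, and $\partial_v^2(r\phi_p)$. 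Your $\mu_p$ argument, by contrast, does not need this trick: the zeroth-order coefficient $-\mathcal{G}_4$ already has the good sign for backward integration, so no conjugation is required and the $\alpha^{-1}$ gain comes directly from $\int |u'|^{\alpha-2}\overline{r}\,du'\lesssim |u|^{\alpha-1}\overline{r}/\alpha$.
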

\begin{proof}
Start by estimating $(\ref{PSSESF:1})$. Inserting the bootstrap assumptions, and choosing $\epsilon$ small to absorb terms of $O(\epsilon^2)$ gives
\begin{equation*}
    |\mathcal{G}_2 \lambda_p|, \ |(\mathcal{G}_2)_p \overline{\lambda}| \lesssim A \epsilon |u|^{\alpha-1 + \kappa_1}.
\end{equation*}
It follows that $\partial_u \lambda_p = O(A \epsilon |u|^{\alpha-1+\kappa_1})$. Choose a $0 < \sigma  \ll 1$ small, and conjugate through by $|u|^{-\alpha-\kappa_1+\sigma},$ giving 
\begin{equation*}
    \partial_u(|u|^{-\alpha-\kappa_1 +\sigma} \lambda_p) - \frac{\alpha+\kappa_1-\sigma}{|u|}(|u|^{-\alpha-\kappa_1 +\sigma} \lambda_p) = O(A\epsilon |u|^{-1+\sigma}).
\end{equation*}
Contracting with $|u|^{-\alpha-\kappa_1 +\sigma} \lambda_p$ and integrating in $u$ yields two cases. For $(u,v) \in \mathcal{B}^{(1),u_\delta}$ the future directed constant $v$ line through $(u,v)$ intersects the boundary of the interior region at $\{u = u_\delta\}$, where data for $\lambda_p$ is explicitly specified. For $(u,v) \in \mathcal{B}^{(2),u_\delta}$, the boundary term is along the axis, and here we only have the boundary condition $\lambda_p = -p_1 |u_\Gamma(v)|^{\kappa_1}\nu_p$. 

For $(u,v) \in \mathcal{B}^{(1),u_\delta}$ we arrive at the estimate
\begin{align*}
    ||u|^{-\alpha-\kappa_1 +\sigma}\lambda_p|^2(u,v) &+ (\alpha+\kappa_1-\sigma) \int\limits_{(u,v)}^{(u_\delta,v)}\frac{1}{|u'|}(|u'|^{-\alpha-\kappa_1 +\sigma}\lambda_p)^2(u',v)du' \nonumber \\
    &\lesssim ||u_\delta|^{-\alpha-\kappa_1 +\sigma}\lambda_p|^2(u_\delta,v)+ O(A\epsilon)\int\limits_{(u,v)}^{(u_\delta,v)}\frac{1}{|u'|^{1-\sigma}}|u'|^{-\alpha-\kappa_1 +\sigma}|\lambda_p|(u',v)du' \nonumber\\
   &\lesssim \epsilon^2|u_\delta|^{2\sigma}\mathcal{I}^2 + \frac{\alpha}{2}\int\limits_{(u,v)}^{(u_\delta,v)}\frac{1}{|u'|}(|u'|^{-\alpha-\kappa_1 +\sigma}\lambda_p)^2(u',v) + \alpha^{-1}O(A^2\epsilon^2)(|u|^{2\sigma}-|u_\delta|^{2\sigma}).
\end{align*}
Absorbing the integral expression into the left hand side, we can then divide by $|u|^{2\sigma}$ to get the desired improvement. Note $\frac{|u_\delta|^{2\sigma}}{|u|^{2\sigma}} \leq 1$, and hence the data term remains regular.

For $(u,v) \in \mathcal{B}^{(2),u_\delta}$ a similar procedure yields
\begin{align*}
    ||u|^{-\alpha-\kappa_1 +\sigma}\lambda_p|^2(u,v) \lesssim ||u_\Gamma(v)|^{-\alpha-\kappa_1 +\sigma}\lambda_p|^2(u_\Gamma(v),v) + \alpha^{-1}O(A^2\epsilon^2)(|u|^{2\sigma}-|u_\Gamma(v)|^{2\sigma}).
\end{align*}
Boundary conditions imply
\begin{equation*}
    ||u_\Gamma(v)|^{-\alpha-\kappa_1 +\sigma}\lambda_p|^2(u_\Gamma(v),v) = p_1^2 ||u_\Gamma(v)|^{-\alpha+\sigma}\nu_p|^2(u_\Gamma(v),v),
\end{equation*}
and we have already improved the estimate for the $\nu_p$. Inserting the improved bound for $\nu_p$ allows one to complete the estimate for $\lambda_p$.

Now turn to $\partial_v \lambda_p$. The starting point is the commuted equation (\ref{PSSESF:7}). Inserting bootstrap estimates and choosing $\epsilon$ small implies
\begin{equation*}
    |\partial_u \partial_v \lambda_p| \lesssim A\epsilon |u|^{\alpha-2+2\kappa_1}\Big(\frac{|u|^{q_1}}{|v|} \Big)^a.
\end{equation*}
The primary difference for the estimates on $\partial_v \lambda_p$ and $\partial_v^2(r\phi_p)$ is the added singular $v$ weights as $v \rightarrow 0$, arising from the behavior of $\partial_v \overline{\lambda}$.

Define $w_1 = |u|^{-(\alpha-1+2\kappa_1)}\Big(\frac{|v|}{|u|^{q_1}}\Big)^a$ and conjugate (\ref{PSSESF:7}) by $w_1 |u|^{\sigma},$ giving 
\begin{equation*}
    \partial_u (w_1 |u|^{\sigma} \partial_v \lambda_p) - \frac{\alpha-1+2\kappa_1+q_1a-\sigma}{|u|}(w_1 |u|^{\sigma} \partial_v \lambda_p) = O(A\epsilon |u|^{-1+\sigma}).
\end{equation*}
Assume $\alpha$ is chosen large enough so that the zeroth order coefficient is negative. Fix $(u,v) \in \mathcal{B}^{(1),u_\delta}$. Contracting with $w_1 |u|^{\sigma}\partial_v \lambda_p$, integrating in $u$, and absorbing terms as above gives the estimate 
\begin{align*}
    |w_1 |u|^{\sigma}\partial_v \lambda_p|^2(u,v) \lesssim |w_1 |u_\delta|^{\sigma}\partial_v \lambda_p|^2(u_\delta,v) + \alpha^{-1}O(A^2\epsilon^2)|u|^{2\sigma}.
\end{align*}
The decaying $v$ weight built into $w_1$ ensures the data term is bounded by $\epsilon^2 |u_\delta|^{2\sigma} \mathcal{I}$. Therefore it is enough to divide by $|u|^{2\sigma}$ and choose $\alpha$ large in order to improve the bound on $\partial_v \lambda_p$.

For $(u,v) \in \mathcal{B}^{(2),u_\delta}$, integration gives the boundary term $|w_1|u_\Gamma(v)|^{\sigma}\partial_v\lambda_p|^2(u_\Gamma(v),v)$. Boundary conditions imply 
\begin{equation*}
    (\partial_u + p_1|u_\Gamma(v)|^{\kappa_1}\partial_v)^2 r_p(u_\Gamma(v),v) = 0,
\end{equation*}
which after expanding gives a relation for $\partial_v \lambda_p$ along the axis in terms of lower order quantities and $\partial_u \nu_p$. These quantities have all been estimated in the proof already, and it follows that 
\begin{equation*}
    |w_1|u_\Gamma(v)|^{\sigma}\partial_v\lambda_p|^2(u_\Gamma(v),v) \lesssim \epsilon^2 |u|^{2\sigma}.
\end{equation*}
The boundary term therefore does not pose a problem, and the argument proceeds as above.

We next consider $\partial_v(r\phi_p)$. Estimating (\ref{PSSESF:6}) gives
\begin{equation*}
    |\partial_u \partial_v (r\phi_p)| \lesssim A\epsilon |u|^{\alpha-1+\kappa_1}.
\end{equation*}
Conjugating through by $|u|^{-\alpha-\kappa_1+\sigma}$ we arrive at a schematic equation of the same form as that of $\lambda_p$. For $(u,v) \in \mathcal{B}^{(1),u_\delta}$, directly integrating in $u$ and using that the data term along $\{u = u_\delta\}$ is bounded by hypothesis, we conclude the estimate.

For $(u,v) \in \mathcal{B}^{(2),u_\delta}$, the boundary term lies on the axis. Boundary conditions imply $\partial_v(r\phi_p)(u_\Gamma(v),v) = -p_1 |u_\Gamma(v)|^{\kappa_1}\partial_u(r\phi_p)(u_\Gamma(v),v)$, with the latter having already been estimated. The desired estimate follows as above.

It remains to consider $\partial_v^2(r\phi_p)$, which includes a singular $v$ weight in addition to the decaying $|u|$ weights. Estimating the commuted wave equation (\ref{PSSESF:11}) gives
\begin{equation*}
    |\partial_u \partial_v^2(r\phi_p)| \lesssim A\epsilon |u|^{\alpha-2+2\kappa_1}\Big(\frac{|u|^{q_1}}{|v|} \Big)^b.
\end{equation*}
We have used that $a \leq b$ to ensure that the above bound reflects the worst possible behavior as $v \rightarrow 0$. Define $w_2 = |u|^{-\alpha+1-2\kappa_1}\Big(\frac{|v|}{|u|^{q_1}} \Big)^{b}$. Conjugate through by $w_2 |u|^{\sigma}$, giving an equation of the form
\begin{equation*}
    \partial_u (w_2 |u|^{\sigma}\partial_v^2(r\phi_p)) - \frac{\alpha-1+2\kappa_1+bq_1-\sigma}{|u|}(w_2 |u|^{\sigma}\partial_v^2(r\phi_p)) = O(A\epsilon|u|^{-1+\sigma}).
\end{equation*}
For $(u,v) \in \mathcal{B}^{(1),u_\delta}$ it is therefore sufficient to contract with $w_2|u|^{\sigma}\partial_v^2(r\phi_p)$ and integrate in $u$, using bounds on the data along $\{u = u_\delta\}$.

In analogy with $\partial_v \lambda_p$, the boundary terms along $\Gamma$ may be dealt with using the conditions 
\begin{equation*}
    (\partial_u + p_1|u_\Gamma(v)|^{\kappa_1}\partial_v)^2 (r\phi_p)(u_\Gamma(v),v) = 0.
\end{equation*}
By expanding it follows that $\partial_v^2(r\phi_p)(u_\Gamma(v),v)$ may be expressed in terms of $\partial_u^2(r\phi_p)$ and lower order quantities, all of which have already been estimated. It is therefore enough to integrate the conjugated equation in $u$ for $(u,v) \in \mathcal{B}^{(2),u_\delta}$, and choose $\alpha$ large to improve the bootstrap assumption.

\end{proof}

\subsubsection{Constructing the approximate interior solution}
It follows from Proposition \ref{approxint:mainprop} that for fixed $u_\delta$, given data for the perturbation along $\{u=u_\delta\} \cup \{v = 0, \ u \leq u_\delta\}$, there exists $\epsilon$ small and a regular perturbation of the $(\overline{g},\overline{\phi})$ background in the region $\mathcal{Q}^{(in),u_\delta}.$ We summarize some key features and limitations of the argument here:
\begin{enumerate}
    \item The necessary smallness in $\epsilon$ depends on the initial data only through $\mathcal{I}^{(in)}_\alpha$.
    \item The region $\mathcal{Q}^{(in),u_\delta}$ extends all the way to $\{u = -1\}$, independently of $u_\delta$.
    \item The solution is quantitatively regular at the axis. In fact, the solution inherits all the bounds appearing in the background norm $\overline{\mathfrak{N}}_1$, with the caveat that instead of $\phi \in C^2(\mathcal{Q}^{(in),u_\delta} \setminus \{v=0\})$, we show only that $r\phi \in C^2(\mathcal{Q}^{(in),u_\delta}\setminus \{v=0\})$.
\end{enumerate}
The idea of this section is to specialize this construction in order to define the approximants $(r^{u_\delta}, m^{u_\delta}, \phi^{u_\delta})$. These approximants will be used in the next section to take a limit as $u_\delta \rightarrow 0$.

Recall the desired data for the scalar field along $\{v = 0\}$ is 
\begin{equation*}
    \partial_u(r\phi)(u,0) = \partial_u(\overline{r\phi})(u,0) + \epsilon f_0(u) ,
\end{equation*}
with the gauge condition
\begin{equation*}
    r(u,0) = \overline{r}(u,0).
\end{equation*}
Assume $f_0(u) \in C^2([-1,0])$, and for any $u_\delta < 0$ small define the cutoff piece of data
\begin{equation}
    \label{dfn:cutoffinitialdata}
    f_\delta(u) \doteq \chi_{u_\delta}(u)f_0(u).
\end{equation}
Here $\chi_{u_\delta}(u)$ is a smooth cutoff function supported on $[-1,0]$ equal to $1$ on $[-1,u_\delta(1+\sigma)]$ for some small $0 <\sigma \ll 1$, and equal to zero on $(u_\delta(1+ \frac{1}{2}\sigma),0].$ In particular, we can arrange to have the estimates $|\partial^i \chi_{u_\delta}| \lesssim |u_\delta|^{-i}.$

For a given $u_\delta$ pose the data 
\begin{align}
   \label{eq:dataapproxint1}
    r_p(u,0) &= 0, \ \ \ \partial_u(r\phi_p)(u,0) = \epsilon f_\delta(u), \\[.3em]
    r_p(u_\delta,v)  &= 0, \ \ \ \partial_v(r\phi_p)(u_\delta,v) = 0.\label{eq:dataapproxint2}
\end{align}
The initial data norm corresponding to $f_0(u)$ is given by $\mathcal{E}_{1,\alpha}$ (cf. (\ref{eq:intdatanormE1})). Similarly, let $\mathcal{E}^{\delta}_{1,\alpha}$ denote the value of the same norm computed on $f_\delta$. One can estimate that 
\begin{equation*}
    \mathcal{E}^{\delta}_{1,\alpha} \lesssim \mathcal{E}_{1,\alpha},
\end{equation*}
holding independently of $u_\delta$. By Proposition \ref{approxint:mainprop}, for $\epsilon$ small depending on $\mathcal{E}_{1,\alpha}$ there exists a solution to the scalar field system in the region $\mathcal{Q}^{(in),u_\delta}$ achieving the data.  The differences $\Psi_p$ satisfy bounds
\begin{equation}
\label{eq:seqeuncebounds}
\mathfrak{N}^{(in)}_{(tot)} \lesssim  \mathcal{E}_{1,\alpha},
\end{equation}
where $\mathfrak{N}^{(in)}_{(tot)}$ is the value of the norm (\ref{eq:inttotalspacetimenorm}) applied to the $\Psi_p$. Label the resulting solution $$(r^{u_\delta}, m^{u_\delta}, \phi^{u_\delta}) \doteq (\overline{r} + r_p, \overline{m} + m_p, \overline{\phi} + \phi_p).$$
It remains to extend this solution on $\mathcal{Q}^{in} \setminus \mathcal{Q}^{(in),u_\delta}.$

By construction, along $\{u = u_\delta\} \cup \{v = 0, u \in [u_\delta(1+\frac{1}{2}\sigma),u_\delta] \}$ the data for the perturbation vanishes. A domain of dependence (and uniqueness in BV) argument implies that in the region $\mathcal{Q}^{(in),u_\delta} \cap \{u \geq  u_\delta(1+\frac{1}{2}\sigma)\}$, we have 
$$(r^{u_\delta}, m^{u_\delta}, \phi^{u_\delta}) = (\overline{r},\overline{m},\overline{\phi}). $$
Therefore it is enough to define the solution in $\mathcal{Q}^{in} \setminus \mathcal{Q}^{(in),u_\delta}$ to be identically equal to the value of the background in the same gauge. This extension is globally BV away from the singular point (in addition to the improved regularity near the axis, and pointwise bounds). 

From here on this extended solution on $\mathcal{Q}^{(in)}$ is denoted $(r^{u_\delta}, m^{u_\delta}, \phi^{u_\delta}).$ It follows that (\ref{eq:seqeuncebounds}) holds for the extended solution, where the pointwise bounds are measured over the domain $\mathcal{Q}^{(in)}$.

\subsubsection{Auxiliary higher order estimates}
\label{subsubsec:thirdorderestimates}
Before considering the $u_\delta \rightarrow 0$ limit, we prove some higher order estimates on the approximate interior solutions. The need for such estimates is motivated by the desire to treat $(r^{u_\delta},m^{u_\delta},\phi^{u_\delta})$ as a \textit{new} admissible interior solution, around which we can consider perturbations. 

Recall that the bounds on the background solution needed for the argument of Proposition \ref{approxint:mainprop} are exactly those contained in $\overline{\mathfrak{N}}_1$. Comparing $\overline{\mathfrak{N}}_1$ and $\mathfrak{N}^{(in)}_{tot},$ it is evident that we fail to control $\partial_u^2 \phi^{u_\delta}_p, \ \partial_v^2 \phi^{u_\delta}_p$, and therefore we have not shown enough regularity on $(r^{u_\delta}, \mu^{u_\delta}, \phi^{u_\delta})$ for it to satisfy the requirements of an admissible interior solution.

We turn now to estimating $\partial_u^2 \phi_p$ and $\partial_v^2 \phi_p$. Appealing to Lemma \ref{lem:axisaverage2} suggests that near the axis, such control will only come by considering the third order quantities 
$$\partial_u^3(r\phi_p), \ \ \partial_v^3(r\phi_p), \ \ \partial_u^2 \nu_p, \ \ \partial_v^2 \lambda_p.$$

Fix $u_\delta <0$, and consider a solution $(r^{u_\delta}, \mu^{u_\delta}, \phi^{u_\delta})$ in $\mathcal{Q}^{(in),u_\delta}$ with data described by (\ref{eq:dataapproxint1})-(\ref{eq:dataapproxint2}). For simplicity, we drop the superscripts from the solution variables. 

We first state the bounds on the background solution needed here. Let 
\begin{equation}
    \overline{\mathfrak{N}}_2 \doteq \sup_{\mathcal{Q}^{(in),u_\delta}}\big||u|^3 \partial_u^3 \overline{\phi} \big| +  \sup_{\mathcal{Q}^{(in),u_\delta}}\big||u|^2 \partial_u^2 \overline{\nu} \big| +  \sup_{\mathcal{S}_{far}}\big||u|^{3-3\kappa_1}\partial_v^3 \overline{\phi} \big| +  \sup_{\mathcal{S}_{far}}\big||u|^{2-3\kappa_1}\partial_v^2 \overline{\lambda} \big| ,
\end{equation}
and define the norms
\begin{align}
    \mathfrak{N}_{\partial_u^3(r\phi_p)} &\doteq \sup_{\mathcal{Q}^{(in),u_\delta}} \Big|\frac{1}{|u|^{\alpha-2}}\partial_u^3(r\phi_p) \Big|, \ \ \ \
    \mathfrak{N}_{\partial_u^2 \nu_p} \doteq \sup_{\mathcal{Q}^{(in),u_\delta}} \Big|\frac{1}{|u|^{\alpha-2}} \partial_u^2 \nu_p \Big|, \\[.3em]
     \mathfrak{N}_{\partial_v^3(r\phi_p)} &\doteq \sup_{\mathcal{S}_{far}} \Big| \frac{1}{|u|^{\alpha-2+3\kappa_1}}\partial_v^3(r\phi_p)\Big|, \ \ \ \ \mathfrak{N}_{\partial_v^2 \lambda_p} \doteq \sup_{\mathcal{S}_{far}} \Big| \frac{1}{|u|^{\alpha-2+3\kappa_1}}\partial_v^2 \lambda_p \Big|,
\end{align}
\begin{equation}
    \mathfrak{N}^{(in)}_{aux} \doteq \mathfrak{N}_{\partial_u^3(r\phi_p)} + \mathfrak{N}_{\partial_u^2 \nu_p} +\mathfrak{N}_{\partial_v^3(r\phi_p)}+ \mathfrak{N}_{\partial_v^2 \lambda_p}.
\end{equation}

Note that $\partial_v^3(r\phi_p)$ and $\partial_v^2 \lambda_p$ are only estimated in a neighborhood $\mathcal{S}_{far}$ of the axis. Near $\{v =0\}$ these quantities are potentially singular, and while in principle one can track the precise blow up rates, it is not necessary here. This is because the aim is to estimate $\partial_u^2 \phi_p$ and $\partial_v^2 \phi_p$, and such estimates follow in a more direct manner near $\{v=0\}$. It is only near the axis where one must treat the full system at third order.

It is not the case that all estimates can be closed in $\mathcal{S}_{far}$ alone, however. The quantities $\partial_u^3(r\phi_p)$ and $\partial_u^2 \nu_p$ satisfy $v$ equations, and must be integrated from data at $\{v =0\}$. These quantities are easier to analyze, though, as they stay bounded near $\{v = 0\}$ in the background solution.

We now proceed with the argument. Preservation of regularity and the local existence argument implies that in $\mathcal{Q}^{(in),u_\delta} \cap \{u_* \leq u \leq u_\delta\}$ for some $u_*$, the bound
\begin{equation}
    \label{eq:inthigherbootstrap}
    \mathfrak{N}^{(in)}_{aux} \leq 2A\epsilon
\end{equation}
holds for $A$ large depending on $\mathcal{E}_{1,\alpha}.$ Recall we are working with solutions that have vanishing perturbation data on the outgoing line $\{u = u_\delta \}$.

We first show that bounds on $\partial_u^2\phi_p, \ \partial_v^2 \phi_p$ near $\{v= 0\}$ follow from the analysis concluded in the previous section.
\begin{lem}
\label{lem:3rdorder1}
Assume $\mathfrak{N}^{(in)}_{tot} < \infty$ in $\mathcal{Q}^{(in),u_\delta}.$ Then in $\mathcal{S}_{near}$ we have
\begin{align}
    \sup_{\mathcal{S}_{near}}|\partial^2_u\phi_p|  &\lesssim A\epsilon |u|^{\alpha-2}, \\[.5em]
    \sup_{\mathcal{S}_{near}}|\partial^2_v\phi_p| &\lesssim A \epsilon |u|^{\alpha-2+2\kappa_1}s_1^{-b}.
\end{align}
\end{lem}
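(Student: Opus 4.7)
The plan is to treat the two estimates by different methods, since the algebraic identity that works cleanly for $\partial_v^2 \phi_p$ loses one power of $|u|$ when applied to $\partial_u^2 \phi_p$.

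For $\partial_v^2 \phi_p$, the Leibniz rule applied twice to $r\phi_p$ gives
\begin{equation*}
    r\,\partial_v^2 \phi_p = \partial_v^2(r\phi_p) - (\partial_v \lambda)\,\phi_p - 2\lambda\,\partial_v \phi_p.
\end{equation*}
On $\mathcal{S}_{near}$ one has $r \sim |u|$ by (A2) and Lemma \ref{int:lem1}. The bootstrap $\mathfrak{N}^{(in)}_{tot} \leq 2A\epsilon$ controls $\partial_v^2(r\phi_p)$ at rate $A\epsilon |u|^{\alpha-1+2\kappa_1} s_1^{-b}$, while Lemma \ref{int:lem1} and Lemma \ref{lem:intdphibounds} bound $(\partial_v \lambda)\phi_p$ by $A\epsilon |u|^{\alpha-1+2\kappa_1} s_1^{-a}$ and $\lambda \partial_v \phi_p$ by $A\epsilon|u|^{\alpha-1+2\kappa_1}$. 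Using $a \leq b$ to absorb the middle term and dividing by $r$, one obtains the claim for $\partial_v^2 \phi_p$.

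For $\partial_u^2 \phi_p$, the analogous identity $r\partial_u^2 \phi_p = \partial_u^2(r\phi_p) - (\partial_u \nu)\phi_p - 2\nu \partial_u \phi_p$ only delivers $|u|^{\alpha-3}$, which is one power short. Instead, I would integrate the commuted wave equation (\ref{PSSESF:10}) as a $v$-transport equation backward from $\{v=0\}$. The initial datum is bounded by using the same identity at $v=0$, where the gauge gives $r(u,0)=|u|$ and $\nu(u,0)=-1$, together with $|\partial_u^2(r\phi_p)(u,0)| = \epsilon|\partial_u f_\delta(u)| \lesssim \epsilon|u|^{\alpha-1}$ from finiteness of $\mathcal{I}$; this yields $|\partial_u^2 \phi_p(u,0)| \lesssim A\epsilon|u|^{\alpha-2}$. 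Writing (\ref{PSSESF:10}) as $\partial_v(\partial_u^2 \phi_p) + \mathcal{T}_1 \partial_u^2 \phi_p = F$ with $\mathcal{T}_1 = \lambda/r \sim |u|^{-1+\kappa_1}$, one checks $\int_v^0 \mathcal{T}_1(u,v')\,dv' \lesssim |u|^{-1+\kappa_1}\cdot|u|^{q_1}\lesssim 1$ on $\mathcal{S}_{near}$, so Gronwall contributes only an $O(1)$ factor. A term-by-term estimate then gives $|F| \lesssim A\epsilon |u|^{\alpha-3+\kappa_1}$; the most delicate contribution is $\mathcal{T}_2 \partial_u \partial_v \phi_p$ with $\mathcal{T}_2 = \nu/r \sim |u|^{-1}$, where the mixed second derivative is bounded directly from the wave equation (\ref{PSSESF:5.5}) by $|\partial_u \partial_v \phi_p| \lesssim A\epsilon |u|^{\alpha-2+\kappa_1}$. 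Integrating $F$ in $v$ across an interval of length $\lesssim |u|^{q_1} = |u|^{1-\kappa_1}$ produces a factor $|u|^{q_1}$ that exactly compensates the singular $|u|^{-1+\kappa_1}$ surplus, yielding a forcing contribution of size $A\epsilon |u|^{\alpha-2}$. Combined with the initial-data bound, this gives the claim.

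The main obstacle will be the uniform bookkeeping in the $\partial_u^2 \phi_p$ step: every term of $F$ in (\ref{PSSESF:10})---including the difference terms $\partial_u (\mathcal{T}_1)_p$, $\partial_u (\mathcal{T}_2)_p$ and products $(\mathcal{T}_i)_p$ against second derivatives of $\overline{\phi}$---must be checked to carry the excess decay $|u|^{-1+\kappa_1}$ beyond $A\epsilon |u|^{\alpha-2}$ that the $v$-integration absorbs. No near-axis averaging is required here because $\mathcal{S}_{near}$ is uniformly bounded away from $\Gamma$, so the potentially singular $r^{-1}$ factors are all benign.
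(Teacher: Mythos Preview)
Your treatment of $\partial_v^2\phi_p$ is exactly the paper's argument. For $\partial_u^2\phi_p$ you take a genuinely different route---integrating the commuted wave equation (\ref{PSSESF:10}) in $v$ from $\{v=0\}$---and it is correct: the data term at $v=0$ and the forcing both land at $A\epsilon|u|^{\alpha-2}$ after the $v$-integral over an interval of length $\lesssim|u|^{q_1}$, and the $\mathcal{T}_1$ coefficient contributes only an $O(1)$ Gr\"onwall factor on $\mathcal{S}_{near}$.

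However, the premise that the algebraic identity ``only delivers $|u|^{\alpha-3}$'' rests on reading $\mathfrak{N}_{\partial_u(r\phi_p)}$ as giving $|\partial_u^2(r\phi_p)|\lesssim A\epsilon|u|^{\alpha-2}$. That weight index is a typo: the bootstrap argument in the paper conjugates by $w^{-(\alpha-1)}$ and hence proves $|\partial_u^2(r\phi_p)|\lesssim A\epsilon|u|^{\alpha-1}$; this is also what the initial data norm (\ref{eq:intinitialdatanorm}) forces via $|\partial_u f_\delta|\lesssim|u|^{\alpha-1}$, what the parallel norm $\mathfrak{N}_{\partial_v(r\phi_p)}$ does for $\partial_v^2(r\phi_p)$, and what is needed for the averaging estimate (\ref{eq:lukohestimateu}) to yield $|\partial_u\phi_p|\lesssim A\epsilon|u|^{\alpha-1}$ in Lemma~\ref{lem:intdphibounds}. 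With the corrected bound, the same Leibniz identity you used for $\partial_v^2\phi_p$ applies verbatim in the $u$-direction: every term on the right of $r\partial_u^2\phi_p = \partial_u^2(r\phi_p) - (\partial_u\nu)\phi_p - 2\nu\,\partial_u\phi_p$ is of size $A\epsilon|u|^{\alpha-1}$, and dividing by $r\sim|u|$ gives the claim. This is the paper's approach, and it sidesteps the term-by-term bookkeeping of (\ref{PSSESF:10}) entirely.
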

\begin{proof}
It is enough to write $\partial_u \phi_p, \ \partial_v \phi_p$ in terms of $\phi_p, \ \partial_u(r\phi_p), \ \partial_v(r\phi_p), \ \partial_u^2(r\phi_p), \ \partial_v^2(r\phi_p)$. The factors of $r$ do not contribute adversely to the estimate in $\mathcal{S}_{near}$, as $r \sim |u|$ there.
\end{proof}

Next, we argue that the bootstrap assumption (\ref{eq:inthigherbootstrap}) is sufficient to control $\partial_u^2 \phi_p$ and $\partial_v^2 \phi_p$ in $\mathcal{S}_{far}$ as well:
\begin{lem}
\label{lem:3rdorder2}
Assume (\ref{eq:inthigherbootstrap}) holds. Then in $\mathcal{Q}^{(in),u_\delta}$ we estimate 
\begin{align}
    \sup_{\mathcal{Q}^{(in),u_\delta}} |\partial_u^2 \phi_p| &\lesssim A\epsilon |u|^{\alpha-2}, \\[.3em]
    \label{eq:3rdorder2v}
    \sup_{\mathcal{Q}^{(in),u_\delta}} |\partial_v^2 \phi_p| &\lesssim A\epsilon |u|^{\alpha-2+2\kappa_1}s_1^{-b}.
\end{align}
\end{lem}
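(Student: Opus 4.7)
The plan is to handle the regions $\mathcal{S}_{near}$ and $\mathcal{S}_{far}$ separately. In $\mathcal{S}_{near}$ both estimates follow directly from Lemma \ref{lem:3rdorder1}, which requires only finiteness of $\mathfrak{N}^{(in)}_{tot}$ and is already established. The new content is therefore entirely in $\mathcal{S}_{far}$, a self-similar neighborhood of the axis, where the previous first- and second-order bounds on the scalar field no longer suffice and the auxiliary bootstrap assumption (\ref{eq:inthigherbootstrap}) must be engaged.

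The key structural observation is that the boundary condition $(r\phi_p)|_\Gamma = 0$ lets us represent $\phi_p$ as an averaging operator in the sense of Section \ref{subsubsec:axisintlemmas}, namely $\phi_p(u,v) = -I_u[\partial_u(r\phi_p)](u,v)$ and equally $\phi_p(u,v) = -I_v[\partial_v(r\phi_p)](u,v)$. Differentiating twice and invoking (\ref{eq:averageestu2})--(\ref{eq:averageestv2}) of Lemma \ref{lem:axisaverage2} bounds $|\partial_u^2\phi_p|$ and $|\partial_v^2\phi_p|$ in terms of suprema, along the past-directed $u$-line (resp. $v$-line) from $(u,v)$ to the axis, of quantities of the schematic form $\partial_u(\nu^{-1}\partial_u(\nu^{-1}\partial_u(r\phi_p)))$ and $\partial_v(\lambda^{-1}\partial_v(\lambda^{-1}\partial_v(r\phi_p)))$. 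Expanding via the Leibniz rule produces a finite sum of terms, each a product of (i) background/bootstrap-controlled geometric factors $\nu, \partial_u\nu, \partial_u^2\nu$ (resp. $\lambda, \partial_v\lambda, \partial_v^2\lambda$), controlled by $\overline{\mathfrak{N}}_1, \overline{\mathfrak{N}}_2, \mathfrak{N}^{(in)}_{tot}$, and the bootstrap $\mathfrak{N}^{(in)}_{aux}$; and (ii) scalar field factors $\partial_u^i(r\phi_p)$ or $\partial_v^i(r\phi_p)$ for $i=1,2,3$, controlled by $\mathfrak{N}^{(in)}_{tot}$ and $\mathfrak{N}^{(in)}_{aux}$.

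A crucial point for the $v$-averaging step is that the bootstrap bounds on $\partial_v^2\lambda_p$ and $\partial_v^3(r\phi_p)$ hold only on $\mathcal{S}_{far}$; however, along the $v$-line from $(u,v_\Gamma(u))$ to $(u,v) \in \mathcal{S}_{far}$, the parameter $s_1 = |v'|/|u|^{q_1}$ decreases monotonically from $1$ at the axis to its value at $(u,v)$, which by hypothesis lies in $[1/2,1]$. Hence the entire line is contained in $\mathcal{S}_{far}$ and the relevant suprema are estimable. Moreover $s_1 \sim 1$ throughout $\mathcal{S}_{far}$, so the factor $s_1^{-b}$ in the target estimate (\ref{eq:3rdorder2v}) is $O(1)$ and the stated self-similar rates $|u|^{\alpha-2}$ and $|u|^{\alpha-2+2\kappa_1}$ emerge after plugging in the weights dictated by the norms in Section \ref{subsection:interiornorms} and Section \ref{subsubsec:thirdorderestimates}.

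The main obstacle is essentially bookkeeping rather than analysis: the nested derivatives expand into on the order of a dozen distinct weight patterns, and one must verify that in each the background singular rates $|u|^{-2}$, $|u|^{-2+3\kappa_1}$ from $\partial_u^2 \overline{\nu}, \partial_v^2 \overline{\lambda}$ are exactly compensated by the $|u|^{\alpha}$-type gains encoded in the perturbation norms, with no spurious loss. Crucially, unlike the proofs of Lemmas \ref{lem:inttheonlylemmareferredto} and its companions, the present estimate is a direct consequence of assumptions already in force and involves no transport integration or Grönwall closure: it is a collection of pointwise inequalities that follow from Lemma \ref{lem:axisaverage2} by inserting the bootstrap bounds term by term.
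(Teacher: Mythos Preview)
Your proposal is correct and follows essentially the same approach as the paper: reduce to $\mathcal{S}_{far}$ via Lemma~\ref{lem:3rdorder1}, then apply the second-order averaging estimates (\ref{eq:averageestu2})--(\ref{eq:averageestv2}) of Lemma~\ref{lem:axisaverage2}, using that $|s_1|\sim 1$ there and that the relevant $v$-integration segment from the axis stays within $\mathcal{S}_{far}$. Your write-up is in fact more detailed than the paper's, which compresses all of this into three sentences.
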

\begin{proof}
The boundedness in $\mathcal{S}_{near}$ follows from the previous lemma. In $\mathcal{S}_{far}$, it is enough to use the averaging estimates (\ref{eq:averageestu2})-(\ref{eq:averageestv2}). Note we only require the averaging estimates in $\mathcal{S}_{far}$, and hence it is enough to have the higher order bounds on $\partial_v^2 \lambda, \partial_v^3 (r\phi_p)$ there. Working in $\mathcal{S}_{far}$ also allows one to drop the dependence on $s_1$, which satisfies $|s_1| \sim 1$.
\end{proof}

\begin{lem}
There exists $\alpha$ large enough depending on $\overline{\mathfrak{N}}_1, \overline{\mathfrak{N}}_2$, and $\epsilon$ small enough depending on $\mathcal{I}_{final,\alpha}$ such that 
\begin{equation}
    \label{eq:thirdorderbootimproved1}
    \mathfrak{N}_{\partial_u^3(r\phi_p)}, \mathfrak{N}_{\partial_u^2 \nu_p} \leq \frac{1}{10}A\epsilon.
\end{equation}
\end{lem}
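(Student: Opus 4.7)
The plan is to propagate the weighted bounds on $\partial_u^3(r\phi_p)$ and $\partial_u^2\nu_p$ using the same $v$--direction energy scheme that was used in the proof of Lemma~\ref{lem:inttheonlylemmareferredto} for $\mathfrak{N}_{\nu_p}$ and $\mathfrak{N}_{\partial_u(r\phi_p)}$. These two quantities satisfy $v$--transport equations obtained by commuting (\ref{PSSESF:12}) and (\ref{PSSESF:8}) with one additional $\partial_u$; schematically,
\begin{align*}
\partial_v(\partial_u^3(r\phi_p)) &= \mathcal{G}_5\,\partial_u^2(r\phi_p) + 2\,\partial_u\mathcal{G}_5\,\partial_u(r\phi_p) + \partial_u^2\mathcal{G}_5\,(r\phi_p) - \partial_u^2\mathcal{I}_5, \\[.3em]
\partial_v(\partial_u^2\nu_p) &= \mathcal{G}_1\,\partial_u^2\nu_p + 2\,\partial_u\mathcal{G}_1\,\partial_u\nu_p + \partial_u^2\mathcal{G}_1\,\nu_p + \partial_u^2((\mathcal{G}_1)_p\,\overline{\nu}) + \text{commutator terms}.
\end{align*}
In both cases the principal coefficient $\mathcal{G}_5$ or $\mathcal{G}_1$ scales like $|u|^{-1+\kappa_1}$, and the same weight $w=(|u|^{q_1}+|v|)^{p_1}\sim|u|$ used in Lemma~\ref{lem:inttheonlylemmareferredto} will convert these linear terms into a favorable lower--order contribution after conjugation by $w^{-(\alpha-2)}$.

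The first step is to bound the right--hand sides in terms of $A\epsilon\,w^{\alpha-3+\kappa_1}$. The source terms split into three types: (i) products of lower--order perturbation quantities already controlled by the main bootstrap $\mathfrak{N}^{(in)}_{tot}$ and by Lemma~\ref{int:lem1}, Lemma~\ref{lem:intdphibounds}, Lemma~\ref{lem:intmubounds}; (ii) products involving one or two factors from the background solution, which is where the norm $\overline{\mathfrak{N}}_2$ is invoked (in particular for $\partial_u^3\overline{\phi}$, $\partial_u^2\overline{\nu}$, and the $\partial_u^2$ derivatives of $\overline{\mu}/\overline{r}$ that appear inside $\partial_u^2\mathcal{G}_5$); and (iii) factors of the form $\partial_u^k(\mu/r)$, which are estimated by combining Lemma~\ref{lem:intmubounds}, the transport equation~(\ref{SSESF:2:7}), and the averaging Lemmas~\ref{lem:axisaverage}--\ref{lem:axisaverage2}, together with the auxiliary bootstrap (\ref{eq:inthigherbootstrap}) applied only when a full third derivative appears. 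Because each of these factors has already been estimated in earlier steps, the needed bounds on the sources follow without any new structural input.

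Next I conjugate the $v$--transport equations by $w^{-(\alpha-2)}$, producing
\begin{equation*}
\partial_v\bigl(w^{-(\alpha-2)}\Psi\bigr) - \frac{p_1(\alpha-2)}{|u|^{q_1}+|v|}\bigl(w^{-(\alpha-2)}\Psi\bigr) + \bigl(\mathcal{G}_\star + O(\epsilon)\bigr)\bigl(w^{-(\alpha-2)}\Psi\bigr) = O\!\left(\frac{A\epsilon}{|u|^{q_1}+|v|}\right),
\end{equation*}
for $\Psi\in\{\partial_u^3(r\phi_p),\,\partial_u^2\nu_p\}$. Contracting with $w^{-(\alpha-2)}\Psi$, integrating backwards in $v$ from $\{v=0\}$, and applying Cauchy--Schwarz exactly as in Lemma~\ref{lem:inttheonlylemmareferredto} absorbs half of the good lower--order term and leaves an error of size $\alpha^{-1}O(A^2\epsilon^2)$, which improves the bootstrap for $\alpha$ large enough depending only on $\overline{\mathfrak{N}}_1,\overline{\mathfrak{N}}_2$. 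The boundary term for $\partial_u^2\nu_p$ vanishes because the gauge condition $r_p(u,0)=0$ forces $\partial_u^k\nu_p(u,0)=0$ for all $k\ge 0$; the boundary term for $\partial_u^3(r\phi_p)$ equals $\epsilon\partial_u^2 f_\delta(u)$ and is controlled by the initial data norm.

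The main obstacle is verifying, for the cutoff data $f_\delta=\chi_{u_\delta}f_0$, that the boundary term $\epsilon\partial_u^2 f_\delta$ on $\{v=0\}$ is bounded consistently with the target weight $w^{\alpha-2}\sim|u|^{\alpha-2}$, since two derivatives of the cutoff $\chi_{u_\delta}$ produce a factor $|u_\delta|^{-2}$ supported near $u\sim u_\delta$. There $|u|\sim|u_\delta|$, so $|\partial_u^2 f_\delta(u)|\lesssim \mathcal{E}_{1,\alpha}|u|^{\alpha-2}$ uniformly in $u_\delta$, which keeps the data contribution at the level of $\epsilon\mathcal{E}_{1,\alpha}$ and does not spoil the improvement. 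The other, more bookkeeping--heavy, obstacle is tracking all the commutator terms in $\partial_u^2((\mathcal{G}_1)_p\overline{\nu})$ and $\partial_u^2\mathcal{G}_5$ and ensuring every piece either falls under $\mathfrak{N}^{(in)}_{tot}$, $\overline{\mathfrak{N}}_2$, or the present bootstrap (\ref{eq:inthigherbootstrap}), without generating an unbounded factor of $r^{-1}$; the latter is handled as in Lemma~\ref{lem:intmubounds} by pairing singular $r^{-1}$ factors with a $\mu$--factor and using $\mu/r\lesssim r/|u|^2$.
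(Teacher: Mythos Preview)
Your proposal is correct and follows essentially the same approach as the paper: commute (\ref{PSSESF:2}) and (\ref{PSSESF:6}) by $\partial_u^2$, estimate the resulting sources at the level $A\epsilon\,w^{\alpha-3+\kappa_1}$, conjugate by $w^{-(\alpha-2)}$, and integrate backwards in $v$ from data on $\{v=0\}$. Two small remarks are worth making.

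First, your reference to Lemma~\ref{lem:inttheonlylemmareferredto} is to the wrong lemma: that one treats the $u$-equations for $\lambda_p,\mu_p,\partial_v(r\phi_p)$. The argument you actually describe (and the one the paper uses here) is the $v$-integration scheme from the preceding, unlabeled lemma establishing $\mathfrak{N}_{\nu_p},\mathfrak{N}_{\partial_u(r\phi_p)}\le \tfrac{1}{10}A\epsilon$. This is purely a labeling slip; your substance is right.

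Second, the paper is slightly more explicit than you are about the one genuinely delicate source term near the axis: when two $\partial_u$-derivatives hit the factor $\overline{\mu}/\overline{r}$ inside $\overline{\mathcal{G}}_1$ (and analogously for $(\mathcal{G}_1)_p$, $\mathcal{G}_5$), one is left with $\partial_u(\overline{m}/\overline{r}^3)$. The paper handles this by writing $\overline{m}/\overline{r}^3$ via (\ref{SSESF:2:4}) as an averaging operator of the type in Remark~\ref{rmk:higherorderaveraging} with $s=3$, and then applying the differentiation identity (\ref{eq:averagingtemp}) to get $|\partial_u(\overline{m}/\overline{r}^3)|\lesssim |u|^{-3}$. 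Your item (iii) alludes to this mechanism through the averaging lemmas and the $\mu/r\lesssim r/|u|^2$ pairing, but it may be worth naming the $s=3$ averaging explicitly, since Lemma~\ref{lem:intmubounds} and the pairing trick alone only give control on $\mu/r^2$, not on its $u$-derivative.

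Your treatment of the data term $\epsilon\,\partial_u^2 f_\delta$ is correct and in fact more detailed than the paper, which simply appeals to the initial data norm; your observation that the cutoff derivatives $|\partial_u^i\chi_{u_\delta}|\lesssim |u_\delta|^{-i}$ are supported where $|u|\sim|u_\delta|$ is exactly what makes $|\partial_u^2 f_\delta|\lesssim \mathcal{E}_{1,\alpha}|u|^{\alpha-2}$ uniform in $u_\delta$. Incidentally, your commuted equation for $\partial_v\partial_u^3(r\phi_p)$ correctly has $(r\phi_p)$ on the right-hand side; the paper's display (\ref{thirdorderuderphi}) writes $\phi_p$, which appears to be a typo.
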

\begin{proof}
Commuting (\ref{PSSESF:2}), (\ref{PSSESF:6}) with $\partial_u^2$ gives 
\begin{align}
    \label{eq:thirdordernu}
    \partial_v \partial_u^2 \nu_p &= \mathcal{G}_1 \partial_u^2 \nu_p + 2\partial_u \mathcal{G}_1 \partial_u \nu_p + \partial_u^2 \mathcal{G}_1 \nu_p + \partial_u^2\big((\mathcal{G}_1)_p \overline{\nu} \big), \\[.3em]
    \label{thirdorderuderphi}
    \partial_v \partial_u^3 (r\phi_p) &= \partial_u^2 \mathcal{G}_5 \phi_p + 2\partial_u \mathcal{G}_5 \partial_u \phi_p + \mathcal{G}_5 \partial_u^2 \phi_p - \partial_u^2 \mathcal{I}_5.
\end{align}

The terms that have not already been estimated at lower order are $\partial_u^2 \mathcal{G}_1, \ \partial_u^2 (\mathcal{G}_1)_p, \ \partial_u^2 \mathcal{G}_5, \ \partial_u^2 \mathcal{I}_5.$ We take each term individually. Because $\partial_u^2\mathcal{G}_1 = \partial_u^2 \overline{\mathcal{G}}_1+ \partial_u^2 (\mathcal{G}_1)_p,$ we start by estimating $\partial_u^2 \overline{\mathcal{G}}_1.$ 

If at least one derivative falls on $\overline{\lambda}$ or $1-\overline{\mu}$, then the singularity in $\overline{r}$ is mild enough to be absorbed by the remaining factor of $\overline{\mu}$ in $\overline{\mathcal{G}}_1$. The terms that emerge already fall under the control on the background solution contained in $\overline{\mathfrak{N}}_1$, $\overline{\mathfrak{N}}_2$. Otherwise, both derivatives fall on the term $\frac{\overline{\mu}}{\overline{r}}$. 

Inserting (\ref{SSESF:2:7}) for derivatives falling on $\overline{\mu}$ shows that the term with worst potential $\overline{r}$ weights arises from $\partial_u ( \frac{\overline{\mu}}{\overline{r}^2})$. All other terms appearing after taking two derivatives can be estimated via already controlled quantities, including 
$\partial_u^2 \overline{\nu}, \partial_u^2 \overline{\phi}.$

It remains to estimate $\partial_u (\frac{\overline{\mu}}{\overline{r}^2})$. Differentiating leads to similar problems as discussed in Section \ref{subsubsec:axisintlemmas}. It would seem that derivatives falling on $\overline{r}$ lead to terms of the form $\frac{\overline{\mu}}{\overline{r}^3},$ which one can only bound by $\overline{r}^{-1}$ near the axis.

The estimate is saved by first using (\ref{SSESF:2:4}) to write
\begin{equation*}
    \frac{\overline{m}}{\overline{r}^3}(u,v) = \overline{r}(u,v)^{-3}\int\limits_{(u,v)}^{(u_\Gamma(v),v) }\overline{r}(u',v)^2 \Big(\frac{1}{2\overline{\nu}}(1-\overline{\mu})(\partial_u \overline{\phi})^2\Big)(u',v)du'.
\end{equation*}
The expression is an averaging operator with $s=2$ (see Remark \ref{rmk:higherorderaveraging}). Differentiating in $u$ and applying (\ref{eq:averagingtemp}) gives 
\begin{equation*}
    \big|\partial_u\big(\frac{\overline{m}}{\overline{r}^3} \big) \big| \lesssim \big|\partial_u \big(\frac{1}{2\overline{\nu}^2}(1-\overline{\mu})(\partial_u \overline{\phi})^2 \big)\big| \lesssim |u|^{-3}.
\end{equation*}
The conclusion of this argument is that $\partial_u^2 \overline{\mathcal{G}}_1$ is regular at the axis, with 
$$|\partial_u^2 \overline{\mathcal{G}}_1| \lesssim |u|^{-3+\kappa_1}. $$

The same argument can now be applied to $\partial_u^2 (\mathcal{G}_1 )_p,$ $\partial_u^2 \mathcal{G}_5,$ and $\partial_u^2 \mathcal{I}_5$. We only comment here on the major features of these terms.
\begin{enumerate}
 \item In estimating  $\partial_u^2 (\mathcal{G}_1 )_p,$ $\partial_u^2 (\mathcal{G}_5 )_p,$ and $\partial_u^2 \mathcal{I}_5$ we encounter derivatives of the form $\partial_u^2\big(\frac{r_p}{r} \big)$, $\partial_u^2\big(\frac{r_p}{\overline{r}} \big)$, $\partial_u^2\big(\frac{\overline{r}}{r} \big)$. These may each be estimated using the averaging operator formulas, along with estimates on up to two $u$ derivatives of $\overline{\nu},  \nu_p$.
 \item Analogously to the above estimate on $\partial_u\big(\frac{\overline{m}}{\overline{r}^3}\big)$, we must contend with the term $\partial_u\big(\frac{\mu_p}{r^2}\big)$. Here we use the relation (\ref{eq:mpvsmup}), the estimates on derivatives of $r_p$ discussed in the previous bullet point, and an averaging operator argument.
    \item Using the wave equation (\ref{SSESF:1:3}) to rewrite mixed derivatives $\partial_u \partial_v \overline{\phi}$ ensures that we are using information on at most three derivatives of the background solution. At least two of the derivatives are in the $u$ direction, and hence one does not see scalar field derivatives with singular limits as $v \rightarrow 0$. Note that terms of the form $\partial_u^3 \overline{\phi}$ do in fact appear in $\partial_u^2 \mathcal{I}_5$, and hence we require information on up to three derivatives of $\overline{\phi}$ in the $u$ direction.
\end{enumerate}

Applying this strategy yields the bounds
\begin{align*}
    |\partial_u^2 (\mathcal{G}_1 )_p| \lesssim A\epsilon |u|^{\alpha-3+\kappa_1}, \ \ \ |\partial_u^2 \mathcal{G}_5| \lesssim |u|^{-3+\kappa_1}, \ \ |\partial_u^2 \mathcal{I}_5| \lesssim A\epsilon |u|^{\alpha-3+\kappa_1}.
\end{align*}

As in the case of lower order estimates, the strategy is now to conjugate (\ref{eq:thirdordernu}) and (\ref{thirdorderuderphi}) by $w^{\alpha-2}$ and integrate backwards in $v$ from data. The scalar field contributes a data term along $\{v = 0\}$, controlled by $\mathcal{I}_{final,\alpha}$. The radius term $\partial_u^2 \nu_p$ vanishes along $\{v=0\}$ by hypothesis.

Choosing $\alpha$ large enough and $\epsilon$ small enough, the bootstrap assumption is improved.

\end{proof}

\begin{lem}
There exists $\alpha$ large enough depending on $\overline{\mathfrak{N}}_1, \overline{\mathfrak{N}}_2$, and $\epsilon$ small enough depending on $\mathcal{I}_{final,\alpha}$ such that 
\begin{equation}
 \label{eq:thirdorderbootimproved2}
   \mathfrak{N}_{\partial_v^3(r\phi_p)}, \ \mathfrak{N}_{\partial_v^2 \lambda_p} \leq \frac{1}{10}A\epsilon.
\end{equation}
\end{lem}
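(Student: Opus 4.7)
The argument will parallel the treatment of the second-order transversal unknowns $\partial_v \lambda_p$ and $\partial_v^2(r\phi_p)$ in Lemma \ref{lem:inttheonlylemmareferredto}, now at one further order of $v$-differentiation and restricted to the self-similar neighborhood $\mathcal{S}_{far}$ of the axis. The plan is to derive the commuted $u$-transport equations for $\partial_v^2 \lambda_p$ and $\partial_v^3(r\phi_p)$ by applying $\partial_v$ to (\ref{PSSESF:7}) and (\ref{PSSESF:11}) respectively, then to close a weighted energy estimate integrated backwards from the axis $\Gamma$.

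The first step is to estimate the new source terms. These involve $\partial_v^2 \overline{\lambda}$, $\partial_v^3 \overline{\phi}$, $\partial_v^2 \mathcal{G}_2$, $\partial_v^2 \mathcal{G}_5$, and $\partial_v^2 \mathcal{I}_5$. Crucially we only need these in $\mathcal{S}_{far}$, where $r \sim |u|$ and $s_1 \sim 1$, so blowup as $v \to 0$ plays no role and the potentially singular factors $\partial_v \overline{\lambda}, \partial_v^2 \overline{\phi}$ are bounded uniformly. The background bounds on $\partial_v^2 \overline{\lambda}$ and $\partial_v^3 \overline{\phi}$ in $\mathcal{S}_{far}$ are exactly what is tracked by $\overline{\mathfrak{N}}_2$. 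For the $\partial_v^2$ of the named quantities $\mathcal{G}_2, \mathcal{G}_5, \mathcal{I}_5$, the analysis mirrors the $\partial_u^2$ treatment in the previous lemma: when both derivatives land on a factor $\frac{\mu}{r}$ or $\frac{\mu_p}{r^2}$, rewrite it as an averaging operator of the form (\ref{eq:higherorderaverageu})--(\ref{eq:higherorderaveragev}) with $s=2$ using the $v$-equation (\ref{SSESF:2:5}) and invoke Remark \ref{rmk:higherorderaveraging} to produce regular expressions. All remaining terms can be expressed via already-controlled quantities in $\overline{\mathfrak{N}}_1, \overline{\mathfrak{N}}_2$ and the lower-order norms improved in the preceding lemmas, giving schematic bounds
\begin{align*}
|\partial_u(\partial_v^2 \lambda_p)| &\lesssim A\epsilon\, |u|^{\alpha-3+3\kappa_1}, \\
|\partial_u(\partial_v^3(r\phi_p))| &\lesssim A\epsilon\, |u|^{\alpha-3+3\kappa_1},
\end{align*}
uniformly in $\mathcal{S}_{far}$.

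With the sources in hand, I conjugate each transport equation by $|u|^{-(\alpha-2+3\kappa_1)+\sigma}$ for a small $0<\sigma\ll 1$, which produces a zeroth-order coefficient of sign $-(\alpha-2+3\kappa_1-\sigma)/|u|$. Contracting with the conjugated unknown and integrating in $u$ from the point of interest out to the axis yields the standard energy inequality with a coercive interior term of size $\alpha/|u'|$, absorbing the right-hand side via Cauchy--Schwarz. The principal obstacle is the boundary contribution at $\Gamma$: integration in $\mathcal{S}_{far}$ terminates at $u_\Gamma(v)$, where $\partial_v^2\lambda_p$ and $\partial_v^3(r\phi_p)$ are not zero a priori. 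Here I use the $C^1$ axis conditions (\ref{eq:axisbc1})--(\ref{eq:axisbc2}), iterated: since $r_p$ and $r\phi_p$ vanish identically on $\Gamma$, so do $T r_p$ and $T(r\phi_p)$ where $T = \partial_v + p_1|u|^{\kappa_1}\partial_u$ is the tangent to $\Gamma$, and hence $T^3 r_p|_\Gamma = T^3(r\phi_p)|_\Gamma = 0$. Expanding these identities expresses $\partial_v^2 \lambda_p(u_\Gamma(v),v)$ and $\partial_v^3(r\phi_p)(u_\Gamma(v),v)$ as linear combinations of $\partial_u^2 \nu_p$, $\partial_u^3(r\phi_p)$, mixed derivatives, and strictly lower-order terms at $\Gamma$, together with derivatives of the coefficient $p_1|u|^{\kappa_1}$; every quantity on the right has already been controlled to size $A\epsilon\,|u|^{\alpha-2+(\ldots)}$ in (\ref{eq:thirdorderbootimproved1}) and the main bootstrap. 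Substituting these expressions bounds the boundary term by a multiple of $\epsilon^2|u|^{2\sigma}$, which combines with the interior estimate so that dividing through by $|u|^{2\sigma}$ and choosing $\alpha$ sufficiently large depending on $\overline{\mathfrak{N}}_1$ and $\overline{\mathfrak{N}}_2$, and $\epsilon$ sufficiently small depending on $\mathcal{I}$, improves the bootstrap constant from $2A$ to $\tfrac{1}{10}A$ as claimed.
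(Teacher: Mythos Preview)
Your proposal is correct and follows essentially the same route as the paper: commute the $u$-transport equations by $\partial_v^2$, estimate the source terms in $\mathcal{S}_{far}$ using the averaging-operator formalism for the $\partial_v(\mu/r^2)$-type contributions together with the third-order background bounds in $\overline{\mathfrak{N}}_2$, conjugate by $|u|^{-(\alpha-2+3\kappa_1-\sigma)}$, and close via the energy inequality with axis boundary terms handled through $T^3 r_p|_\Gamma = T^3(r\phi_p)|_\Gamma = 0$.

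One small omission: in $\mathcal{S}_{far}\cap\mathcal{Q}^{(in),u_\delta}$ the future-directed constant-$v$ line need not reach $\Gamma$; for $|v|<|u_\delta|^{q_1}$ it terminates instead on $\{u=u_\delta\}$. The paper notes this case explicitly, but since the outgoing data for $\partial_v^2\lambda_p$ and $\partial_v^3(r\phi_p)$ along $\{u=u_\delta\}$ vanish by hypothesis in this part of the construction, the boundary contribution is zero and your argument goes through unchanged there.
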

\begin{proof}
The starting point is the same as in the previous lemma. Commuting (\ref{PSSESF:1}) and (\ref{PSSESF:6}) with $\partial_v^2$ gives
\begin{align}
     \label{eq:thirdorderlambda}
    \partial_u \partial_v^2 \lambda_p &= \mathcal{G}_2 \partial_v^2 \lambda_p + 2\partial_v \mathcal{G}_2 \partial_v \lambda_p + \partial_v^2 \mathcal{G}_2 \lambda_p + \partial_v^2 \big( (\mathcal{G}_2)_p \overline{\lambda} \big), \\[.3em]
    \label{eq:thirdordervphi}
    \partial_u \partial_v^3 (r\phi_p) &= \partial_v^2 \mathcal{G}_5 \phi_p + 2\partial_v \mathcal{G}_5 \partial_v \phi_p + \mathcal{G}_5 \partial_v^2 \phi_p - \partial_v^2 \mathcal{I}_5.
\end{align}
The new terms to estimate are 
\begin{equation}
    \label{listofterms}
\partial_v^2 \mathcal{G}_2, \ \ \partial_v^2 (\mathcal{G}_2)_p, \ \ \partial_v^2 \mathcal{G}_5,\ \  \partial_v^2 \mathcal{I}_5.  
\end{equation}
Comparison with the previous lemma shows that key quantities to estimate are $\partial_v (\frac{\overline{\mu}}{\overline{r}^2}),$ and $\partial_v (\frac{\mu_p}{\overline{r}^2}).$ Applying the $\partial_v \mu$ equation and the averaging operator formalism yields
$$|\partial_v \big(\frac{\overline{\mu}}{\overline{r}^2}\big)| \lesssim |u|^{-3+2\kappa_1}, \ \ \ |\partial_v \big(\frac{\mu_p}{\overline{r}^2}\big)| \lesssim A\epsilon |u|^{\alpha-3+2\kappa_1}.$$
Inserting bounds on the background solution and bootstrap assumptions for the remainder of the terms appearing in (\ref{listofterms}), and taking $\epsilon$ sufficiently small, leads to the estimates 
\begin{equation*}
    |\partial_v^2 \mathcal{G}_2| \lesssim |u|^{-3+2\kappa_1}, \ \ \ |\partial_v^2 (\mathcal{G}_2)_p | \lesssim A\epsilon|u|^{\alpha-3+2\kappa_1}, 
\end{equation*}
\begin{equation*}
    |\partial_v^2 \mathcal{G}_5| \lesssim |u|^{-3+2\kappa_1}, \ \ \ | \partial_v^2 \mathcal{I}_5 | \lesssim A\epsilon|u|^{\alpha-3+2\kappa_1}.
\end{equation*}
Similar comments as given in the proof of the previous lemma are relevant here. In particular, one must use averaging formulas to estimate $\partial_v^2\big( \frac{r_p}{\overline{r}}\big)$, and apply estimates on mixed derivatives of the background of the form $\partial_v^2 \partial_u \overline{\phi}$.  

Note that by restricting attention to $\mathcal{S}_{far},$ one can use the estimate (\ref{eq:3rdorder2v}) to estimate $\partial_v^2 \phi_p$ without requiring singular $v$ weights.

It follows that the right hand sides of (\ref{eq:thirdorderlambda}), (\ref{eq:thirdordervphi}) satisfy estimates with homogeneity consistent with the bootstrap assumptions for $\partial_v^2 \lambda_p$ and $\partial_v^3(r\phi_p)$. The strategy is therefore to conjugate both equations with $|u|^{-(\alpha-2+3\kappa_1-\sigma)}$ for some small $0<\sigma \ll 1$, contract, and integrate backwards in $u$ from $\mathcal{L}$. 

For any $(u,v) \in \mathcal{S}_{far}$, the future directed constant $v$ line stays contained in  $\mathcal{S}_{far}$, and intersects either the axis or the line $\{u = u_\delta\}$. In the latter case, by hypothesis the data for $\partial_v^3(r\phi_p)$ and $\partial_v^2 \lambda_p$ is zero, and so no data terms are picked up. 

In the case when the future directed constant $v$ line intersects the axis, we use regularity there to write 
\begin{align*}
   &(\partial_v + p_1|u_\Gamma(v)|^{\kappa_1}\partial_u)^3(r\phi_p)(u_\Gamma(v),v) = 0, \\[.3em]
   &(\partial_v + p_1|u_\Gamma(v)|^{\kappa_1}\partial_u)^3 r_p(u_\Gamma(v),v) = 0.
\end{align*}
Expanding and solving for $\partial_v^3(r\phi_p), \partial_v^2 \lambda_p$, we are able to apply estimates on $\partial_u^3(r\phi_p), \partial_u^2 \nu_p$, and lower order quantities, to arrive at estimates for the boundary terms. 

For $\epsilon$ small enough depending on the bootstrap assumptions, and $\alpha$ large enough depending on the implicit constants in the above estimates (which depend only on the background), integrating in $u$ backwards from $\mathcal{L}$ improves the bootstrap assumptions.
\end{proof}

\subsection{Limiting solution} 
\label{sec:limitingsolution}
To construct a limit of $(r^{u_\delta}, m^{u_\delta}, \phi^{u_\delta})$ as $u_\delta \rightarrow 0$, it is enough to construct a function space in which the sequence is Cauchy. This space should include enough regularity to ensure pointwise convergence of the double null unknowns, and that the resulting limits are BV solutions to the scalar field system.

We will work with the sequence of differences $(r^{u_\delta}_p ,\mu^{u_\delta}_p,\phi^{u_\delta}_p)$. The total solution $(r^{u_\delta}, m^{u_\delta}, \phi^{u_\delta})$ can easily be recovered from $(r^{u_\delta}_p ,\mu^{u_\delta}_p,\phi^{u_\delta}_p)$ by adding back the background piece (and defining $m$ in terms of $\mu$), and so there is no loss in working at the level of perturbations.

Define the Banach space
\begin{equation}
    \mathcal{Y} \doteq \{(r,\mu,\phi) \in C^1(\mathcal{Q}^{(in)}) \times C^0(\mathcal{Q}^{(in)}) \times C^1(\mathcal{Q}^{(in)})\},
\end{equation}
with norm
\begin{equation}
    \|(r,\mu,\phi) \|_{\mathcal{Y}} \doteq \|r \|_{C^1(\mathcal{Q}^{(in)})} + \|\mu \|_{C^0(\mathcal{Q}^{(in)})} + \|\phi \|_{C^1(\mathcal{Q}^{(in)})}.
\end{equation}

For $-1 \leq u_2 < u_1 < 0$, let $(r^{(1)},\mu^{(1)}, \phi^{(1)})$, $(r^{(2)},\mu^{(2)},\phi^{(2)})$ be the approximate interiors constructed with trivial data along $\{u = u_1\},$ $\{u = u_2\}$ respectively. We proceed to estimate the differences $(r^{(1)}_p - r^{(2)}_p,\mu^{(1)}_p - \mu^{(2)}_p, \phi^{(1)}_p- \phi^{(2)}_p)$ in $\mathcal{Q}^{(in)}$, and show the $\mathcal{Y}$ norm of this difference vanishes as $|u_1|, 
 |u_2| \rightarrow 0$. Without loss of generality, we will always assume $u_2 < u_1$ holds.
 
We will consider the behavior of the solutions in the three regions (see Figure \ref{fig:approxintconv})
\begin{align*}
    \mathcal{X}_{\text{I}} &\doteq \mathcal{Q}^{(in)} \cap \{u \geq u_1 \}, \\
    \mathcal{X}_{\text{II}} &\doteq \mathcal{Q}^{(in)} \cap \{u_1 > u \geq u_2(1+\sigma) \}, \\
    \mathcal{X}_{\text{III}} &\doteq \mathcal{Q}^{(in)} \cap \{u_2(1+\sigma) > u \geq -1 \}.
\end{align*}
Here, $\sigma$ is the parameter associated to the cutoff scale for the initial data along $\{v=0\},$ cf. the definition of $f_\delta$ in (\ref{dfn:cutoffinitialdata}).

\begin{figure}
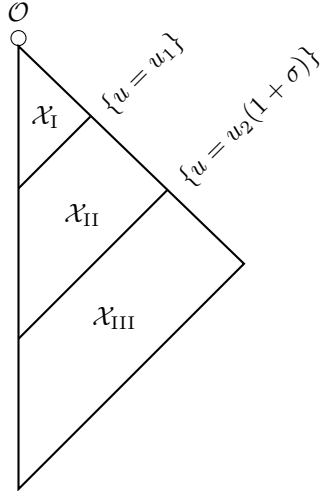

    \centering
  \includestandalone[]{Figures/fig_approxint_convergence}
  \caption{Various regions formed from taking differences of the solutions $(r^{(1)},\mu^{(1)},\phi^{(1)})$, $(r^{(2)},\mu^{(2)},\phi^{(2)})$.}
  \label{fig:approxintconv}
\end{figure}

\vspace{5pt}
\noindent
\textbf{Region I: }
In Region $\mathcal{X}_{\text{I}}$ both solutions $(r^{(i)}_p, \mu^{(i)}_p, \phi^{(i)}_p)$ vanish, so we get
$$\|(r^{(1)}_p,\mu^{(1)}_p,\phi^{(1)}_p) - (r^{(2)}_p,\mu^{(2)}_p,\phi^{(2)}_p) \|_{\mathcal{Y}} =0.$$

\vspace{5pt}
\noindent
\textbf{Region II: }
In $\mathcal{X}_{\text{II}}$ the solutions may no longer vanish; however, the size of the region in the $u$ direction is proportional to $u_2$, which we will use to show the contribution of the solution to the $\mathcal{Y}$ norm vanishes as $u_2 \rightarrow 0$. 

More precisely, Proposition \ref{approxint:mainprop} applied to the individual solutions $(r^{(i)}_p, \mu^{(i)}_p, \phi^{(i)}_p)$ gives the uniform estimates 
\begin{equation*}
    \|r_p^{(i)} \|_{C^1(\mathcal{X}_{\text{II}})} + \|\mu_p^{(i)} \|_{C^0(\mathcal{X}_{\text{II}})} + \|\phi_p^{(i)} \|_{C^1(\mathcal{X}_{\text{II}})} \lesssim \mathcal{E}_{1,\alpha}|u_2(1+\sigma)|^{\alpha-1},
\end{equation*}
where the constants are independent of $u_1$ provided $u_1 > u_2$. By the triangle inequality, similar estimates hold for the differences of the solution variables as well. Sending $u_2 \rightarrow 0$ gives the result.

\vspace{5pt}
\noindent
\textbf{Region III: }
In $\mathcal{X}_{\text{III}}$ we have to understand the difference of the solutions more carefully. The strategy will be to reuse the formalism of Section \ref{subsec:approxint}. Recall that the idea there was to construct the approximate interior solution by writing down the system for the differences between a putative perturbed solution and a \enquote{background} solution $(\overline{g},\overline{\phi})$. Provided a certain norm of the background, $\overline{\mathfrak{N}}_1$, was finite, the analysis of the difference quantities proceeded largely independently of the fine scale structure of the background.

In $\mathcal{X}_{\text{III}}$ one is considering a difference of two solutions, with the \enquote{background} solution now given by 
\begin{equation}
(\overline{r}', \overline{\mu}', \overline{\phi}') \doteq (r^{(1)}, \mu^{(1)}, \phi^{(1)}) = (\overline{r},\overline{\mu},\overline{\phi} ) + (r^{(1)}_p, \mu^{(1)}_p, \phi^{(1)}_p).
\end{equation}

The differences $(\widehat{r}_p, \widehat{\mu}_p, \widehat{\phi}_p) \doteq (r^{(2)}_p - r^{(1)}_p, \mu^{(2)}_p - \mu^{(1)}_p, \phi^{(2)}_p - \phi^{(1)}_p)$ solve a version of the system (\ref{PSSESF:1})-(\ref{PSSESF:5}), with the following alterations.
\begin{enumerate}
    \item All terms $\overline{\Psi}$ are replaced by $\overline{\Psi}'$.
    \item All terms $\Psi_p$ are replaced by $\widehat{\Psi}_p$.
\end{enumerate}
The differences achieve the data 
\begin{equation}
    \label{eq:region3data1}
    \widehat{\phi}_p(u,0) = 0, \ \ \ \widehat{r}_p(u,0) = 0,
\end{equation}
\begin{equation}
  \label{eq:region3data2}
    \widehat{\phi}_p(u_2(1+\sigma),v) = -\phi^{(1)}_p(u_2,v),  \ \ \ \widehat{r}_p(u_2(1+\sigma),v) = -r^{(1)}_p(u_2,v).
\end{equation}
In contrast to the setting of the approximate interior spacetimes, the data for the problem in $\mathcal{Q}^{(in),u_2(1+\sigma)}$ is trivial on the ingoing surace $\{v= 0, \ u \leq u_2(1+\sigma)\}$, but is non-trivial on the outgoing surface $\{u = u_2(1+\sigma)\}$. 

The idea is to now apply the results of Section \ref{subsec:approxint} in the region $\mathcal{Q}^{(in),u_2(1+\sigma)}$ to conclude estimates on the $\widehat{\Psi}_p$. Let $\widehat{\mathcal{I}}^{(in)}_{\alpha}$ denote the initial data norm (\ref{eq:intinitialdatanorm}) computed on the data (\ref{eq:region3data1})-(\ref{eq:region3data2}). Similarly, define $\overline{\mathfrak{N}}_1'$ as in (\ref{eq:totalbackgroudnorm}), substituting the values of the background solution
$(\overline{r}', \overline{\mu}', \overline{\phi}')$. Finally one can define $\mathfrak{N}^{(in)}_{tot,\alpha}$ as in (\ref{eq:inttotalspacetimenorm}). Observe that $\overline{\mathfrak{N}}_1' < \infty$ follows from the estimates of Section (\ref{subsec:approxint}) applied to $(r^{(1)}_p, \mu^{(1)}_p, \phi^{(1)}_p)$, and it is precisely here that we require the auxiliary higher order estimates on the solution derived in Section \ref{subsubsec:thirdorderestimates}.

It is straightforward to check that 
$$\widehat{\mathcal{I}}^{(in)}_\alpha \lesssim \mathcal{E}_{1,\alpha}.$$
There now exists an $\alpha$ large (perhaps larger than the value chosen in Section \ref{subsec:approxint}) and $\epsilon$ small depending only on $\mathcal{E}_{1,\alpha}$ such that we have uniform control on the solution norm $\mathfrak{N}^{(in)}_{tot,\alpha}$ in $\mathcal{Q}^{(in),u_2(1+\sigma)}$. The solution norm in turn controls the $\mathcal{Y}$ norm of the differences. More precisely, we have the sequence of bounds
\begin{equation*}
    \|(r^{(1)}_p,\mu^{(1)}_p,\phi^{(1)}_p) - (r^{(2)}_p,\mu^{(2)}_p,\phi^{(2)}_p) \|_{\mathcal{Y}} \lesssim \mathfrak{N}^{(in)}_{tot,\alpha} \lesssim \widehat{\mathcal{I}}^{(in)}_\alpha \lesssim \mathcal{E}_{1,\alpha}.
\end{equation*}
Although the bound is uniform, it does not provide decay as $|u_2| \rightarrow 0$. To generate the decay we consider the bound with $\alpha' = \alpha - 1$ instead, and conclude 
\begin{equation*}
   \mathfrak{N}^{(in)}_{tot,\alpha'} \lesssim \widehat{\mathcal{I}}^{(in)}_{\alpha'} \lesssim \mathcal{E}_{1,\alpha}|u_2|.
\end{equation*}
As $|u_2| \rightarrow 0$ we thus conclude $\|(r^{(1)}_p,\mu^{(1)}_p,\phi^{(1)}_p) - (r^{(2)}_p,\mu^{(2)}_p,\phi^{(2)}_p) \|_{\mathcal{Y}} \rightarrow 0$, as desired.

\vspace{10pt}
We have thus shown that the sequence $(r_p^{u_\delta},\mu_p^{u_\delta}, \phi_p^{u_\delta})$ is Cauchy in the space $\mathcal{Y}$, and therefore there exists a limit $(r_{p,\infty}, \mu_{p,\infty}, \phi_{p,\infty}) \in C^1(\mathcal{Q}^{(in)})\times C^0(\mathcal{Q}^{(in)}) \times C^1(\mathcal{Q}^{(in)})$. Moreover, the sequence converges pointwise uniformly in $\mathcal{Q}^{(in)}$. Define the limiting spacetime
\begin{equation}
    (r_\infty,\mu_\infty,\phi_\infty) \doteq (r_{p,\infty}, \mu_{p,\infty}, \phi_{p,\infty}) + (\overline{r}, \overline{\mu}, \overline{\phi}).
\end{equation}
It remains to study the limit, and show the following:
\begin{enumerate}[(i)]
    \item $(r_\infty,\mu_\infty,\phi_\infty)$ is a pointwise solution to (\ref{SSESF:2:1})-(\ref{SSESF:2:5}),
    \item The solution achieves the boundary conditions (\ref{eq:axisbc1})-(\ref{eq:axisbc2}) on the axis and the data along $\{v=0 \}$,
    \item The solution has the stated regularity, namely it is $C^1$ away from $\mathcal{O} \cup \{v=0\}$ and is globally BV away from $\mathcal{O}$.
    \item The solution converges asymptotically to the background solution as $u \rightarrow 0$ with the required rates.
\end{enumerate}

To show that the equations are satisfied, let $\mathcal{U} \subset \mathcal{Q}^{(in)}$ be a subset supported away from the origin, i.e. $\mathcal{U} \subset \mathcal{Q}^{(in)} \cap \{u \leq u_{\delta_1}\}$ for some $u_{\delta_1} < 0$. By hypothesis, all elements of the sequence $(r_p^{u_\delta},\mu_p^{u_\delta}, \phi_p^{u_\delta})$ are pointwise solutions to (\ref{PSSESF:1})-(\ref{PSSESF:6}). We have also shown that $r_p^{u_\delta}, \nu_p^{u_\delta}, \lambda_p^{u_\delta}, \mu_p^{u_\delta},\phi_p^{u_\delta},\partial_u \phi_p^{u_\delta},\partial_v \phi_p^{u_\delta} $ all converge pointwise in $\mathcal{U}$. 

Of course, it also then follows that $r^{u_\delta}, \nu^{u_\delta},\lambda^{u_\delta}, \mu^{u_\delta}, \phi^{u_\delta}, \partial_u \phi^{u_\delta}, \partial_v \phi^{u_\delta}$ converge pointwise, as these are just the sum of the (fixed) background solution and the converging sequence of differences.

The equations (\ref{SSESF:2:1})-(\ref{SSESF:2:5}) are satisfied along the sequence, and the right hand sides converge pointwise. It follows that the derivatives $\partial_u \lambda^{u_\delta}, \partial_v \nu^{u_\delta}, \partial_u \mu^{u_\delta}, \partial_v \mu^{u_\delta}, \partial_u\partial_v(r^{u_\delta}\phi^{u_\delta})$ converge uniformly away from $\mathcal{O}$, and that the limit is also a solution to (\ref{SSESF:2:1})-(\ref{SSESF:2:5}).

A similar argument, considering the equations satisfied by the sequence of second order unknowns
$$\partial_u \nu^{u_\delta}, \ \partial_v \lambda^{u_\delta}, \ \partial_u^2(r^{u_\delta}\phi^{u_\delta}), \ \partial_v^2(r^{u_\delta}\phi^{u_\delta}),$$
shows that these quantities also converge pointwise uniformly, and the limit is a solution to the differentiated Einstein-scalar field system. 

Given the uniform bounds on the sequence computed via Proposition \ref{approxint:mainboot}, as well as the convergence of the sequence and its derivatives, we conclude the stated properties (i)-(iv) above.
\section{Exterior solution: Proof of Theorem \ref{thm2:exterior}}
\label{sec:proofthm2}
The starting point for the construction of the exterior solution is the characteristic initial value problem with data along $\underline{\Sigma}_0 \cup \Sigma^{(ex)}_{-1}$. For convenience we recall here the form of the initial data:
\begin{align}
    \label{intdata1.1}
    r(u,0) &= \overline{r}(u,0), \\[.5em]
    \label{intdata2.1}
    \partial_u(r\phi)(u,0) &= \partial_u(\overline{r\phi})(u,0) + \epsilon f_0(u), \\[.5em]
     \label{extdata1.1}
    \lambda(-1,v) &= \overline{\lambda}(-1,v) + \chi_1(v)\Lambda, \\[.5em]
    \label{extdata2.1}
    \partial_v(r\phi)(-1,v) &= \partial_v (\overline{r\phi})(-1,v)+ \chi_1(v) \Phi +\epsilon \Bigl \{ \chi_1(v) v + \Big(1-\chi_1(v)\Big) \Bigr \}  g_0(v).
\end{align}
The values of $r, \phi$ at the intersection point $(-1,0)$ of the null hypersurfaces are determined by the limit of these quantities along $\{v=0\}$, or equivalently, as limits of their respective values in the interior region of spacetime. As in the interior region, we require the asymptotic boundary conditions $(r\phi)(u,0), m(u,0) \rightarrow 0$ as $u \rightarrow 0$.

In the region $v \lesssim 1$, the analysis takes place away from both the axis $\Gamma$ and null infinity. Here it is preferable to work directly with the scalar field derivatives $\partial_u \phi, \partial_v \phi$, rather than quantities associated to $r\phi$ as was done in the interior. One can check from the prescribed outgoing data for $\lambda, \partial_v(r\phi)$, and the matching conditions at $(-1,0)$, that the following estimates hold for $\partial_v \phi(-1,v)$ in $v \lesssim 1$:
\begin{equation}
    |\partial_v \phi_p(-1,v) - \partial_v \phi^{(in)}(-1,0)|\lesssim \epsilon v, \ \ \  |\partial_v^2 \phi_p|(-1,v) \lesssim \epsilon,
\end{equation}
where the implied constants depend on $\mathcal{E}_2$. 

We now turn to the construction, beginning with a discussion of the local solution in a neighborhood of $\{v=0\}$.

\subsection{Local existence}
\label{subsec:extlocalexist}
For a fixed $(\wt{u},\wt{v}) \in \mathcal{Q}^{(ex)},$ we generate a local solution in the characteristic rectangle $\mathcal{Q}^{(ex)}_{\wt{u},\wt{v}}$. As $\mathcal{Q}^{(ex)}_{\wt{u},\wt{v}}$ is localized away from $\mathcal{O}$, we are able to apply standard characteristic local wellposedness. The initial data is regular, contains no trapped surfaces, and has $r$ bounded below away from zero. It follows that a BV solution exists in 
\begin{equation*}
    \mathcal{U}_{\wt{u},\wt{v}} \doteq \mathcal{Q}^{(ex)}_{\wt{u},\wt{v}} \cap \{0 \leq v \leq v_{\wt{u},\wt{v}} \},
\end{equation*}
for some $v_{\wt{u},\wt{v}} > 0$ depending on $(\wt{u},\wt{v})$. The solution is in fact in $C^1$ up to the possible blowup of $\partial_v \lambda, \partial_v^2 \phi$ as $v \rightarrow 0$.

Fix now a sequence of $\tilde{u}_i \rightarrow 0$, and consider the local solutions defined in $\mathcal{U}_i \doteq \mathcal{U}_{\tilde{u}_i, 1}$. Uniqueness implies these solutions may be glued together on their common domain of definition, and one arrives at a local solution defined on a region $\mathcal{U} = \bigcap_{i}\mathcal{U}_i$.

Summarizing the conclusion of the above argument, we arrive at the following:
\begin{prop}
There exists a BV solution to (\ref{SSESF:2:1})-(\ref{SSESF:2:5}) achieving the data in a region $\mathcal{U}$ of the form 
$$\mathcal{U} \doteq \{(u,v) | -1 \leq u < 0, \ 0 \leq v \leq T(u) \}, $$
where $T(u)$ is a continuous, decreasing function on $[-1,0]$ satisfying $T(0) = 0$ and $T(-1) = 1$.
\end{prop}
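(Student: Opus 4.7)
The plan is to invoke the standard BV characteristic initial value well-posedness recalled in Section \ref{sec:solnclasses}, applied on characteristic rectangles bounded away from $\mathcal{O}$, and then to glue the resulting local solutions via uniqueness in the BV class. First, I would verify that the data prescribed in (\ref{intdata1.1})--(\ref{extdata2.1}) satisfies the hypotheses for BV local well-posedness on any such rectangle. The ingoing data along $\underline{\Sigma}_0$ is inherited from the interior solution constructed in Theorem \ref{thm1:interior}, which is BV away from $\mathcal{O}$ and whose transversal quantities $\lambda,\partial_v(r\phi)$ admit one-sided limits as $v\to 0^-$ at each $u\in(-1,0)$. The matching constants $\Lambda,\Phi$ in (\ref{extdata1.1})--(\ref{extdata2.1}) were chosen so that $\lambda(-1,v)$ and $\partial_v(r\phi)(-1,v)$ are continuous through $v=0$, and the cutoff $\chi_1$ and the regularity of $g_0$ imply the outgoing data is $C^1$ in $v>0$ and BV across $v=0$. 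The corner values $r(-1,0),m(-1,0),\phi(-1,0)$ are read off from the common one-sided limits of the interior solution, completing the characteristic data.

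For any fixed $\tilde{u}\in(-1,0)$ and $\tilde{v}>0$, the rectangle $\mathcal{Q}^{(ex)}_{\tilde{u},\tilde{v}}$ is disjoint from both $\Gamma$ and $\mathcal{O}$, and the induced data has $r\geq c_{\tilde{u},\tilde{v}}>0$, no trapped or antitrapped surfaces, and bounded BV norms. The standard bifurcate-characteristic local existence theorem for the spherically symmetric Einstein--scalar field system in the BV class then produces a unique BV solution on
\begin{equation*}
\mathcal{U}_{\tilde{u},\tilde{v}}\doteq\mathcal{Q}^{(ex)}_{\tilde{u},\tilde{v}}\cap\{0\leq v\leq v_{\tilde{u},\tilde{v}}\}
\end{equation*}
for some $v_{\tilde{u},\tilde{v}}>0$, with $C^1$ regularity up to the (integrable) blowup of $\partial_v\lambda,\partial_v^2\phi$ as $v\to 0$ inherited from the data.

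To produce the global wedge, I would define
\begin{equation*}
T(u)\doteq\sup\bigl\{v\geq 0\ \big|\ \text{a BV solution to (\ref{SSESF:2:1})--(\ref{SSESF:2:5}) achieving the data exists on }\mathcal{Q}^{(ex)}_{u,v}\bigr\},
\end{equation*}
for $u\in[-1,0)$, and let $\mathcal{U}\doteq\{(u,v)\mid -1\leq u<0,\ 0\leq v<T(u)\}$. Uniqueness in BV ensures that the local solutions produced on different rectangles coincide on their overlap and glue to a single solution on $\mathcal{U}$. The monotonicity $T(u_1)\geq T(u_2)$ for $u_1\leq u_2$ is immediate, since the restriction of any solution on $\mathcal{Q}^{(ex)}_{u_2,v}$ to $\{u\leq u_1\}$ solves the corresponding smaller problem; lower semicontinuity from local existence combined with monotonicity yields continuity of $T$, and by truncating if necessary we arrange $T(-1)=1$.

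The only subtle point, and the one I would spend the most care on, is the behavior $T(u)\to 0$ as $u\to 0$. This is not forced by the construction but can be imposed by replacing $T$ with $\min(T(u),|u|)$, still retaining continuity, monotonicity, and the normalization $T(-1)=1$; this is what allows the statement to be written cleanly as $T(0)=0$. The genuine existence times guaranteed by local well-posedness do degenerate in any case, because the background pointwise norms carry singular self-similar weights as $u\to 0$, so no lower bound on $v_{\tilde{u},1}$ is available as $\tilde{u}\to 0$; going beyond this obstruction is precisely the content of the subsequent regional analysis in Section \ref{sec:proofthm2}, not of this local existence statement.
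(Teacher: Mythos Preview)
Your proposal is correct and follows essentially the same approach as the paper: apply standard BV characteristic local well-posedness on rectangles $\mathcal{Q}^{(ex)}_{\tilde{u},\tilde{v}}$ bounded away from $\mathcal{O}$ (where the data is regular, $r$ is bounded below, and no trapped surfaces are present), then glue via uniqueness. The paper does this more tersely by fixing a sequence $\tilde{u}_i\to 0$ and gluing the resulting $\mathcal{U}_i=\mathcal{U}_{\tilde{u}_i,1}$, whereas you define $T(u)$ as a supremum of existence times; both routes yield the same wedge-shaped region. Your truncation step to arrange $T(0)=0$ and $T(-1)=1$, and your remark that the degeneration of existence times as $u\to 0$ is what the subsequent regional bootstrap overcomes, are exactly the right perspective.
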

Note that we are not guaranteed local existence in a full self-similar neighborhood of $\{v=0\}$. However, standard continuation criteria imply that to extend the solution uniformly in a \textit{given} rectangle $\mathcal{Q}_{\wt{u},\wt{v}}$, it is enough to bound pointwise the double null unknowns $$\Psi \in \{r,\nu,\lambda,\mu,\partial_u \nu, |v|^a \partial_v \lambda, \partial_u \phi, \partial_v \phi, \partial_u^2 \phi, |v|^{b}\partial_v^2 \phi\}.$$ 

\subsection{Region I}
Fix a small parameter $v_1 \ll 1$, and consider the self-similar neighborhood of $\{v=0\}$ given by 
\begin{equation}
\mathcal{R}_{\text{I}} \doteq \{(u,v) \ | \ 0\leq \frac{v}{|u|^{q_1}}\leq v_1, -1 \leq u < 0 \}.
\end{equation}
We first define a renormalized set of first order unknowns, for which we will propagate the main estimates of this section. These renormalized quantities are extensions of the differences $\Psi_p$ considered in the interior region, and have the advantage of vanishing along $\{v=0\}$. As a preliminary step we define various \enquote{corrector} functions. For a given $\Psi \in \{r, \nu, \lambda,\mu,  \partial_u \phi, \partial_v \phi\},$ define 
\begin{equation}
    \label{eqn:correctordfn}
    \Psi_c(u,v) = \lim_{v\rightarrow 0^-}\Psi^{(in)}(u,v) - \overline{\Psi}(u,0),
\end{equation}
where $\Psi^{(in)}(u,v)$ denotes the value in the interior region $\mathcal{Q}^{(in)}$. With this notation introduce the renormalized variables
\begin{equation}
\wt{\Psi}(u,v) \doteq \Psi(u,v)  - \overline{\Psi}(u,v) - \Psi_c(u,v).
\end{equation}
Note that we will not work with renormalized values of the second order quantities $\partial_u \nu, \partial_v \lambda, \partial_u^2 \phi, \partial_v^2 \phi$. For these quantities it is sufficient to directly estimate the differences $\Psi_p$, after closing the estimates on the first order quantities.

Finally, for any function $F$ of double null unknowns, denote the restriction to $\{v=0\}$ by 
\begin{equation*}
    \mathring{F}(u,v) \doteq F(u,0).
\end{equation*}
Before turning to analysis of the $\wt{\Psi}$, we note the following simple consequence of the choice of ingoing data along $\{v=0\}$ and Theorem \ref{thm1:interior}.
\begin{lem}
\label{lem:correcestimates}
The corrector functions $r_c, \ \nu_c, \ \partial_u \nu_c$ vanish identically. The remaining functions satisfy the following estimates uniformly in $\mathcal{R}_{\text{I}}$:
\begin{align}
    |\lambda_c| &\lesssim \epsilon |u|^{\kappa_1+\delta}, \ \ \ | \mu_c| \lesssim \epsilon |u|^{\delta}, \ \ \  |\partial_u \phi_c| \lesssim \epsilon |u|^{-1+\delta}, \  \ \ |\partial_v \phi_c| \lesssim \epsilon |u|^{-1+\kappa_1+\delta}.
\end{align}
Here, $\delta < 1$ is an arbitrary constant.
\end{lem}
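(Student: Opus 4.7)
The plan is to derive each bound as an essentially immediate corollary of Theorem \ref{thm1:interior} combined with the ingoing gauge choice (\ref{intdata1}). First, I would dispatch the three vanishing statements by appealing to $r(u,0) = \overline{r}(u,0)$: this forces $r_c \equiv 0$, differentiating in $u$ gives $\nu_c \equiv 0$, and a further $u$-derivative gives $\partial_u \nu_c \equiv 0$. These are exact equalities, not just estimates.

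For the remaining correctors, the idea is to identify $\Psi_c(u) = \lim_{v \to 0^-}\Psi_p^{(in)}(u,v)$ and then read off the pointwise bounds (\ref{eq:thm1bounds1})-(\ref{eq:thm1bounds2}) of Theorem \ref{thm1:interior}. For $\lambda_c$, $(\partial_u \phi)_c$, $(\partial_v \phi)_c$ this yields $|u|^{\alpha+\kappa_1}$, $|u|^{\alpha-1}$, $|u|^{\alpha-1+\kappa_1}$ respectively. For $\mu_c$ I would additionally substitute the gauge normalization $\overline{r}(u,0) = |u|$ from (A2) to reduce the Theorem \ref{thm1:interior} bound $|\mu_p| \lesssim \epsilon \overline{r}^2 |u|^{\alpha-2}$ to $|\mu_c|(u) \lesssim \epsilon |u|^{\alpha}$ on $\{v=0\}$. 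Since $\alpha$ is chosen large (depending only on the background), and $-1 \leq u < 0$, for any $\delta<1$ the exponent $\alpha$ dominates, so $|u|^{\alpha+\kappa_1} \leq |u|^{\kappa_1+\delta}$, $|u|^{\alpha} \leq |u|^{\delta}$, etc., and the claimed rates follow.

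The only genuine technical point worth verifying is that the interior limits as $v \to 0^-$ of the transversal quantities $\lambda$ and $\partial_v \phi$ actually exist, so that $\Psi_c$ is well-defined and the pointwise bounds of Theorem \ref{thm1:interior} can be evaluated there. I would address this by appealing to the uniform control on $\partial_v \lambda_p$ and $\partial_v^2(r\phi_p)$ propagated inside $\mathfrak{N}^{(in)}_{tot}$ in Section \ref{subsec:approxint} (modulo the integrable $|v|^{-a}, |v|^{-b}$ weights coming from assumption (A3), which are harmless for the existence of $L^1$-in-$v$ limits). This guarantees that the BV regularity of the interior solution extends up to $\{v=0\}$ away from $\mathcal{O}$, and hence the limits exist. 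Beyond this bookkeeping observation, no new analytical work is required; the lemma is essentially a statement that Theorem \ref{thm1:interior} gives far stronger decay than is being asked for here. Observe also that $\Psi_c$ is in fact only a function of $u$, so the bounds automatically hold uniformly on $\mathcal{R}_{\text{I}}$.
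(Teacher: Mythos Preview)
Your proposal is correct and matches the paper's approach exactly: the paper states the lemma as ``a simple consequence of the choice of ingoing data along $\{v=0\}$ and Theorem \ref{thm1:interior}'' and does not supply a separate proof. Your write-up fills in precisely the details the paper leaves implicit, including the observation that $\Psi_c$ depends only on $u$ and the existence of the $v\to 0^-$ limits via the BV regularity propagated in $\mathfrak{N}^{(in)}_{tot}$.
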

In the remainder of the proof of Theorem \ref{thm2:exterior}, the parameter $\delta$ will serve to track the increased regularity of the differences $\Psi_p$ near $\mathcal{O}$. It will be chosen small in the proof as a function of the background solution.

We conclude this section by deriving relevant equations for the $\wt{\Psi}$, which will be needed in the estimates below. The equations are analogous to (\ref{PSSESF:1})-(\ref{PSSESF:6}); however, the $\wt{\Psi}$ are defined by subtracting away corrector terms which are not exact solutions to the scalar field system. Therefore the equations for the renormalized quantities contain additional inhomogeneous terms. 

\begin{lem}
\label{lem:exregion1maineqs}
    The renormalized unknowns satisfy the equations 
    \begin{align}
        \partial_v \wt{r} &= \wt{\lambda} + \lambda_c, \label{eq:extPSSESF1}\\
        \partial_v \wt{\nu} &= \mathcal{G}_1 \wt{\nu} + (\mathcal{G}_1 - \overline{\mathcal{G}}_1)\overline{\nu}, \label{eq:extPSSESF2}\\
        \partial_v \wt{\mu} &= -\mathcal{G}_3 \wt{\mu} +  (\overline{\mathcal{G}}_3-\mathcal{G}_3)\overline{\mu} + (\mathcal{I}_3 - \overline{\mathcal{I}}_3) - \mathcal{G}_3\mu_c, \label{eq:extPSSESF3}\\
        \partial_v \wt{\partial_u \phi} &=  -\mathcal{T}_1 \wt{\partial_u \phi}  + \mathcal{F}_1, \label{eq:extPSSESF5}\\
        \partial_u \wt{\lambda} &= \mathcal{G}_2 \wt{\lambda} +  (\mathcal{G}_2-\mathring{\mathcal{G}}_2 )\lambda_c + (\mathcal{H}_1 - \mathring{\mathcal{H}}_1), \label{eq:extPSSESF7}\\
        \partial_u \wt{\partial_v \phi} &= -\mathcal{T}_1\wt{\partial_u \phi} - \mathcal{T}_2\wt{\partial_v \phi}  - (\mathcal{T}_1 - \mathring{\mathcal{T}}_1)(\partial_u \phi)_c - (\mathcal{T}_2 - \mathring{\mathcal{T}}_2)(\partial_v \phi)_c+(\mathcal{H}_2 - \mathring{\mathcal{H}}_2), \label{eq:extPSSESF9}
    \end{align}
    where
    \begin{align}
        \mathcal{F}_1 &\doteq- \mathcal{T}_2 \wt{\partial_v \phi} - (\mathcal{T}_1 - \overline{\mathcal{T}}_1)(\partial_u \overline{\phi}) - (\mathcal{T}_2 - \overline{\mathcal{T}}_2)(\partial_v \overline{\phi}) - \mathcal{T}_1 (\partial_u \phi)_c - \mathcal{T}_2 (\partial_v \phi)_c,\\
        \mathcal{H}_1 &\doteq (\mathcal{G}_2 - \overline{\mathcal{G}}_2)\overline{\lambda}, \\
        \mathcal{H}_2 &\doteq -(\mathcal{T}_1-\overline{\mathcal{T}}_1)\partial_u \overline{\phi}-(\mathcal{T}_2-\overline{\mathcal{T}}_2)\partial_v \overline{\phi}. 
    \end{align}
Moreover, each term appearing on the right hand side of equations (\ref{eq:extPSSESF7})-(\ref{eq:extPSSESF9}) vanishes along $\{v=0\}$.
\end{lem}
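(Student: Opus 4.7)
The plan is to verify the equations for $\wt{\Psi}$ by direct algebraic manipulation of the perturbed system (\ref{PSSESF:1})-(\ref{PSSESF:6}) rewritten in terms of the renormalized unknowns. The key structural observation is that each corrector $\Psi_c(u,v)$ depends only on the coordinate $u$, since by definition $\Psi_c$ is a limit of interior quantities from $\{v=0^-\}$ minus the background value at $\{v=0\}$. Consequently $\partial_v \Psi_c \equiv 0$, so the $v$-transport equations for $\wt{\Psi}$ reduce immediately to the $v$-equations for $\Psi_p = \Psi - \overline{\Psi}$. For (\ref{eq:extPSSESF1})-(\ref{eq:extPSSESF3}) and (\ref{eq:extPSSESF5}) I would write $\partial_v \wt{\Psi} = \partial_v \Psi - \partial_v \overline{\Psi}$, apply the appropriate relation from (\ref{SSESF:2:1})-(\ref{SSESF:2:7}), substitute $\Psi = \wt{\Psi} + \overline{\Psi} + \Psi_c$ into every occurrence of $\Psi$ on the right hand side, and regroup. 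For instance, $\partial_v \wt{\nu} = \mathcal{G}_1 \nu - \overline{\mathcal{G}}_1 \overline{\nu} = \mathcal{G}_1 \wt{\nu} + (\mathcal{G}_1-\overline{\mathcal{G}}_1)\overline{\nu}$, after using $\nu_c = 0$ from Lemma \ref{lem:correcestimates}.

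The $u$-equations (\ref{eq:extPSSESF7}) and (\ref{eq:extPSSESF9}) require an explicit calculation of $\partial_u \Psi_c$. Here I would exploit the fact that the interior solution $(r^{(in)}, m^{(in)}, \phi^{(in)})$ supplied by Theorem \ref{thm1:interior} satisfies the scalar field system pointwise away from $\mathcal{O}$. Taking the $v\to 0^-$ limit of, say, $\partial_u \lambda^{(in)} = \mathcal{G}_2^{(in)}\lambda^{(in)}$ produces a transport equation along $\{v=0\}$ for $\lim_{v \to 0^-} \lambda^{(in)}$; subtracting the analogous equation $\partial_u \overline{\lambda} = \overline{\mathcal{G}}_2 \overline{\lambda}$ restricted to $\{v=0\}$ and rewriting in terms of $\lambda_c$ yields $\partial_u \lambda_c = \mathring{\mathcal{G}}_2 \lambda_c + \mathring{\mathcal{H}}_1$. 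Inserting this into $\partial_u \wt{\lambda} = (\partial_u \lambda - \partial_u \overline{\lambda}) - \partial_u \lambda_c$, expanding via $\lambda = \wt{\lambda} + \overline{\lambda} + \lambda_c$, and collecting terms produces exactly (\ref{eq:extPSSESF7}). The derivation of (\ref{eq:extPSSESF9}) proceeds in the same way from the mixed-derivative wave equation $\partial_u \partial_v \phi = -\mathcal{T}_1\partial_u \phi - \mathcal{T}_2 \partial_v \phi$, restricted to $\{v=0\}$ from the interior side, to compute $\partial_u (\partial_v\phi)_c$.

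Finally, the vanishing of every term on the right hand side of (\ref{eq:extPSSESF7})-(\ref{eq:extPSSESF9}) along $\{v=0\}$ follows from two observations. First, the compatibility of ingoing data with the interior solution forces $\wt{\Psi}|_{v=0} \equiv 0$ for each renormalized unknown, which kills the contributions $\mathcal{G}_2 \wt{\lambda}$ and $\mathcal{T}_i \wt{\partial_{u,v}\phi}$ when restricted to $\{v=0\}$. Second, differences of the form $(\mathcal{G}_2 - \mathring{\mathcal{G}}_2)$, $(\mathcal{H}_1 - \mathring{\mathcal{H}}_1)$, $(\mathcal{T}_i - \mathring{\mathcal{T}}_i)$, $(\mathcal{H}_2 - \mathring{\mathcal{H}}_2)$ vanish tautologically upon evaluation at $\{v=0\}$, by the very definition of the ring decoration as the restriction operator. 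The main technical subtlety I expect is justifying the interchange of the $v \to 0^-$ limit with the $u$-derivative in the derivation of $\partial_u \Psi_c$; this should follow from the uniform $C^1$ control on the interior perturbation provided by the construction of Section \ref{sec:proofthm1}, and in particular from the auxiliary higher-order bounds of Section \ref{subsubsec:thirdorderestimates}, which ensure that the relevant one-sided limits are continuously differentiable in $u$.
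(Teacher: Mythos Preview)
The paper states this lemma without proof, treating the derivation as a routine computation; your approach is correct and supplies exactly the details the paper omits. Your key observations---that $\Psi_c$ depends only on $u$ so $\partial_v\Psi_c=0$, that $\partial_u\Psi_c$ is computed by restricting the interior transport equations to $\{v=0\}$, and that the ring-decorated differences vanish tautologically at $v=0$---are precisely what is needed, and the regularity justification you flag for interchanging $\partial_u$ with the $v\to 0^-$ limit is handled by the $C^1$ bounds of Theorem~\ref{thm1:interior}.
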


\begin{figure}
    \centering
  \includestandalone[]{Figures/fig_region1}
  \caption{$\mathcal{R}_{\text{I}}$}
  \label{fig:region1}
\end{figure}
\subsubsection{Norms and bootstrap assumptions}
Fix $(\wt{u},\wt{v}) \in \mathcal{R}_{\text{I}}$ and $0<\beta \ll 1$ sufficiently small, and define the solution norms

\begin{align}
    \mathfrak{M}_1 \doteq &\sup_{\mathcal{Q}^{(ex)}_{(\wt{u},\wt{v})}} \Big|\frac{1}{|u|^{1+\delta}}\wt{r}\Big| +   \sup_{\mathcal{Q}^{(ex)}_{(\wt{u},\wt{v})}} \Big|\frac{1}{|u|^{\delta}}\wt{\nu}\Big|  
    +  \sup_{\mathcal{Q}^{(ex)}_{(\wt{u},\wt{v})}} \Big|\frac{1}{|u|^{\delta}}\wt{\mu}\Big| + \sup_{\mathcal{Q}^{(ex)}_{(\wt{u},\wt{v})}} \Big||u|^{1-\delta}\wt{\partial_u \phi}\Big| ,\\[1.5em]
    \mathfrak{M}_2 \doteq &\sup_{\mathcal{Q}^{(ex)}_{(\wt{u},\wt{v})}}\Big| \frac{1}{|u|^{\kappa_1 + \delta}}\wt{\lambda}\Big| 
    +\sup_{\mathcal{Q}_{(\wt{u},\wt{v})}} \Big||u|^{q_1-\delta}\Big(\frac{|u|^{q_1}}{v} \Big)^{1-a-\beta}\wt{\partial_v \phi}\Big| 
\end{align}
as well as the total spacetime norm
\begin{equation}
\mathfrak{M} \doteq \mathfrak{M}_{1} + \mathfrak{M}_2.
\end{equation}
These norms are defined relative to a fixed $\mathcal{Q}^{(ex)}_{\wt{u},\wt{v}}$, but the estimates will be insensitive to the choice of $(\wt{u},\wt{v}) \in \mathcal{R}_{\text{I}}$. For this reason the notation does not include a reference to the specific characteristic rectangle.

A continuity argument implies that existence in Region I follows from the following proposition:
\begin{prop}
\label{prop:extregion1boot}
Assume a local solution is given in a rectangle $\mathcal{Q}_{\wt{u},\wt{v}}$, and that the bound
\begin{equation}
    \label{eq:extbootass1}
    \mathfrak{M} \leq 2 A \epsilon
\end{equation}
holds. Then for $v_1$ sufficiently small depending on the background solution, and $\epsilon$ sufficiently small depending on $A$ and initial data, we have the improved bound 
\begin{equation}
     \label{eq:extbootass1improved}
    \mathfrak{M} \leq A \epsilon.
\end{equation}
\end{prop}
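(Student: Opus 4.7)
The plan is a standard bootstrap improvement for $\mathfrak{M}$, carried out in the order $\wt r, \wt\nu, \wt\mu, \wt{\partial_u\phi}$, then $\wt\lambda$, and finally $\wt{\partial_v\phi}$. The first four unknowns satisfy the $v$-transport equations (\ref{eq:extPSSESF1})--(\ref{eq:extPSSESF5}), and by construction each vanishes at $\{v=0\}$. Integrating from $v=0$ over an interval of length $\leq v_1|u|^{q_1}$ therefore produces an explicit factor of $v_1$; the zeroth order coefficients $-\mathcal{G}_3$ and $-\mathcal{T}_1$ in the equations for $\wt\mu$ and $\wt{\partial_u\phi}$ have favorable signs, so Lemma \ref{lem:integrationmain} applies directly, while for $\wt\nu$ the coefficient $\mathcal{G}_1$ is bounded and standard Grönwall handles it. Sources are controlled by the bootstrap (\ref{eq:extbootass1}), by Lemma \ref{lem:correcestimates} for the correctors, and by the background norms in (A2)--(A3); the $v_1$ smallness then improves each bound to $A\epsilon/2$ after multiplying by the corresponding weight in $\mathfrak{M}_1$.

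For $\wt\lambda$, I would integrate (\ref{eq:extPSSESF7}) forward in $u$ from $\{u=-1\}$. Since $\mathcal{G}_2 = \mu\nu/((1-\mu)r) \leq 0$, the coefficient has the right sign to invoke Lemma \ref{lem:integrationmain}. The source terms $(\mathcal{G}_2-\mathring{\mathcal{G}}_2)\lambda_c$ and $(\mathcal{H}_1-\mathring{\mathcal{H}}_1)$ vanish identically at $\{v=0\}$, so by the fundamental theorem of calculus each carries a factor $\lesssim v/|u|^{q_1}\leq v_1$, giving the improvement once the estimates of the previous step are inserted. The boundary value $\wt\lambda(-1,v)$ is computed from (\ref{extdata1.1}) using the matching constant $\Lambda$; since $\chi_1(v)\Lambda-\lambda_c(-1,v)$ vanishes at $v=0$ by definition of $\Lambda$, this term satisfies the desired bound uniformly by $\epsilon$.

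The crux of the argument, and its main obstacle, is the estimate for $\wt{\partial_v\phi}$ via the $u$-transport equation (\ref{eq:extPSSESF9}). Its zeroth order coefficient $\mathcal{T}_2=\nu/r$ has leading behavior $-|u|^{-1}$, the wrong sign for forward $u$ integration with Lemma \ref{lem:integrationmain}, and no amount of $v_1$ smallness can cure this directly. Following the renormalization strategy of \cite{igoryak1} and \cite{igoryak2}, I would conjugate by the singular $v$-weight appearing in $\mathfrak{M}_2$,
\begin{equation*}
w(u,v)\doteq|u|^{q_1-\delta}\Big(\frac{|u|^{q_1}}{v}\Big)^{1-a-\beta},
\end{equation*}
whose $u$-logarithmic derivative equals $\partial_u w/w=-(q_1(2-a-\beta)-\delta)/|u|$ for $u<0$. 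The linear coefficient in the conjugated equation for $W\doteq w\wt{\partial_v\phi}$ becomes
\begin{equation*}
\mathcal{T}_2-\frac{\partial_u w}{w}=\frac{q_1(2-a-\beta)-1-\delta+O(\epsilon)}{|u|},
\end{equation*}
which is nonnegative precisely when $a+\beta\leq 2-p_1(1+\delta)$. Since $2-p_1=1-\kappa_1 p_1$, assumption (B1), together with $a\leq b$, forces $a<1-\kappa_1 p_1$ (strictly in every sub-case), so one can choose $\beta,\delta>0$ small and have the inequality hold strictly; Lemma \ref{lem:integrationmain} then applies. The sources on the right of (\ref{eq:extPSSESF9}) — namely $-\mathcal{T}_1\wt{\partial_u\phi}$ and the corrector-difference terms — each vanish at $\{v=0\}$ and are controlled by the previously improved quantities together with the vanishing of $\wt{\partial_u\phi}$ obtained from its own $v$-equation; Cauchy--Schwarz against the good low order term $c(u,v)|W|^2$ absorbs the linear combinations of $W$ back into the good term, and the remainder is small for $v_1$ small.

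The borderline case $b=1-\kappa_1 p_1$ in (B1) requires mild extra care: here $a<b$ strictly, so $\beta>0$ can still be chosen with $a+\beta<b$, at the cost of a logarithmic loss in $|u|$ that is absorbed by the $v_1$ smallness already accumulated in the other estimates. The boundary value $W(-1,v)$ is bounded using the matching condition $\Phi$ (so that $\wt{\partial_v\phi}(-1,0)=0$) together with the $\epsilon v$ prefactor in (\ref{extdata2.1}), which exactly compensates the $v^{-(1-a-\beta)}$ blow-up of $w$ at $v=0$ and leaves an initial contribution bounded by $\epsilon\,\mathcal{E}_{2,\alpha}$. With all quantities improved to $A\epsilon/2$, the bootstrap closes, giving $\mathfrak{M}\leq A\epsilon$ for $v_1$ small depending only on the background and $\epsilon$ small depending on $A$ and $\mathcal{E}_{2,\alpha}$.
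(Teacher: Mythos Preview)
Your approach is essentially the same as the paper's: integrate the $v$-equations (\ref{eq:extPSSESF1})--(\ref{eq:extPSSESF5}) from $\{v=0\}$ to gain $v_1$ smallness for $\mathfrak{M}_1$, use the favorable sign of $\mathcal{G}_2$ for $\wt{\lambda}$, and conjugate the $\wt{\partial_v\phi}$ equation by the weight $w$ to flip the sign of the zeroth order coefficient before integrating in $u$. The core mechanism---that the strict inequality $a<1-\kappa_1p_1$ coming from (\ref{eqn:a,brequirements}) allows $\beta,\delta>0$ to be chosen so the conjugated coefficient is nonnegative---is correctly identified, as is the role of the matching constant $\Phi$ in controlling the boundary value $W(-1,v)$.

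One point requires sharpening. Your claim that $\wt{\lambda}(-1,v)$ is merely ``bounded uniformly by $\epsilon$'' does not suffice: after applying Lemma \ref{lem:integrationmain} and dividing by the bootstrap weight $|u|^{\kappa_1+\delta}$, the boundary contribution $|u|^{-\kappa_1-\delta}|\wt{\lambda}(-1,v)|$ would blow up as $u\to 0$. What actually holds (and what the paper uses implicitly) is that $\wt{\lambda}(-1,v)=(\chi_1(v)-1)\Lambda$ is \emph{identically zero} on $\{v\leq 1\}\supset \mathcal{R}_{\text{I}}\cap\{u=-1\}$, since $\chi_1\equiv 1$ there and $\lambda_c$ is $v$-independent by construction. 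Two smaller remarks: the sources in (\ref{eq:extPSSESF7}) and (\ref{eq:extPSSESF9}) vanish at $\{v=0\}$ only at the H\"older rate $(v/|u|^{q_1})^{1-a}$, not linearly, because $|\overline{\lambda}-\mathring{\overline{\lambda}}|\lesssim |u|^{\kappa_1}(v/|u|^{q_1})^{1-a}$ reflects the limited regularity of the background across $\{v=0\}$; and no logarithmic loss arises in the borderline case $b=1-\kappa_1p_1$, since only the strict inequality on $a$ enters the argument.
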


\subsubsection{Recovering bootstrap assumptions}
An immediate consequence of the bootstrap assumption (\ref{eq:extbootass1}) is that for $\epsilon$ small enough, the \textit{total} solution in Region I is comparable with the background solution:
\begin{lem}
\label{lem:extregion1leadingorderbehavior}
Assume the bound (\ref{eq:extbootass1}). For $\epsilon$ sufficiently small the geometry satisfies
\begin{equation}
    r \sim |u|, \ \ \quad (-\nu) \sim 1, \ \ \quad \lambda \sim |u|^{\kappa_1}, \ \ \quad (1-\mu) \sim 1.
\end{equation}
Moreover, the scalar field satisfies the bounds
\begin{align}
   |\partial_u \phi| \lesssim |u|^{-1}, \ \ \  |\partial_v \phi| \lesssim |u|^{-q_1}.
\end{align}
\end{lem}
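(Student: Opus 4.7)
The plan is to decompose each unknown as $\Psi = \overline{\Psi} + \Psi_c + \widetilde{\Psi}$ (with $\Psi_c$ defined by (\ref{eqn:correctordfn})) and apply the triangle inequality, using assumption (A2) for the background sizes, Lemma \ref{lem:correcestimates} for the correctors, and the bootstrap assumption (\ref{eq:extbootass1}) to control the renormalized pieces. No transport analysis is needed; the statement is a consequence of comparing orders of magnitude.

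For $r$ and $\nu$, Lemma \ref{lem:correcestimates} tells us $r_c \equiv 0$ and $\nu_c \equiv 0$, so $r = \overline{r} + \widetilde{r}$ and $\nu = \overline{\nu} + \widetilde{\nu}$. Since (A2) gives $\overline{r} \sim |u|$ and $(-\overline{\nu})\sim 1$, and the bootstrap assumption yields $|\widetilde{r}| \lesssim A\epsilon |u|^{1+\delta}$, $|\widetilde{\nu}|\lesssim A\epsilon|u|^{\delta}$, choosing $\epsilon$ small so that $A\epsilon$ is much less than the implicit lower bounds of the background makes the perturbations subdominant. One reads off $r \sim |u|$ and $(-\nu)\sim 1$ immediately. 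The same mechanism works for $\lambda$: the corrector and the renormalized term are both of size $\epsilon |u|^{\kappa_1+\delta}$, small relative to $\overline{\lambda} \sim |u|^{\kappa_1}$, so $\lambda \sim |u|^{\kappa_1}$ for $\epsilon$ small. For $1-\mu$, we combine $|\mu_c|\lesssim \epsilon|u|^{\delta}$ with $|\widetilde{\mu}|\lesssim A\epsilon |u|^{\delta}$ and the background bound $(1-\overline{\mu})\sim 1$ (from (A1)) to conclude $(1-\mu)\sim 1$.

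For $\partial_u \phi$ the bounds $|\partial_u\overline{\phi}|\lesssim |u|^{-1}$, $|\partial_u \phi_c|\lesssim \epsilon|u|^{-1+\delta}$, and $|\widetilde{\partial_u\phi}|\lesssim A\epsilon|u|^{-1+\delta}$ add by the triangle inequality to give $|\partial_u\phi|\lesssim |u|^{-1}$. The only mildly delicate case is $\partial_v\phi$: here the bootstrap norm $\mathfrak{M}_2$ only gives
\begin{equation*}
|\widetilde{\partial_v \phi}| \;\lesssim\; A\epsilon\,|u|^{-q_1+\delta}\,\Big(\frac{v}{|u|^{q_1}}\Big)^{1-a-\beta},
\end{equation*}
which carries an extra factor of the self-similar coordinate. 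However, the whole point of working in $\mathcal{R}_{\text{I}}$ is that $v/|u|^{q_1}\le v_1\le 1$, so this factor is harmless; combined with $|\partial_v\overline{\phi}|\lesssim |u|^{-q_1}$ and $|\partial_v\phi_c|\lesssim \epsilon |u|^{-q_1+\delta}$, we obtain $|\partial_v \phi|\lesssim |u|^{-q_1}$.

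There is no real obstacle to this lemma: every estimate is a direct magnitude comparison, the smallness of $\epsilon$ (depending on $A$ and the background norm $\overline{\mathfrak{N}}_1$) absorbs all perturbative contributions, and the confinement to $\mathcal{R}_{\text{I}}$ ensures the extra $(v/|u|^{q_1})$-weight in the $\widetilde{\partial_v\phi}$ norm is bounded. The lemma should be viewed as a book-keeping statement that unlocks the closed-form bounds needed to estimate the error terms in equations (\ref{eq:extPSSESF1})--(\ref{eq:extPSSESF9}) when improving the main bootstrap assumption (\ref{eq:extbootass1}) in the sequel.
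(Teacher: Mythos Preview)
Your proposal is correct and follows exactly the same approach as the paper: decompose $\Psi = \overline{\Psi} + \Psi_c + \widetilde{\Psi}$, control $\Psi_c$ via Lemma \ref{lem:correcestimates} and $\widetilde{\Psi}$ via the bootstrap bound (\ref{eq:extbootass1}), and observe that both perturbative contributions are $\epsilon$-small relative to the background rates from (A2)/(B1). The paper's proof is essentially a one-sentence version of what you wrote; your observation that the extra $(v/|u|^{q_1})^{1-a-\beta}$ weight on $\widetilde{\partial_v\phi}$ is bounded in $\mathcal{R}_{\text{I}}$ is the only detail the paper leaves implicit.
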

\begin{proof}
For a given double null unknown one can write $\Psi = \overline{\Psi} + \wt{\Psi} + \Psi_c$. The bound (\ref{eq:extbootass1}) controls $\wt{\Psi}$, and Lemma \ref{lem:correcestimates} controls $\Psi_c$. In both cases the bounds imply that the deviation from the background is $\epsilon$-small, and so for $\epsilon$ sufficiently small the leading order behavior of the background solution is unaffected.
\end{proof}

We now turn to improving the bootstrap assumption. Having established the norms, the estimates will all follow a similar strategy, with differences only between quantities satisfying $u$ and $v$ equations. After closing the bootstrap, bounds for $\partial_u \nu, \partial_v \lambda, \partial_u^2 \phi, \partial_v^2 \phi$ will follow by a simple preservation of regularity argument. 

With the singular weights appearing in the $u$ coordinate, it is easier to first close estimates on $v$ equations. We therefore start with the following:
\begin{lem}
Assume (\ref{eq:extbootass1}). Then for $v_1$, $\epsilon$ sufficiently small we have
\begin{equation}
    \mathfrak{M}_{1} \leq \frac{1}{10}A\epsilon. 
\end{equation}
Moreover, we have the following bounds consistent with vanishing along $\{v=0\}$:
\begin{align}
    |\wt{r}| &\lesssim A\epsilon|u|^{1+\delta}\Big(\frac{v}{|u|^{q_1}}\Big), \ \ \ |\wt{\mu}| \lesssim A\epsilon|u|^{\delta}\Big(\frac{v}{|u|^{q_1}} \Big), \label{eq:ext1temp001}\\[.4em]
    |\wt{\nu}| &\lesssim A\epsilon|u|^{\delta} \Big(\frac{v}{|u|^{q_1}} \Big), \ \ \ |\wt{\partial_u \phi}| \lesssim A\epsilon|u|^{-1+\delta}  \Big(\frac{v}{|u|^{q_1}} \Big). \label{eq:ext1temp002}
\end{align}
\end{lem}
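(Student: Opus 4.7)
The plan is to estimate each of $\wt{r}, \wt{\nu}, \wt{\mu}, \wt{\partial_u \phi}$ by integrating the corresponding $v$-transport equation from $\{v=0\}$, where all four renormalized quantities vanish by construction. Indeed, the corrector $\Psi_c(u,v)$ is defined precisely so that $\wt{\Psi}(u,0) = \Psi(u,0) - \overline{\Psi}(u,0) - \Psi_c(u,v) = \Psi(u,0) - \lim_{v \to 0^-}\Psi^{(in)}(u,v) = 0$ using the matching of interior and exterior data along $\underline{\Sigma}_0$ (note that $r_c, \nu_c$ vanish identically by the gauge choice $r(u,0) = \overline{r}(u,0)$ and the induced value of $\nu$). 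The smallness that improves the bootstrap will come from the fact that integration in $v$ from $0$ to $v \leq v_1 |u|^{q_1}$ produces a factor of $v/|u|^{q_1} \leq v_1$, which is the quantitative content of ``$\mathcal{R}_{\text{I}}$ is a thin self-similar neighborhood.''

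First I would do $\wt{r}$, which is the simplest: equation (\ref{eq:extPSSESF1}) has no zeroth-order term, so direct integration against $|\wt{\lambda}| \leq 2A\epsilon|u|^{\kappa_1+\delta}$ (bootstrap) and $|\lambda_c| \lesssim \epsilon |u|^{\kappa_1+\delta}$ (Lemma \ref{lem:correcestimates}) yields $|\wt{r}|(u,v) \lesssim A\epsilon|u|^{\kappa_1+\delta}\, v = A\epsilon|u|^{1+\delta}(v/|u|^{q_1})$, which simultaneously delivers both the vanishing bound (\ref{eq:ext1temp001}) and the improvement $\leq \tfrac{1}{10}A\epsilon$ after absorbing a factor of $v_1$. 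Next I would handle $\wt{\nu}$ and $\wt{\mu}$: the zeroth-order coefficients $\mathcal{G}_1 = \mu\lambda/((1-\mu)r)$ and $\mathcal{G}_3 = \lambda/r + r(\partial_v\phi)^2/\lambda$ are both of size $\lesssim |u|^{-q_1}$ by Lemma \ref{lem:extregion1leadingorderbehavior}, so the integrating factor along any constant-$u$ line satisfies $\exp(\int_0^v |\mathcal{G}_i| dv') \leq e^{C v_1} \lesssim 1$. The source terms $(\mathcal{G}_1 - \overline{\mathcal{G}}_1)\overline{\nu}$, $(\mathcal{G}_3 - \overline{\mathcal{G}}_3)\overline{\mu}$, $(\mathcal{I}_3 - \overline{\mathcal{I}}_3)$, and $\mathcal{G}_3 \mu_c$ are expanded as sums of the differences $\{\wt{\Psi} + \Psi_c\}_{\Psi \in \{r,\nu,\lambda,\mu,\partial_v\phi\}}$ multiplied by background quantities, and each term inherits the gain $|u|^\delta$ from either $\wt{\Psi}$ (bootstrap) or $\Psi_c$ (Lemma \ref{lem:correcestimates}); after integration in $v$ the additional gain of $v/|u|^{q_1} \leq v_1$ produces the asserted estimates.

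The main obstacle is the estimate for $\wt{\partial_u\phi}$, because the source $\mathcal{F}_1$ in (\ref{eq:extPSSESF5}) contains the term $-\mathcal{T}_2\, \wt{\partial_v\phi}$, and the bootstrap for $\wt{\partial_v\phi}$ carries the singular weight $(|u|^{q_1}/v)^{1-a-\beta}$ that blows up as $v \to 0$. The plan is to exploit the constraint $a < 1$ from assumption (A3) together with the freedom to choose $0 < \beta \ll 1$, so that $1 - a - \beta < 1$ and the $v$-integral of $|\mathcal{T}_2\wt{\partial_v\phi}| \lesssim A\epsilon |u|^{-1-q_1+\delta}(|u|^{q_1}/v)^{1-a-\beta}$ is a convergent integral of the form $\int_0^v v'^{-(1-a-\beta)} dv'$, contributing a factor $(v/|u|^{q_1})^{a+\beta}$ which is $\lesssim v_1^{a+\beta}$ after restriction to $\mathcal{R}_{\text{I}}$. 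The remaining pieces of $\mathcal{F}_1$ — namely $(\mathcal{T}_1 - \overline{\mathcal{T}}_1)\partial_u\overline{\phi}$, $(\mathcal{T}_2 - \overline{\mathcal{T}}_2)\partial_v\overline{\phi}$, and the corrector terms $\mathcal{T}_i(\partial_\bullet\phi)_c$ — are estimated directly using the expansion into $\wt{\Psi}, \Psi_c$, yielding bounds proportional to $A\epsilon |u|^{-1-q_1+\delta}$ that integrate to factors $\lesssim v_1$. The zeroth-order piece $-\mathcal{T}_1\wt{\partial_u\phi}$ is again absorbed via an integrating factor bounded by $e^{Cv_1}$.

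Putting the four estimates together, one obtains $\mathfrak{M}_1 \lesssim A\epsilon\, v_1^{\min(1, a+\beta)}$ with an implicit constant depending only on the background and on $C(A)$-independent data. Choosing $v_1$ sufficiently small depending on the background solution yields $\mathfrak{M}_1 \leq \tfrac{1}{10}A\epsilon$; choosing $\epsilon$ small as needed to absorb the nonlinear terms of $O(\epsilon^2)$ generated while expanding the differences $(\mathcal{G}_i)_p, (\mathcal{I}_3)_p$ completes the improvement of the bootstrap. The auxiliary vanishing estimates (\ref{eq:ext1temp001})--(\ref{eq:ext1temp002}) follow from the same integration, keeping track of the explicit factor of $v/|u|^{q_1}$ (or $(v/|u|^{q_1})^{a+\beta}$ for $\wt{\partial_u\phi}$, which also dominates the stated bound after reindexing $\delta$ if necessary).
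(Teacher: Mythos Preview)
Your approach is the same as the paper's: integrate the $v$-transport equations from $\{v=0\}$ where the renormalized unknowns vanish, and harvest the factor $v/|u|^{q_1}\le v_1$. Your treatment of $\wt r$, $\wt\nu$, $\wt\mu$ is correct and in fact more explicit than the paper, which simply bounds the right-hand sides by $A\epsilon|u|^{-1+\delta+\kappa_1}$ and integrates.

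There is, however, a genuine misreading in your handling of $\wt{\partial_u\phi}$. The bootstrap norm $\mathfrak{M}_2$ contains the weight $|u|^{q_1-\delta}(|u|^{q_1}/v)^{1-a-\beta}$ multiplying $\wt{\partial_v\phi}$, so the bootstrap \emph{bound} reads
\[
|\wt{\partial_v\phi}|\;\lesssim\; A\epsilon\,|u|^{-q_1+\delta}\Big(\frac{v}{|u|^{q_1}}\Big)^{1-a-\beta},
\]
which \emph{vanishes} as $v\to 0$ rather than blowing up. Consequently $|\mathcal{T}_2\,\wt{\partial_v\phi}|\lesssim A\epsilon\,|u|^{-1-q_1+\delta}(v/|u|^{q_1})^{1-a-\beta}$, and one may simply estimate $(v/|u|^{q_1})^{1-a-\beta}\lesssim 1$ to obtain the same $|u|^{-1-q_1+\delta}$ rate as the other terms of $\mathcal{F}_1$. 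Integrating in $v$ then yields the stated linear factor $v/|u|^{q_1}$ in (\ref{eq:ext1temp002}). Your inverted reading leads you to integrate $\int_0^v (v')^{-(1-a-\beta)}dv'$ and conclude only $(v/|u|^{q_1})^{a+\beta}$, which is strictly weaker than the claimed bound and cannot be recovered by ``reindexing $\delta$.'' Once you correct the sign of the exponent, the worry about convergence disappears and the argument for $\wt{\partial_u\phi}$ becomes identical to the others, as the paper asserts.
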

\begin{proof}
We illustrate the method with $\wt{\nu}$, and estimate equation (\ref{eq:extPSSESF2}). Inserting bootstrap bounds into the right hand side and crudely estimating $\frac{v}{|u|^{q_1}} \lesssim 1$ gives $|\partial_v \wt{\nu}| \lesssim A\epsilon |u|^{-1+\delta+\kappa_1}.$ Fix $(u,v) \in \mathcal{R}_{\text{I}}$, and integrate this estimate in $v$ from data to conclude
\begin{equation*}
    |\wt{\nu}|(u,v) \lesssim A\epsilon |u|^{-1+\delta+\kappa_1}v \lesssim A\epsilon |u|^{\delta}\Big(\frac{v}{|u|^{q_1}}\Big) \lesssim A\epsilon |u|^{\delta}v_1.
\end{equation*}
The estimate (\ref{eq:ext1temp002}) for $\wt{\nu}$ follows. Note in the above we have used that $\wt{\nu}(u,0) = 0$. Finally, choosing $v_1$ sufficiently small in the last inequality improves the bootstrap assumption on $\wt{\nu}$.

An identical argument applies to $\wt{r}, \wt{\mu},$ and $ \wt{\partial_u \phi},$ by estimating the terms appearing in equations (\ref{eq:extPSSESF1}), (\ref{eq:extPSSESF3}), (\ref{eq:extPSSESF5}), and integrating from data. 
\end{proof}

\begin{lem}
Assume (\ref{eq:extbootass1}). Then for $v_1$, $\epsilon$ sufficiently small we have
\begin{equation}
    \mathfrak{M}_{2}\leq \frac{1}{10}A\epsilon.
\end{equation}
\end{lem}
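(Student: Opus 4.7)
The strategy is to estimate $\wt{\lambda}$ and $\wt{\partial_v \phi}$ by integrating the $u$-transport equations (\ref{eq:extPSSESF7}) and (\ref{eq:extPSSESF9}) from the data on $\{u=-1\}$, exploiting two structural features: the sign of the zeroth-order coefficient allows Gronwall without exponential growth, and the renormalization guarantees that every source term vanishes on $\{v=0\}$. Near $\{v=0\}$ the resulting sources acquire smallness factors of the form $v/|u|^{q_1} \le v_1$, which furnish the improvement.

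Concretely, for $\wt{\lambda}$ I would note that $\mathcal{G}_2 = \mu\nu/((1-\mu)r) \le 0$ in the non-trapped region, so Gronwall applied to (\ref{eq:extPSSESF7}) from $\{u=-1\}$ yields
\[
|\wt{\lambda}|(u,v) \le |\wt{\lambda}|(-1,v) + \int_{-1}^u |(\mathcal{G}_2-\mathring{\mathcal{G}}_2)\lambda_c + (\mathcal{H}_1-\mathring{\mathcal{H}}_1)|(u',v)\,du'.
\]
Since $\chi_1 \equiv 1$ for $v \le 1$, the data term vanishes in $\mathcal{R}_{\mathrm{I}}$. For the source, the fundamental theorem of calculus in $v$ combined with the transport equations gives $|\partial_v \mathcal{G}_2| \lesssim |u|^{-2+\kappa_1}$ (the worst contribution coming from $\partial_v \mu \sim |u|^{-q_1}$ computed via (\ref{SSESF:2:7})), so $|\mathcal{G}_2-\mathring{\mathcal{G}}_2| \lesssim v |u|^{-2+\kappa_1}$. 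Together with Lemma \ref{lem:correcestimates} and the bootstrap, this bounds the source by $C\epsilon\, v\, |u|^{-2+2\kappa_1+\delta}$. Integrating in $u$ from $-1$ (which, for $\delta$ small enough, produces the integrable factor $|u|^{-1+2\kappa_1+\delta}$) gives
\[
|\wt{\lambda}|(u,v) \lesssim \epsilon\,v\,|u|^{-1+2\kappa_1+\delta} = \epsilon \Big(\tfrac{v}{|u|^{q_1}}\Big)\,|u|^{\kappa_1+\delta} \lesssim v_1\, \epsilon\, |u|^{\kappa_1+\delta},
\]
which beats $\tfrac{1}{10}A\epsilon |u|^{\kappa_1+\delta}$ once $v_1$ is small enough depending on the background.

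For $\wt{\partial_v \phi}$, the coefficient $\mathcal{T}_2 = \nu/r$ has the \emph{wrong} sign for direct Gronwall, but this is cured by the usual trick of passing to $r\wt{\partial_v\phi}$: multiplying (\ref{eq:extPSSESF9}) by $r$ and using $r\mathcal{T}_2 = \nu$ produces a cancellation with the $\nu\wt{\partial_v\phi}$ term coming from $\partial_u r$, leaving
\[
\partial_u(r\wt{\partial_v \phi}) = -\lambda\wt{\partial_u\phi} - r(\mathcal{T}_1-\mathring{\mathcal{T}}_1)(\partial_u\phi)_c - r(\mathcal{T}_2-\mathring{\mathcal{T}}_2)(\partial_v\phi)_c + r(\mathcal{H}_2-\mathring{\mathcal{H}}_2).
\]
The first term uses the already-improved bound for $\wt{\partial_u\phi}$ from $\mathfrak{M}_1$. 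The main obstacle is the corrector source $r(\mathcal{T}_1-\mathring{\mathcal{T}}_1)(\partial_u\phi)_c$: here the $v$-difference of $\mathcal{T}_1 = \lambda/r$ picks up the singular $\partial_v\lambda \sim |u|^{-1+2\kappa_1}(v/|u|^{q_1})^{-a}$, giving $|\mathcal{T}_1-\mathring{\mathcal{T}}_1| \lesssim |u|^{-2+2\kappa_1+q_1 a}v^{1-a}$ after integrating in $v$. Combined with $(\partial_u\phi)_c \lesssim \epsilon|u|^{-1+\delta}$ and $r \sim |u|$, the source is bounded by $\epsilon\, v^{1-a}|u|^{-2+2\kappa_1+q_1 a+\delta}$, and integrating in $u$ from $-1$ -- here condition (B1), $a \le 1-\kappa_1 p_1$, is precisely what ensures $-1+2\kappa_1 + q_1 a + \delta < 0$ so the $u$-integral converges with a favorable power. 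Rewriting the resulting $v^{1-a}$ as $v^{1-a-\beta}(v/|u|^{q_1})^{\beta}|u|^{q_1\beta}$ matches the weight in $\mathfrak{M}_2$ up to the smallness factor $v_1^{\beta}$, which improves the bootstrap. The borderline case $a = 1-\kappa_1 p_1$ is covered thanks to the strict inequality $b > a$ in (\ref{eqn:a,brequirements}) together with the slack parameter $\beta$.

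The hardest step is tracking exactly which powers of $|u|$ and $v$ appear in each source term and verifying that, after the $u$-integration, the exponents match the norm weights in $\mathfrak{M}_2$ with an excess power that becomes a smallness factor in $v_1$ (or $v_1^{\beta}$). The analogous estimates for the remaining source terms $(\mathcal{T}_2-\mathring{\mathcal{T}}_2)(\partial_v\phi)_c$, $\mathcal{H}_2-\mathring{\mathcal{H}}_2$ follow the same pattern but without the $|v|^{-a}$ singularity, so they are strictly easier. Once $\mathfrak{M}$ is improved, the second-order quantities $\partial_u\nu_p, |v|^{a}\partial_v\lambda_p, \partial_u^2 \phi_p, |v|^b \partial_v^2\phi_p$ needed for the continuation criterion from Section \ref{subsec:extlocalexist} follow from a standard preservation-of-regularity argument applied to the commuted equations, using the now-controlled first-order norms as forcing.
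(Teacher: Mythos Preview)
Your treatment of $\wt{\lambda}$ follows the paper's: drop the favorable zeroth-order term $\mathcal{G}_2\wt{\lambda}$, bound the sources (which vanish on $\{v=0\}$), and integrate in $u$. One small imprecision: you assert the source is $O(\epsilon\, v\,|u|^{-2+2\kappa_1+\delta})$, but $\mathcal{H}_1-\mathring{\mathcal{H}}_1$ contains a piece $(\mathcal{G}_2)_p(\overline{\lambda}-\mathring{\overline{\lambda}})$, and the H\"older control $|\overline{\lambda}-\mathring{\overline{\lambda}}| \lesssim |u|^{\kappa_1}(v/|u|^{q_1})^{1-a}$ from (A3) only yields $O\big(A\epsilon\,(v/|u|^{q_1})^{1-a}|u|^{-1+\kappa_1+\delta}\big)$ when $a>0$. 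This weaker vanishing still closes via $v_1^{1-a}$, so your conclusion stands.

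For $\wt{\partial_v\phi}$ you take a genuinely different route. The paper conjugates (\ref{eq:extPSSESF9}) by the full weight $w_3 = |u|^{q_1-\delta}(|u|^{q_1}/v)^{1-a-\beta}$, which converts the bad coefficient $-\mathcal{T}_2 \approx |u|^{-1}$ into the favorable $\big(q_1(2-a-\beta)-1-\delta\big)/|u|$; positivity of this coefficient is exactly (\ref{eqn:a,brequirements}). Your $r$-multiplication instead cancels the zeroth-order term \emph{exactly} (the same trick the paper uses for $\partial_v\phi_p$ in the interior via (\ref{PSSESF:5})). Both methods are valid and require the same constraint from (B1): in the paper it enters as the sign condition on the conjugated zeroth-order coefficient, while in yours it enters when checking that, after dividing by $r\sim|u|$, the data term $\epsilon v/|u|$ is absorbed by the norm weight --- this needs $q_1(2-a-\beta)-1-\delta \geq 0$, the identical inequality. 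Two phrasing corrections: the condition $-1+2\kappa_1+q_1a+\delta < 0$ does not make the $u$-integral \emph{converge}; it means the source exponent is $<-1$, so the integral \emph{grows} like $|u|^{-1+2\kappa_1+q_1a+\delta}$ toward $u=0$, and it is precisely this growth that, after division by $r$, matches the self-similar weight. Also, under (\ref{eqn:a,brequirements}) one always has $a < 1-\kappa_1 p_1$ strictly, so the borderline case you mention does not arise.
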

\begin{proof}
Start with $\wt{\lambda}$, satisfying (\ref{eq:extPSSESF7}). The bootstrap assumptions imply the zeroth order term has a good sign, i.e. $\mathcal{G}_2 \leq 0$. We will not require estimates on its magnitude, so we turn to the inhomogeneous terms appearing in (\ref{eq:extPSSESF7}). It is key that all terms satisfy bounds consistent with vanishing along $\{v=0\}$, as the integration in $u$ will not be able to remove any singular $v$ behavior. The form of the $u$ equations in Lemma \ref{lem:exregion1maineqs}, as well as the estimates (\ref{eq:ext1temp001})-(\ref{eq:ext1temp002}), reflect this vanishing.

Expanding the right hand side of (\ref{eq:extPSSESF7}) and estimating gives
\begin{align*}
    |(\mathcal{G}_2 - \mathring{\mathcal{G}}_2)\lambda_c+\mathcal{E}_1 - \mathring{\mathcal{E}}_1| \lesssim A\epsilon |u|^{\kappa_1+\delta-1}\Big( \frac{v}{|u|^{q_1}} \Big)^{1-a}.
\end{align*}
This estimate is a \textit{worst} case bound, and uses that the Hölder control on $\partial_v \overline{\lambda}$ implies we in general only have the bound
\begin{equation*}
    |\overline{\lambda} - \mathring{\overline{\lambda}}| \lesssim |u|^{\kappa_1}\Big(\frac{v}{|u|^{q_1}} \Big)^{1-a}.
\end{equation*}
It follows that the equation for $\wt{\lambda}$ is schematically
\begin{equation}
    \label{eq:temptemptemp}
    \partial_u \wt{\lambda} = \mathcal{G}_2 \wt{\lambda} + O \Big(A\epsilon |u|^{\kappa_1+\delta-1}\Big( \frac{v}{|u|^{q_1}} \Big)^{1-a} \Big).
\end{equation}
The total $u$ weight appearing in the error term is $\kappa_1 + \delta - 1 - q_1(1-a) = -2 + 2\kappa_2 + aq_1 + \delta$. Recalling $a < 1-p_1\kappa_1$, it follows that $\delta$ can be chosen small enough such that this term is strictly less than $-1$. In particular, when integrating in the increasing $u$ direction from data, we can drop the contribution of the integral on $\{u=-1\}$.

Integrating (\ref{eq:temptemptemp}) in $u$ thus gives
\begin{align*}
    |\wt{\lambda}|(u,v) &\lesssim A\epsilon \int\limits \frac{1}{|u'|^{2-2\kappa_1-aq_1-\delta}}v^{1-a}du' \\
    & \lesssim A\epsilon |u|^{\kappa_1+\delta}\Big( \frac{v}{|u|^{q_1}} \Big)^{1-a}.
\end{align*}
In particular, for $v_1$ sufficiently small we improve the bootstrap assumption. Note we also have concluded a bound consistent with $\wt{\lambda}$ vanishing as $v \rightarrow 0$. This bound will be required in the following.

For $\wt{\partial_v \phi}$ we turn to (\ref{eq:extPSSESF9}), and estimate the zeroth order coefficient $-\mathcal{T}_1$ to be
\begin{equation}
    \label{eqn:T_1estimate}
    -\mathcal{T}_1 = \frac{1}{|u|} + \frac{O(v_1+\epsilon)}{|u|}.
\end{equation}
This estimate uses that the exact value of $\mathcal{T}_1$ along $\{v=0\}$ is known (via the gauge condition $r = |u|$), and that bounds on $\partial_v r, \partial_v \nu$ imply deviations from the value along $\{v=0\}$ are of order $O(v_1+\epsilon)$. In particular, for all $v_1, \epsilon$ small this coefficient has an unfavorable sign, and needs to be dealt with by conjugating through with positive $|u|$ weights.

The remaining terms in (\ref{eq:extPSSESF9}) can be estimated via the bootstrap assumptions. Denoting the terms collectively by $\mathcal{C}_1$, we have
\begin{equation}
    \label{eq:temptemptem1}
   |\mathcal{C}_1| \lesssim A\epsilon |u|^{-2+\kappa_1+\delta} \Big(\frac{v}{|u|^{q_1}} \Big)^{1-a}.
\end{equation}
A priori, terms of the form $\partial_v \overline{\phi} - (\partial_v \overline{\phi})^{\circ}$ that vanish as $v\rightarrow 0$ at the slower rate $v^{1-b}$ could appear in the right hand side of (\ref{eq:extPSSESF9}). However, such terms always appear multiplied by factors that themselves vanish along $\{v=0\}$, and it follows that the worst cast bound is in fact (\ref{eq:temptemptem1}). 

Define the weight $w_3 \doteq |u|^{q_1-\delta}\Big(\frac{|u|^{q_1}}{v} \Big)^{1-a-\beta}$. Conjugating (\ref{eq:extPSSESF9}) through by $w_3$ yields the schematic equation
\begin{equation}
    \partial_u (w_3 \wt{\partial_v \phi}) + \frac{q_1(2-a-\beta)-\delta-1+O(v_1+\epsilon)}{|u|} (w_3 \wt{\partial_v \phi})= O\Big(A\epsilon |u|^{-1}\Big(\frac{v}{|u|^{q_1}} \Big)^{\beta} \Big).
\end{equation}
The assumption (\ref{eqn:a,brequirements}) on $a$ implies one can choose $\delta, \beta, v_1, \epsilon$ small enough to render the zeroth order coefficient positive. Moreover, $\beta > 0$ and so the error term has singular $|u|$ weight strictly greater than $1$. Integrating in $u$ as for $\wt{\lambda}$ thus concludes the bound. The data term does not vanish, but is bounded by the initial data norm $\mathcal{E}_{2,\alpha}$.
\end{proof}

To close this section, we argue that the bound on $\mathfrak{M}$ is sufficient to propagate regularity on the second order unknowns.
\begin{lem}
For $v_1$, $\epsilon$ sufficiently small we have the following bounds in $\mathcal{R}_{\text{I}}$:
\begin{align}
    |\partial_u \nu_p| &\lesssim |u|^{-1+\delta}, \ \ \ |\partial_v \lambda_p| \lesssim |u|^{-1+2\kappa_1+\delta}\Big(\frac{|u|^{q_1}}{v} \Big)^{a}, \label{eqn:region1higherorder1}\\[.5em]
    |\partial_u^2 \phi_p| &\lesssim |u|^{-2+\delta}, \ \ \ |\partial_v^2 \phi_p| \lesssim |u|^{-2+2\kappa_1+\delta}\Big(\frac{|u|^{q_1}}{v} \Big)^{b}.\label{eqn:region1higherorder2}
\end{align}
\end{lem}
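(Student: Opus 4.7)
The plan is to treat the already-closed first-order bootstrap $\mathfrak{M}\leq A\epsilon$, together with Lemma \ref{lem:extregion1leadingorderbehavior} and the admissibility bounds on the background, as input data, and derive the second-order bounds from the commuted equations (\ref{PSSESF:7})--(\ref{PSSESF:10}) by direct transport integration. No further renormalization is needed; this is a standard preservation-of-regularity step after the bootstrap has been improved.

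First I would handle $\partial_u\nu_p$ and $\partial_u^2\phi_p$, which satisfy the $v$-transport equations (\ref{PSSESF:8}) and (\ref{PSSESF:10}). Inserting the closed first-order estimates and the background bounds from $\overline{\mathfrak{N}}_1$, the right-hand sides are controlled by $O(A\epsilon|u|^{-2+\kappa_1+\delta})$ and $O(A\epsilon|u|^{-3+\kappa_1+\delta})$ respectively. Integrating in $v$ from $\{v=0\}$, where $\partial_u\nu_p(u,0)=0$ (from the gauge choice $r_p(u,0)=0$) and $\partial_u^2\phi_p(u,0)=O(\epsilon\mathcal{E}_{2,\alpha})$ (via the prescribed $\partial_u(r\phi)(u,0)$ and the relation $\partial_u\phi=r^{-1}(\partial_u(r\phi)-\nu\phi)$), the integration range $v\in[0,v_1|u|^{q_1}]$ contributes at most a factor of $|u|^{q_1}$, which yields the target weights $|u|^{-1+\delta}$ and $|u|^{-2+\delta}$.

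The more delicate step is the estimate for $\partial_v\lambda_p$ and $\partial_v^2\phi_p$, which satisfy the $u$-transport equations (\ref{PSSESF:7}) and (\ref{PSSESF:9}) and inherit the $v$-singular factors of $\partial_v\overline{\lambda}$ and $\partial_v^2\overline{\phi}$ through the source terms $(\mathcal{G}_2)_p\partial_v\overline{\lambda}$ and $(\mathcal{T}_i)_p\partial_v^2\overline{\phi}$. I would mimic the scheme of Lemma \ref{lem:inttheonlylemmareferredto}, conjugating by weights of the form $w_\lambda\doteq|u|^{1-2\kappa_1-\delta+\sigma}(v/|u|^{q_1})^{a}$ and $w_\phi\doteq|u|^{2-2\kappa_1-\delta+\sigma}(v/|u|^{q_1})^{b}$ with $0<\sigma\ll\delta$. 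A direct computation, using $\mathcal{G}_2<0$ of order $|u|^{-1}$ from Lemma \ref{lem:extregion1leadingorderbehavior} and the constraint (\ref{eqn:a,brequirements}) (which gives $aq_1<1-2\kappa_1$ and $bq_1\leq 1-2\kappa_1$), shows that the zeroth-order coefficient of the conjugated equation has a good sign. The auxiliary parameter $\sigma>0$ renders the conjugated forcing $O(A\epsilon|u|^{-1+\sigma})$ integrable in $u$, producing a remainder $\sigma^{-1}|u|^{\sigma}$ that is absorbed by slightly reducing $\delta$. The data terms along $\{u=-1\}$ are bounded by $\mathcal{E}_{2,\alpha}$: in particular $\partial_v\lambda_p(-1,v)=\chi_1'(v)\Lambda$ vanishes identically on a neighborhood of $\{v=0\}$ by construction of (\ref{extdata1.1})--(\ref{extdata2.1}), and the corresponding scalar-field data inherits the same property, so both sit comfortably inside the singular target bounds.

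The principal obstacle is the same one encountered in the interior construction: arranging the weights $w_\lambda, w_\phi$ so that the conjugated zeroth-order coefficient is strictly negative while the conjugated forcing remains integrable in $u$ all the way to $|u|=0$. This is precisely where the strict inequalities in (\ref{eqn:a,brequirements}) are essential. In the delicate branch $b=1-\kappa_1 p_1$, the contribution of $\partial_v^2\overline{\phi}$ saturates the admissible weight exactly, and only the strict inequality $a<b$ leaves room for a positive $\sigma$ to absorb the logarithmic divergence and close the estimate for $\partial_v^2\phi_p$.
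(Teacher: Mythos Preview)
Your treatment of $\partial_u\nu_p$ and $\partial_u^2\phi_p$ via the $v$-transport equations is correct and matches the paper. For $\partial_v\lambda_p$ your conjugation by $w_\lambda$ also works, although the paper takes the simpler route of using the favorable sign of $\mathcal{G}_2$ directly (no $(v/|u|^{q_1})^a$ factor in the weight) and merely checking that the inhomogeneity has total $|u|$-weight strictly less than $-1$.

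There is, however, a genuine gap in your argument for $\partial_v^2\phi_p$ in the borderline branch $b=1-\kappa_1 p_1$. The zeroth-order coefficient in (\ref{PSSESF:9}) is $-\mathcal{T}_2=-\nu/r$, which is \emph{positive} of size $(1+O(v_1+\epsilon))/|u|$; after conjugating by your weight $w_\phi=|u|^{2-2\kappa_1-\delta+\sigma}(v/|u|^{q_1})^{b}$ the new coefficient is $(1-2\kappa_1-\delta+\sigma-bq_1)/|u|+O(v_1+\epsilon)/|u|$. When $bq_1=1-2\kappa_1$ this equals $(\sigma-\delta+O(v_1+\epsilon))/|u|$, which is \emph{negative} for your choice $\sigma\ll\delta$, so Lemma~\ref{lem:integrationmain} does not apply. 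The gap $a<b$ you invoke governs only the $v$-singularity of the source, not this sign; it does not rescue the zeroth-order term.

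The paper closes the borderline case by a different mechanism. It conjugates by the minimal weight $|u|^{1+\sigma}$, which immediately flips the sign of the zeroth-order term, and then observes that the inhomogeneity $\mathcal{C}_3$ in (\ref{PSSESF:9}) carries only the \emph{milder} singularity $(|u|^{q_1}/v)^{a}$, not $(|u|^{q_1}/v)^{b}$. The point is that the only place $\partial_v^2\overline{\phi}$ enters is through $(\mathcal{T}_2)_p\,\partial_v^2\overline{\phi}$, and by the refined first-order estimates (\ref{eq:ext1temp001})--(\ref{eq:ext1temp002}) one has $|(\mathcal{T}_2)_p|\lesssim\epsilon|u|^{-1+\delta}(v/|u|^{q_1})$, i.e.\ this coefficient \emph{vanishes} along $\{v=0\}$, reducing the net $v$-singularity of that term to $(|u|^{q_1}/v)^{b-1}\leq(|u|^{q_1}/v)^{a}$. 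With the source at level $a$, the total $|u|$-weight after conjugation is $-2+2\kappa_1+\delta+\sigma+aq_1$, and the \emph{strict} inequality $aq_1<1-2\kappa_1$ (which holds in both branches of (\ref{eqn:a,brequirements})) makes this integrable for $\delta,\sigma$ small. This structural observation---that the worst background singularity is paired with a coefficient vanishing on $\{v=0\}$---is the missing idea in your proposal.
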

\begin{proof}
To estimate $\partial_u \nu_p$ and $\partial_u^2 \phi_p$, it suffices to insert bounds following from (\ref{eq:extbootass1improved}) into (\ref{PSSESF:8}), (\ref{PSSESF:10}) and estimate, yielding
\begin{equation*}
    |\partial_v \partial_u \nu_p| \lesssim \epsilon |u|^{-2+\kappa_1+\delta}, \ \ \  |\partial_v \partial_u^2 \phi_p| \lesssim \epsilon |u|^{-3+\kappa_1+\delta}.
\end{equation*}
Integrating in $v$ from data and estimating $\frac{v}{|u|^{q_1}} \lesssim 1$ gives the stated estimate.

We now turn to $\partial_v \lambda_p$. The zeroth order term has a favorable sign, and so it remains to estimate the inhomogeneous terms, denoted collectively by $\mathcal{C}_2$. The most singular are those containing $\partial_v \overline{\lambda}$, which potentially blowup as $v \rightarrow 0$. We estimate
\begin{equation*}
    |\mathcal{C}_2| \lesssim \epsilon |u|^{-2+2\kappa_1 + \delta}\Big(\frac{|u|^{q_1}}{v} \Big)^a.
\end{equation*}
The coefficient of the $|u|$ weight is again strictly less than $-1$, and thus integrating in $u$ from $\{u=-1\}$ gives the desired bound.

Finally we estimate $\partial_v^2 \phi_p$ using (\ref{PSSESF:9}). The zeroth has a bad sign, and is estimated as in (\ref{eqn:T_1estimate}) above. Denote the remaining inhomogenous terms collectively by $\mathcal{C}_3$, and estimate 
\begin{equation*}
    |\mathcal{C}_3| \lesssim  \epsilon |u|^{-3+2\kappa_1+\delta}\Big(\frac{|u|^{q_1}}{v}\Big)^a,
\end{equation*}
Conjugating through by $|u|^{1+\sigma}$ and choosing $\sigma, v_1, \epsilon$ small enough gives rise to a favorable zeroth order term. The total singular $u$ weight on the error term is $-2+2\kappa_1 + \delta + \sigma + aq_1,$ which after choosing $\delta, \sigma$ smaller still, is strictly less than $-1$. Integrating the conjugated equation from $\{u=-1\}$ yields the desired estimate.
\end{proof}

\subsection{Change of gauge}
The remainder of the construction takes place away from $\{v=0\}$, and it is natural to transition to a gauge adapted to $\{u=0\}$. The coordinate change was first introduced in Section \ref{sec:assumptionsbackground}, and we recall the essential definitions here.

Define a double null coordinate system $(U,V)$ in $\mathcal{Q}^{(ex)}$ via 
\begin{equation}
    U = -|u|^{q_2},  \ \ \ V = v^{p_1}.
\end{equation}
Coordinate derivatives transform as 
\begin{equation}
    \frac{\partial}{\partial u} = q_2 |u|^{-\kappa_2}\frac{\partial}{\partial U} , \ \ \ \frac{\partial}{\partial v} = p_1 v^{p_1 \kappa_1}\frac{\partial}{\partial V},
\end{equation}
and we denote the gauge-dependent coordinate derivatives $\partial_U r, \ \partial_V r \ $ as $^{(U)}\nu, \ ^{(V)}\lambda$ respectively. The unknowns $r, \mu, \phi$ are gauge-independent, and are therefore unaffected by the change of coordinates.

With respect to the $(U,V)$ gauge, $\mathcal{R}_{\text{I}}$ takes the coordinate form
\begin{equation}
    \mathcal{R}_{\text{I}} = \{(U,V) \ | \ \ 0 \leq \frac{V^{q_2}}{|U|} \leq V_1, \ -1 \leq U < 0\},
\end{equation}
where $V_1 = v_1^{p_1 q_2}$.

The estimates for the solution in $\mathcal{R}_{\text{I}}$ imply corresponding bounds for all double null unknowns on spacelike surfaces of constant $\frac{V^{q_2}}{|U|}$ contained in $\mathcal{R}_{\text{I}}$. The following lemma records the estimates we shall need below; it is a simple consequence of the bounds in Proposition \ref{eq:extbootass1improved} and (\ref{eqn:region1higherorder1})-(\ref{eqn:region1higherorder2}), the estimates for the corrector functions in Lemma \ref{lem:correcestimates}, and the definition of the $(U,V)$ gauge.
\begin{lem}
\label{lem:region2dataest}
Along the surface $\{\frac{V^{q_2}}{|U|} = V_1\}$, the following bounds hold for the $\Psi_p$:
\begin{alignat}{2}
 |r_p| &\lesssim \epsilon V^{1+\delta}, \quad \quad \quad  \quad \quad \quad \quad  |\mu_p| &&\lesssim \epsilon V^\delta, \label{eqn:lem5.7.1}\\[.5em]
 |^{(U)}\nu_p| &\lesssim \epsilon V^{\kappa_2 + \delta}, \quad  \quad \quad \quad \quad |^{(V)}\lambda_p| &&\lesssim \epsilon V^{\delta}, \\[.5em]
 |{\partial_U}{^{(U)}\nu_p}| &\lesssim \epsilon V^{-1+2\kappa_2+\delta}, \ \  \ \quad |{\partial_V}{^{(V)}\lambda_p}| &&\lesssim \epsilon V^{-1+\delta}\\[.5em]
 |\partial_U \phi_p| &\lesssim \epsilon V^{-1+\kappa_2+\delta}, \ \ \ \ \ \quad \quad |\partial_V \phi_p| &&\lesssim \epsilon V^{-1+\delta}, \\[.5em]
 |\partial_U^2 \phi_p| &\lesssim \epsilon V^{-2+2\kappa_2+\delta}, \ \ \ \quad \quad  |\partial_V^2\phi_p| &&\lesssim \epsilon V^{-2+\delta}. \label{eqn:lem5.7.2}
\end{alignat}
\end{lem}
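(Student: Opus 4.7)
The plan is to regard Lemma \ref{lem:region2dataest} as a translation statement: all the hard analytical work has already been done in the $(u,v)$ gauge via Proposition \ref{prop:extregion1boot} and the preservation-of-regularity bounds (\ref{eqn:region1higherorder1})--(\ref{eqn:region1higherorder2}). What remains is to assemble $\Psi_p = \wt{\Psi} + \Psi_c$, restrict to the surface $\{v/|u|^{q_1} = v_1\}$, and apply the chain rule for the coordinate change $(u,v) \mapsto (U,V)$ where $U = -|u|^{q_2}$, $V = v^{p_1}$.

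First, I would record the two elementary scaling facts we shall use repeatedly. Along $\{v/|u|^{q_1} = v_1\}$ one has $v \sim |u|^{q_1}$ and hence $V = v^{p_1} \sim |u|^{p_1 q_1} = |u|$ and $|U| = |u|^{q_2} \sim V^{q_2}$, with implied constants depending on $v_1$ (equivalently $V_1$). The chain-rule identities from Section \ref{sec:assumptionsbackground} give
\begin{equation*}
    \partial_U = \tfrac{|u|^{\kappa_2}}{q_2}\,\partial_u, \qquad \partial_V = \tfrac{v^{1-p_1}}{p_1}\,\partial_v,
\end{equation*}
and in particular $v^{1-p_1} \sim |u|^{-\kappa_1}$ on the surface. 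These two ingredients are all that is needed to convert $(u,v)$-weights into $(V)$-weights.

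Next I would handle the first-order quantities. For $r_p$ and $\mu_p$, which are gauge-independent, simply combine the improved bootstrap bounds $|\wt{r}| \lesssim \epsilon |u|^{1+\delta}(v/|u|^{q_1})$ and $|\wt{\mu}| \lesssim \epsilon|u|^\delta(v/|u|^{q_1})$ from the first sublemma of Section 5.2.2, together with $r_c = 0$ and Lemma \ref{lem:correcestimates}'s bound $|\mu_c| \lesssim \epsilon |u|^\delta$, and evaluate at $v \sim |u|^{q_1}$ to produce the claims in (\ref{eqn:lem5.7.1}). For $^{(U)}\nu_p$, use $\nu_c = 0$ and $|\nu_p| = |\wt{\nu}| \lesssim \epsilon |u|^\delta$, then multiply by the Jacobian $|u|^{\kappa_2}$ to obtain $|^{(U)}\nu_p| \lesssim \epsilon|u|^{\kappa_2+\delta} \sim \epsilon V^{\kappa_2+\delta}$. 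For $^{(V)}\lambda_p$, combine the $\wt{\lambda}$ bound (worst case $|u|^{\kappa_1+\delta}$ on the surface) with $|\lambda_c| \lesssim \epsilon|u|^{\kappa_1+\delta}$, then multiply by $v^{1-p_1} \sim |u|^{-\kappa_1}$, yielding $|^{(V)}\lambda_p| \lesssim \epsilon |u|^\delta \sim \epsilon V^\delta$. The scalar-field bounds on $\partial_U \phi_p, \partial_V \phi_p$ follow identically from the $\wt{\partial_u \phi}$, $\wt{\partial_v \phi}$ estimates combined with Lemma \ref{lem:correcestimates}.

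For the second-order quantities in (\ref{eqn:lem5.7.2}), I would differentiate the chain-rule formulas once more, e.g.
\begin{equation*}
\partial_U\,^{(U)}\nu_p \;=\; \tfrac{|u|^{\kappa_2}}{q_2}\,\partial_u\!\bigl(\tfrac{|u|^{\kappa_2}}{q_2}\nu_p\bigr) \;=\; \tfrac{|u|^{2\kappa_2}}{q_2^2}\,\partial_u \nu_p \;-\; \tfrac{\kappa_2}{q_2^2}\,|u|^{2\kappa_2-1}\,\nu_p,
\end{equation*}
and insert $|\partial_u\nu_p| \lesssim \epsilon|u|^{-1+\delta}$, $|\nu_p| \lesssim \epsilon|u|^\delta$, both of which follow from (\ref{eqn:region1higherorder1}). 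The dominant term gives $|\partial_U\,^{(U)}\nu_p| \lesssim \epsilon|u|^{-1+2\kappa_2+\delta} \sim \epsilon V^{-1+2\kappa_2+\delta}$. Analogously,
\begin{equation*}
\partial_V\,^{(V)}\lambda_p \;=\; \tfrac{v^{2(1-p_1)}}{p_1^2}\,\partial_v \lambda_p \;+\; \tfrac{(1-p_1)}{p_1^2}\,v^{1-2p_1}\,\lambda_p,
\end{equation*}
and the potentially singular factor $(|u|^{q_1}/v)^a$ in the $\partial_v \lambda_p$ bound is harmless on our surface since the ratio is $\sim 1$. The same template, now invoking the $\partial_u^2 \phi_p$ and $\partial_v^2 \phi_p$ bounds from (\ref{eqn:region1higherorder2}), produces the two scalar-field estimates. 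The only mild subtlety — the one step I would write out carefully — is tracking that the $(|u|^{q_1}/v)^{a}$ and $(|u|^{q_1}/v)^{b}$ factors present in $\partial_v \lambda_p$ and $\partial_v^2 \phi_p$ do not produce any loss when restricting to $\{v/|u|^{q_1} = v_1\}$; this is immediate because the ratio is a fixed positive constant along that slice. Beyond this bookkeeping, the argument is purely a change of variables.
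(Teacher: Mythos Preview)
Your proposal is correct and matches the paper's approach exactly: the paper states that Lemma \ref{lem:region2dataest} is ``a simple consequence of the bounds in Proposition \ref{prop:extregion1boot} and (\ref{eqn:region1higherorder1})--(\ref{eqn:region1higherorder2}), the estimates for the corrector functions in Lemma \ref{lem:correcestimates}, and the definition of the $(U,V)$ gauge,'' and your argument is precisely the bookkeeping that unpacks this. Your careful tracking of the chain-rule Jacobians and the observation that the $(|u|^{q_1}/v)^a$ and $(|u|^{q_1}/v)^b$ factors are harmless on the fixed self-similar surface are exactly the details the paper leaves implicit.
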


\subsection{Region II}
Let $V_2 \gg V_1$ denote a large constant, to be chosen below. In this section we propagate the solution to Region II, given by 
\begin{equation}
\mathcal{R}_{\text{II}} \doteq \{(U,V) \ | \  V_1 \leq \frac{V^{q_2}}{|U|}\leq V_2, -1 \leq U < 0\}.  
\end{equation}
Note that
\begin{equation}
  V_1^{p_2} |U|^{p_2} \leq V \leq V_2^{p_2} |U|^{p_2}
\end{equation}
holds in Region II. In the following we treat $V_1$ as a fixed (small) constant; however we retain the dependence of inequalities on $V_2$, anticipating that $V_2$ will only be chosen in Region IV. In particular, we have the freedom to interchange $U$ and $V$ weights (up to powers) in this region, which we use to express singular weights with respect to the $V$ coordinate.

The values of the double null unknowns on the past boundary of Region II,
\begin{equation}
    \{\frac{V^{q_2}}{|U|} = V_1, \ -1 \leq U < 0\} \cup \{U =-1, \ V_1 \leq V \leq V_2\},
\end{equation}
are induced by the solution constructed in $\mathcal{R}_{\text{I}},$ as well as by the choice of outgoing data along $\{U=-1\}$. This past boundary comprises the \enquote{initial data surface} for the solution in $\mathcal{R}_{\text{II}}.$ Estimates for the unknowns along the spacelike portion of the past boundary are given in Lemma \ref{lem:region2dataest}.

The analysis of this section will again take place in characteristic rectangles $\mathcal{Q}^{(ex)}_{\wt{U},\wt{V}}$. By an abuse of notation, in this section we implicitly restrict such rectangles to $\mathcal{R}_{\text{II}}$ (and similarly for Regions III-IV below).

Define the weight
\begin{equation}
    w^* \doteq \exp{\big(-D\frac{V^{q_2}}{|U|}\big)},
\end{equation}
which will be used to generate favorable lower order terms in the transport estimates. Here $D$ denotes a large constant depending on $V_2$. The constants are ordered such that for fixed $V_2$ and $D$, $\epsilon$ is chosen small enough such that 
\begin{equation}
    \epsilon^{\frac{1}{100}} \leq e^{-D V_2}.
\end{equation}
This choice of scales for $\epsilon$ and $D$ implies that $w^*$ is bounded above and below as
\begin{equation}
    \label{eq:ext2wbound}
  \epsilon^{\frac{1}{100}}  \leq  w^* \leq 1.
\end{equation}

\begin{rmk}
The choice of $\frac{1}{100}$ is immaterial, and any small enough constant would do. The need for a loss of only a small $\epsilon$ power is due to the bounds of this section, which lead to estimates $|w^* \Psi_p| \lesssim \epsilon$. As $V_2$ (and $D$) increases, without a corresponding lower bound on $w^*$ we would not deduce uniform bounds on $\Psi_p$. With the small power loss of $\epsilon$, we will be able to estimate $|\Psi_p| \lesssim \epsilon^{1-\frac{1}{100}}$ uniformly in $V_2$. This uniformity will prove valuable in Regions III-IV, where $V_2^{-1} $ is used as a smallness parameter.
\end{rmk}

\begin{figure}
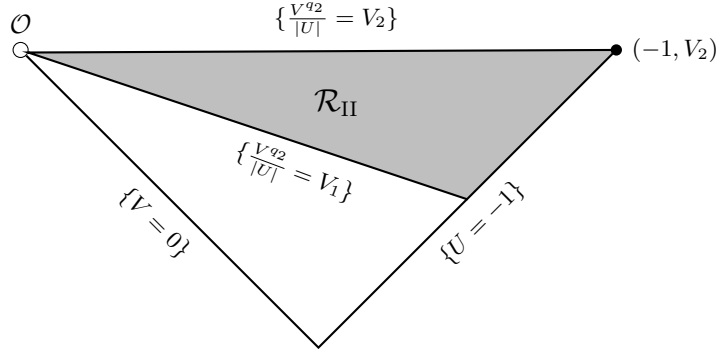

    \centering
  \includestandalone[]{Figures/fig_region2}
  \caption{$\mathcal{R}_{\text{II}}$}
  \label{fig:region2}
\end{figure}

\subsubsection{Norms and bootstrap assumptions}
For fixed $(\wt{U},\wt{V}) \in \mathcal{R}_{\text{II}}$, define the solution norms
\begin{align}
\mathfrak{T}_1 \doteq &\sum_{i=0}^1\sup_{\mathcal{Q}^{(ex)}_{(\wt{U},\wt{V})}} \Big|\frac{1}{V^{\delta-i}} w^* {\partial_V^i}  ^{(V)}\lambda_p \Big| + \sum_{i=1}^2 \sup_{\mathcal{Q}^{(ex)}_{(\wt{U},\wt{V})}}\Big|\frac{1}{V^{\delta-i}}  w^* \partial_V^i \phi_p \Big| ,\\[.4em]
    \mathfrak{T}_2 \doteq &\sup_{\mathcal{Q}^{(ex)}_{(\wt{U},\wt{V})}} \Big|\frac{1}{V^{1+\delta}}w^*r_p \Big| + \sum_{i=0}^1\sup_{\mathcal{Q}^{(ex)}_{(\wt{U},\wt{V})}} \Big|\frac{1}{V^{\kappa_2+\delta-q_2i}} w^* {\partial_U^i}  ^{(U)}\nu_p \Big| \\
    &+ \sup_{\mathcal{Q}^{(ex)}_{(\wt{U},\wt{V})}} \Big|\frac{1}{V^{\delta}} w^* \mu_p \Big| + \sum_{i=1}^2 \sup_{\mathcal{Q}^{(ex)}_{(\wt{U},\wt{V})}} \Big| \frac{1}{V^{\delta-q_2i}} w^* \partial_U^i \phi_p \Big|, 
\end{align}
as well as the total spacetime norm
\begin{equation}
    \mathfrak{T} \doteq \mathfrak{T}_1 + \mathfrak{T}_2.
\end{equation}
Note that we work directly with the $\Psi_p$ in this section, and a renormalization as in $\mathcal{R}_{\text{I}}$ is not necessary. The main result of this section is the following, ensuring uniform estimates on the norm $\mathfrak{T}$. Importantly, these estimates are independent of $V_2$.
\begin{prop}
Assume a local solution is given in a rectangle $\mathcal{Q}_{\wt{U},\wt{V}},$ and that the bound 
\begin{equation}
    \label{eq:extbootass2}
    \mathfrak{T} \leq 2A\epsilon
\end{equation}
holds. Then for $V_2$ arbitrarily large (but fixed), $D$ sufficiently large as a function of $V_2,$ and $\epsilon$ sufficiently small as a function of the background solution and $D$, we have the improved bound 
\begin{equation}
    \label{eq:extbootass2improved}
    \mathfrak{T} \leq A\epsilon.
\end{equation}
\end{prop}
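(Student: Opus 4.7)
The plan is to mimic the interior-region strategy with the polynomial weight $|u|^{-\alpha}$ replaced by the exponential weight $w^*$. Conjugating any transport equation from the difference system (\ref{PSSESF:1})-(\ref{PSSESF:12}) by $w^*$ generates a zeroth-order term with coefficient $Dq_2 V^{q_2-1}/|U|$ in the outgoing direction and $DV^{q_2}/|U|^2$ in the ingoing direction. The coordinate condition $V^{q_2} \sim |U|$ throughout $\mathcal{R}_{\text{II}}$ (with implicit constants depending on $V_1, V_2$) reduces both expressions to the common size $D/V \sim D/|U|$. Contracting each conjugated equation against $w^*\Psi_p$ and applying Young's inequality as in the proof of Lemma \ref{lem:inttheonlylemmareferredto} transfers most of this coefficient to the inhomogeneous side, producing a $1/D$ gain on the error integral. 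For $D$ chosen sufficiently large depending on $V_2$ and the background, this gain dominates the $O(A\epsilon)$ bootstrap bound on the error and improves each bootstrap constant by a factor of $\tfrac{1}{10}$.

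A preliminary step is the quantitative closeness of the total solution to the background. The lower bound $w^* \geq \epsilon^{1/100}$ combined with the bootstrap gives $|\Psi_p| \lesssim A\epsilon^{99/100} \cdot (\text{weight})$, which is negligible against the background for $\epsilon$ small depending on $A$. Hence $r \sim V$, $(-{^{(U)}}\nu) \sim V^{\kappa_2}$, ${^{(V)}}\lambda \sim 1$, $(1-\mu) \sim 1$, together with the expected bounds on $\partial_U \phi$ and $\partial_V \phi$ from assumption (B2). Each coefficient $\mathcal{G}_i, \mathcal{T}_i$ is then of size $O(1/V)$ times mildly growing geometric factors, which is what makes the conjugated zeroth-order coefficient $cD/V - \mathcal{G}$ non-negative and bounded below by $(cD/2)/V$ once $D$ is large.

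The estimates close in a specific order exploiting a reductive structure: first $r_p$, then $\mu_p$, then the first-order null derivatives ${^{(U)}}\nu_p, {^{(V)}}\lambda_p$, then the scalar field derivatives $\partial_U \phi_p, \partial_V \phi_p$, and finally the commuted second-order quantities ${\partial_U}{^{(U)}\nu_p}, {\partial_V}{^{(V)}\lambda_p}, \partial_U^2\phi_p, \partial_V^2\phi_p$ via (\ref{PSSESF:7})-(\ref{PSSESF:12}). Data terms along $\{V^{q_2}/|U| = V_1\}$ are controlled by Lemma \ref{lem:region2dataest}, and data along $\{U=-1\}$ by $\mathcal{E}_{2,\alpha}$. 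The ordering ensures that only previously controlled quantities (with their improved bounds) appear in the inhomogeneous terms at each step. The main obstacle is matching the $V$-powers in the norms defining $\mathfrak{T}_1, \mathfrak{T}_2$ against those produced by the $1/D$ gain from Young's inequality, particularly in the second-order equations where $\partial_V \mathcal{G}_5$ contains a $\partial_V \mu$ factor that must be controlled via the $\mu_p$ bootstrap together with the background. Once this weight-accounting is verified, $D$ may be chosen large enough as a function of $V_2$ and the background-dependent implicit constants to improve all the bootstrap bounds simultaneously, yielding (\ref{eq:extbootass2improved}).
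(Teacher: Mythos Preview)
Your core mechanism is correct and matches the paper: conjugate each transport equation by the weight $w^*$, which produces a favorable zeroth-order term of size $\approx D/|U|$ (respectively $\approx D/V$) in the ingoing (respectively outgoing) direction; contract, integrate, and apply Young's inequality to extract a $D^{-1}$ factor on the error integral. Choosing $D$ large as a function of $V_2$ and the background then improves the bootstrap.

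However, the reductive ordering you describe is both unnecessary and, as stated, incorrect. You propose to estimate $\mu_p$ before the scalar field derivatives, but the equations (\ref{PSSESF:3})--(\ref{PSSESF:4}) for $\mu_p$ contain the inhomogeneities $(\mathcal{I}_3)_p, (\mathcal{I}_4)_p$, which involve $\partial_V\phi_p$ and $\partial_U\phi_p$; hence $\mu_p$ cannot be closed using only previously improved bounds in your ordering. The paper does not attempt any reductive structure in $\mathcal{R}_{\text{II}}$. Instead it simply inserts the full bootstrap bound $\mathfrak{T}\le 2A\epsilon$ on every right-hand side, and the $D^{-1}$ gain from the $w^*$ conjugation alone suffices to beat the constant $A$. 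The reductive structure you have in mind is what is used in $\mathcal{R}_{\text{III}}$ for the $V$-direction, where no smallness parameter is available; in $\mathcal{R}_{\text{II}}$ it plays no role. Also, your mention of $\partial_V\mathcal{G}_5$ is misplaced: the bootstrap norm $\mathfrak{T}$ is built from $\phi_p$ and its derivatives, not $r\phi_p$, so the relevant commuted equations are (\ref{PSSESF:7})--(\ref{PSSESF:10}) rather than (\ref{PSSESF:11})--(\ref{PSSESF:12}).
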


\subsubsection{Recovering bootstrap assumptions}
Analogously to Lemma \ref{lem:extregion1leadingorderbehavior}, the following is a direct consequence of the bootstrap assumptions:
\begin{lem}
\label{lem:extregion2prelims}
Assume the bound (\ref{eq:extbootass2}). Then for $\epsilon$ sufficiently small the geometry satisfies
\begin{equation}
     r \sim V, \ \ \ (-^{(U)}\nu) \sim V^{\kappa_2}, \ \ \  ^{(V)}\lambda \sim 1, \ \ \ (1-\mu) \sim 1.
\end{equation}
Moreover, we have the bounds
\begin{alignat}{2}
    |\partial_U \phi| &\lesssim V^{-1+\kappa_2}, \ \ \ \ \quad \quad \quad  |\partial_V \phi| &&\lesssim V^{-1}, \\[.5em]
    |{\partial_U}{^{(U)}\nu}| &\lesssim V_2^{c}V^{-1+2\kappa_2}, \ \  \ \ \ |{\partial_V}{^{(V)}\lambda}| &&\lesssim V^{-1}, \\[.5em]
    |\partial_U^2 \phi| &\lesssim V_2^{d}V^{-2+2\kappa_2}, \ \ \ \quad \ \   |\partial_V^2 \phi| &&\lesssim V^{-2}.
\end{alignat}
\end{lem}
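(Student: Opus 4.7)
The proof is a routine consequence of the bootstrap hypothesis, closely mirroring the analogous Lemma \ref{lem:extregion1leadingorderbehavior} in Region I. The plan is to split every relevant unknown as $\Psi = \overline{\Psi} + \Psi_p$, estimate $\Psi_p$ pointwise by removing the weight $w^*$ from the bootstrap bound (\ref{eq:extbootass2}), and then compare against the admissible background bounds from assumption (B2). The only subtlety is extracting the $V_2^c$ and $V_2^d$ prefactors in the second-order estimates from the self-similar $s_2^c, s_2^d$ background bounds in (B2).

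First I would use the lower bound $w^* \geq \epsilon^{1/100}$ from (\ref{eq:ext2wbound}) to convert the bootstrap assumption $\mathfrak{T} \leq 2A\epsilon$ into unweighted pointwise bounds of the form $|\Psi_p| \lesssim A \epsilon^{1-\frac{1}{100}} V^{w_\Psi + \delta}$, where $w_\Psi$ is the $V$-weight appearing in the norm $\mathfrak{T}$ associated with $\Psi$. Combining these with the background bounds $\overline{r} \sim V$, $(-^{(U)}\overline{\nu}) \sim V^{\kappa_2}$, $^{(V)}\overline{\lambda} \sim 1$, $(1-\overline{\mu}) \sim 1$ from (B2), and using that $V \leq 1$ in $\mathcal{R}_{\text{II}}$ so that the extra $V^{\delta}$ factor on the perturbation is bounded, the four equivalences follow for $\epsilon$ small enough (depending on $D$ and hence on $V_2$) to absorb the perturbative correction.

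For the scalar field first-derivative bounds, a nearly identical comparison gives $|\partial_U \phi_p| \lesssim A\epsilon^{1-\frac{1}{100}} V^{-1+\kappa_2+\delta}$ and $|\partial_V \phi_p| \lesssim A\epsilon^{1-\frac{1}{100}} V^{-1+\delta}$, while (B2) provides $|\partial_U \overline{\phi}| \lesssim V^{-1+\kappa_2}$ and $|\partial_V \overline{\phi}| \lesssim V^{-1}$. Combining yields the stated bounds on $\partial_U \phi$, $\partial_V \phi$.

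The main (minor) point is the origin of $V_2^c$ and $V_2^d$. The background bounds in (B2) are $|{\partial_U}{^{(U)}\overline{\nu}}| \lesssim V^{-1+2\kappa_2} s_2^c$ and $|\partial_U^2 \overline{\phi}| \lesssim V^{-2+2\kappa_2} s_2^d$, where $s_2 = V^{q_2}/|U|$. In $\mathcal{R}_{\text{II}}$ the defining inequality $V^{q_2}/|U| \leq V_2$ gives $s_2 \leq V_2$, and hence $s_2^c \leq V_2^c$, $s_2^d \leq V_2^d$. For $\partial_V{^{(V)}\lambda}$ and $\partial_V^2 \phi$ no such loss appears since (B2) provides $V^{-1}$ and $V^{-2}$ bounds directly. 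The bootstrap bounds on $\partial_U {^{(U)}\nu_p}$ and $\partial_U^2 \phi_p$ are smaller by a factor of $\epsilon^{1-\frac{1}{100}} V^{\delta}$ than the background (again using $V \leq 1$), so for $\epsilon$ small the total derivative bounds inherit the $V_2^c, V_2^d$ prefactors of the background. I do not expect any true obstacle: the structure is purely algebraic, and the only requirement is that $\epsilon$ be chosen small enough in terms of $D$ (and therefore $V_2$) after the bootstrap constants are absorbed, which is consistent with the ordering of parameters $V_2 \to D \to \epsilon$ fixed earlier in the section.
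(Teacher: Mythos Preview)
Your approach is correct and matches the paper's, which simply records the lemma as a direct consequence of the bootstrap assumptions without further detail. One inaccuracy worth flagging: the claim that $V \leq 1$ in $\mathcal{R}_{\text{II}}$ is false---you may be thinking of $\mathcal{R}_{\text{III}}$. In $\mathcal{R}_{\text{II}}$ one only has $V^{q_2} \leq V_2 |U| \leq V_2$, so $V \leq V_2^{p_2}$, which can be large. This does not break your argument: since you correctly observe at the end that $\epsilon$ is chosen small depending on $V_2$, the factor $V^{\delta} \leq V_2^{p_2\delta}$ on the perturbation is still absorbed. Relatedly, the background bounds you cite from (B2) are stated only for $\{s_2 \geq 1,\, V \leq 1\}$; on the portion of $\mathcal{R}_{\text{II}}$ with $V > 1$ or $s_2 < 1$ one must instead invoke (B3) or (B1), and some of the stated equivalences (e.g.\ $(-{}^{(U)}\nu) \sim V^{\kappa_2}$) then hold with implicit constants depending on $V_2$.
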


We turn to estimating the bootstrap norm $\mathfrak{T}$, and comment on key features of the analysis in this region: 
\vspace{-.5em}
\begin{enumerate}
    \item Given $r \sim V$, the singular weights associated to $\mathcal{O}$ will naturally appear as $V$ weights. It is a straightforward to integrate $U$ equations directly, and use an upper bound on $\frac{|U|}{V^{q_2}}$ which is, importantly, independent of $V_2$. When integrating $V$ equations it is natural to first replace the singular $V$ weights with $U$ weights (at no cost of factors of $V_2$), integrate in $U$, and then use bounds on $\frac{V^{q_2}}{|U|}$. One does incur unfavorable dependence on $V_2$ here, but choosing $D$ large as a function of $V_2$ compensates. Therefore we do not need to deal with any logarithmic divergences.
    \vspace{-.5em}
    \item Compute 
    \vspace{-.25em}
    $$\partial_U w^* = -\frac{D }{|U|}\frac{V^{q_2}}{|U|}w^*, \ \ \ \partial_V w^* = -\frac{Dq_2}{V}\frac{V^{q_2}}{|U|}w^*. $$ 
    Conjugation by $w^*$ thus produces good zeroth order terms with large ($\approx D$) coefficients. The smallness mechanism in Region II should be compared with the interior, where zeroth order terms with large ($\approx \alpha$) coefficients were utilized.
\end{enumerate}

\begin{lem}
Assume the bound (\ref{eq:extbootass2}). Then for $D$ large enough, $\epsilon$ sufficiently small, we have 
\begin{equation}
    \label{ext:lambbootimproved}
    \mathfrak{T}_1 \leq \frac{1}{10}A\epsilon.
\end{equation}
\end{lem}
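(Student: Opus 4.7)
The plan is to estimate each of the four quantities $\,^{(V)}\lambda_p$, $\partial_V\,^{(V)}\lambda_p$, $\partial_V \phi_p$, $\partial_V^2\phi_p$ comprising $\mathfrak{T}_1$ by integrating its $U$-transport equation---the $(U,V)$-gauge analog of (\ref{PSSESF:1}), (\ref{PSSESF:7}), (\ref{PSSESF:5.5}), and (\ref{PSSESF:9}), respectively---from data on $\{U = -1\}$, after conjugating by the weight $V^{i-\delta}w^*$ dictated by the norm (where $i\in\{0,1,2\}$ labels the order of $V$-differentiation). Since $V$ is independent of $U$, the $V^{i-\delta}$ factor commutes with $\partial_U$; the substantive input is $\partial_U w^* = -\frac{DV^{q_2}}{|U|^2}w^*$, which will contribute a favorable zeroth order term of size $\geq DV_1/|U|$ in $\mathcal{R}_{\mathrm{II}}$ (using $V^{q_2}\geq V_1|U|$) when integrated in the positive-$U$ direction.

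\textbf{Sign structure.} First I will check signs. For $\,^{(V)}\lambda_p$ and $\partial_V\,^{(V)}\lambda_p$, the intrinsic zeroth order coefficient (once the equation is written as $\partial_U f + cf = E$) is $-\mathcal{G}_2$, with $\mathcal{G}_2 = \mu\,^{(U)}\nu / ((1-\mu)r) < 0$ by Lemma \ref{lem:extregion2prelims}, so the net conjugated coefficient is $\geq DV_1/|U|$ for every $D > 0$. For $\partial_V \phi_p$ and $\partial_V^2\phi_p$, the intrinsic coefficient from the wave equation is $\mathcal{T}_2 = \,^{(U)}\nu/r < 0$, an unfavorable contribution of magnitude $|\mathcal{T}_2|\lesssim V^{\kappa_2-1}$. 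Using the upper inequality $V^{q_2}\leq V_2|U|$ yields $V^{\kappa_2-1}\leq V_2^{-1}|U|^{-1}$, so choosing $D$ large depending on $V_2$ will dominate this term and leave a net coefficient $\geq \tfrac{DV_1}{2|U|}$.

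\textbf{Source estimates and closure.} Next I will estimate each inhomogeneous term using the bootstrap assumption (\ref{eq:extbootass2}), Lemma \ref{lem:extregion2prelims}, and the background bounds from (B2)-(B3). A representative term for $\,^{(V)}\lambda_p$ is $(\mathcal{G}_2)_p\,^{(V)}\overline{\lambda}$, with $|w^* V^{-\delta}(\mathcal{G}_2)_p\,^{(V)}\overline{\lambda}|\lesssim A\epsilon V^{\kappa_2-1}\lesssim A\epsilon V_2^{-1}|U|^{-1}$. The integrating factor associated to the favorable coefficient satisfies $\exp\bigl(-\int_{U'}^U c(U'',V)\,dU''\bigr)\lesssim (|U|/|U'|)^{DV_1/2}$, and convolution against the source will produce a contribution of size $\lesssim A\epsilon/(DV_1V_2)$---smaller than $A\epsilon/20$ once $D$ is chosen large depending on $V_1,V_2$. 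The boundary term $|V^{-\delta}w^*\Psi_p|_{U=-1}$ is bounded uniformly in $V\in[V_1,V_2]$ by the outgoing data specification (\ref{extdata1.1})-(\ref{extdata2.1}) and the norm $\mathcal{E}_{2,\alpha}$, and is also at most $A\epsilon/20$ once $A$ is taken large. Combining these estimates for each of the four unknowns will yield $\mathfrak{T}_1\leq A\epsilon/10$.

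\textbf{Main obstacle.} The hardest step will be the $\partial_V^2\phi_p$ estimate, whose source---the $(U,V)$-analog of the right-hand side of (\ref{PSSESF:9})---contains numerous products involving $\partial_V\mathcal{T}_i$, $\partial_V(\mathcal{T}_i)_p$, and the higher derivatives $\partial_V^i\overline{\phi}$. A subset of these requires the bounds supplied by $\mathfrak{T}_2$ (on $r_p, \,^{(U)}\nu_p, \mu_p$ and their $U$-derivatives) rather than $\mathfrak{T}_1$ alone, so the reductive interaction between the two parts of $\mathfrak{T}$ must be tracked carefully even though the target estimate concerns only $\mathfrak{T}_1$. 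Provided the worst-case bound on the source takes the form $A\epsilon V^{-2+\delta}/w^*$, multiplication by $w^* V^{2-\delta}$ together with the coordinate inequalities in $\mathcal{R}_{\mathrm{II}}$ will recover the template $A\epsilon V_2^{-1}|U|^{-1}$, and closure then proceeds exactly as in the preceding paragraph.
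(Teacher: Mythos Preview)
Your overall strategy matches the paper's: conjugate each $U$-transport equation by the weight $V^{i-\delta}w^*$, use the good lower-order term produced by $\partial_U w^* = -\tfrac{DV^{q_2}}{|U|^2}w^*$ (which dominates any unfavorable intrinsic coefficient once $D$ is large), and integrate in $U$ to gain a factor $D^{-1}$. The paper packages the integration as an energy estimate (contract with the unknown, then Young's inequality) rather than via the integrating factor, but the two are equivalent here.

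There is one genuine omission. You integrate only from $\{U=-1\}$, but the past boundary of $\mathcal{R}_{\text{II}}$ in the ingoing direction has two pieces: the null segment $\{U=-1,\ V_1^{p_2}\le V\le V_2^{p_2}\}$ \emph{and} the spacelike hypersurface $\{V^{q_2}/|U|=V_1\}$ inherited from Region~I. For $(U,V)$ with $V$ below the corner value $V_1^{p_2}$, the past-directed constant-$V$ line hits the spacelike piece at $U=-V^{q_2}V_1^{-1}$, not $\{U=-1\}$. The paper handles both cases separately, invoking Lemma~\ref{lem:region2dataest} for the spacelike boundary term. Your argument goes through unchanged once you add this case, since the Region~I estimates give $|V^{-\delta}w^*\Psi_p|\lesssim\epsilon$ along that surface as well.

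A minor arithmetic slip: your bound $V^{\kappa_2-1}\lesssim V_2^{-1}|U|^{-1}$ uses the inequality in the wrong direction. From $V^{q_2}\ge V_1|U|$ you only get $V^{\kappa_2-1}=V^{-q_2}\le V_1^{-1}|U|^{-1}$. This does not affect the conclusion---the smallness still comes from $D^{-1}$, with $D$ chosen large as a function of $V_1$ (and $V_2$), exactly as in the paper---but the source of the $D^{-1}$ gain is the conjugation, not a favorable $V_2^{-1}$ factor in the inhomogeneity.
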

\begin{proof}
The technique being the same for unknowns satisfying $U$ equations, we only show the estimate for $^{(V)}\lambda_p$ in detail. We start by estimating the terms appearing in (\ref{PSSESF:1}). Inserting the bootstrap assumptions shows 
\begin{equation*}
    |w^* \partial_U \, ^{(V)} \lambda_p| \lesssim A \epsilon V^{-1+\kappa_2+\delta}.
\end{equation*}
We have included a factor of $w^*$ in the above estimate, anticipating that we eventually conjugate by $w^*$. In estimating the right hand side of (\ref{PSSESF:1}), we write $\Psi_p = (w^*)^{-1}(w^*\Psi_p)$ for each unknown that appears, using the bootstrap assumptions to estimate $(w^*\Psi_p)$ and the bound $(w^*)^{-1} \lesssim \epsilon^{-\frac{1}{100}}$ to estimate factors of $w^*$. It is important already here that we have a lower bound on $w^*$, which loses only a small power of $\epsilon$.

Conjugating (\ref{PSSESF:1}) through by $V^{-\delta}w^*$ gives the equation 
\begin{equation}
    \label{eq:ext2lambdatemp}
    \partial_U (V^{-\delta} {w^*} \  ^{(V)}\lambda_p) = \frac{\big(-D\frac{V^{q_2}}{|U|}\big)}{|U|}(V^{-\delta} {w^*} \ ^{(V)}\lambda_p) + \mathcal{A},
\end{equation}
where $ |\mathcal{A}| \lesssim A\epsilon  \frac{1}{V^{q_2}}.$
Using the lower bound $\frac{V^{q_2}}{|U|} \geq V_1$, we estimate that the coefficient of the zeroth term is bounded above by $ -\frac{DV_1}{|U|} $. Note $V_1$ is fixed independently of $\epsilon$.

It remains to integrate (\ref{eq:ext2lambdatemp}) in $U$ from the boundary, the spacelike and null component of which we consider separately. First take $(U,V) \in \mathcal{R}_{\text{II}}$ with $V \leq V_1$. Contracting (\ref{eq:ext2lambdatemp}) with $V^{-\delta}{w^*} ^{(V)}\lambda_p$ and integrating gives
\begin{align*}
    |V^{-\delta}{w^*} ^{(V)}\lambda_p&|^2(U,V) +  DV_1 \int\limits_{(-V^{q_2}V_1^{-1},V)}^{(U,V)}\frac{1}{|U'|}(V^{-\delta}{w^*} ^{(V)}\lambda_p)^2(U',V)dU'  \nonumber\\
    &\leq |V^{-\delta}{w^*} ^{(V)}\lambda_p|^2(-V^{q_2}V_1^{-1},V) +\frac{DV_1}{2} \int\limits_{(-V^{q_2}V_1^{-1},V)}^{(U,V)}\frac{1}{|U'|}(V^{-\delta}{w^*} ^{(V)}\lambda_p)^2(U',V)dU' \\
    &+ \overline{C}D^{-1}V_1^{-1}A^2\epsilon^2  \frac{1}{V^{2q_2}}\int\limits_{(-V^{q_2}V_1^{-1},V)}^{(U,V)}|U'|dU' \nonumber\\
    &\leq\epsilon^2 +\frac{DV_1}{2} \int\limits_{(-V^{q_2}V_1^{-1},V)}^{(U,V)}\frac{1}{|U'|}(V^{-\delta}{w^*} ^{(V)}\lambda_p)^2(U',V)dU'+ \overline{C}D^{-1}V_1^{-1}A^2\epsilon^2. \label{eq:ext2lambdaimprovetemp}
\end{align*}
Absorbing the remaining integral expression, taking square roots, and choosing $D$ large enough as a function of the background solution (through the constant $\overline{C}$) and $V_1$ gives the improvement.

Next consider $(U,V)$ with $V > V_1$. The past directed constant $V$ line emanating from $(U,V)$ intersects the boundary at $(-1,V)$. The analogous estimate becomes
\begin{align*}
    |V^{-\delta}{w^*} ^{(V)}\lambda_p|^2(U,V) \leq |V^{-\delta}{w^*} ^{(V)}\lambda_p|^2(-1,V) + \overline{C}D^{-1}V_1^{-1}A^2\epsilon^2 \frac{1}{V^{2q_2}}\int\limits_{(-1,V)}^{(U,V)}|U'|dU'.
\end{align*}
The $V$ weights are uniformly bounded $V^{-1} \leq V_1^{-1}$, and the integral is bounded by a constant. The boundary term may simply be estimated $|^{(V)}\lambda_p|(-1,V) \lesssim \epsilon,$ and after choosing $D$ large we again improve the bootstrap assumption.

The case of ${\partial_V}{^{(V)}\lambda}$ will follow similarly once consistent decay bounds for the terms in (\ref{PSSESF:7}) are shown. Inserting bootstrap bounds gives
\begin{equation*}
    |{w^* \partial_U {\partial_V}}{^{(V)}\lambda_p}| \lesssim A\epsilon V^{-2+\kappa_2+\delta}.
\end{equation*}
Conjugating through by $V^{-1+\delta}w^*$, contracting, and integrating in $U$, one proceeds as for $^{(V)}\lambda_p$ to conclude the estimate.

Estimating the terms in (\ref{PSSESF:7}), (\ref{PSSESF:9}) implies 
\begin{equation*}
    |w^* \partial_U \partial_V \phi_p| \lesssim A\epsilon V^{-2+\kappa_2+\delta},
\end{equation*}
\begin{equation*}
    |w^* \partial_U \partial_V^2 \phi_p| \lesssim A\epsilon V^{-3+\kappa_2+\delta}.
\end{equation*}
Conjugating by the bootstrap weight and integrating in $U$ leads to the desired estimate for $\partial_V \phi$, $\partial_V^2 \phi$ after choosing $D$ sufficiently large. Boundary terms along the spacelike boundary are estimated using Lemma \ref{lem:region2dataest}. Boundary terms along $\{U=-1\}$ are automatically bounded by the choice of initial data.

\end{proof}

It remains to consider the unknowns satisfying $V$ equations. Some care will need to be taken in exchanging $V$ for $U$ weights such that all singular weights remain integrable. 

\begin{lem}
Assume the bound (\ref{eq:extbootass2}). Then for $D$ large enough, $\epsilon$ sufficiently small, we have 
\begin{equation}
    \mathfrak{T}_2 \leq \frac{1}{10}A\epsilon.
\end{equation}
\end{lem}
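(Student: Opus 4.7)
The strategy mirrors the estimate for $^{(V)}\lambda_p$ in the previous lemma, with the roles of $U$ and $V$ reversed. For each $\Psi_p \in \{r_p,\, {^{(U)}\nu_p},\, \partial_U{^{(U)}\nu_p},\, \mu_p,\, \partial_U\phi_p,\, \partial_U^2\phi_p\}$, I would take the appropriate $V$ transport equation (obtained from (\ref{PSSESF:2}), (\ref{PSSESF:4}), (\ref{PSSESF:5.5}), (\ref{PSSESF:8}), (\ref{PSSESF:10}) rewritten in the $(U,V)$ gauge), estimate every term on the right-hand side via the bootstrap bound (\ref{eq:extbootass2}) together with Lemma \ref{lem:extregion2prelims}, and then conjugate by the weight $V^{-w_\Psi} w^*$ matching the bootstrap norm before running an energy estimate.

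The crucial mechanism is that
\begin{equation*}
   \partial_V w^* \,=\, -\frac{Dq_2}{V}\cdot\frac{V^{q_2}}{|U|}\,w^* \,\leq\, -\frac{Dq_2 V_1}{V}\,w^*
\end{equation*}
throughout $\mathcal{R}_{\text{II}}$, since $V^{q_2}/|U|\geq V_1$ there. After conjugation, the transport equation becomes
\begin{equation*}
  \partial_V\bigl(V^{-w_\Psi} w^* \Psi_p\bigr) + \Big(\frac{Dq_2 V_1 + O(1)}{V}\Big)\bigl(V^{-w_\Psi} w^* \Psi_p\bigr) \,=\, V^{-w_\Psi} w^* \cdot (\text{source}),
\end{equation*}
where the $O(1)$ absorbs the effect of $\partial_V V^{-w_\Psi}$ and of zeroth-order coefficients such as $-\mathcal{T}_1\sim V^{-1}$ (which has a bad sign but a bounded coefficient). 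Contracting with $V^{-w_\Psi}w^*\Psi_p$ and integrating in $V$ from the past boundary, Cauchy--Schwarz lets us absorb half of the good zeroth-order term into the left-hand side and bound the remaining source contribution by $\lesssim \overline{C}D^{-1}V_1^{-1}A^2\epsilon^2$, giving the desired improvement once $D$ is taken large relative to the background constants and $V_1$. Past-boundary data along $\{V^{q_2}/|U|=V_1\}$ is supplied by Lemma \ref{lem:region2dataest}, and along $\{U=-1\}$ by the initial data assumptions, in both cases contributing at size $O(\epsilon)$ consistent with the bootstrap weight. The case of $r_p$ is trivial: $\partial_V r_p = {^{(V)}\lambda_p}$ has already been estimated in the previous lemma, so direct integration from the data suffices. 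The higher-order $\partial_U{^{(U)}\nu_p}$ and $\partial_U^2\phi_p$ can be handled after the first-order quantities are closed, since the commuted equations (\ref{PSSESF:8}), (\ref{PSSESF:10}) introduce only terms already under control.

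The main obstacle is the bookkeeping of $U$- versus $V$-weights in the source terms. Several right-hand-side contributions---for instance $(\mathcal{G}_1)_p\,{^{(U)}\overline{\nu}}$ in the $^{(U)}\nu_p$ equation, or $(\mathcal{T}_i)_p\partial_{\overline{\bullet}}\overline{\phi}$ in the scalar field equations---carry $V$-weights whose product with the bootstrap weight must still be integrable against $V^{-1}dV$ in order for the $D$-gain to close the estimate. Fortunately, in $\mathcal{R}_{\text{II}}$ the coordinate equivalence $|U|\sim V^{q_2}$ (with constants depending on $V_1,V_2$ alone) allows us to freely trade $V$-powers for $U$-powers; because the bootstrap weights were chosen to match the self-similar scaling of the background, each source term then reduces to a factor of $A\epsilon V^{-1}$ multiplied by a $V_2$-dependent but $D$-independent constant. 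Choosing $D$ large relative to this constant---with $\epsilon$ subsequently small enough to preserve the lower bound (\ref{eq:ext2wbound}) on $w^*$---improves the bootstrap assumption to $\mathfrak{T}_2\leq\tfrac{1}{10}A\epsilon$, completing the proof.
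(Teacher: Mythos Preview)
Your proposal is correct and follows essentially the same approach as the paper: conjugate the $V$-transport equations by the bootstrap weight times $w^*$, exploit the favorable zeroth-order term produced by $\partial_V w^* \leq -Dq_2V_1 V^{-1}w^*$, trade $U$ and $V$ weights using $|U|\sim V^{q_2}$, and close by taking $D$ large relative to a $V_2$-dependent constant. One small correction: for $V$-integration in $\mathcal{R}_{\text{II}}$ the past-directed constant-$U$ ray always terminates on the spacelike surface $\{V^{q_2}/|U|=V_1\}$, so only Lemma~\ref{lem:region2dataest} is needed for boundary terms and the $\{U=-1\}$ data plays no role here.
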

\begin{proof}
Inserting bootstrap bounds into (\ref{PSSESF:1}) implies 
\begin{equation*}
    |{w^*\partial_V}{^{(U)}\nu_p}| \lesssim A\epsilon V^{-1+\kappa_2 + \delta} \lesssim A\epsilon |U|^{-p_2}V^{\kappa_2+\delta}.
\end{equation*}
In the final inequality we have exchanged part of the singular $V$ weight for a singular $U$ weight\footnote{This step serves to avoid logarithmic divergences in the $V$ integrals, but is not strictly necessary. One can proceed as in the interior region (cf. the proof of Lemma \ref{lem:inttheonlylemmareferredto}), namely by conjugating through by an extra $V^\sigma$ weight in order to render all error terms integrable. }. The implied constant only depends on $V_1$, which is assumed to be fixed.

Conjugating (\ref{PSSESF:1}) by $V^{-\kappa_2-\delta}w^*$ yields
\begin{equation*}
    \partial_V (V^{-\kappa_2-\delta}{w^*} ^{(U)}\nu_p) = \frac{\Big(-Dq_2\frac{V^{q_2}}{|U|}\Big)}{V}(V^{-\kappa_2-\delta}{w^*} ^{(U)}\nu_p) + \mathcal{A},
\end{equation*}
where $|\mathcal{A}| \lesssim A\epsilon \frac{1}{|U|^{p_2}}$. Contracting with $V^{-\kappa_2-\delta}{w^*} ^{(U)}\nu_p $ and integrating in $V$ from the spacelike portion of the past boundary of $\mathcal{R}_{\text{II}}$ gives
\begin{align*}
   |V^{-\kappa_2-\delta}{w^*} ^{(U)}\nu_p&|^2(U,V)+ Dq_2V_1\int\limits_{(U,(V_1|U|)^{p_2})}^{(U,V)}\frac{1}{|V'|} (|V'|^{-\kappa_2-\delta}{w^*} ^{(U)}\nu_p)^2(U,V')dV' \\
   &\leq  |V^{-\kappa_2-\delta}{w^*} ^{(U)}\nu_p|^2(U,(V_1|U|)^{p_2}) +\frac{Dq_2V_1}{2}\int\limits_{(U,(V_1|U|)^{p_2})}^{(U,V)}\frac{1}{|V'|} (|V'|^{-\kappa_2-\delta}{w^*} ^{(U)}\nu_p)^2(U,V')dV' \\
   &+ \overline{C}D^{-1}V_1^{-1}A^2\epsilon^2 |U|^{-2p_2} \int\limits_{(U,(V_1|U|)^{p_2})}^{(U,V)}|V'| dV' \\
   &\leq \epsilon^2  +\frac{Dq_2V_1}{2}\int\limits_{(U,(V_1|U|)^{p_2})}^{(U,V)}\frac{1}{|V'|} (|V'|^{-\kappa_2-\delta}{w^*} ^{(U)}\nu_p)^2(U,V')dV' + \overline{C}D^{-1}V_1^{-1}V_2^{2p_2}A^2\epsilon^2.
\end{align*}
After absorbing the integral term, the remaining error term is of size $\approx \overline{C} D^{-1}V_1^{-1}V_2^{2p_2}A^2\epsilon^2$. Choosing $D$ large enough as a function of both implied constants and of $V_2$, we improve the bootstrap assumption.

The remaining estimates follow similarly. Given an unknown $\Psi_p$ satisfying a $V$ equation, it is sufficient to show an estimate $|w^* \partial_V \Psi_p| \lesssim A\epsilon |U|^{-p_2} V^{r(\Psi)},$ where $V^{r(\Psi)}$ is the $V$ weight contained in the bootstrap norm. The boundary terms along $\{\frac{V^{q_2}}{|U|} = V_1\}$ resulting from integration have already been bounded in a manner consistent with the bootstrap norms, so an estimate of the above form allows us to run an identical argument to that of $^{(U)}\nu_p$.

For (\ref{PSSESF:8}), inserting bootstrap norms and estimates on the background shows
\begin{equation*}
    |{w^*\partial_V \partial_U}{^{(U)}\nu_p}| \lesssim A\epsilon V_2^{c} V^{-2+2\kappa_2+\delta} \lesssim A\epsilon V_2^{c} |U|^{-p_2} V^{-1+2\kappa_2+\delta}.
\end{equation*}
This estimate has the desired structure, up to added factors of $V_2$. These added factors can always be absorbed by the smallness of $D^{-1}$, and thus do not affect the argument.

We turn now to $\mu_p$, and estimate using (\ref{PSSESF:4})
\begin{equation*}
    |{w^*} \partial_V \mu_p| \lesssim A\epsilon V^{-1+\delta} \lesssim A\epsilon |U|^{-p_2} V^{\delta},
\end{equation*}
which has the desired structure.

Next we consider $\partial_U \phi, \partial_U^2 \phi$. Estimate using (\ref{PSSESF:5}), (\ref{PSSESF:10})
\begin{align}
    |w^* \partial_V \partial_U \phi_p| &\lesssim A\epsilon V^{-2+\kappa_2+\delta} \lesssim A\epsilon |U|^{-p_2} V^{-1+\kappa_2+\delta}, \\[.5em]
    |w^* \partial_V \partial_U^2 \phi_p| &\lesssim A\epsilon V_2^{d}V^{-3+2\kappa_2+\delta} \lesssim A\epsilon V_2^{d}|U|^{-p_2}V^{-2+2\kappa_2+\delta}.
\end{align}
Both estimates have the proper structure, and upon integrating in $V$ we conclude the desired improvements.

Finally, $r_p$ is estimated by the fundamental theorem of calculus and the improved bound on $^{(V)}\lambda_p$ as in Region I.
\end{proof}

\subsection{Region III}
\label{subsec:regionIII}
The analysis thus far has extended the solution up to a spacelike hypersurface $\{\frac{V^{q_2}}{|U|} = V_2\}$, with restrictions on $\epsilon$ ensuring the estimates are independent of $V_2$. Unpacking the bootstrap norms of the previous section gives the following:
\begin{lem}
\label{lem:region3databounds}
Assume the bound (\ref{eq:extbootass2improved}). Then there exists $s \ll 1$ small such that for $\epsilon$ small as a function of $V_2$, along $\{\frac{V^{q_2}}{|U|} = V_2\}$ we have that the bounds (\ref{eqn:lem5.7.1})-(\ref{eqn:lem5.7.2}) continue to hold, after replacing $\epsilon \rightarrow \epsilon^{1-s}.$
\end{lem}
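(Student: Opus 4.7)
The plan is direct: Lemma \ref{lem:region3databounds} is essentially a restatement of the improved bootstrap bound (\ref{eq:extbootass2improved}) evaluated along the future boundary $\{V^{q_2}/|U|=V_2\}$ of $\mathcal{R}_{\text{II}}$, combined with the controlled loss in the conjugating weight $w^*$. Indeed, the bootstrap norms $\mathfrak{T}_1, \mathfrak{T}_2$ controlling the $\Psi_p$ in $\mathcal{R}_{\text{II}}$ involve exactly the same weighted $V$-powers as those appearing in the right-hand sides of (\ref{eqn:lem5.7.1})--(\ref{eqn:lem5.7.2}). The only discrepancy is the extra factor of $w^*$ included in the definition of $\mathfrak{T}$, which on $\{V^{q_2}/|U|=V_2\}$ becomes the constant $e^{-DV_2}$.

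First I would unpack $\mathfrak{T} \leq A\epsilon$ from (\ref{eq:extbootass2improved}): for each relevant unknown $\Psi_p \in \{r_p, {^{(U)}}\nu_p, {^{(V)}}\lambda_p, \mu_p, \partial_U \phi_p, \partial_V \phi_p, \partial_U{^{(U)}}\nu_p, \partial_V{^{(V)}}\lambda_p, \partial_U^2 \phi_p, \partial_V^2 \phi_p\}$, this reads
\begin{equation*}
|w^* \Psi_p|(U,V) \lesssim \epsilon \cdot V^{r(\Psi)} \quad \text{in } \mathcal{R}_{\text{II}},
\end{equation*}
where $V^{r(\Psi)}$ is precisely the weight attached to $\Psi_p$ in Lemma \ref{lem:region2dataest} (e.g. $V^{\kappa_2+\delta}$ for ${^{(U)}}\nu_p$, $V^{-1+\delta}$ for $\partial_V \phi_p$, and so on). I would verify this by matching each entry of $\mathfrak{T}_1 + \mathfrak{T}_2$ with the corresponding entry of (\ref{eqn:lem5.7.1})--(\ref{eqn:lem5.7.2}); the match is direct.

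Next I would restrict to the hypersurface $\{V^{q_2}/|U| = V_2\}$ and invoke the scaling $\epsilon^{1/100} \leq e^{-DV_2}$ chosen at the outset of Region II, which gives the \emph{uniform} pointwise lower bound $w^* = e^{-D V_2} \geq \epsilon^{1/100}$ on this surface. Setting $s \doteq 1/100$ and dividing the previous display by $w^*$ yields
\begin{equation*}
|\Psi_p|(U,V) \leq (w^*)^{-1}|w^*\Psi_p| \lesssim \epsilon^{-s}\cdot \epsilon \cdot V^{r(\Psi)} = \epsilon^{1-s} V^{r(\Psi)}
\end{equation*}
on $\{V^{q_2}/|U|=V_2\}$, which is exactly the claim of Lemma \ref{lem:region3databounds}. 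There is no genuine analytic obstacle here — the entire content of the lemma is to observe that the carefully chosen balance between $D$, $V_2$ and $\epsilon$ from the beginning of Region II was designed precisely to absorb the loss of conjugating by $w^*$ into an $\epsilon^{-s}$ factor with $s$ as small as desired, thereby ensuring the data bounds on the past boundary of $\mathcal{R}_{\text{III}}$ are uniform in $V_2$ (up to this tiny power loss) — this uniformity being essential for the subsequent use of $V_2^{-1}$ as a smallness parameter in Regions III and IV.
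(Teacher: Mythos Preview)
Your argument is correct and is exactly the intended one: the paper does not spell out a proof of this lemma, but the setup of Region II (the choice $\epsilon^{1/100}\le e^{-DV_2}$ and the accompanying remark) makes clear that the content is precisely what you wrote---unpack the norm $\mathfrak{T}\le A\epsilon$, observe that the $V$-weights in $\mathfrak{T}_1,\mathfrak{T}_2$ coincide with those in (\ref{eqn:lem5.7.1})--(\ref{eqn:lem5.7.2}), and divide through by $w^*=e^{-DV_2}\ge \epsilon^{1/100}$ on the hypersurface to trade the weight for an $\epsilon^{-s}$ loss with $s=1/100$.
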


In solving up to $\{U=0\}$, the analysis will parallel that of Region I. There it was convenient to have explicit control on the gauge along $\{v=0\}$, with the condition $\nu(u,0) = -1$. In order to establish similar control along $\{U=0\}$, define the modified coordinate $\hat{V}$ by
\begin{equation*}
    \hat{V} \doteq {\int\limits_{0}^V} {^{(V)}}\overline{\lambda}(0,V')dV' \doteq t(V) V,
\end{equation*}
where
\begin{equation*}
    t(V) \doteq \frac{1}{V}{\int\limits_{0}^V} {^{(V)}}\overline{\lambda}(0,V')dV'.
\end{equation*}
It follows by the assumptions on the geometry of the background spacetime that $t(V) \sim 1$, and one can check that $^{(\hat{V})}\overline{\lambda}(0,V) \doteq \partial_{\hat{V}}\overline{r}(0,V) = 1.$ In effect we have straightened out the $V$ coordinate along $\{U=0\}$ so that $\overline{r}(0,\hat{V}) = \hat{V}$ holds.

The $V$ coordinate derivative is transformed to
\begin{equation*}
   \frac{\partial}{\partial \hat{V}} = \frac{1}{^{(V)}\overline{\lambda}(0,V)} \frac{\partial}{\partial V}.
\end{equation*}
In this section it is convenient to work both in the $(U,V)$ and $(U, \hat{V})$ coordinate systems. Note $V \sim \hat{V}$, and it thus makes no difference which outgoing coordinate is used for displaying singular/decaying weights. The ability to switch into a $\hat{V}$ coordinate will be useful when considering transport equations in the outgoing direction.

We now discuss the extension to Region III, given by
\begin{equation}
    \mathcal{R}_{\text{III}} \doteq \{(U,V) \ | \ V_2 \leq \frac{V^{q_2}}{|U|} < \infty, \ -1 \leq U < 0, \ 0 < V \leq 1  \},
\end{equation}
containing a portion of the future light cone $\{U = 0\}$ of the singular point. To close the estimates of this section, we will only have access to the smallness parameter $V_2^{-1}$. For $V_2$ large enough the past boundary of Region III consists of a spacelike piece
\begin{equation}
    \{\frac{V^{q_2}}{|U|}=V_2, \ -1 \leq U < 0, \ 0 < V \leq 1\},
\end{equation}
with the value of all double null unknowns provided by the solution constructed in Region II. 

\begin{figure}
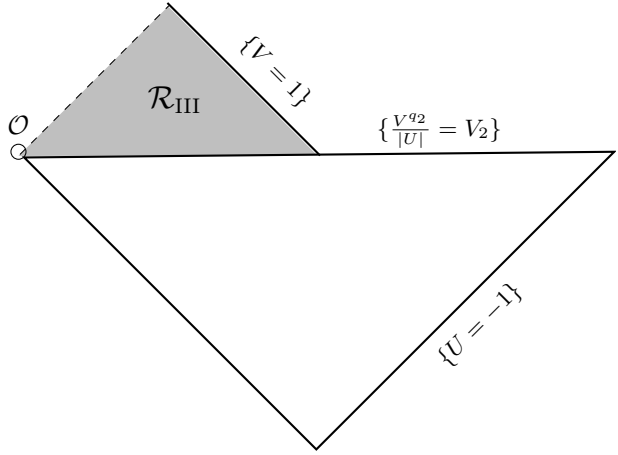

    \centering
  \includestandalone[]{Figures/fig_region3}
  \caption{$\mathcal{R}_{\text{III}}$ }
  \label{fig:region3}
\end{figure}

\subsubsection{Norms and bootstrap assumptions}
Fix $(\wt{U},\wt{V}) \in \mathcal{R}_{\text{III}}$ and define the norms
\begin{align}
    \mathfrak{S}_1 \doteq &\sup_{\mathcal{Q}^{(ex)}_{(\wt{U},\wt{V})}} \Big|\frac{1}{V^{1+\delta}}r_p \Big| + \sum_{i=0}^1 \sup_{\mathcal{Q}^{(ex)}_{(\wt{U},\wt{V})}}\Big|\frac{1}{V^{\delta-i}}{\partial_{V}^i} ^{(V)}\lambda_p\Big| + \sup_{\mathcal{Q}^{(ex)}_{(\wt{U},\wt{V})}} \Big|\frac{1}{V^{\delta}}\mu_p \Big| + \sum_{i=1}^2 \sup_{\mathcal{Q}^{(ex)}_{(\wt{U},\wt{V})}} \Big|V^{i-\delta}\partial_{V}^i \phi_p \Big|, \\[1em]
    \mathfrak{S}_2 \doteq &\sup_{\mathcal{Q}^{(ex)}_{(\wt{U},\wt{V})}} \Big|{\frac{1}{V^{\kappa_2+\delta}}} {^{(U)}\nu_p} \Big| +  \sup_{\mathcal{Q}^{(ex)}_{(\wt{U},\wt{V})}} \Big|V^{1-\kappa_2-\delta}   \partial_U \phi_p \Big|,
\end{align}
as well as the total solution norm
\begin{equation}
    \mathfrak{S} \doteq \mathfrak{S}_{1} + \mathfrak{S}_{2}.
\end{equation}
Note do not include explicit bounds on ingoing derivatives of $^{(U)}\nu_p, \ \partial_U \phi_p$ in the bootstrap norm; estimates for these quantities will follow after having closed the main bootstrap. Existence in Region III is a consequence of the following proposition, showing that bounds on $\mathfrak{S}$ can be improved independently of $(\wt{U},\wt{V}) \in \mathcal{R}_{\text{III}}$. 
\begin{prop}
Assume a local solution is given in a rectangle $\mathcal{Q}^{(ex)}_{(\wt{U},\wt{V})},$ and that the bound 
\begin{equation}
    \label{eq:extbootass3}
    \mathfrak{S} \leq 2A\epsilon^{1-s}
\end{equation}
holds, where $s$ is the parameter appearing in Lemma \ref{lem:region3databounds}. Then there exists $V_2$ sufficiently large depending on the background solution and $\epsilon$ sufficiently small as a function of data such that we have the improved bound 
\begin{equation}
    \label{eq:extbootass3improved}
    \mathfrak{S} \leq A\epsilon^{1-s}.
\end{equation}
\end{prop}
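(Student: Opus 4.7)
The proof closely parallels the analysis of Region I, with the roles of $U$ and $V$ exchanged: we are now in a self-similar neighborhood of the future light cone $\{U=0\}$, all singular weights in $\mathcal{R}_{\text{III}}$ are carried by the $V$ coordinate, and assumption (B2) provides background bounds in the $(U,V)$ gauge that are formally identical to those of (A2)--(A3) in the $(u,v)$ gauge. The sole smallness parameter available is $V_2^{-1}$, arising from the coordinate inequality $|U|\leq V^{q_2}/V_2$ valid throughout the region. As in earlier regions, a first consequence of (\ref{eq:extbootass3}), combined with Lemma \ref{lem:region3databounds} and the background bounds, is the analog of Lemma \ref{lem:extregion1leadingorderbehavior}: $r\sim V$, $^{(V)}\lambda\sim 1$, $(-\,^{(U)}\nu)\sim V^{\kappa_2}$, $(1-\mu)\sim 1$, together with matching self-similar bounds on first and second derivatives.

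The bootstrap improvement is achieved by estimating the difference unknowns in a reductive order. First we treat those governed by $U$-transport equations: $^{(V)}\lambda_p$ via (\ref{PSSESF:1}), $\partial_V\,^{(V)}\lambda_p$ via (\ref{PSSESF:7}), $\mu_p$ via (\ref{PSSESF:3}), and $\partial_V\phi_p$, $\partial_V^2\phi_p$ via (\ref{PSSESF:5.5}), (\ref{PSSESF:9}). For each, inserting the bootstrap assumptions yields a bound on the right-hand side of the schematic form $|\partial_U\Psi_p|\lesssim A\epsilon^{1-s}V^{r(\Psi)-q_2}$, where $V^{r(\Psi)}$ is the target $V$-weight. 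Integrating in $U$ from the spacelike past boundary $\{V^{q_2}/|U|=V_2\}$ contributes a length factor $|U-U_{\mathrm{past}}|\leq V^{q_2}/V_2$, producing an overall $V_2^{-1}$ gain over the data bounds supplied by Lemma \ref{lem:region3databounds}. Choosing $V_2$ sufficiently large as a function of the background-dependent implicit constants then improves the bootstrap on each. The estimate $|r_p|\lesssim \epsilon^{1-s}V^{1+\delta}$ follows at once by integrating $\partial_V r_p=\,^{(V)}\lambda_p$ in $V$ from the past boundary.

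The main obstacle is the control of the $V$-transport equations for $^{(U)}\nu_p$ and $\partial_U\phi_p$, which inherit no analogous smallness from the $V$-extent of the region; both contain zeroth-order coefficients singular in $V$ with potentially unfavorable signs. Paralleling the treatment of $\wt{\partial_v\phi}$ in Region I, the strategy is to exploit the extra decaying weight $V^\delta$ built into the bootstrap norms. It is convenient here to normalize the gauge along $\{U=0\}$ by passing to the coordinate $\hat V$, with respect to which $^{(\hat V)}\overline{\lambda}(0,V)=1$, so that the leading coefficient $\overline{\mathcal{G}}_1$ in (\ref{PSSESF:2}) admits an explicit expression in terms of the limiting background mass ratio. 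Conjugating (\ref{PSSESF:2}) by $V^{-\kappa_2-\delta}$ produces an additional lower-order contribution of order $(\kappa_2+\delta)/V$; choosing $\delta$ suitably as a function of the background (and using the smallness of $(\mathcal{G}_1)_p$ from the already-improved estimates) guarantees a favorable sign for the resulting zeroth-order coefficient. Contracting with the weighted unknown, integrating in $V$ from $V_{\mathrm{past}}=(V_2|U|)^{p_2}$, and absorbing the cross term in the standard manner then improves the bootstrap on $^{(U)}\nu_p$; the boundary contribution is controlled by Lemma \ref{lem:region3databounds}. The same procedure applied to the wave equation (\ref{PSSESF:5.5}) with weight $V^{1-\kappa_2-\delta}$ handles $\partial_U\phi_p$, using the previously improved bounds on the geometric quantities and $\partial_V\phi_p$ as source terms. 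Finally, pointwise estimates on $\partial_U\,^{(U)}\nu_p$ and $\partial_U^2\phi_p$ needed as data for the next region follow by a preservation-of-regularity argument from their respective $V$-commuted equations.
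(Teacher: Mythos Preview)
Your overall architecture matches the paper's: first close the $U$-transport estimates for $^{(V)}\lambda_p$, ${\partial_V}{^{(V)}\lambda_p}$, $\mu_p$, $\partial_V\phi_p$, $\partial_V^2\phi_p$ using the $V_2^{-1}$ smallness coming from the $U$-length of the region, then treat $^{(U)}\nu_p$ and $\partial_U\phi_p$ via their $V$-equations using the reductive structure (all sources already carry the improved constant), and finally recover ${\partial_U}{^{(U)}\nu_p}$, $\partial_U^2\phi_p$ by a regularity argument. Your treatment of $\partial_U\phi_p$ is essentially the paper's: in the $\hat V$ gauge the zeroth-order coefficient $-\mathcal{T}_1=-{^{(\hat V)}\lambda}/r$ is genuinely negative, and conjugation by $\hat V^{1-\kappa_2}$ preserves that sign.

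There is, however, a gap in your treatment of $^{(U)}\nu_p$. You conjugate (\ref{PSSESF:2}) by $V^{-\kappa_2-\delta}$ and assert that the resulting zeroth-order coefficient $\mathcal{G}_1-(\kappa_2+\delta)/\hat V$ can be arranged to have a favorable sign by choosing $\delta$. But $\mathcal{G}_1=\frac{\mu\lambda}{(1-\mu)r}$ is positive, and along $\{U=0\}$ in the $\hat V$ gauge it equals $\frac{\overline\mu(0,\hat V)}{(1-\overline\mu(0,\hat V))\,\hat V}$. The admissibility hypotheses (B2) impose no pointwise constraint forcing $\frac{\overline\mu}{1-\overline\mu}\le\kappa_2$; for Christodoulou's solutions one has the exact coincidence $\frac{\overline\mu(0)}{1-\overline\mu(0)}=k^2=\kappa_2$, but this is not assumed for general admissible backgrounds, and $\delta$ has already been fixed small in Region~I. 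Without the pointwise sign your contract-and-absorb step does not close. The paper avoids the issue entirely by rewriting (\ref{PSSESF:2}) in the form
\[
\partial_{\hat V}\log\Bigl(1+\frac{{^{(U)}\nu_p}}{{^{(U)}\overline\nu}}\Bigr)=(\mathcal{G}_1)_p,
\]
which eliminates the zeroth-order term exactly. Since $(\mathcal{G}_1)_p$ involves only the already-improved quantities $r_p,{^{(V)}\lambda_p},\mu_p$ and satisfies $|(\mathcal{G}_1)_p|\lesssim\epsilon^{1-s}\hat V^{-1+\delta}$, one integrates directly and reads off $|{^{(U)}\nu_p}|\lesssim\epsilon^{1-s}|{^{(U)}\overline\nu}|\,\hat V^{\delta}$ with no sign hypothesis and no bootstrap factor $A$. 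Equivalently, replacing your sign argument by the exact integrating factor $e^{\int\mathcal{G}_1}={^{(U)}\nu}/{^{(U)}\nu_*}$ would repair the step.
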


\subsubsection{Recovering bootstrap assumptions}
The following lemma is a direct consequence of the bootstrap assumptions and the definition of the $(U,\hat{V})$ coordinate system.
\begin{lem}
\label{lem:extbootconsequences3}
Assume the bound (\ref{eq:extbootass3}). Then for $\epsilon$ sufficiently small the geometry satisfies
\begin{equation}
    r \sim V, \ \ \  (-^{(U)}\nu) \sim V^{\kappa_2}, \ \ \ ^{(V)}\lambda \sim 1, \ \ \ (1-\mu) \sim 1.
\end{equation}
Moreover, we have the self-similar bounds
\begin{alignat}{2}
|\partial_U \phi| &\lesssim V^{-1+\kappa_2}  , \ \ \ |\partial_{V} \phi| &&\lesssim V^{-1}, \\[.5em]
|{\partial_V} ^{(V)}\lambda| &\lesssim V^{-1}, \ \ \ \quad \  |\partial_{V}^2 \phi| &&\lesssim V^{-2}.
\end{alignat}
Finally, $^{(\hat{V})} \lambda, {\partial_{\hat{V}}}{^{(\hat{V})} \lambda}, \partial_{\hat{V}} \phi, \partial_{\hat{V}}^2 \phi$ satisfy analogous bounds, with $\hat{V}$ in place of $V$.
\end{lem}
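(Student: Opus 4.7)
The plan is to exploit the decomposition $\Psi = \overline{\Psi} + \Psi_p$ for each double null unknown appearing in the lemma, combining the background bounds of assumption (B2) with the bootstrap control (\ref{eq:extbootass3}). Since $V \leq 1$ throughout $\mathcal{R}_{\text{III}}$, every extra $V^{\delta}$ factor carried by the perturbation norms is bounded by $1$, so each $\Psi_p$ is pointwise of size at most $A\epsilon^{1-s}$ times the self-similar weight of the corresponding $\overline{\Psi}$. Choosing $\epsilon$ small depending on $A$ and the implicit constants in (B2) then makes the perturbation strictly subdominant, and the stated estimates are inherited from the background.

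First I would verify the order-one equivalences. By (B2), in the region $\{s_2 \geq 1,\ V \leq 1\}$ containing $\mathcal{R}_{\text{III}}$, the background satisfies $\overline{r} \sim V$, $(-{}^{(U)}\overline{\nu}) \sim V^{\kappa_2}$, ${}^{(V)}\overline{\lambda} \sim 1$, and $|\overline{m}| \lesssim V$, while assumption (A1) gives $(1-\overline{\mu}) \sim 1$ globally on $\mathcal{Q}$. The bootstrap norm $\mathfrak{S}$ directly yields
\begin{equation*}
|r_p| \lesssim A\epsilon^{1-s}V^{1+\delta}, \quad |{}^{(U)}\nu_p| \lesssim A\epsilon^{1-s}V^{\kappa_2+\delta}, \quad |{}^{(V)}\lambda_p| \lesssim A\epsilon^{1-s}V^{\delta}, \quad |\mu_p| \lesssim A\epsilon^{1-s}V^{\delta}.
\end{equation*}
Summing background and perturbation pieces and absorbing the latter for $\epsilon$ small, the equivalences $r \sim V$, $(-{}^{(U)}\nu) \sim V^{\kappa_2}$, ${}^{(V)}\lambda \sim 1$, and $(1-\mu) \sim 1$ follow.

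For the first- and second-order derivative bounds on $\phi$ and ${}^{(V)}\lambda$, I would proceed identically. Assumption (B2) provides $|\partial_U \overline{\phi}| \lesssim V^{-1+\kappa_2}$, $|\partial_V \overline{\phi}| \lesssim V^{-1}$, $|\partial_V {}^{(V)}\overline{\lambda}| \lesssim V^{-1}$, and $|\partial_V^2 \overline{\phi}| \lesssim V^{-2}$, while $\mathfrak{S}$ controls the corresponding perturbations at strictly better powers of $V$ by the factor $V^{\delta}$ (up to the overall $A\epsilon^{1-s}$). Triangle inequality and smallness of $\epsilon$ give the stated estimates.

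Finally, for the $\hat{V}$-analogs I would invoke the definition $\partial_{\hat V} = {}^{(V)}\overline{\lambda}(0,V)^{-1} \partial_V$ together with the fact that the limit ${}^{(V)}\overline{\lambda}(0,V) \sim 1$ exists and is uniformly bounded above and below (a direct consequence of (B2)–(B3), as recorded in the remark following Definition \ref{dfn:admissibility}). Hence $t(V) \sim 1$, $\hat V \sim V$, and the Jacobian factor relating $\partial_V$ and $\partial_{\hat V}$ derivatives of any order is uniformly bounded; the already-established bounds in the $V$-coordinate transfer to $\hat V$ with only harmless multiplicative losses. I do not expect any real obstacle: the content of the lemma is essentially algebraic, and the bootstrap weights have been chosen precisely so that the perturbations are subdominant to the background self-similar profile throughout $\mathcal{R}_{\text{III}}$.
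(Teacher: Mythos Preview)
Your proposal is correct and matches the paper's approach: the paper states the lemma as ``a direct consequence of the bootstrap assumptions and the definition of the $(U,\hat V)$ coordinate system,'' which is precisely the decomposition $\Psi = \overline{\Psi} + \Psi_p$ plus the coordinate change you carry out. One small refinement: for the second-order $\hat V$ estimates you need not just $^{(V)}\overline{\lambda}(0,V)\sim 1$ but also $|\partial_V{}^{(V)}\overline{\lambda}(0,V)|\lesssim V^{-1}$ (from (B2)) to control the derivative of the Jacobian when expanding $\partial_{\hat V}^2$; this is the content behind your phrase ``uniformly bounded'' and is exactly what the paper uses in the subsequent lemma.
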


We turn now to estimating the $\Psi_p$, starting with unknowns satisfying $U$ equations.

\begin{lem}
Assume the bound (\ref{eq:extbootass3}). Then for $V_2$ sufficiently large and $\epsilon$ sufficiently small we have 
\begin{equation}
    \mathfrak{S}_{1} \leq \frac{1}{10}A\epsilon^{1-s}.
\end{equation}
\end{lem}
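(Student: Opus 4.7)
\noindent
The plan is to mirror the Region I analysis in the ``reflected'' $(U,V)$ gauge: every quantity in $\mathfrak{S}_1$ except $r_p$ satisfies a $U$-transport equation, so for each such unknown we will estimate the right-hand side of the relevant equation from (\ref{PSSESF:1})--(\ref{PSSESF:11}) using the bootstrap assumption (\ref{eq:extbootass3}) and Lemma \ref{lem:extbootconsequences3}, then integrate forward in $U$ from the past spacelike boundary $\{V^{q_2}/|U|=V_2\}$, on which Lemma \ref{lem:region3databounds} supplies data bounds of exactly the size dictated by the bootstrap weights. The smallness mechanism is purely geometric: along a constant-$V$ segment of $\mathcal{R}_{\text{III}}$ the $U$-length is at most $V^{q_2}/V_2$, so any direct integration in $U$ of a bound carrying a singular $V^{\kappa_2-1+\cdots}$ factor picks up an additional $V^{q_2}/V_2 = V_2^{-1}\cdot V^{q_2}$, which converts one full power of the singular $V$ weight into the bootstrap weight and produces the desired $V_2^{-1}$ improvement.

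Concretely, for $^{(V)}\lambda_p$ equation (\ref{PSSESF:1}) together with the bootstrap assumptions gives
\[
|\partial_U\,^{(V)}\lambda_p|(U,V) \lesssim A\epsilon^{1-s} V^{-1+\kappa_2+\delta},
\]
and the zeroth-order contribution $\mathcal{G}_2\,^{(V)}\lambda_p$ is absorbed by Gr\"onwall after observing that
$\int_{-V^{q_2}/V_2}^{U}|\mathcal{G}_2|(U',V)\,dU' \lesssim V^{\kappa_2-1}\cdot V^{q_2}/V_2 = V_2^{-1}$, so the exponential factor is bounded by a constant independent of $V_2$. Integrating in $U$ from $U_0(V) \doteq -V^{q_2}/V_2$ then yields
\[
|^{(V)}\lambda_p|(U,V) \lesssim \epsilon^{1-s} V^\delta + A V_2^{-1} \epsilon^{1-s} V^\delta,
\]
where the first term comes from Lemma \ref{lem:region3databounds} on the spacelike boundary. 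Dividing by $V^\delta$ and choosing $V_2$ large improves the $^{(V)}\lambda_p$ bootstrap. The same scheme, applied to (\ref{PSSESF:3}) for $\mu_p$, to the $U$-form of the wave equation (\ref{PSSESF:5}) for $\partial_V\phi_p$, and to the $\partial_V$-commuted equations (\ref{PSSESF:7}) and (\ref{PSSESF:11}) for $\partial_V\,^{(V)}\lambda_p$ and $\partial_V^2\phi_p$, closes the remaining five bounds. In every case the extra singular $V$ weight accompanying each $\partial_V$ derivative is exactly matched by the $V^{q_2}/V_2$ factor from the $U$-integration, so the scheme is tight. The estimate for $r_p$ is then obtained by writing
\[
r_p(U,V) = r_p(U,V_*(U)) + \int_{V_*(U)}^V\,^{(V)}\lambda_p(U,V')\,dV',
\]
with $V_*(U)$ the past spacelike boundary through $U$, and applying the just-improved $^{(V)}\lambda_p$ bound together with Lemma \ref{lem:region3databounds}.

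The main obstacle, and the step that demands real care, is the Gr\"onwall control on the zeroth-order coefficients, particularly in the $\partial_V^2\phi_p$ equation where the relevant coefficient $-\mathcal{T}_2 = -\,^{(U)}\nu/r$ has an unfavorable sign and size $\sim V^{\kappa_2-1}$. Because we cannot create favorable zeroth-order terms by conjugating with purely $V$-dependent weights, we rely on the fact that the $U$-interval itself has length $\lesssim V^{q_2}/V_2$, so $\int|{-\mathcal{T}_2}|\,dU' \lesssim V_2^{-1} \ll 1$; the Gr\"onwall exponential is harmless and the $V_2$-largeness simultaneously controls it and supplies the improvement factor. No switch to the $(U,\hat V)$ gauge is needed at this stage, since we do not attempt to produce a sharp coefficient along $\{U=0\}$; the refined gauge will only be required in the subsequent $\mathfrak{S}_2$ estimates (for $^{(U)}\nu_p$, $\partial_U\phi_p$), which are integrated in the $V$-direction and have no geometric smallness to spare.
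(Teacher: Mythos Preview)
Your proposal is correct and follows essentially the same route as the paper: estimate the full right-hand side of each $U$-transport equation by $A\epsilon^{1-s}V^{-q_2+r(\Psi)}$, integrate in $U$ over the interval of length $\lesssim V^{q_2}/V_2$ from the spacelike boundary where Lemma~\ref{lem:region3databounds} provides data, and use the resulting $V_2^{-1}$ factor to beat the bootstrap constant. The paper does not bother isolating the zeroth-order terms for a separate Gr\"onwall step---it simply absorbs $\mathcal{G}_2\,^{(V)}\lambda_p$, $-\mathcal{T}_2\partial_V^2\phi_p$, etc.\ into the same $A\epsilon^{1-s}V^{-q_2+r(\Psi)}$ bound (which they satisfy), so your discussion of the unfavorable sign, while correct, is more cautious than necessary; also note that the direct equations for $\partial_V\phi_p$ and $\partial_V^2\phi_p$ are (\ref{PSSESF:5.5}) and (\ref{PSSESF:9}) rather than the $r\phi_p$ versions (\ref{PSSESF:5}) and (\ref{PSSESF:11}) you cite, though since $r\sim V$ here either choice works.
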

\begin{proof}
Applying (\ref{eq:extbootass3}) and estimating (\ref{PSSESF:1}) gives
\begin{equation*}
    |{\partial_U}{^{(V)}\lambda_p}| \lesssim A\epsilon^{1-s} V^{-1+\kappa_2+\delta}.
\end{equation*}
Conjugating by $V^{-\delta}$ and integrating in $U$ from a point $(U,V)$ to the past boundary point $(-V^{q_2}V_2^{-1},V)$ yields
\begin{align*}
    |{V^{-\delta}}  \ ^{(V)}\lambda_p|(U,V) \lesssim |{V^{-\delta}} \ ^{(V)}\lambda_p|(-V^{q_2}V_2^{-1},V) + A\epsilon^{1-s}V_2^{-1}.
\end{align*}
The boundary term is of size $\epsilon$ by the results of Lemma \ref{lem:region3databounds}, and thus choosing $V_2$ sufficiently large improves the bootstrap assumption.

It follows that for a given $\Psi_p$ satisfying a $U$ equation, it suffices to establish a bound $|\partial_u \Psi_p| \lesssim A\epsilon^{1-s} V^{-q_2 + r(\Psi)},$ where $r(\Psi)$ is the $V$ weight appearing in the bootstrap norm. 

We next estimate (\ref{PSSESF:7}), giving
\begin{equation*}
    |{\partial_U \partial_V} {^{(V)}\lambda_p}| \lesssim V^{-2+\kappa_2+\delta},
\end{equation*}
which has the desired structure.

Considering (\ref{PSSESF:3}), (\ref{PSSESF:4}), and (\ref{PSSESF:9}) shows 
\begin{equation*}
    |\partial_U \mu_p| \lesssim A\epsilon^{1-s}V^{-1+\kappa_2+\delta},
\end{equation*}
\begin{equation*}
    |\partial_U \partial_V \phi_p| \lesssim A\epsilon^{1-s}V^{-2+\kappa_2+\delta},
\end{equation*}
\begin{equation*}
    |\partial_U \partial_V^2 \phi_p| \lesssim A\epsilon^{1-s} V^{-3+\kappa_2+\delta},
\end{equation*}
all of which have the required structure. Finally, integrating $\partial_V {r_p =} {^{(V)}\lambda_p}$ and using bounds on $^{(V)}\lambda_p$ yields the desired improvement for $r_p$.
\end{proof}
It remains to consider $^{(U)}\nu_p, \partial_U \phi_p$, which satisfy $V$ equations. 

\begin{lem}
Assume the bound (\ref{eq:extbootass3}). Then for $V_2$ sufficiently large and $\epsilon$ sufficiently small we have 
\begin{equation}
    \mathfrak{S}_{2} \leq \frac{1}{10}A\epsilon^{1-s}.
\end{equation}
Moreover, there exists a constant $0 < \sigma \ll 1$ for which we have the following higher order estimates:
\begin{equation}
    \label{eqn:region3higherorder}
    |{\partial_U} {^{(U)}\nu_p}| \lesssim \epsilon V^{-1+2\kappa_2+\delta}\Big(\frac{V^{q_2}}{|U|}\Big)^{1-\sigma}, \ \ \
    |\partial_U^2\phi_p| \lesssim \epsilon V^{-2+2\kappa_2+\delta}\Big(\frac{V^{q_2}}{|U|}\Big)^{1-\sigma}.
\end{equation}
\end{lem}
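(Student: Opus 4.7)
The unknowns $^{(U)}\nu_p$ and $\partial_U\phi_p$ satisfy $V$-transport equations in which the zeroth-order coefficients $\mathcal{G}_1$ and $\mathcal{T}_1 = {^{(V)}\lambda}/r$ are of size $\sim V^{-1}$ near $\{U=0\}$, so a naive Gronwall argument would produce an uncontrolled logarithmic growth along $V$. Following the strategy outlined in Section \ref{subsec:introproofoutline}, the plan is to exploit the reductive structure of the equations: both quantities will be expressed as $V$-integrals of source terms depending only on quantities already controlled in $\mathfrak{S}_1$, whose bootstrap has just been improved to $(A/10)\epsilon^{1-s}$. No renormalization analogous to the $\wt{\Psi}$ of $\mathcal{R}_{\text{I}}$ is needed.

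The estimate for $^{(U)}\nu_p$ would proceed via the Raychaudhuri equation (\ref{SSESF:2:9}), which gives
$$\partial_V\log\Bigl(\frac{-{^{(U)}\nu}}{1-\mu}\Bigr) = \frac{r}{{^{(V)}\lambda}}(\partial_V\phi)^2,$$
with a corresponding identity for the background. Subtracting and integrating in $V$ from the past boundary $V_{*} = (V_2|U|)^{p_2}$, the integrand $[r(\partial_V\phi)^2/{^{(V)}\lambda}]_p$ is of size $\lesssim A\epsilon^{1-s}V^{-1+\delta}$ by the already-improved $\mathfrak{S}_1$ bounds, which together with the $V^\delta$ weight renders it integrable. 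Exponentiating the resulting bound on $\log[(-{^{(U)}\nu}/(1-\mu))/(-{^{(U)}\overline{\nu}}/(1-\overline{\mu}))]$ and using the improved bounds on $\mu_p,\,r_p$ together with the data estimate of Lemma \ref{lem:region3databounds} at $V_{*}$ yields a pointwise bound $|^{(U)}\nu_p| \lesssim V^{\kappa_2+\delta}(\epsilon^{1-s} + C\,A\epsilon^{1-s})$ with $C$ depending only on the background; a sufficiently large choice of $A$ (absorbed into the bootstrap constant) then delivers the improvement.

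For $\partial_U\phi_p$ the plan is to use the rearrangement of the wave equation
$$\partial_V(r\partial_U\phi) = -{^{(U)}\nu}\,\partial_V\phi,$$
obtained from (\ref{SSESF:2:3}). Differencing gives $\partial_V[(r\partial_U\phi)_p] = -{^{(U)}\nu_p}\partial_V\phi - {^{(U)}\overline{\nu}}\partial_V\phi_p$, whose right-hand side is now controlled by the improved $\mathfrak{S}_1$ bound on $\partial_V\phi_p$ and by the bound on $^{(U)}\nu_p$ just established. Integration in $V$ from $V_{*}$, followed by the algebraic identity $(r\partial_U\phi)_p = \overline{r}\partial_U\phi_p + r_p(\partial_U\overline{\phi} + \partial_U\phi_p)$ and division by $\overline{r} \sim V$, yields the asserted bound on $\partial_U\phi_p$, completing the improvement of $\mathfrak{S}_2$.

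For the higher-order estimates (\ref{eqn:region3higherorder}), one commutes the two $V$-equations above with $\partial_U$ and integrates from $V_{*}$. The new inhomogeneities include the background terms $\partial_U\overline{\mathcal{G}}_1$, $\partial_U\overline{\mathcal{T}}_1$, which by assumption (B2) carry $s_2^{c}$ and $s_2^{d}$ singular factors encoding the sharper-than-integrable behavior of transversal derivatives of the background as $\{U=0\}$ is approached; tracking these through the $V$-integration produces the $(V^{q_2}/|U|)^{1-\sigma}$ blow up, with $\sigma>0$ absorbing the gap between $c,d$ and the integrability threshold. \emph{The main obstacle} throughout is that, unlike the $U$-integration used for $\mathfrak{S}_1$, the $V$-integration carries no natural $V_2^{-1}$ smallness; all the improvement must be extracted from the improved $\mathfrak{S}_1$ bounds and data smallness, and the closing of the higher-order estimates additionally requires choosing $V_2$ large enough that $\overline{\mathcal{G}}_1$ and $\overline{\mathcal{T}}_1$ are quantitatively close to their limiting forms $\kappa_2/V$ and $1/V$ uniformly in $\mathcal{R}_{\text{III}}$.
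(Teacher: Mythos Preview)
Your approach to the first-order part of $\mathfrak{S}_2$ is a valid alternative to the paper's. The paper works in the $(U,\hat{V})$ gauge (normalized so that $^{(\hat{V})}\overline{\lambda}(0,\hat{V})=1$) and uses the equation $\partial_{\hat{V}}\log(1+{}^{(U)}\nu_p/{}^{(U)}\overline{\nu})=(\mathcal{G}_1)_p$ for $^{(U)}\nu_p$, while for $\partial_U\phi_p$ it uses (\ref{PSSESF:5.5}) directly, conjugating by $\hat{V}^{1-\kappa_2}$ to exploit the favorable sign of $-\mathcal{T}_1$. The $\hat{V}$ gauge is essential for the latter step, since one needs the precise form $\mathcal{T}_1 = (1+O(V_2^{-1}+\epsilon))/\hat{V}$ to ensure the conjugated zeroth-order coefficient $-(\kappa_2+O(V_2^{-1}+\epsilon))/\hat{V}$ stays strictly negative. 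Your route via $\partial_V(r\partial_U\phi)=-{}^{(U)}\nu\,\partial_V\phi$ sidesteps this gauge issue entirely, which is a genuine simplification. One remark on your phrasing: the improvement does not come from ``a sufficiently large choice of $A$''; the point is that the sources $(\mathcal{G}_1)_p$, $[r(\partial_V\phi)^2/\lambda]_p$, and $-\nu_p\partial_V\phi-\overline{\nu}\partial_V\phi_p$ depend only on $\mathfrak{S}_1$ quantities (or on $^{(U)}\nu_p$ just estimated), so the resulting bounds carry no factor of $A$ at all.

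Your account of the higher-order estimates (\ref{eqn:region3higherorder}) contains a real misconception. The terms $\partial_U\overline{\mathcal{G}}_1$ and $\partial_U\overline{\mathcal{T}}_1$ do \emph{not} carry $s_2^c$ or $s_2^d$ factors: $\overline{\mathcal{G}}_1=\overline{\mu}\,\overline{\lambda}/((1-\overline{\mu})\overline{r})$ and $\overline{\mathcal{T}}_1=\overline{\lambda}/\overline{r}$ involve only $\overline{\mu},\overline{\lambda},\overline{r}$, whose $\partial_U$ derivatives are all $O(V^{-1+\kappa_2})$ with no singular $U$ behavior (the $s_2^c,s_2^d$ factors in (B2) attach only to $\partial_U{}^{(U)}\overline{\nu}$ and $\partial_U^2\overline{\phi}$). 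The actual mechanism in the paper is different: after commuting with $\partial_U$, the source $\partial_U(\mathcal{G}_1)_p$ is of size $\epsilon^{1-s}\hat{V}^{-2+\kappa_2+\delta}$, which is \emph{not integrable} in $\hat{V}$; integrating from the lower endpoint $\hat{V}_*\sim(|U|V_2)^{p_2}$ yields $\epsilon^{1-s}\hat{V}_*^{-1+\kappa_2+\delta}\sim\epsilon^{1-s}|U|^{-1+p_2\delta}$, and one sets $\sigma=p_2\delta$. The parameter $c$ enters only at the final algebraic step of extracting $\partial_U{}^{(U)}\nu_p$ from $\partial_U\log(1+\nu_p/\overline{\nu})$, where one must bound $\nu_p\,\partial_U{}^{(U)}\overline{\nu}/(\nu\overline{\nu})$; this requires $c\leq 1-\sigma$, i.e.\ $\delta$ chosen small relative to $1-c$. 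So $\sigma$ is small and proportional to $\delta$, not a gap between $c,d$ and an integrability threshold. The same non-integrability would appear if you commuted your own equations, and you would arrive at the same $\sigma$; but the explanation you give for its origin is incorrect.
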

\begin{proof}
In this lemma we work in the $(U, \hat{V})$ gauge. We therefore must translate the bounds obtained for $^{(V)}\lambda_p, \partial_V \phi_p$, and derivatives thereof, into corresponding bounds in the new coordinates.

The coordinate change $V \rightarrow \hat{V}$ is explicit in terms of $\overline{\lambda}(0,v)$, and we compute
\begin{equation*}
    |^{(\hat{V})}\lambda_p| \lesssim |\overline{\lambda}(0,v)|^{-1} |^{(V)}\lambda_p | \lesssim A\epsilon^{1-s}\hat{V}^{\delta}.
\end{equation*}
Similarly,
\begin{align*}
   |{\partial_{\hat{V}}}^{(\hat{V})}\lambda_p| &\lesssim A\epsilon^{1-s} \hat{V}^{-1+\delta}, \\[.5em]
   |{\partial_{\hat{V}}} \phi| &\lesssim A\epsilon^{1-s} \hat{V}^{-1+\delta},  \\[.5em]
   |{\partial^2_{\hat{V}}} \phi| &\lesssim A\epsilon^{1-s} \hat{V}^{-2+\delta}.
\end{align*}
In translating between coordinate derivatives we have used the self-similar bound ${\partial_V}{^{(V)}\overline{\lambda}}(0,V) \lesssim V^{-1}$. Unless otherwise specified, all computations are now done in $(U, \hat{V})$ coordinates. 

To estimate $^{(U)}\nu_p$ we view (\ref{PSSESF:2}) as an equation for $\frac{^{(U)}\nu_p}{^{(U)}\overline{\nu}}$, giving 
\begin{equation}
\label{eq:ext3tempest0}
    \partial_{\hat{V}} \log \Big(1 + \frac{^{(U)}\nu_p}{^{(U)}\overline{\nu}}\Big) = (\mathcal{G}_1)_p.
\end{equation}
This form emphasizes that $^{(U)}\nu_p$ may be estimated in terms of $\mathcal{G}_1$, which contains only quantities that have been estimated previously. Estimate
\begin{equation}
    \label{eq:ext3tempest1}
    |(\mathcal{G}_1)_p| \lesssim \epsilon^{1-s}\hat{V}^{-1+\delta}.
\end{equation}
Consider $(U,\hat{V}) \in \mathcal{R}_{\text{III}}$, and denote by $(U,\hat{V}_*)$ the intersection point of the constant $U$ line through $(U,\hat{V})$ with the past boundary of $\mathcal{R}_{\text{III}}$. Explicitly, $\hat{V}_* = t((|U|V_2)^{p_2})(|U|V_2)^{p_2}$. Integrating (\ref{PSSESF:2}) and inserting (\ref{eq:ext3tempest1}) gives
\begin{align*}
    \Big| \log \Big(1+ \big(\frac{^{(U)}\nu_p}{^{(U)}\overline{\nu}}\big)(U,\hat{V})\Big)\Big| &\lesssim \log \Big| \Big(1+ \big(\frac{^{(U)}\nu_p}{^{(U)}\overline{\nu}}\big)(U,\hat{V}_*)\Big)\Big| + \epsilon^{1-s}\hat{V}^{\delta} \\
    &\lesssim \epsilon^{1-s} \hat{V}^{\delta}.
\end{align*}
For $\epsilon$ sufficiently small we conclude
\begin{equation*}
    |^{(U)}\nu_p| \lesssim \epsilon^{1-s} \hat{V}^{\delta}|^{(U)}\overline{\nu}| \lesssim \epsilon^{1-s} \hat{V}^{\kappa_2+\delta}.
\end{equation*}
We next consider $\partial_U \phi_p$. From (\ref{PSSESF:5}) it follows that the zeroth order term has a good sign, and may be estimated to give
\begin{equation*}
    -\mathcal{T}_1 = -\frac{^{(\hat{V})}\lambda}{r} = -\frac{1 + O(V_2^{-1} + \epsilon)}{\hat{V}}.
\end{equation*}
Here we have used the gauge condition $^{(\hat{V})}\overline{\lambda}(0,\hat{V}) = 1$. It follows that $\hat{V}^{1-\kappa_2}\partial_U \phi_p$ satisfies an equation of the form
\begin{equation*}
    \partial_{\hat{V}}(\hat{V}^{1-\kappa_2}\partial_U \phi_p) = -\frac{\kappa_2 + O(V_2^{-1} + \epsilon)}{\hat{V}} (\hat{V}^{1-\kappa_2}\partial_U \phi_p) + \mathcal{A},
\end{equation*}
where $|\mathcal{A}| \lesssim \epsilon^{1-s}\hat{V}^{-1+\delta},$ and consists of terms that have already been estimated. Integrating in $\hat{V}$ and dividing by $\hat{V}^{\delta}$ gives the desired estimate.

Having now estimated all first order quantities (and thus closed the bootstrap argument), we are able to turn to ${\partial_U}{^{(U)}\nu_p}$. Commuting (\ref{eq:ext3tempest0}) with $\partial_U$ gives
\begin{equation*}
    \partial_{\hat{V}}\partial_U  \log \Big(1+ \frac{^{(U)}\nu_p}{^{(U)}\overline{\nu}}\Big) = \partial_U (\mathcal{G}_1)_p,
\end{equation*}
where $|\partial_U (\mathcal{G}_1)_p| \lesssim \epsilon^{1-s}\hat{V}^{-2+\kappa_2+\delta}$. This estimate is not integrable in $\hat{V}$, and will in fact lead to (badly) singular behavior in $U$. More precisely, we arrive at an estimate of the form
\begin{align*}
     |\partial_U  \log \Big(1+ \big(\frac{^{(U)}\nu_p}{^{(U)}\overline{\nu}}\big)(U,\hat{V})\Big)| &\lesssim  |\partial_U  \log \Big(1+ \big(\frac{^{(U)}\nu_p}{^{(U)}\overline{\nu}}\big)(U,\hat{V}_*)\Big)| + \epsilon^{1-s}\hat{V}_*^{-1+\kappa_2+\delta} \\[.5em]
     &\lesssim \epsilon^{1-s}|U|^{-1+p_2\delta}.
\end{align*}
Expanding the derivative and re-arranging gives 
$|{\partial_U}{ ^{(U)} \nu_p}| \lesssim \epsilon^{1-s}\hat{V}^{\kappa_2} |U|^{-1+p_2\delta},$ after choosing $\delta$ sufficiently small as a function of the parameter $c$ controlling the blowup rate of ${\partial_U} {^{(U)}\nu}$.

Finally, equation (\ref{PSSESF:10}) gives
\begin{equation*}
    \partial_{\hat{V}} \partial_U^2 \phi_p = -\frac{1+O(V_2^{-1}+\epsilon)}{\hat{V}}\partial_U^2 \phi_p  + O(\epsilon^{1-s}\hat{V}^{-2+\kappa_2}|U|^{-1+p_2\delta}).
\end{equation*}
Conjugate by $\hat{V}^{1-\frac{1}{2}\kappa_2},$ preserving the favorable sign of the zeroth order coefficient and rendering the error term integrable in $\hat{V}$. Integrating in $\hat{V}$ from data thus yields
\begin{align*}
    |\hat{V}^{1-\frac{1}{2}\kappa_2}\partial_U^2 \phi_p|(U,\hat{V}) &\lesssim \hat{V}_*^{1-\frac{1}{2}\kappa_2}|\partial_U^2 \phi_p|(U,\hat{V}_*) + \epsilon^{1-s}\hat{V}^{\frac{1}{2}\kappa_2}|U|^{-1+p_2\delta} \\[.5em]
    &\lesssim \epsilon^{1-s}\hat{V}^{\frac{1}{2}\kappa_2}|U|^{-1+p_2\delta}.
\end{align*}
In both of the above estimates we have used Lemma \ref{lem:region3databounds} to estimate the data terms. We now choose $\sigma = p_2\delta$ to arrive at (\ref{eqn:region3higherorder}).

\end{proof}

\subsection{Region IV}
\label{subsec:extregionIV}
We finally extend the solution to the asymptotically flat region
\begin{equation}
    \mathcal{R}_{\text{IV}} \doteq \{(U,V)\ | \ V_2 \leq \frac{V^{q_2}}{|U|} < \infty, \ -1\leq U < 0, \ V \geq 1 \}.
\end{equation}
The data for the double null unknowns along the past boundary of $\mathcal{R}_{\text{IV}}$ is induced by the solution in $\mathcal{R}_{\text{II}} \cup \mathcal{R}_{\text{III}},$ as well as outgoing data along $\{U=-1\}$. It will be helpful to decompose this past boundary accordingly, as the union of the three pieces $\mathcal{J}_i$, $i = 1,2,3$, where
\begin{align}
    \mathcal{J}_1 &\doteq \{V=1, -V_2^{-1} \leq U < 0\}, \\[.5em]
    \mathcal{J}_2 &\doteq \{\frac{V^{q_2}}{|U|} = V_2, \ -1 \leq U <  -V_2^{-1}\},\\[.5em]
    \mathcal{J}_3 &\doteq \{U = -1, \ V > V_2\}.
\end{align}
Before turning to the bootstrap assumptions, we discuss the bounds satisfied by initial data along $\mathcal{J}_1 \cup \mathcal{J}_2 \cup \mathcal{J}_3$. In Region IV we will eventually choose $V_2$ large in order to close the estimates. A consequence of the freedom to choose $V_2$ is that the surface $\mathcal{J}_2$ potentially extends into regions of arbitrarily large $V$ coordinate, where we expect asymptotically flat bounds to hold. It follows that the data along $\mathcal{J}_2$ must transition between the bounds thus far considered in $\mathcal{R}_{\text{II}} \cup \mathcal{R}_{\text{III}}$ adapted to the near-singularity region, and the asymptotically flat bounds consistent with the outgoing data along $\mathcal{J}_3$. The following lemma revisits the bounds induced on $\mathcal{J}_2$ by the solution in Region II (cf. Lemma \ref{lem:region3databounds}) to render them consistent with the outgoing data.

\begin{lem}
\label{lem:newinitialdataregionIV}
Assume the bounds (\ref{eq:extbootass2improved}), (\ref{eq:extbootass3improved}), (\ref{eqn:region3higherorder}) in $\mathcal{R}_{\text{II}} \cup \mathcal{R}_{\text{III}}$. There exist constants $\sigma \ll 1 $, $s' \ll 1$ small such that for $\epsilon$ small as a function of $V_2$, along $\mathcal{J}_1 \cup \mathcal{J}_2$ we have 
\begin{alignat}{2}
 |r_p| &\lesssim \epsilon^{1-s'}, \quad \quad  \ \ \quad \quad \quad \quad  |m_p| &&\lesssim \epsilon^{1-s'}, \\[.5em] 
 |^{(U)}\nu_p| &\lesssim \epsilon^{1-s'}, \ \ \ \quad \ \ \ \  \ \quad \quad  |^{(V)}\lambda_p| &&\lesssim \epsilon^{1-s'}V^{-2},\\[.5em] 
 |\partial_U (r\phi_p)| &\lesssim \epsilon^{1-s'}, \ \quad  \ \ \ \  \quad \quad |\partial_V (r\phi_p)| &&\lesssim \epsilon^{1-s'}V^{-2}, \label{eqn:letmebedonetemp3}\\[.5em]
    |{\partial_U} {^{(U)}\nu_p}| &\lesssim \epsilon^{1-s'}|U|^{-1+\sigma}, \ \ \   |{\partial_V} {^{(V)}\lambda_p}| &&\lesssim \epsilon^{1-s'}V^{-3}, \\[.5em]
    |\partial_U^2 (r\phi_p)| &\lesssim \epsilon^{1-s'}|U|^{-1+\sigma}, \ \ \ \ |\partial_V^2 (r\phi_p)| &&\lesssim \epsilon^{1-s'}V^{-3}.
\end{alignat}
The implied constants are independent of $V_2$, for all $V_2 < \infty$.
\end{lem}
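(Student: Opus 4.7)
The plan is to split the past boundary $\mathcal{J}_1 \cup \mathcal{J}_2$ and treat each component separately, drawing mostly on the bootstrap closures of $\mathcal{R}_{\text{II}}$ and $\mathcal{R}_{\text{III}}$, with additional work required to upgrade the outgoing bounds on $\mathcal{J}_2$ to asymptotically flat form.

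First I would handle $\mathcal{J}_1 = \{V = 1,\ -V_2^{-1} \leq U < 0\}$. Since $V$ is a constant along $\mathcal{J}_1$, each of the six claimed bounds reduces to the corresponding pointwise bound from $\mathfrak{S} \leq A\epsilon^{1-s}$ in $\mathcal{R}_{\text{III}}$, together with the higher-order estimates (\ref{eqn:region3higherorder}). To pass from $\partial_V^i \phi_p$ and $\partial_U^i \phi_p$ to the corresponding $\partial_V^i(r\phi_p)$ and $\partial_U^i(r\phi_p)$, I would just apply the product rule and insert the already-controlled $L^\infty$ bounds on $r, \phi, \nu_p, \lambda_p$. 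The $|U|^{-1+\sigma}$ factor on $\partial_U{^{(U)}\nu_p}$ and $\partial_U^2(r\phi_p)$ is exactly the $\sigma$ of (\ref{eqn:region3higherorder}), and $(V^{q_2}/|U|)^{1-\sigma}$ reduces to $|U|^{-1+\sigma}$ at $V = 1$. Taking $s' \geq s$, no new smallness is needed here.

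Next I would treat $\mathcal{J}_2 = \{V^{q_2} = V_2|U|,\ -1 \leq U < -V_2^{-1}\}$, where the analysis bifurcates by the type of bound. For the ``ingoing'' quantities $r_p, m_p, ^{(U)}\nu_p, \partial_U(r\phi_p), \partial_U{^{(U)}\nu_p}, \partial_U^2(r\phi_p)$ the Region II norm $\mathfrak{T}_2 \leq A\epsilon$ combined with the reverse inequality $w^{*-1} \lesssim \epsilon^{-1/100}$ already gives pointwise bounds of the form $|\Psi_p| \lesssim \epsilon^{1-1/100} V^{r(\Psi)}$; along $\mathcal{J}_2$ we have $V \leq V_2^{p_2}$ and the resulting $V_2^{p_2 r(\Psi)}$ factor is absorbed into $\epsilon^{-s'}$ by choosing $\epsilon$ sufficiently small in $V_2$, proving the first three lines of (\ref{eqn:letmebedonetemp3}) and their second-order analogues (using $\partial_U{^{(U)}\nu_p}$ and $\partial_U^2(r\phi_p)$ estimated as in the higher-order part of $\mathfrak{T}_2$, which gives the $|U|^{-1+\sigma}$ weight once we use $V^{q_2}/|U| \sim V_2$).

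The real work, and the main obstacle, lies in establishing the $V^{-2}$ and $V^{-3}$ decay for the ``outgoing'' quantities $^{(V)}\lambda_p, \partial_V(r\phi_p)$ and their $V$-derivatives, because the Region II bootstrap only produced $V^{\delta}$-growing bounds. The resolution I propose is to re-run the $U$-transport estimates of $\mathcal{R}_{\text{II}}$ with the conjugating weight $V^{2} w^*$ in place of $V^{-\delta} w^*$ for $^{(V)}\lambda_p$ and $\partial_V(r\phi_p)$, and $V^{3} w^*$ for their $V$-derivatives. The data on $\{U=-1\}$ carries the required decay: by (\ref{extdata1.1})–(\ref{extdata2.1}) and assumption (B3), $|^{(V)}\lambda_p(-1,V)|, |\partial_V(r\phi_p)(-1,V)| \lesssim \epsilon V^{-2}$, with analogous bounds for one more $V$-derivative thanks to the regularity of $\chi_1$ and $g_0$. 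Since $V$ is independent of $U$, conjugation by $V^2$ contributes nothing to the zeroth-order coefficient, so the favorable sign generated by $\partial_U w^*/w^* = -DV^{q_2}/|U|^2$ is preserved, and the argument of the $\mathfrak{T}_1$ closure applies verbatim provided the new source terms $(\mathcal{G}_2)_p {^{(V)}\overline{\lambda}}$, $\partial_U\mathcal{G}_5 \cdot r\phi_p$, etc., can be bounded by $A\epsilon V^{-2}$ times an integrable $U$ weight. This follows from the already-closed $\mathfrak{T}$ bounds together with $\overline{\mu} \lesssim V^{-1}$, $\overline{r} \sim V$, and $|{^{(V)}\overline{\lambda}}| \lesssim 1$; the small loss $w^{*-1} \lesssim \epsilon^{-1/100}$ is absorbed into $\epsilon^{-s'}$. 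The analogous rerun with the commuted equations (\ref{PSSESF:7}), (\ref{PSSESF:11}) yields the $V^{-3}$ bounds on $\partial_V{^{(V)}\lambda_p}$ and $\partial_V^2(r\phi_p)$, and completes the proof.
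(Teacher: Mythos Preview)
Your treatment of $\mathcal{J}_1$ and of the ingoing quantities along $\mathcal{J}_2$ is essentially fine, but the re-run argument you propose for the outgoing quantities on $\mathcal{J}_2$ has a gap, and in any case is unnecessary.

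The gap: the source term $(\mathcal{G}_2)_p\,{^{(V)}\overline{\lambda}}$ in the $U$-equation for ${^{(V)}\lambda_p}$ contains the contribution $\frac{\mu_p\,{^{(U)}\nu}}{(1-\mu)r}\,{^{(V)}\overline{\lambda}}$. From the closed $\mathfrak{T}$ bounds you only have $|\mu_p| \lesssim \epsilon^{1-s}V^{\delta}$ after stripping $w^*$, with no decay in $V$; combined with $r\sim V$ and ${^{(U)}\nu},\,{^{(V)}\overline{\lambda}} \sim 1$ in the asymptotically flat regime this yields only $\epsilon^{1-s}V^{-1+\delta}$, not the $V^{-2}$ you assert. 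The same obstruction hits the wave-equation source for $\partial_V(r\phi_p)$. To close you would first need $\mu_p$ (equivalently $m_p$) to decay like $V^{-1}$, which is not available from $\mathfrak{T}$ alone and would be circular with your scheme.

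The paper's argument is much simpler and sidesteps this entirely. The key observation is that along $\mathcal{J}_1 \cup \mathcal{J}_2$ one has $1 \lesssim V \lesssim V_2^{p_2}$, so decaying $V$-weights can be inserted \emph{for free}: any bound $|\Psi_p| \lesssim \epsilon^{1-s}C(V_2)$ immediately gives $|\Psi_p| \lesssim \epsilon^{1-s}C(V_2)V_2^{kp_2}\,V^{-k}$, and the accumulated $V_2$-dependence is absorbed by choosing $\epsilon$ small as a function of $V_2$ and enlarging $s$ to $s'$. No transport estimate needs to be re-run; the $V^{-2}$ and $V^{-3}$ weights in the lemma are cosmetic along $\mathcal{J}_1 \cup \mathcal{J}_2$ and only acquire genuine content later in $\mathcal{R}_{\text{IV}}$. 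The same device handles $m_p$, which you listed among the ingoing quantities but which is not directly in $\mathfrak{T}_2$: one passes from $\mu_p$ to $m_p$ via (\ref{eq:mpvsmup}), picking up a factor of $r \lesssim V_2^{p_2}$ that is again absorbed.
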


\begin{proof}
We begin by estimating $r\phi_p$ and its derivatives. From (\ref{eq:extbootass2improved}), (\ref{eq:extbootass3improved}) it follows that $|\partial_U \phi_p| \lesssim \epsilon^{1-s}$ holds in $\mathcal{J}_1 \cup (\mathcal{R}_{\text{II}}\cap\{V \gtrsim 1\}).$ We have used that the coordinate $V$ is bounded below in the region of interest in order to drop decaying $V$ factors.

Integrating in $U$ from $\{U=-1\}$ gives $|\phi_p| \lesssim \epsilon^{1-s}.$ It then follows by expanding the derivatives and inserting bounds on $^{(U)}\nu_p, \ ^{(V)}\lambda_p, \ \phi_p, \ \partial_U \phi_p, \ \partial_V \phi_p $ along $\mathcal{J}_1 \cup \mathcal{J}_2$ that $$|\partial_U(r\phi_p)|, \ \ |\partial_V (r\phi_p)| \lesssim \epsilon^{1-s} V_2.$$ 
These estimates grow with $V_2$, and are not yet consistent with (\ref{eqn:letmebedonetemp3}). However, one can insert decaying $V$ weights at the cost of adding additional factors of $V_2$. Choosing $\epsilon$ small as a function of $V_2$ then gives the stated bound, after choosing $s$ slightly larger. Assume some $s' > s$ has been chosen.

We are essentially using that along $\mathcal{J}_1 \cup \mathcal{J}_2$, the equivalence $1 \lesssim V \lesssim V_2 $ holds. Weights that decay in $V$ can thus be added freely to any estimate, as long as $\epsilon$ is chosen small to absorb growing powers of $V_2$.

The remaining estimates follow similarly. Note the singular weights $|U|^{-1+\sigma}$ appearing in the estimates for ${\partial_U^2} {^{(U)}\nu_p}, \ \partial_U^2 \phi_p$ are a consequence of the corresponding bounds in (\ref{eqn:region3higherorder}).

Finally, to estimate $m_p$ we first translate bounds on $\mu_p$ to those on $m_p$ using (\ref{eq:mpvsmup}), and then apply the above procedure.
\end{proof}

\begin{figure}
    \centering
  \includestandalone[]{Figures/fig_region4}
  \caption{$\mathcal{R}_{\text{IV}}$ }
  \label{fig:region4}
\end{figure}

\subsubsection{Norms and bootstrap assumptions}
Fix $(\wt{U},\wt{V}) \in \mathcal{R}_{\text{IV}}$ and define the norms
\begin{align}
    \mathfrak{A}_1 &\doteq   \sum_{i=0}^1 \sup_{\mathcal{Q}^{(ex)}_{(\wt{U},\wt{V})}}\Big| V^{i}{\partial_V^i} ^{(V)}\lambda_p\Big| +  \sup_{\mathcal{Q}^{(ex)}_{(\wt{U},\wt{V})}}\Big|  \mu_p\Big| + \sum_{i=1}^2 \sup_{\mathcal{Q}^{(ex)}_{(\wt{U},\wt{V})}}\Big| V^{i} \partial_V^i(r\phi_p)\Big|, \\[1em]
    \mathfrak{A}_2 &\doteq  \sup_{\mathcal{Q}^{(ex)}_{(\wt{U},\wt{V})}}{\Big|}{^{(U)}\nu_p }\Big| + \sup_{\mathcal{Q}^{(ex)}_{(\wt{U},\wt{V})}}\Big| |U|^{1-\sigma} {\partial_U} ^{(U)}\nu_p\Big| +  \sup_{\mathcal{Q}^{(ex)}_{(\wt{U},\wt{V})}}\Big|\partial_U(r\phi_p) \Big|+\sup_{\mathcal{Q}^{(ex)}_{(\wt{U},\wt{V})}}\Big| |U|^{1-\sigma}  \partial_U^2(r\phi_p)\Big|,
\end{align}
as well as the total solution norm
\begin{equation}
    \mathfrak{A} \doteq \mathfrak{A}_1 +  \mathfrak{A}_2.
\end{equation}
Existence in Region IV is a consequence of the following proposition, showing that bootstrap bounds on $\mathfrak{A}$ can be improved independently of $(\wt{U},\wt{V}) \in \mathcal{R}_{\text{IV}}$.
\begin{prop}
Assume a local solution is given in a rectangle $\mathcal{Q}^{(ex)}_{\wt{U},\wt{V}},$ and that the bound 
\begin{equation}
    \label{eq:extbootass5}
    \mathfrak{A} \leq 2A\epsilon^{1-s'}
\end{equation}
holds, where $s'$ is the parameter in Lemma \ref{lem:newinitialdataregionIV}. There exists $V_2$ sufficiently large depending on the background solution, and $\epsilon$ sufficiently small depending on data, such that we have the improved bound 
\begin{equation}
    \label{eq:extbootass5improved}
    \mathfrak{A} \leq A\epsilon^{1-s'}.
\end{equation}
\end{prop}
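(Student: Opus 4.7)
The strategy follows the bootstrap schemes of Regions II--III, adapted to the asymptotically flat regime. Two features are essential: one works primarily with $r\phi$ rather than $\phi$, exploiting the strong $V^{-3}$ decay of the coefficient $\mathcal{G}_5$ in the wave equation (\ref{SSESF:2:6}); and one exploits a reductive structure in $V$ to close the $V$-transport estimates. From (\ref{eq:extbootass5}) together with small $\epsilon$, one immediately has $r \sim V$, $(-{^{(U)}\nu}) \sim 1$, $^{(V)}\lambda \sim 1$, $m \sim 1$, $(1-\mu) \sim 1$, and the scalar-field bounds of assumption (B3) hold up to $O(\epsilon^{1-s'})$ corrections. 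Data along $\mathcal{J}_1 \cup \mathcal{J}_2$ is controlled by Lemma \ref{lem:newinitialdataregionIV}, and data along $\mathcal{J}_3$ is controlled by (\ref{extdata1.1})--(\ref{extdata2.1}); both are consistent with $\mathfrak{A} \lesssim \epsilon^{1-s'}$.

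The first step is to improve $\mathfrak{A}_1$, whose unknowns $^{(V)}\lambda_p, \mu_p, \partial_V(r\phi_p)$ (together with their $\partial_V$-commutations) satisfy $U$-transport equations (\ref{PSSESF:1}), (\ref{PSSESF:4}), (\ref{PSSESF:6}), (\ref{PSSESF:7}), (\ref{PSSESF:11}). For each, I would insert the bootstrap bounds into the right-hand side, noting that the dominant coefficients $\mathcal{G}_2, \mathcal{G}_5$ and their $\partial_V$-derivatives carry $V^{-1}$ or $V^{-3}$ decay, and then integrate in $U$ from the past boundary. For $\partial_V(r\phi_p)$, for instance, the wave-equation source contributes $\int_{U_*}^{U} \mathcal{G}_5 (r\phi_p)\,dU' \lesssim \epsilon^{1-s'} V^{-3} |U - U_*|$. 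Since $|U - U_*| \leq \min(1, V^{q_2}/V_2)$ in $\mathcal{R}_{\text{IV}}$ and $V^{q_2 - 1} = V^{-\kappa_2} \leq 1$ there, this is bounded by $\epsilon^{1-s'} V_2^{-1} V^{-2}$, supplying the required $V_2^{-1}$ smallness. The other quantities in $\mathfrak{A}_1$ are treated analogously, with boundary contributions absorbed by Lemma \ref{lem:newinitialdataregionIV} and the outgoing data prescription on $\mathcal{J}_3$.

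The second step is to improve $\mathfrak{A}_2$, whose unknowns $^{(U)}\nu_p, \partial_U(r\phi_p)$ (and their $\partial_U$-commutations) satisfy $V$-transport equations (\ref{PSSESF:2}), (\ref{PSSESF:5}). By the reductive structure, these now depend only on quantities estimated in the first step. For $^{(U)}\nu_p$, writing (\ref{PSSESF:2}) as $\partial_V \log(1 + {^{(U)}\nu_p}/{^{(U)}\overline{\nu}}) = (\mathcal{G}_1)_p$ with $|(\mathcal{G}_1)_p| \lesssim \epsilon^{1-s'} V^{-1}$, integration in $V$ from the past boundary and using boundedness of the data yields the $O(\epsilon^{1-s'})$ bound. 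For $\partial_U(r\phi_p)$, equation (\ref{PSSESF:5}) gives $|\partial_V(r\partial_U \phi_p)| \lesssim \epsilon^{1-s'} V^{-2}$, integration in $V$ followed by division by $r \sim V$ yields the desired bound, mirroring the approach of \cite{pricelaw,lukoh1}. The main obstacle will be controlling the $\partial_U$-commuted equations for $\partial_U {^{(U)}\nu_p}$ and $\partial_U^2(r\phi_p)$: the commutators $\partial_U \mathcal{G}_1, \partial_U \mathcal{G}_5$ introduce background factors $\partial_U {^{(U)}\overline{\nu}} \sim |U|^{-c}$ and $\partial_U^2 \overline{\phi} \sim |U|^{-d}$ from assumption (B3). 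Choosing $\sigma > \max(c,d)$ and noting that the $|U|^{-1+\sigma}$ weight propagates from the $\mathcal{J}_2$ data of Lemma \ref{lem:newinitialdataregionIV}, the resulting sources remain integrable against the bootstrap norms without logarithmic divergence. Taking $V_2$ large to absorb the $O(V_2^{-1})$ constants from step one, then $\epsilon$ small as a function of data and $V_2$, closes the bootstrap to yield (\ref{eq:extbootass5improved}).
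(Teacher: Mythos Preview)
Your overall scheme—improve $\mathfrak{A}_1$ via $U$-integration gaining $V_2^{-1}$, then improve $\mathfrak{A}_2$ via the reductive $V$-integration—matches the paper. However, your estimate for $(\mathcal{G}_1)_p$ in step two is too weak and the $V$-integration does not close as written. From the bootstrap norm alone you only have $|\mu_p| \lesssim A\epsilon^{1-s'}$ (no decay), so the $\mu_p$-contribution to $(\mathcal{G}_1)_p = \big(\tfrac{\mu\,{^{(V)}\lambda}}{(1-\mu)r}\big)_p$ is of order $\epsilon^{1-s'} V^{-1}$, exactly as you state. But $\int_{V_*}^{V} (V')^{-1}\,dV'$ diverges logarithmically, so integrating (\ref{PSSESF:2}) in $V$ does not yield a bounded $^{(U)}\nu_p$. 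The same obstruction hits $\partial_V\partial_U(r\phi_p)$ through the $\mathcal{I}_5$ term $^{(U)}\nu_p\,\partial_V\overline{\phi}$ once you chain the estimates.

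The paper resolves this with an intermediate step (its Lemma~\ref{lem:please}) inserted \emph{before} attempting to improve $\mathfrak{A}_2$: one first shows, as a consequence of the bootstrap (not an improvement of it), that $|m_p| \lesssim A\epsilon^{1-s'}$ is bounded. This is obtained by integrating $\partial_U m_p$ in $U$ after first controlling $r\phi_p$ and $\partial_U\phi_p$ via $U$-integration; crucially $\partial_U m_p$ involves $(\partial_U\phi)^2$, which already decays like $V^{-2}$. Boundedness of $m_p$ then gives $|\mu_p| \lesssim A\epsilon^{1-s'} V^{-1}$, whence $|(\mathcal{G}_1)_p| \lesssim \epsilon^{1-s'} V^{-2}$, which \emph{is} integrable in $V$. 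Your proposal skips this bridging step, and without it the $\mathfrak{A}_2$ argument fails.
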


\subsubsection*{Recovering bootstrap assumptions}
The following lemma is a straightforward consequence of the bootstrap assumptions.
\begin{lem}
\label{lem:region4backgroundmatch}
Assume the bound (\ref{eq:extbootass5}). Then for $\epsilon$ sufficiently small the geometry satisfies
\begin{align}
     (-^{(U)}\nu) \sim 1, \ \ \  ^{(V)}\lambda \sim 1, \ \ \ (1-\mu) \sim 1.
\end{align} 
Moreover, we have the bounds
\begin{alignat}{2}
       |\partial_U (r\phi)| &\lesssim 1, \quad \quad \quad \quad  \quad |\partial_V (r\phi)| &&\lesssim V^{-1}, \\[.5em]
    |{\partial_U} {^{(U)}\nu}| &\lesssim |U|^{-1+\delta}, \quad \ \ \   |{\partial_V} {^{(V)}\lambda}| &&\lesssim V^{-1}, \\[.5em]
    |\partial_U^2 (r\phi)| &\lesssim |U|^{-1+\delta},  \quad \quad |\partial_V^2 (r\phi)| &&\lesssim V^{-2}. 
\end{alignat}
\end{lem}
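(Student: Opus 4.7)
The plan is to derive all stated bounds from the triangle inequality $\Psi = \overline{\Psi} + \Psi_p$, combining the background bounds in assumption (B3) with the bootstrap control from (\ref{eq:extbootass5}) on $\mathfrak{A}$ and choosing $\epsilon$ sufficiently small to absorb the perturbative contributions. Since no transport equation is integrated here, the lemma is genuinely a pointwise comparison statement, and I would organize the argument according to whether each claim is a multiplicative equivalence or a one-sided upper bound.

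For the three equivalences $(-^{(U)}\nu)\sim 1$, $^{(V)}\lambda\sim 1$, $(1-\mu)\sim 1$, I would apply (B3) to the first two background quantities, which gives $(-^{(U)}\overline{\nu}),\ ^{(V)}\overline{\lambda}\sim 1$ on $\mathcal{R}_{\text{IV}}$, and (A1) to the background mass aspect, which gives $(1-\overline{\mu})\sim 1$ uniformly. The bootstrap norm $\mathfrak{A}$ directly controls $|^{(U)}\nu_p|,\ |^{(V)}\lambda_p|,\ |\mu_p|\lesssim \epsilon^{1-s'}$; choosing $\epsilon$ small relative to the positive lower bounds on the background quantities then yields the desired equivalences by a standard splitting.

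For the pointwise upper bounds on derivatives, I would again use (B3) for each background piece, namely $|\partial_V(\overline{r\phi})|\lesssim V^{-2}$, $|{\partial_V}{^{(V)}\overline{\lambda}}|\lesssim V^{-2}$, $|\partial_V^2(\overline{r\phi})|\lesssim V^{-3}$, $|\partial_U(\overline{r\phi})|\lesssim 1$, $|{\partial_U}{^{(U)}\overline{\nu}}|\lesssim |U|^{-c}$, and $|\partial_U^2(\overline{r\phi})|\lesssim |U|^{-d}$, paired with the perturbative bounds of weights $V^{-1}$, $V^{-1}$, $V^{-2}$, $1$, $|U|^{-1+\sigma}$, $|U|^{-1+\sigma}$ that are built into $\mathfrak{A}_1,\mathfrak{A}_2$. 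The four $V$-weighted claims are then immediate, since in each case the background rate is strictly stronger than the claimed rate, and the triangle inequality picks up only the (weaker) perturbative rate.

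The only point requiring any care, and hence the main (modest) obstacle, is the matching of $|U|$ powers in $|\partial_U{^{(U)}\nu}|\lesssim |U|^{-1+\delta}$ and $|\partial_U^2(r\phi)|\lesssim |U|^{-1+\delta}$, where the background rates $|U|^{-c},|U|^{-d}$ are integrable but do not coincide with the perturbative rate $|U|^{-1+\sigma}$ inherited from Region III. Since $|U|\le 1$ on $\mathcal{R}_{\text{IV}}$ and $c,d\in[0,1)$ by (B2)--(B3), the inequalities $|U|^{-c}\lesssim |U|^{-1+\delta}$ and $|U|^{-d}\lesssim |U|^{-1+\delta}$ hold provided $\delta \leq \min(1-c,\,1-d,\,\sigma)$. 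I would therefore fix $\delta$ at this stage to satisfy this constraint, consistent with the small-$\delta$ regime used throughout the exterior construction, and then the background and perturbative contributions fit uniformly under the single weight $|U|^{-1+\delta}$, independently of $V_2$.
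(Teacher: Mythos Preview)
Your approach is correct and matches the paper's, which offers no explicit proof and simply states the lemma is ``a straightforward consequence of the bootstrap assumptions.'' Your handling of the $|U|$-weight matching is appropriate; note only that $\delta$ is not freshly chosen here but was already fixed small in earlier regions, and the constraint $\delta \leq \min(1-c,1-d,\sigma)$ is automatically met since $\sigma = p_2\delta > \delta$ (from Region~III) and $c,d<1$.
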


\vspace{5pt}
\noindent
The bounds captured in the norm $\mathfrak{A}$ are far from sharp in terms of $V$ decay.  The following lemma establishes improved estimates.
\begin{lem}
\label{lem:please}
Assume the bound (\ref{eq:extbootass5}). For $\epsilon$ sufficiently small we have
\begin{equation}
    \label{eq:please0}
    |r_p| \lesssim A\epsilon^{1-s'}, \ \ \ |m_p| \lesssim A\epsilon^{1-s'}, \ \ \ |r\phi_p| \lesssim A\epsilon^{1-s'}.
\end{equation}
\vspace{-1.5em}
\begin{align}
    \label{eq:please.5}
    |^{(V)}\lambda_p| &\lesssim A\epsilon^{1-s'}V^{-2}, \ \ \ \quad  |{\partial_V}{^{(V)}\lambda_p}| \lesssim A\epsilon^{1-s'}V^{-3}, \\[1em]
    \label{eq:please.6}
    |\partial_V(r\phi_p)| &\lesssim A\epsilon^{1-s'}V^{-2}, \ \ \ \quad |\partial_V^2(r\phi_p)| \lesssim A\epsilon^{1-s'}V^{-3},
\end{align}
Moreover, the bounds
\begin{equation}
    \label{eq:please1}
    r \sim V, \ \ \ \quad \mu \lesssim V^{-1},
\end{equation}
hold throughout $\mathcal{R}_{\text{IV}}$.
\end{lem}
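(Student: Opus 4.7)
The strategy exploits two structural features of $\mathcal{R}_{\text{IV}}$: first, the ingoing $U$-width at fixed $V$ is bounded by $1$, so that $U$-transport equations propagate past-boundary $V$-decay essentially for free; second, the scalar field is best controlled via the wave equation for $r\phi$, whose source $\mathcal{G}_5(r\phi) - \mathcal{I}_5$ decays sharply in $r \sim V$ (cf. \cite{pricelaw}). The plan is to first improve the outgoing derivative bounds on $^{(V)}\lambda_p$ and $\partial_V(r\phi_p)$ by integrating their $U$-equations, then obtain the zeroth-order estimates for $r_p$, $r\phi_p$, and $m_p$ by $V$-integration, and finally deduce $\mu \lesssim V^{-1}$ from the identity (\ref{eq:mpvsmup}).

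For the derivative bounds, I would estimate the right-hand side of (\ref{PSSESF:1}) using the bootstrap assumption and Lemma \ref{lem:region4backgroundmatch} to conclude $|\partial_U{^{(V)}\lambda_p}| \lesssim A\epsilon^{1-s'}V^{-2}$. Integration in $U$ from $\mathcal{J}_1 \cup \mathcal{J}_2$, using the past-boundary data from Lemma \ref{lem:newinitialdataregionIV} and the fact that the $U$-length of integration is at most $1$, then gives $|^{(V)}\lambda_p| \lesssim A\epsilon^{1-s'}V^{-2}$. The commuted equation yields $|\partial_V {^{(V)}\lambda_p}| \lesssim A\epsilon^{1-s'}V^{-3}$ by the same method. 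For the scalar field, one first integrates $\partial_U(r\phi_p)$ in $U$ over unit length (using the bootstrap bound and the outgoing data along $\mathcal{J}_3$) to obtain $|r\phi_p| \lesssim A\epsilon^{1-s'}$; this bound is then fed into the wave equation $\partial_U \partial_V(r\phi_p) = \mathcal{G}_5 (r\phi_p) - \mathcal{I}_5$, whose right-hand side is bounded by $A\epsilon^{1-s'}V^{-2}$ (using $|\mathcal{G}_5| \lesssim V^{-3}$ and $|\mathcal{I}_5| \lesssim \epsilon^{1-s'}V^{-2}$), and the $U$-integration argument yields $|\partial_V(r\phi_p)| \lesssim A\epsilon^{1-s'}V^{-2}$. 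Its $\partial_V$-commutation (\ref{PSSESF:11}) is handled identically, producing the $V^{-3}$ rate for $\partial_V^2(r\phi_p)$.

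For $r_p$, I would integrate $\partial_V r_p = {^{(V)}\lambda_p}$ in $V$ from the past boundary at $V=1$ or from $\mathcal{J}_2$; the $V$-integral converges thanks to the $V^{-2}$ decay, giving $|r_p| \lesssim A\epsilon^{1-s'}$ uniformly, and hence $r \sim V$ since $\overline{r} \sim V$. For $m_p$, the transport equation (\ref{SSESF:2:5}) on differences gives $|\partial_V m_p| \lesssim A\epsilon^{1-s'}V^{-2}$ once the scalar field bounds and the proximity of $^{(V)}\lambda$, $r$, $1-\mu$ to their background counterparts are used, and integration in $V$ against the boundary data of Lemma \ref{lem:newinitialdataregionIV} yields $|m_p| \lesssim A\epsilon^{1-s'}$. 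Finally, (\ref{eq:mpvsmup}) combined with $|m_p|, |r_p| \lesssim A\epsilon^{1-s'}$, $\overline{m} \sim 1$, and $r, \overline{r} \sim V$ produces $|\mu_p| \lesssim A\epsilon^{1-s'}V^{-1}$; together with $\overline{\mu} \lesssim V^{-1}$ this yields $\mu \lesssim V^{-1}$. The main delicacy is the ordering of the argument — $r\phi_p$ must be bounded via its $U$-equation before the wave equation can close on $\partial_V(r\phi_p)$, and the analogous care is needed for $\partial_V^2(r\phi_p)$ — but there is no genuine circularity, since each step uses only quantities that are either in the bootstrap norm $\mathfrak{A}$ or have been improved in a previous step.
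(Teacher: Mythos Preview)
Your ordering contains a genuine circularity. The claimed improvement $|\partial_U{^{(V)}\lambda_p}|\lesssim A\epsilon^{1-s'}V^{-2}$ cannot be obtained from the bootstrap norm $\mathfrak{A}$ and Lemma~\ref{lem:region4backgroundmatch} alone. The inhomogeneous term $(\mathcal{G}_2)_p\,\overline{\lambda}$ in (\ref{PSSESF:1}) contains a contribution $\tfrac{\mu_p\,{}^{(U)}\nu}{(1-\mu)r}$; with only the bootstrap bound $|\mu_p|\lesssim A\epsilon^{1-s'}$ (no decay) and $r\sim V$, this term is merely $O(A\epsilon^{1-s'}V^{-1})$, so $U$-integration yields at best $|{^{(V)}\lambda_p}|\lesssim A\epsilon^{1-s'}V^{-1}$, not $V^{-2}$. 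The same obstruction applies to your wave-equation step: $|\mathcal{G}_5|\lesssim V^{-3}$ requires $\mu\lesssim V^{-1}$, and $|\mathcal{I}_5|\lesssim \epsilon^{1-s'}V^{-2}$ requires the already-improved $\lambda_p$ and $r_p$ bounds (e.g.\ the term $\lambda_p\,\partial_U\overline{\phi}$ is only $A\epsilon^{1-s'}V^{-1}$ from the bootstrap). Since your $m_p$ and $\mu$ estimates come only after these improvements, the argument is circular.

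The paper breaks the loop by reversing the order: first a crude $r_p$ bound gives $r\sim V$; then $|r\phi_p|\lesssim A\epsilon^{1-s'}$ follows by integrating the bootstrap bound on $\partial_U(r\phi_p)$ in $U$; this yields $|\partial_U\phi_p|\lesssim A\epsilon^{1-s'}V^{-1}$ via the identity $\partial_U\phi_p=r^{-1}(\partial_U(r\phi_p)-{}^{(U)}\nu\,\phi_p)$. Crucially, the paper then bounds $m_p$ using the $\partial_U m$ equation (\ref{SSESF:2:4}) --- which depends only on $\partial_U\phi$, already controlled --- rather than (\ref{SSESF:2:5}). Once $m_p$ is bounded and hence $\mu\lesssim V^{-1}$, the inhomogeneity in (\ref{PSSESF:1}) genuinely decays like $V^{-2}$, and the remaining improvements (including the sharp $r_p$ bound) follow in the order you describe.
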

\begin{proof}
We start with a preliminary logarithmic bound on $r_p$. Applying the fundamental theorem of calculus in the $V$ direction, and the bound $|^{(V)}\lambda_p| \lesssim A\epsilon^{1-s'}V^{-1}$, we conclude $|r_p| \lesssim A\epsilon^{1-s'}\log V$. This bound is already better as $V \rightarrow \infty$ than the corresponding bound on the background solution, $\overline{r} \sim V$, and so after choosing $\epsilon$ small as a function of $A$ we arrive at the first statement in 
(\ref{eq:please1}).

Similarly, integrating the bound $|\partial_U(r\phi_p)| \lesssim A\epsilon^{1-s'}$ in $U$ gives $|r\phi_p| \lesssim A\epsilon^{1-s'}.$ Here we have used bounds on $r\phi_p$ along $\mathcal{J}_2, \ \mathcal{J}_3$.

It follows from the identity $\partial_U \phi_p = r^{-1}(\partial_U(r\phi_p) - ^{(U)}\nu \phi_p)$, and the bounds just derived, that $|\partial_U \phi_p| \lesssim A\epsilon^{1-s'}V^{-1}$. A similar argument shows $\partial_U \overline{\phi} \lesssim V^{-1}.$

From the evolution equation for $m$ in the $U$ direction we derive
\begin{equation*}
    \partial_U m_p = \Big(\frac{r^2}{2 ^{(U)}\nu}(1-\mu)(\partial_U \phi)^2 \Big)_p.
\end{equation*}
For $\epsilon$ sufficiently small the right hand side can be bounded by $A\epsilon^{1-s'}.$ Integrating in the $U$ direction gives the desired bound on $m_p$, noting that $m_p(-1,V) \lesssim \epsilon$ follows from the assumptions on outgoing initial data. Having established boundedness of $m_p$ it follows that $m$ is bounded, and thus $\mu \lesssim V^{-1}$ follows.

To conclude better decay on $^{(V)}\lambda_p$, revisit (\ref{PSSESF:1}). Bootstrap assumptions imply the zeroth order term has a good sign, and the remaining inhomogeneous term is bounded by $A\epsilon^{1-s'}V^{-2}.$ Note we have used the improved bound on $m_p$ in this estimate. Integrating in $U$ from data along $\{U=-1\}$, we conclude $|^{(V)}\lambda_p| \lesssim A\epsilon^{1-s'}V^{-2}$.

An analogous argument applies to give the improved estimates on ${\partial_V} {^{(V)}\lambda},\  \partial_V (r\phi_p),$ and $\partial_V^2(r\phi_p)$. Finally, integrating the improved bound on $^{(V)}\lambda_p$ in the $V$ direction gives the stated bound on $r_p$.
\end{proof}

\vspace{5pt}
\noindent
We now turn to improving the bootstrap assumptions, beginning with unknowns satisfying $U$ equations. 

\begin{lem}
Assume the bound (\ref{eq:extbootass5}). Then for $\epsilon$ sufficiently small we have
\begin{equation}
    \mathfrak{A}_1  \leq \frac{1}{10}A\epsilon^{1-s'}.
\end{equation}
\end{lem}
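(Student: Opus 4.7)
The plan is to estimate each quantity in $\mathfrak{A}_1$ by integrating its respective $U$-transport equation from the past boundary $\mathcal{J}_1 \cup \mathcal{J}_2 \cup \mathcal{J}_3$, in a reductive order so each step depends only on previously-controlled quantities. The natural order is $^{(V)}\lambda_p \to \mu_p \to \partial_V(r\phi_p) \to \partial_V {^{(V)}\lambda_p} \to \partial_V^2(r\phi_p)$. For $^{(V)}\lambda_p$ I would use (\ref{PSSESF:1}): under the bootstrap the zeroth order coefficient $\mathcal{G}_2 \leq 0$ has a favorable sign for integration in the increasing $U$ direction, and the inhomogeneity $(\mathcal{G}_2)_p \overline{\lambda}$ can be bounded using the improved estimates $|m_p|, |r_p| \lesssim A\epsilon^{1-s'}$ and $\mu \lesssim V^{-1}$ from Lemma \ref{lem:please}, together with the background bounds $\overline{\lambda} \sim 1$, $r \sim V$ from Lemma \ref{lem:region4backgroundmatch}. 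This should yield $|\partial_U {^{(V)}\lambda_p}| \lesssim A\epsilon^{1-s'} V^{-3}$, which integrates in $U$ over a bounded interval to give $|V {^{(V)}\lambda_p}| \lesssim A\epsilon^{1-s'} V^{-1}$ after combining with the data.

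Next, $\mu_p$ is estimated via (\ref{PSSESF:3}), $\partial_V(r\phi_p)$ via the $r\phi$ wave equation (\ref{PSSESF:5}) written in $\partial_U(r \partial_V \phi_p)$ form, and finally the second-order quantities $\partial_V{^{(V)}\lambda_p},$ $\partial_V^2(r\phi_p)$ via the commuted equations (\ref{PSSESF:7}) and (\ref{PSSESF:11}). In each case I expect the source term, when estimated via Lemmas \ref{lem:region4backgroundmatch}, \ref{lem:please}, and the already-improved bounds from this lemma, to carry enough $V$ decay that the integral in $U$ is bounded by $C\epsilon^{1-s'}$ times the inverse of the corresponding $V$-weight, with implied constants depending only on the background and initial data. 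The boundary contributions along $\mathcal{J}_1 \cup \mathcal{J}_2$ are controlled by Lemma \ref{lem:newinitialdataregionIV}, and those along $\mathcal{J}_3$ by the explicit asymptotically-flat form of the outgoing data (\ref{extdata1.1})-(\ref{extdata2.1}); in particular, the cutoff $\chi_1(v)$ ensures $\lambda_p(-1,v)$ vanishes for large $V$, while the scalar field data satisfies sharp $V^{-2}$ decay.

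The improvement mechanism differs qualitatively from Regions I-III: the $U$-range is bounded of length at most one, so no $V_2^{-1}$ or $\epsilon^{1/100}$ smallness appears from integration itself. Instead, improvement is obtained because (i) all source terms are controlled by \emph{nonlinear} expressions in the bootstrap quantities, which for $\epsilon$ small absorb into $\tfrac{1}{10} A\epsilon^{1-s'}$ after factoring out the leading $A\epsilon^{1-s'}$, and (ii) the initial-data contributions carry only constants depending on the background and $\mathcal{E}_{2,\alpha}$, not on $A$, and can therefore be dominated by choosing $A$ large enough relative to these fixed constants.

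The main obstacle I expect is the treatment of $\partial_V^2(r\phi_p)$, where the commuted equation (\ref{PSSESF:11}) contains $\partial_V \mathcal{G}_5$ and $\partial_V \mathcal{I}_5$. These coefficients unpack into combinations involving $\partial_V \mu,$ $\partial_V r,$ $\partial_V {^{(V)}\lambda},$ and second $V$-derivatives of $\overline{\phi}$. Careful bookkeeping is needed to verify that each factor contributes its expected $V$-power and that the resulting source decays like $V^{-3}$, so that integration against the target $V^{2}$-weight produces $|V^2\partial_V^2(r\phi_p)|\lesssim A\epsilon^{1-s'}V^{-1}$. Matching $V$-weights cleanly for the third-order difference term $\partial_V \mathcal{I}_5$, which includes $\partial_V r_p \partial_U \partial_V \overline{\phi}$ and similar cross-terms, is the most delicate step, but should be tractable given the asymptotically-flat decay built into assumption (B3) of Section \ref{sec:assumptionsbackground}.
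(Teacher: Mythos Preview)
Your proposal contains a genuine gap in the improvement mechanism. You explicitly claim that ``no $V_2^{-1}$ \ldots\ smallness appears from integration itself'' and that improvement instead comes from the source terms being ``nonlinear expressions in the bootstrap quantities.'' Both claims are incorrect. The source terms in (\ref{PSSESF:1}), (\ref{PSSESF:3}), (\ref{PSSESF:6}), (\ref{PSSESF:7}), (\ref{PSSESF:11}) are \emph{linear} in the perturbation: for instance $(\mathcal{G}_2)_p\overline{\lambda}$ expands as a sum of terms each containing exactly one factor of $\mu_p$, $\nu_p$, or $r_p$, multiplied by background quantities. Under the bootstrap these are $\sim \overline{C}\,A\epsilon^{1-s'}V^{-1}$ with $\overline{C}$ a fixed constant depending on $(\overline{g},\overline{\phi})$. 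There is no extra factor of $\epsilon$ to absorb. Integrating such a source over a $U$-interval of length one gives $\overline{C}\,A\epsilon^{1-s'}$ at $V=1$, which does not improve on $2A\epsilon^{1-s'}$.

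The mechanism the paper uses, and which you have overlooked, is the geometry of $\mathcal{R}_{\text{IV}}$: the past-directed constant-$V$ line from $(U,V)$ meets the boundary at $\mathcal{J}_2$ when $V \leq V_2^{p_2}$, and the length of the $U$-interval is then only $V^{q_2}V_2^{-1}$, not $1$. Against a source of size $A\epsilon^{1-s'}V^{-1}$ this yields a contribution $A\epsilon^{1-s'}V^{q_2-1}V_2^{-1} \leq A\epsilon^{1-s'}V_2^{-1}$ (since $V \geq 1$ in $\mathcal{R}_{\text{IV}}$). When instead $V \geq V_2^{p_2}$ the $U$-range is at most $1$ but now $V^{-1} \leq V_2^{-1}$, so the contribution is again $A\epsilon^{1-s'}V_2^{-1}$. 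In both cases one chooses $V_2$ large (a freedom explicitly retained from Region~II) to beat the bootstrap. Your mechanism (ii) for the data terms is correct, but without the $V_2^{-1}$ gain on the integral your argument cannot close.
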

\begin{proof}
Let 
\begin{equation*}
    \Psi_p \in \{V \, {^{(V)}}\lambda_p, \ V^2 \, {\partial_V} {^{(V)}\lambda_p}, \ \mu_p, \ V\partial_V(r\phi_p), \ V^2\partial_V^2(r\phi_p) \}.
\end{equation*}
Estimating the right hand sides of (\ref{PSSESF:1}), (\ref{PSSESF:3}), (\ref{PSSESF:6}), (\ref{PSSESF:7}), (\ref{PSSESF:9}) yields the uniform bound
\begin{equation*}
    |\partial_U \Psi_p| \lesssim A\epsilon^{1-s'}V^{-1}.
\end{equation*}
For $(U,V)$ with $V \geq V_2$, integration in $U$ picks up a boundary term along $\mathcal{J}_3$. We conclude
\vspace{-1em}
\begin{align*}
    |\Psi_p|(U,V) &\lesssim |\Psi_p|(-1,V) + A\epsilon^{1-s'}\int\limits_{(-1,V)}^{(U,V)}V^{-1}dU' \\
    &\lesssim \epsilon + A\epsilon^{1-s'}V_2^{-1}.
\end{align*}
In the above we have used the assumptions on the outgoing data along $\{U=-1\}$.  Now we choose $V_2$ large to render the error term appropriately small, improving the bootstrap assumption.

For $(U,V)$ with $V \leq V_2$, integration in $U$ picks up a boundary term at $(-V^{q_2}V_2^{-1}, V ) \in \mathcal{J}_2$. We conclude in a similar manner, using the bounds of Lemma \ref{lem:newinitialdataregionIV} and the estimate
\begin{align*}
    A\epsilon^{1-s'}V^{-1}\int\limits_{(-V^{q_2}V_2^{-1}, V )}^{(U,V)}dU' \lesssim  A\epsilon^{1-s'}V_2^{-1}.
\end{align*}
\end{proof}
\vspace{-2.5em}
\begin{lem}
Assume the bound (\ref{eq:extbootass5}). Then for $\epsilon$ sufficiently small we have
\begin{equation}
    \mathfrak{A}_2  \leq \frac{1}{10}A\epsilon^{1-s'}.
\end{equation}
\end{lem}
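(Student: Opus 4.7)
The plan is to improve the bootstrap bound on $\mathfrak{A}_2$ in the same spirit as the previous lemma, but now by integrating in $V$ from the past boundary $\mathcal{J}_1 \cup \mathcal{J}_2 \cup \mathcal{J}_3$, rather than in $U$. All four quantities in $\mathfrak{A}_2$, namely $^{(U)}\nu_p$, $|U|^{1-\sigma}\,\partial_U{^{(U)}}\nu_p$, $\partial_U(r\phi_p)$, and $|U|^{1-\sigma}\,\partial_U^2(r\phi_p)$, satisfy $V$-transport equations: $^{(U)}\nu_p$ via (\ref{PSSESF:2}), $\partial_U(r\phi_p)$ via the wave equation (\ref{PSSESF:6}) and its $\partial_U$-commuted version (\ref{PSSESF:12}), and $\partial_U{^{(U)}}\nu_p$ via (\ref{PSSESF:8}). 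The driving observation is that in $\mathcal{R}_{\text{IV}}$ the decay rates $r \sim V$, $\mu \lesssim V^{-1}$, $|^{(V)}\lambda_p| \lesssim A\epsilon^{1-s'}V^{-2}$, $|m_p| \lesssim A\epsilon^{1-s'}$, and $|r_p| \lesssim A\epsilon^{1-s'}$ from Lemma \ref{lem:please} render every inhomogeneous term in these transport equations at least $V^{-2}$-integrable.

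The estimates should be carried out in a reductive order, using at each stage only quantities that have been previously estimated (either via Lemma \ref{lem:please} or the improved bound on $\mathfrak{A}_1$ from the previous lemma). First, I would estimate $^{(U)}\nu_p$ by rewriting (\ref{PSSESF:2}) in the logarithmic form $\partial_V \log\bigl(1 + {^{(U)}\nu_p}/{^{(U)}\overline{\nu}}\bigr) = (\mathcal{G}_1)_p$, exactly as in Region III. The advantage is that the right-hand side contains no $^{(U)}\nu_p$ and is directly controlled by $|(\mathcal{G}_1)_p| \lesssim A\epsilon^{1-s'}V^{-2}$ via Lemma \ref{lem:please}. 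Next I would estimate $\partial_U(r\phi_p)$ using (\ref{PSSESF:6}), where the source $\mathcal{G}_5\,(r\phi_p) - \mathcal{I}_5$ decays as $V^{-2}$ once $|r\phi_p| \lesssim A\epsilon^{1-s'}$ is supplied by Lemma \ref{lem:please}. Finally, the higher-order quantities $\partial_U{^{(U)}}\nu_p$ and $\partial_U^2(r\phi_p)$ are handled by commuting the above equations with $\partial_U$ and estimating the commutator terms, where the $|U|^{1-\sigma}$ weight absorbs the mild singularity $|U|^{-1+\sigma}$ inherited from the data along $\mathcal{J}_2$ via Lemma \ref{lem:newinitialdataregionIV}.

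Integration proceeds from the past boundary, which splits into three regimes: for $|U| = 1$ the relevant boundary is $\mathcal{J}_3$, where the outgoing data bounds of norm $\mathcal{E}_{2,\alpha}$ apply; for $V_2^{-1} \leq |U| < 1$, one integrates from $\mathcal{J}_2$ at $V_* = (V_2|U|)^{p_2}$, and $\int_{V_*}^V V'^{-2}\,dV' \leq V_*^{-1} \lesssim V_2^{-p_2}|U|^{-p_2}$ supplies smallness in $V_2$; for $|U| < V_2^{-1}$ one integrates from $\mathcal{J}_1$ at $V=1$. In all three cases the boundary data bounds from Lemma \ref{lem:newinitialdataregionIV} match the weights in $\mathfrak{A}_2$, so the data contribution alone is $\lesssim \epsilon^{1-s'}$.

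The main technical obstacle will be the region $|U| < V_2^{-1}$, where the past boundary is $\mathcal{J}_1$ at $V=1$ and the integral $\int_{1}^{V} V'^{-2}\,dV' \leq 1$ offers no inverse-$V_2$ smallness. Closure in this subregion cannot rely on the "large-$V_2$" mechanism; instead it must exploit (i) the fact that the dominant source terms on the right-hand side have been estimated directly in terms of already-improved quantities (via Lemma \ref{lem:please} and the sharpened $\mathfrak{A}_1$ bound from the previous lemma), so their multiplicative constants depend only on the background, and (ii) the freedom to take $\epsilon$ small as a function of the background norm and $A$ to absorb any residual universal constants. This careful constant accounting, combined with the $V_2$-smallness available in the complementary regime $V_2^{-1} \leq |U| \leq 1$, should yield the improved bound $\mathfrak{A}_2 \leq \tfrac{1}{10}A\epsilon^{1-s'}$, completing the bootstrap and the proof of existence in $\mathcal{R}_{\text{IV}}$.
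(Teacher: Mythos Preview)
Your proposal is essentially correct and follows the paper's approach: integrate the $V$-transport equations for the $\mathfrak{A}_2$ quantities in a reductive order, starting with $^{(U)}\nu_p$ via the logarithmic form $\partial_V\log(1+{^{(U)}\nu_p}/{^{(U)}\overline{\nu}}) = (\mathcal{G}_1)_p$, then $\partial_U(r\phi_p)$ via the wave equation, and finally the $\partial_U$-commuted versions for the higher-order quantities. The $V^{-2}$ integrability of the sources and the boundary bounds of Lemma~\ref{lem:newinitialdataregionIV} are exactly what the paper uses.

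Two points deserve correction. First, a minor geometric one: integrating in $V$ at fixed $U$ never lands you on $\mathcal{J}_3$, which is the outgoing null segment $\{U=-1\}$; the past endpoint lies on $\mathcal{J}_1$ if $|U|<V_2^{-1}$ and on $\mathcal{J}_2$ otherwise. Second, and more substantively, your closing mechanism is slightly miscast. The paper states explicitly that ``we will not be able to use the smallness of $V_2^{-1}$ when integrating in the outgoing direction''; indeed your claimed gain $V_2^{-p_2}|U|^{-p_2}$ from $\mathcal{J}_2$ degenerates as $|U|\to V_2^{-1}$, and taking $\epsilon$ small (your point (ii)) cannot improve a bound of the form $CA\epsilon^{1-s'}$ to $\tfrac{1}{10}A\epsilon^{1-s'}$. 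The closure rests entirely on your point (i): the reductive ordering ensures that the sources at each step involve only $\mathfrak{A}_1$-quantities and previously improved $\mathfrak{A}_2$-quantities. The paper also remarks that $r_p$ and $r\phi_p$ can be re-estimated using only data and $\mathfrak{A}_1$ (integrating $\partial_V$ rather than $\partial_U$), so that after the previous lemma the source bounds carry constants of the form $C(\epsilon^{1-s'} + A\epsilon^{1-s'}V_2^{-1})$ with $C$ independent of $A$. Choosing $A$ large relative to data and $V_2$ large relative to background constants then closes the bootstrap uniformly.
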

\begin{proof}
We will not be able to use the smallness of $V_2^{-1}$ when integrating in the outgoing direction. Instead, as for Region III we rely on an ordering of the estimates in which each step only relies on already improved bounds. Note that the quantities $r_p, r\phi_p$ can be estimated solely in terms of data and $\mathfrak{A}_1$, which we assume has been done.

To estimate $^{(U)}\nu_p$ we consider (\ref{eq:ext3tempest0}) as an equation for $\frac{{^{(U)}\nu_p}}{^{(U)}\overline{\nu}}$. Inserting the improved bootstrap bound on $\mathfrak{A}_1$ and the bounds of Lemma \ref{lem:please} gives for $\epsilon$ sufficiently small, 
\begin{equation*}
    |(\mathcal{G}_1)_p| \lesssim \epsilon^{1-s'}V^{-2}.
\end{equation*}
Integrating (\ref{eq:ext3tempest0}) in $V$ from $\mathcal{J}_1 \cup \mathcal{J}_2$ and solving for $^{(U)}\nu_p$, we conclude the stated bound.

A similar argument applies to $\partial_U(r\phi_p)$. Estimating the right hand side of (\ref{PSSESF:5}) yields
\begin{equation*}
    |\partial_V \partial_U (r\phi_p)| \lesssim \epsilon^{1-s'}V^{-2}.
\end{equation*}
Integrating in $V$ gives the desired bound. Note we required the improved bound on $^{(U)}\nu_p$ in this estimate.

Proceeding to ${\partial_U}{^{(U)}\nu_p}$ and $\partial_U^2(r\phi_p)$ respectively, we apply the same strategy to the $U$-commuted (\ref{eq:ext3tempest0}) and to (\ref{PSSESF:12}) respectively. The key difference is that the data terms along $\mathcal{J}_1$ contribute singular weights in $U$. As these are $U$ weights, they do not affect the integration in $V$.
\end{proof}

\subsection{Concluding the proof of Theorem \ref{thm2:exterior}}
We are now in a position to conclude the proof of Theorem \ref{thm2:exterior}. Let $(r,m,\phi)$ denote the exterior solution constructed on $\mathcal{Q}^{(ex)}$ in Sections \ref{subsec:extlocalexist}-\ref{subsec:extregionIV}.
\begin{lem}
The exterior solution $(r,m,\phi)$ in $\mathcal{Q}^{(ex)}$ satisfies the following properties:
\begin{enumerate}
    \item The spacetime is asymptotically flat, and contains an incomplete $\mathcal{I}^+$.
    \item The functions $\Psi \in \{{r,} {^{(U)}\nu,} {^{(V)}\lambda,} \  \mu, {\partial_V} {^{(V)}\lambda,} \ \partial_U \phi, \ \partial_V \phi, \ \partial_V^2 \phi\}$ extend continuously as functions on $\{U=0\}$ away from $(U,V) = (0,0)$, and there exists an $s', \delta>0$ such that the limits along $\{U=0\}$ satisfy bounds
    \begin{equation}
       |\lim_{U \rightarrow 0} \Psi_p|(V) \lesssim \epsilon^{1-s'} \min(V^{\delta},1).
    \end{equation}
\end{enumerate}
\end{lem}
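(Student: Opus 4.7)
The plan is to establish the two claims separately using the uniform estimates already derived in Regions I--IV. For the asymptotic flatness condition, I will work in the $(U,V)$ gauge on $\mathcal{R}_{\text{IV}}$. Lemmas \ref{lem:region4backgroundmatch} and \ref{lem:please} immediately give $r \sim V \to \infty$ for each fixed $U \in [-1,0)$. To show $m(U, V)$ and $(r\phi)(U,V)$ achieve finite limits as $V \to \infty$, I will apply the Raychaudhuri equation $2{^{(V)}\lambda}\partial_V m = r^2(1-\mu)(\partial_V \phi)^2$ and the wave equation for $r\phi$; the sharp $V^{-2}$ decay of $\partial_V(r\phi)$ from Lemma \ref{lem:please}, combined with $r \sim V$ and $^{(V)}\lambda, (1-\mu) \sim 1$, makes $\partial_V m$ and $\partial_V(r\phi)$ integrable in $V$, yielding the required finite limits.

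For incompleteness of $\mathcal{I}^+$, I will use the criterion at the end of Section \ref{subsec:nakedsingdfn} applied to the outgoing null hypersurface $\{U=-1\}$ (or any $\{U=U_0\}$, $U_0$ close to $0$), transcribed into the $(U,V)$ gauge. In this gauge $\Omega^2$ is comparable to $|{^{(U)}\nu}|\,{^{(V)}\lambda}/(1-\mu)$, and Lemmas \ref{lem:region4backgroundmatch}--\ref{lem:please} ensure that the perturbed $\Omega^2$ is uniformly comparable to its background counterpart throughout $\mathcal{R}_{\text{IV}}$. Since the admissible background spacetime has incomplete null infinity (as discussed in Section \ref{sec:assumptionsbackground}), the affine length integral $\int_{U_0}^{0}\frac{\Omega^2(U',V(p_i))}{\Omega^2(U_0, V(p_i))}\,dU'$ is uniformly bounded as $V(p_i) \to \infty$, transferring the incompleteness to the perturbed solution.

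For the continuous extension to $\{U=0\}$, the strategy is to fix $V > 0$ and show that for any sequence $U_n \uparrow 0$ the values $\Psi_p(U_n, V)$ form a Cauchy sequence, with explicit control on the rate. The unknowns $r, \, {^{(V)}\lambda}, \, \mu, \, \partial_V\,{^{(V)}\lambda}$ satisfy $U$-transport equations (\ref{PSSESF:1})--(\ref{PSSESF:4}), (\ref{PSSESF:7}); the estimates proved in $\mathcal{R}_{\text{III}}$ and $\mathcal{R}_{\text{IV}}$ show the right-hand sides are $L^1$ in $U$ near $U=0$ for fixed $V > 0$ (the worst singular $|U|^{-1+\sigma}$ factor is integrable). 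The scalar-field derivatives $\partial_V\phi, \partial_V^2 \phi$ are handled by first applying the wave equations (\ref{PSSESF:5}), (\ref{PSSESF:11}) to $r \partial_V \phi_p$ and $\partial_V^2(r\phi_p)$, which satisfy $U$-transport equations, and then dividing by $r$, which extends continuously with $r(0, V) > 0$ for $V > 0$. The quantitative bound $|\lim_{U\to 0}\Psi_p|(V)\lesssim \epsilon^{1-s'}\min(V^\delta, 1)$ is obtained by passing to the limit $U \to 0$ in the pointwise bootstrap bounds from $\mathcal{R}_{\text{III}}$ (yielding the $V^\delta$ factor for $V \leq 1$) and from Lemmas \ref{lem:please}, \ref{lem:region4backgroundmatch} in $\mathcal{R}_{\text{IV}}$ (yielding the uniform $\epsilon^{1-s'}$ bound for $V \geq 1$).

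The main obstacle is the pair $^{(U)}\nu$ and $\partial_U \phi$, which satisfy only $V$-transport equations, so continuity in $U$ is not directly visible from the main bootstrap norms. The resolution is to invoke the auxiliary higher-order bounds $|\partial_U\,{^{(U)}\nu_p}|, |\partial_U^2 \phi_p|\lesssim \epsilon^{1-s'}|U|^{-1+\sigma}V^{(\text{non-singular})}$ established in (\ref{eqn:region3higherorder}) for $\mathcal{R}_{\text{III}}$ and in $\mathfrak{A}_2$ for $\mathcal{R}_{\text{IV}}$. Since $|U|^{-1+\sigma}$ is integrable at $U=0$, integrating these $U$-derivative bounds gives the Cauchy property and thus the limits on $\{U=0\}$, with the limit bound following again from the pointwise estimates of Regions III and IV. No new smallness is required beyond the choices already made, since the integrability exponent $\sigma$ was fixed in the Region III analysis.
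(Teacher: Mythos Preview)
Your proposal is correct and follows essentially the same approach as the paper: asymptotic flatness comes from the decay estimates in Lemma~\ref{lem:please}, incompleteness of $\mathcal{I}^+$ is inherited from the background via uniform comparability of $\Omega^2$, and the continuous extension to $\{U=0\}$ is obtained by showing that $\partial_U \Psi_p$ is integrable in $U$---using the $U$-transport equations for $r$, $^{(V)}\lambda$, $\mu$, $\partial_V{^{(V)}\lambda}$, $\partial_V\phi$, $\partial_V^2\phi$, and the auxiliary higher-order bounds (\ref{eqn:region3higherorder}) and those encoded in $\mathfrak{A}_2$ for $^{(U)}\nu$ and $\partial_U\phi$. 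The paper's proof is terser but records exactly this mechanism; your write-up simply unpacks which equation handles which unknown.
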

\begin{proof}
The asymptotic flatness of outgoing null hypersurfaces follows from the estimates of Lemma \ref{lem:please}. Given the explicit bounds on the perturbations $\Psi_p$ as $V \rightarrow \infty$, one can use the same argument as in Section \ref{subsubsec:asympflattrunc} to show that $\mathcal{I}^+$ is incomplete.

The existence of the continuous extensions to $\{U=0\}$ follows from estimates (\ref{eq:extbootass3improved}), (\ref{eq:extbootass5improved}), and the system of equations for the various $\partial_U \Psi_p$. These rates are integrable in $U$ as $U \rightarrow 0$, giving the existence of the limits. Moreover, the added $V^{\delta}$ decay as $V\rightarrow 0$ is contained in the norm (\ref{eq:extbootass3improved}), implying the enhanced regularity of the perturbations as $V \rightarrow 0$.
\end{proof}
\appendix 

\section{Construction of $(\overline{g}_k,\overline{\phi}_k)$}
\label{appA}
This appendix discusses the existence and global properties of $k$-self-similar naked singularities in double null gauge, cf. Theorem \ref{thm:christodoulou_solutions}. Recall that the original argument of \cite{chris2} takes place in Bondi coordinates $(u,r)$, with the area radius $r$ treated as a coordinate. For the purposes of proving the main results of the paper, Theorems \ref{thm1:interior}-\ref{thm2:exterior}, we require analytical tools adapted to the Einstein-scalar field system in double null coordinates. Thus the main role of the appendix is to effect a translation of the results in \cite{chris2} to a particular gauge, here termed \textit{renormalized double null coordinates}. The proof presented here is not self-contained, and relies crucially on the self-similar analysis of the spherically symmetric Einstein-scalar field system presented in \cite{chris2}.

Section \ref{subsec:appA1} motivates the class of $k$-self-similar solutions, and defines \textit{self-similar double null coordinates} adapted to the symmetry. Section \ref{subsec:A2} reduces the full Einstein-scalar field system to an autonomous ODE system, which is shown to be identical with the system obtained in \cite{chris2}. Applying the ODE analysis of the latter paper and transforming back to a double null gauge, we discuss in Sections \ref{appa:regularcoords}, \ref{subsec:appAext} the resulting interior and exterior regions of spacetime. The final step in the construction consists of an asymptotically flat truncation in the exterior region. Although technically straightforward, we discuss the truncation procedure explicitly and collect various bounds on the solution. 

A key insight from the construction is the breakdown of self-similar double null coordinates along the past and future similarity horizons, motivating the introduction of a renormalized double null gauge (see Section \ref{appa:regularcoords}). The properties of the self-similar solutions in the renormalized gauge, including blowup rates of various double null unknowns, directly motivate the assumptions for admissible spacetimes outlined in Section \ref{sec:assumptionsbackground}.

\subsection{Consequences of $k$-self-similarity}
\label{subsec:appA1}
The point of departure for Christodoulou's construction of the $(\overline{g}_k,\overline{\phi}_k)$ solutions is the notion of $k$-self-similarity. Assume that the spacetime $(\mathcal{M},\textbf{g}_{\mu \nu})$ admits a spherically symmetric, conformally Killing vector field $S$, which generates dilations about a central point $\mathcal{O}$. It follows that the metric and scalar field quantities satisfy 
\begin{equation}
    \label{eq:app1}
    \mathcal{L}_S g_{\mu \nu} = 2 g_{\mu \nu}, \ Sr = r, \ S\phi = -k,
\end{equation}
where $g_{\mu \nu}$ is the metric on the quotient spacetime $\mathcal{Q}$. Explicitly, $g = -\Omega^2(u,v)dudv$. Moreover, we will work exclusively with $k \in \mathbb{R}_+$ satisfying $k^2 \in (0, \frac{1}{3})$. While \cite{chris2} considers a wider range of values, it is shown that such an extension does not lead to new naked singularity solutions. In particular, the so called \textit{scale-invariant} solutions with $k=0$ can be written down explicitly, and do not model singularity formation. For details, see \cite{chris1}.

The solutions will be constructed in \textit{self-similar double null coordinates}, adapted to the scaling symmetry. Wherever this system is regular, it will follow that $S$ takes the simple form
\begin{equation}
    \label{eq:app2}
    S = \hat{u}\partial_{\hat{u}} + \hat{v}\partial_{\hat{v}}.
\end{equation}
Armed with this coordinate condition, the assumption (\ref{eq:app1}) places strong restrictions on the form of the metric functions and the scalar field. Compute 
$$(\mathcal{L}_S g)_{\mu \nu} = S(g(\partial_{\mu}, \partial_\nu)) - g([S,\partial_\mu],\partial_\nu) - g(\partial_\mu, [S,\partial_\nu]), $$ 
with $\mu,\nu \in \{\hat{u},\hat{v} \}.$ The nonvanishing commutators are given by 
$$[S,\partial_{\hat{u}}] = - \partial_{\hat{u}}, \ \ [S,\partial_{\hat{v}}] = - \partial_{\hat{v}},$$ implying 
\begin{align*}
    (\mathcal{L}_S g)_{\hat{u} \hat{u}} &= (\mathcal{L}_S g)_{\hat{v} \hat{v}} = 0, \ 
     (\mathcal{L}_S g)_{\hat{u} \hat{v}} = -S(\Omega^2) - 2\Omega^2.
\end{align*}
It follows from (\ref{eq:app1}) that 
\begin{equation}
    \label{eq:app3}
    S(\Omega^2) = 0, \ Sr = r, \ S(\phi) = -k,
\end{equation}
and it remains to analyze these reduced equations. Define the variable $z = -\frac{\hat{v}}{\hat{u}}$, which is well-defined on $\mathcal{Q}$, and introduce the coordinate system 
\begin{equation}
    (\wt{u},z) \doteq (\hat{u}, -\frac{\hat{v}}{\hat{u}}). 
\end{equation}
The coordinate $\tilde{u}$ is distinguished from $u$ as a reminder that the coordinate derivatives do not agree. In fact, we have the relations
\begin{equation}
    \label{eq:app4}
   \partial_{\hat{u}} =  \partial_{\wt{u}} - \frac{z}{\wt{u}}\partial_z, \ \partial_{\hat{v}} = -\frac{1}{\wt{u}}\partial_z.
\end{equation}
The conformal Killing field is given in the new coordinates by 
\begin{equation*}
    S = \wt{u}\partial_{\wt{u}}.
\end{equation*}
The equations (\ref{eq:app3}) are equivalent to 
\begin{equation*}
    S(\Omega^2) = S(\frac{r}{\wt{u}}) = S(\phi + k\log(-\wt{u})) = 0,
\end{equation*}
and thus there exist functions $\mathring{\Omega}(z), \mathring{r}(z), \mathring{\phi}(z)$ such that 
\begin{equation}
    \label{eq:app5}
    \Omega^2(\wt{u},z) = \mathring{\Omega}^2(z), \ \  r(\wt{u},z) = -\wt{u}\mathring{r}(z), \ \  \phi(\wt{u},z) = \mathring{\phi}(z) - k\log (-\wt{u}).
\end{equation}
We can in fact derive similar expressions for all the double null unknowns appearing in the system (\ref{SSESF:1:1})-(\ref{SSESF:1:5}). Applying (\ref{eq:app5}) along with the coordinate derivative expression (\ref{eq:app4}) gives
\begin{align}
    \nu &= \partial_{\hat{u}} r = (\partial_{\wt{u}} - \frac{z}{\wt{u}}\partial_z)(-\wt{u}\mathring{r}(z)) = -\mathring{r}(z) + z\partial_z \mathring{r}(z) \doteq \mathring{\nu}(z) \label{eq:app6}\\
    \lambda &= \partial_{\hat{v}}r =  -\frac{1}{\wt{u}}\partial_z (-u\mathring{r}(z)) = \partial_z \mathring{r}(z) \doteq \mathring{\lambda}(z),\label{eq:app7}\\
    \mu &= 1 + \frac{4\lambda \nu}{\Omega^2} = 1 + \frac{4\mathring{\lambda}(z) \mathring{\nu}(z)}{\mathring\Omega(z)^2} \doteq \mathring{\mu}(z), \label{eq:app8}\\
    m &= \frac{\mu r}{2} \doteq -\wt{u} \mathring{m}(z),\label{eq:app9}\\
    \partial_{\hat{u}} \phi &= (\partial_{\wt{u}} - \frac{z}{\wt{u}}\partial_z)(\mathring{\phi}(z) - k\log(-\wt{u})) = -\frac{z}{\wt{u}}\partial_z \mathring{\phi}(z) - \frac{k}{\wt{u}}, \label{eq:app10}\\
    \partial_{\hat{v}} \phi &= -\frac{1}{\wt{u}}\partial_z (\mathring{\phi}(z) - k\log(-\wt{u})) = -\frac{1}{\wt{u}}\partial_z \mathring{\phi}(z),\label{eq:app11}
\end{align}
where we have introduced functions $\mathring{\nu}(z), \mathring{\lambda}(z), \mathring{\mu}(z),$ and $\mathring{m}(z)$. Comparing (\ref{eq:app6}) and (\ref{eq:app7}) gives the useful algebraic relationship
\begin{equation}
\label{eq:appalgrel1}
\mathring{\nu} + \mathring{r} = z\mathring{\lambda}. \newline
\end{equation}
We conclude this section with an overview of the solution manifold. With respect to a self-similar double null coordinate system $(\hat{u},\hat{v})$, the $(\bar{g}_k,\bar{\phi}_k)$ solutions will initially be defined on
\begin{equation*}
    \mathcal{Q} = \{(\hat{u},\hat{v})\ |  -\infty \leq \hat{u} < 0, \ \hat{u} \leq \hat{v} < \infty\}.
\end{equation*}
In Section \ref{subsubsec:asympflattrunc} this spacetime will be truncated in the region $\hat{v} \gg 1$ to generate an asymptotically flat spacetime; however, the underlying self-similar solution is defined on the whole of $\mathcal{Q}$.

The timelike curve $\{\hat{u} = \hat{v}, \hat{u} < 0\}$ will be denoted $\Gamma$, and coincides with the set $\{r(\hat{u}, \hat{v}) = 0\}$. An interesting question concerns the possible extensions to $\hat{u} > 0$, and the existence of a regular center and/or singularities in the extended spacetime. For details on extensions with a regular center with $\hat{u} > 0$, see \cite{chris2}. We will not discuss extensions further here.

The axis is generated by the vector field $\partial_{\hat{u}} + \partial_{\hat{v}}$, which will be useful for translating regularity conditions along $\Gamma$ to conditions on the coordinate derivatives. In particular, for a suitably regular solution the relation $\mathring{\lambda} + \mathring{\nu} = 0$ holds along the axis. We will in fact have that the solution is smooth in a $z$-neighborhood of $\Gamma$, and thus the natural regularity conditions will be applicable.

We have not yet exhausted the gauge freedom inherent to self-similar double null coordinates. A $1$-parameter scaling freedom 
\begin{equation*}
    \hat{u} \rightarrow a\hat{u}, \ \hat{v} \rightarrow a\hat{v}
\end{equation*}
remains, for any $a > 0$. Under this scaling, $\Omega^2(\hat{u},\hat{v})$ transforms as $\Omega^2 \rightarrow a^{-2}\Omega^2$, and so $a$ can be fixed in order to set $\mathring{\Omega}^2(-1) = 1$. Make this choice, and thereby fix the choice of self-similar coordinate system.

\subsection{The interior solution}
\label{subsec:A2}
The previous section explored the consequences of the self-similar ansatz, and identified a natural self-similar coordinate system spanned by $(\tilde{u},z).$ Using the expressions (\ref{eq:app5})-(\ref{eq:app11}) to rewrite the scalar field system in terms of self-similar coordinates, we arrive at the following proposition:
\begin{prop}
Under the assumption of $k$-self-similarity, (\ref{SSESF:1:1})-(\ref{SSESF:1:5}) formally implies the following system for the variables $(\mathring{r},\mathring{\nu},\mathring{\lambda},\mathring{\Omega},\mathring{\phi}')$. We use the notation $\frac{d}{dz} \mathring{\phi} \doteq \mathring{\phi}'$.
\begin{align}
    \label{eq:app12}
    \mathring{r}z\frac{d}{dz} \mathring{\lambda} &= -\mathring{\nu} \mathring{\lambda} - \frac{1}{4}\mathring{\Omega}^2, \\[.5em]
    \label{eq:app13}
    \mathring{r}\frac{d}{dz} \mathring{\nu} &= -\mathring{\nu} \mathring{\lambda} - \frac{1}{4}\mathring{\Omega}^2,\\[.5em]
    \label{eq:app14}
    2 \mathring{\Omega}^{-1} \mathring{\nu} z \frac{d}{dz}\mathring{\Omega} &= z\frac{d}{dz}\mathring{\nu}+\mathring{r}(z\mathring{\phi}'+k)^2,\\[.5em]
    \label{eq:app15}
    2\mathring{\Omega}^{-1}\mathring{\lambda}\frac{d}{dz}\mathring{\Omega} &= \frac{d}{dz}\mathring{\lambda} + \mathring{r} (\mathring{\phi}')^2,\\[.5em]
    \label{eq:app16}
    \mathring{r}z\frac{d}{dz}\mathring{\phi}' &= -2\mathring{\lambda}z \mathring{\phi}' - k\mathring{\lambda}.
\end{align}
\end{prop}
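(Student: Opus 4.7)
The plan is to substitute the self-similar ansatz (\ref{eq:app5})--(\ref{eq:app11}) directly into each equation of the system (\ref{SSESF:1:1})--(\ref{SSESF:1:5}), using the chain-rule identities (\ref{eq:app4}) to re-express $\partial_{\hat{u}}, \partial_{\hat{v}}$ in the $(\wt{u},z)$ coordinates. For each equation one computes the required mixed or second-order derivative, inserts the factorizations $r = -\wt{u}\mathring{r}(z)$, $\Omega^2 = \mathring{\Omega}^2(z)$, $\phi = \mathring{\phi}(z) - k\log(-\wt{u})$, and observes that every power of $\wt{u}$ cancels, leaving an identity purely between functions of $z$.

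More concretely, I would handle the wave equations (\ref{SSESF:1:1})--(\ref{SSESF:1:3}) first. For (\ref{SSESF:1:1}), since $\partial_{\hat v} r = \mathring{\lambda}(z)$, one gets $\partial_{\hat u}\partial_{\hat v} r = -\frac{z}{\wt u}\mathring{\lambda}'$, so multiplying by $r = -\wt u\mathring{r}$ yields the left-hand side $\mathring r z \mathring\lambda'$ and immediately gives (\ref{eq:app12}). The same pattern handles (\ref{eq:app13}) via $\partial_{\hat u}r = \mathring\nu$ and (\ref{SSESF:1:3}) once one computes $\partial_{\hat u}\partial_{\hat v}\phi = \wt u^{-2}(\mathring\phi' + z\mathring\phi'')$ and uses the algebraic identity (\ref{eq:appalgrel1}) to rewrite $\mathring\nu + z\mathring\lambda = 2z\mathring\lambda - \mathring r$; this produces (\ref{eq:app16}). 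The transport equations (\ref{SSESF:1:4})--(\ref{SSESF:1:5}) are treated similarly: one computes $\partial_{\hat u}\mathring\Omega = -(z/\wt u)\mathring\Omega'$ and $\partial_{\hat v}\mathring\Omega = -(1/\wt u)\mathring\Omega'$, multiplies through by the appropriate factor of $\Omega^{-1}\cdot\partial r$, and inserts $\partial_{\hat u}\phi = -(z\mathring\phi'+k)/\wt u$ and $\partial_{\hat v}\phi = -\mathring\phi'/\wt u$ in the right-hand sides, arriving at (\ref{eq:app14})--(\ref{eq:app15}) after multiplying through by $\wt u$.

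The wave equation (\ref{SSESF:1:2}) for $\log\Omega$ is subsumed by the above in the sense that it follows from (\ref{eq:app12})--(\ref{eq:app15}) via the usual compatibility of Raychaudhuri-type constraints with the wave system on $\mathcal{Q}$, so for a self-contained derivation one can either verify it directly by computing $\partial_{\hat u}\partial_{\hat v}\log\mathring\Omega(z) = -(z/\wt u^2)(\log\mathring\Omega)' - (z/\wt u^2)(\log\mathring\Omega)''$ and reducing, or skip it on account of this redundancy. I would take the second route and verify only four of the five equations, citing propagation of constraints.

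The main obstacle will be purely organizational rather than technical: namely, correctly tracking the powers of $\wt u$ arising from the chain rule and ensuring the algebraic constraint (\ref{eq:appalgrel1}) is used in the right places so that, e.g., the equation for $\mathring\phi''$ closes in terms of $\mathring r, \mathring\lambda, \mathring\nu$ without introducing auxiliary unknowns. Once this bookkeeping is carried out, each equation in (\ref{eq:app12})--(\ref{eq:app16}) is an identity obtained by a single substitution, and the proposition follows.
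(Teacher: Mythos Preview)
Your approach is correct and is exactly what the paper does: the proposition is stated as the outcome of substituting the self-similar expressions (\ref{eq:app5})--(\ref{eq:app11}) into (\ref{SSESF:1:1})--(\ref{SSESF:1:5}) via the chain-rule identities (\ref{eq:app4}), with no further argument given. The paper likewise omits the reduction of the $\log\Omega$ wave equation (\ref{SSESF:1:2}), noting immediately after the proposition that it ``is not included in the above set, and is not required for generating the autonomous system below,'' which matches your decision to skip it. One small typo: in your parenthetical expression for $\partial_{\hat u}\partial_{\hat v}\log\mathring\Omega$ the coefficients should be $\wt u^{-2}(\log\mathring\Omega)' + z\wt u^{-2}(\log\mathring\Omega)''$ rather than both carrying a factor $-z$, but since you elect not to use this computation it does not affect your argument.
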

The self-similar reduction of the wave equation for $\mathring{\Omega}$ is not included in the above set, and is not required for generating the autonomous system below. The strategy for constructing the interior solution will be to solve the ODE system (\ref{eq:app12})-(\ref{eq:app16}) with \enquote{initial data} along $\Gamma$, or equivalently, $\{z=-1\}$. It will follow from the reduction below that $\Gamma$ is a critical point for the ODE system, and therefore care will have to be taken when discussing local existence of solutions.

In addition to the condition $\mathring{r}(-1) = 0$, the gauge choice implies $\mathring{\Omega}(-1) = 1$. It therefore remains to identify $\mathring{\lambda}(-1), \mathring{\nu}(-1),$ and $\mathring{\phi}'(-1).$ The solution will contain a regular center, and thus the quantities $\mathring{\lambda}(-1), \ \mathring{\nu}(-1)$ are related along the axis by regularity requirements for $\mathring{r}, \mathring{\mu}$. In particular, we have
\begin{equation*}
    \mathring{\lambda}(-1) = - \mathring{\nu}(-1),
\end{equation*}
and
\begin{equation*}
   0  = \mathring{\mu}(-1) = 1 + \frac{4\mathring{\nu}(-1)\mathring{\lambda}(-1)}{\mathring{\Omega}^2(-1)}.
\end{equation*}
Taken together, these equations imply $\mathring{\nu}(-1) = -\frac{1}{2}, \ \mathring{\lambda}(-1) = \frac{1}{2}. $
Finally, regularity of $\frac{d}{dz}\mathring{\phi}'$ at $z=-1$ implies the right hand side of (\ref{eq:app16}) must vanish, i.e. $\mathring{\phi}'(-1) = \frac{k}{2}.$
In summary, the initial conditions are given by
\begin{equation}
    \mathring{r}(-1) = 0, \ \ \ \mathring{\Omega}(-1) = 1, \ \ \ \mathring{\nu}(-1) = -\frac{1}{2}, \ \ \ \mathring{\lambda}(-1) = \frac{1}{2}, \ \ \ \mathring{\phi}(-1) = \frac{k}{2}.
\end{equation}
\subsubsection*{Reduction to an autonomous system}
The full system (\ref{eq:app12})-(\ref{eq:app16}) is difficult to analyze in its present form. Thankfully, one may reduce the system to a pair of autonomous equations. The starting point for this reduction is an algebraic identity for $\mathring{\Omega}^2$, allowing its elimination from the system. 

\begin{lem}
The following algebraic identity holds:
\begin{equation}
    \label{eq:appalgrel2}
    \frac{1}{4}\mathring{\Omega}^2 = -\mathring{\nu}\mathring{\lambda} + \mathring{r}^2z(\mathring{\phi}')^2  + 2k z\lambda  \mathring{r} \mathring{\phi}'+ \mathring{r}\mathring{\lambda} k^2.
\end{equation}
\end{lem}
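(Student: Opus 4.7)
The strategy is to treat the two Raychaudhuri equations (\ref{eq:app14})--(\ref{eq:app15}) as a pair of distinct expressions for the single quantity $2z\mathring{\nu}\mathring{\lambda}\mathring{\Omega}^{-1}\,d\mathring{\Omega}/dz$; scaling them appropriately and equating produces an identity in which the $d\mathring{\Omega}/dz$ term is eliminated. Combining the resulting identity with the $\partial_u$ wave equation (\ref{eq:app13}) then yields an algebraic expression for $\mathring{\Omega}^2/4$, which after expanding the square and applying the algebraic relation (\ref{eq:appalgrel1}) reduces to (\ref{eq:appalgrel2}).

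The first step is to record the consistency relation
\begin{equation*}
z\,\frac{d\mathring{\lambda}}{dz} = \frac{d\mathring{\nu}}{dz},
\end{equation*}
obtained either by subtracting (\ref{eq:app12}) from (\ref{eq:app13}) (and dividing by $\mathring{r}$ away from the axis) or by differentiating (\ref{eq:appalgrel1}) with respect to $z$ and using $d\mathring{r}/dz = \mathring{\lambda}$. Multiplying (\ref{eq:app14}) by $\mathring{\lambda}$ and (\ref{eq:app15}) by $z\mathring{\nu}$ yields two equations whose left-hand sides are both $2z\mathring{\nu}\mathring{\lambda}\mathring{\Omega}^{-1}\,d\mathring{\Omega}/dz$. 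Equating the right-hand sides and using the consistency relation to convert $z\mathring{\nu}\,d\mathring{\lambda}/dz$ into $\mathring{\nu}\,d\mathring{\nu}/dz$ gives
\begin{equation*}
(z\mathring{\lambda} - \mathring{\nu})\frac{d\mathring{\nu}}{dz} = z\mathring{\nu}\mathring{r}(\mathring{\phi}')^2 - \mathring{\lambda}\mathring{r}(z\mathring{\phi}' + k)^2.
\end{equation*}
Invoking (\ref{eq:appalgrel1}) to identify $z\mathring{\lambda} - \mathring{\nu}$ with $\mathring{r}$ and cancelling a factor of $\mathring{r}$ produces
\begin{equation*}
\frac{d\mathring{\nu}}{dz} = z\mathring{\nu}(\mathring{\phi}')^2 - \mathring{\lambda}(z\mathring{\phi}' + k)^2.
\end{equation*}

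Substituting this into (\ref{eq:app13}) and solving for $\mathring{\Omega}^2/4$ gives
\begin{equation*}
\frac{\mathring{\Omega}^2}{4} = -\mathring{\nu}\mathring{\lambda} - z\mathring{r}\mathring{\nu}(\mathring{\phi}')^2 + \mathring{r}\mathring{\lambda}(z\mathring{\phi}' + k)^2.
\end{equation*}
Expanding $(z\mathring{\phi}' + k)^2 = z^2(\mathring{\phi}')^2 + 2kz\mathring{\phi}' + k^2$ and grouping the $(\mathring{\phi}')^2$ contributions yields a factor of $z\mathring{r}(z\mathring{\lambda} - \mathring{\nu})(\mathring{\phi}')^2$, which by (\ref{eq:appalgrel1}) equals $z\mathring{r}^2(\mathring{\phi}')^2$. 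The remaining cross term $2kz\mathring{r}\mathring{\lambda}\mathring{\phi}'$ and the $k^2$ term $\mathring{r}\mathring{\lambda}k^2$ already appear in the required form, delivering (\ref{eq:appalgrel2}).

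The only delicate point in the above argument is the division by $\mathring{r}$, which fails at the axis $\{z = -1\}$. This is not a genuine obstacle: under the prescribed initial conditions, both sides of (\ref{eq:appalgrel2}) are continuous at $z = -1$, so the identity extends there by continuity. Alternatively, one can verify directly at $z=-1$ using $\mathring{r}(-1)=0$, $\mathring{\Omega}(-1)=1$, $\mathring{\nu}(-1)=-\tfrac{1}{2}$, and $\mathring{\lambda}(-1)=\tfrac{1}{2}$ that both sides reduce to $\tfrac{1}{4}$.
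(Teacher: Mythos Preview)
Your proof is correct and follows essentially the same approach as the paper: eliminate $2\mathring{\Omega}^{-1}\,d\mathring{\Omega}/dz$ from the two Raychaudhuri equations (\ref{eq:app14})--(\ref{eq:app15}), then combine with the wave equations for $\mathring{r}$ and simplify via (\ref{eq:appalgrel1}). The only cosmetic difference is that the paper divides (\ref{eq:app14}) and (\ref{eq:app15}) by $z\mathring{\nu}$ and $\mathring{\lambda}$ respectively and then inserts both (\ref{eq:app12}) and (\ref{eq:app13}) directly, whereas you multiply and invoke the consistency relation $z\,d\mathring{\lambda}/dz = d\mathring{\nu}/dz$ before applying (\ref{eq:app13}) once; these are equivalent rearrangements of the same computation.
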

\begin{proof}
Comparing (\ref{eq:app14}) and (\ref{eq:app15}) and eliminating $2\mathring{\Omega}^{-1}\frac{d}{dz}\mathring{\Omega}$ gives the equality
\begin{equation*}
    \frac{1}{\mathring{\nu}}\frac{d}{dz}\mathring{\nu} + \frac{\mathring{r}}{z\mathring{\nu}}(z\mathring{\phi}'+k)^2 = \frac{1}{\mathring{\lambda}}\frac{d}{dz}\mathring{\lambda} + \frac{\mathring{r}}{\mathring{\lambda}}(\mathring{\phi}')^2.
\end{equation*}
Inserting (\ref{eq:app12}), (\ref{eq:app13}) and clearing denominators gives
\begin{equation*}
    (\mathring{\nu}\mathring{\lambda}+\frac{1}{4}\mathring{\Omega}^2)(z\mathring{\lambda}-\mathring{\nu}) = \mathring{r}^2 \mathring{\lambda}(z\mathring{\phi}'+k)^2 - \mathring{r}^2z\mathring{\nu}(\mathring{\phi}')^2.
\end{equation*}
Using (\ref{eq:appalgrel1}) and simplifying gives the result.
\end{proof}

Define the functions 
\begin{equation}
    \psi(z) \doteq \frac{\mathring{r}}{\mathring{r}+\mathring{\nu}},
    \ \ \ \theta(z) \doteq z \psi(z) \mathring{\phi}'(z).
\end{equation}
The goal is to reduce the above ODE system to one for the quantities $\psi(z), \ \theta(z)$. Differentiating these quantities and inserting (\ref{eq:app12})-(\ref{eq:app16}), (\ref{eq:appalgrel2}) gives
\begin{align*}
     \frac{d}{dz}\psi &= \frac{\lambda}{\mathring{r}+\mathring{\nu}} - \frac{\mathring{r}(\mathring{\lambda}+\frac{d}{dz}\mathring{\nu})}{(\mathring{r}+\mathring{\nu})^2} \\
    &= \frac{1}{z} - \frac{\psi \mathring{\lambda}}{\mathring{r}+\mathring{\nu}} + \frac{\psi(\mathring{r}z(\mathring{\phi}')^2 +2kz \lambda \mathring{\phi}' + \mathring{\lambda} k^2 )}{(\mathring{r}+\mathring{\nu})} \\
    &= \frac{1}{z}((\theta+k)^2+(1-k^2)(1-\psi)).
\end{align*}
Similarly,
\begin{align*}
    \frac{d}{dz}\theta &= \psi \mathring{\phi}' + z \mathring{\phi}'\frac{d}{dz}\psi  + z\psi \frac{d}{dz}\mathring{\phi}' \\
    &= \psi \mathring{\phi}'  + \mathring{\phi}'((\theta+k)^2+(1-k^2)(1-\psi)) - \psi \frac{1}{\mathring{r}}(2\mathring{\lambda}z\mathring{\phi}' + k\mathring{\lambda})\\
    &= \frac{1}{z\psi}(\psi\theta + \theta((\theta+k)^2+(1-k^2)(1-\psi)) -2 \theta  -k\psi ) \\
    &= \frac{1}{z\psi}(k\psi(k\theta-1) + \theta( (\theta+k)^2-(1+k^2))).
\end{align*}
To make contact with the ODE system considered by Christodoulou in \cite{chris2}, assume a solution for $\psi(z)$ is given on some interval $z \in [-1,z_f)$, with $z\psi(z) \geq 0$ holding for all $z \in (-1,z_f)$. Choose any $(z_0, s_0)$ with $z_0 \in (-1,z_f)$, \ $s_0 > -\infty$, and define the change of variables $s(z)$ by 
\begin{equation}
    \label{eq:app16.5}
    s(z) \doteq s_0- \int\limits_{z}^{z_0}\frac{1}{z'\psi(z')}dz'.
\end{equation}
We first claim that $\lim_{z\rightarrow -1^{+}} s(z) = -\infty$. The boundary condition for $\mathring{\lambda}$ implies $\lim_{z \rightarrow -1} \partial_z \mathring{r} = \frac{1}{2},$ and therefore in any suitably regular solution to the ODE system we must have
\begin{equation*}
    \mathring{r}(z) = \frac{1}{2}|z+1| + O(|z+1|^2).
\end{equation*}
In combination with the limit 
\begin{equation*}
    \lim_{z \rightarrow -1} \frac{\psi}{\mathring{r}} = \lim_{z \rightarrow -1} \frac{1}{\mathring{r}+\mathring{\nu}} = -2,
\end{equation*}
it follows that 
\begin{equation}
 \label{eq:app16.6}
    \psi(z) = -|z+1| + O(|z+1|^2).
\end{equation}
Inserting into (\ref{eq:app16.5}) shows that in a neighborhood of $z=-1$, 
\begin{equation}
    \label{eq:app16.7}
    s(z) = \log |z+1| + O_{z\rightarrow -1}(1),
\end{equation}
where $O_{z\rightarrow -1}(1)$ denotes terms that are bounded as $z \rightarrow -1$.

The system for $(\psi,\theta)$ takes an especially nice form in terms of the parameter $s$. The following autonomous system is identical to that in \cite{chris2}, with the functions $(\alpha,\theta)$ of that paper replacing $(\psi, \theta)$ here. 
\begin{align}
    \frac{d}{ds}\psi &= \psi( (\theta+k)^2 + (1-k^2)(1-\psi)), \label{eq:app17}\\
    \frac{d}{ds}\theta &= k\psi(k\theta -1) + \theta( (\theta+k)^2-(1+k^2)).\label{eq:app18}
\end{align}

In the remainder of this section we use results of \cite{chris2}, in which a careful study of the existence and long time behavior of solutions to (\ref{eq:app17})-(\ref{eq:app18}) is undertaken. It remains to translate these results on the solution $(\psi, \theta)$ into control on the complete set of double null unknowns. Two complications arise in directly using the analysis of \cite{chris2}. One issue is that the parameter $s$ may only be translated back to the original parameter $z$ through the coordinate transformation defined by (\ref{eq:app16.5}), which itself depends on the function $\psi$. Another difficulty is that the unknowns considered in \cite{chris2}, by virtue of the Bondi coordinate system used, are not directly comparable to the double null unknowns we need to estimate here. Still, the bulk of the hard analysis in the construction lies in analyzing (\ref{eq:app17})-(\ref{eq:app18}), and this analysis may be used as a black box.

Before turning to the results of the ODE analysis, we discuss the initial conditions for $\psi,\theta$ as $s \rightarrow -\infty$ (equivalently, as $z \rightarrow -1$). The data for $\mathring{r},\mathring{\nu}, \mathring{\phi}'$ implies 
\begin{equation*}
    \lim_{s\rightarrow -\infty} \psi(s) = 0, \ \ \  \lim_{s\rightarrow -\infty} \theta(z) = 0,
\end{equation*}
and therefore the axis data constitutes a critical point of the system (\ref{eq:app17})-(\ref{eq:app18}). The linearization analysis shows $(\psi,\theta) = (0,0)$ to be a saddle point with eigenvalues $\pm 1$. Moreover, local existence and uniqueness follows given a choice of the limit 
\begin{equation}
    \lim_{s\rightarrow -\infty} \psi e^{-s}.
\end{equation}
Comparing (\ref{eq:app16.6}), (\ref{eq:app16.7}) gives
\begin{equation}
    \label{eq:app19}
    \lim_{s\rightarrow -\infty} \psi e^{-s} = -1.
\end{equation}
The above computations are formal, and one may phrase the logic more rigorously as follows. Choose the boundary condition as in (\ref{eq:app19}), and once the local existence theory gives the existence of a function $\psi(s)$, defining the coordinate $z(s)$ and writing all functions in terms of $z$ will reproduce the desired asymptotics (\ref{eq:app16.6}).

We are now in a position to state results concerning the local and global ODE analysis of (\ref{eq:app17})-(\ref{eq:app18}).

\begin{prop}
\label{prop:appchrisintmain}
There exists an $s_* < \infty$, and a unique solution to the system (\ref{eq:app17})-(\ref{eq:app18}) with asymptotic initial condition (\ref{eq:app19}) on a parameter range $s \in (-\infty, s_*)$ such that the following statements hold:
\begin{enumerate}
    \item $\psi(s) < 0$, \ $\theta(s) > 0 $ on $(-\infty, s_*).$
    \item $\lim_{s\rightarrow s_*^-} \psi(s) = -\infty $, \ \ \ $\lim_{s\rightarrow s_*^-} \theta(s) = \frac{1}{k}$.
    \item $\psi(s), \ \theta(s)$ are bounded, smooth functions on $ [-\infty,s_0)$ for any fixed $s_0 < s_*$.
    \item In a neighborhood of $s = -\infty$, the solution admits the expansion
    \begin{equation}
        \label{eq:app19.5}
        \psi(s) = - e^s + O(e^{2s}),
    \end{equation}
    \begin{equation}
        \label{eq:app19.6}
        \theta(s) = \frac{k}{2}e^{s} + O(e^{2s}).
    \end{equation}
    \item There exists a nonzero constant $a_1$, and an explicit constant $c(k)$ such that as $s \rightarrow s_*^- $ the solution $(\psi, \theta)$ admits the expansion
    \begin{align}
        \label{eq:app20}
        \frac{1}{\psi(s)} &= (1-k^2)(s-s_*) + O((s-s_*)^2), \\ 
        \label{eq:app21}
        \theta(s) &= \frac{1}{k} + c(k)\frac{1}{\psi(s)} + a_1 (-(1-k^2)(s-s_*))^{\frac{k^2}{1-k^2}} + O((s-s_*)^{\frac{2k^2}{1-k^2}}).
    \end{align}
\end{enumerate}
\end{prop}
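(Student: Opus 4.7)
My plan is to treat (\ref{eq:app17})--(\ref{eq:app18}) as a dynamical system and establish the proposition via a local unstable-manifold analysis at the saddle $(\psi,\theta)=(0,0)$, a forward propagation argument, and a Fuchsian asymptotic expansion at the blow-up point. The Jacobian at $(0,0)$ is $\begin{pmatrix} 1 & 0 \\ -k & -1 \end{pmatrix}$, a hyperbolic saddle with eigenvalues $\pm 1$ whose unstable eigenspace is spanned by $(1,-k/2)$. The stable-manifold theorem thus yields a one-parameter family of incoming trajectories as $s\to-\infty$, differing only by time translation, and the normalization (\ref{eq:app19}) selects a unique representative. Writing $(\psi,\theta) = (-e^s,(k/2)e^s) + \text{remainder}$ and iterating the ODE in a function space decaying like $e^{2s}$ yields both local existence and the expansion (\ref{eq:app19.5})--(\ref{eq:app19.6}).

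Next I would propagate this trajectory forward and analyze its qualitative behavior. In any region where $\psi<0$ holds, (\ref{eq:app17}) gives $\dot\psi<0$ since $(\theta+k)^2 + (1-k^2)(1-\psi) > 0$, so $\psi$ is monotone decreasing; once $|\psi|$ becomes large the Riccati term $-(1-k^2)\psi^2$ in (\ref{eq:app17}) dominates, forcing finite-time blow-up at some $s_*<\infty$ and producing the leading behavior $1/\psi \sim (1-k^2)(s-s_*)$ of (\ref{eq:app20}) by a standard comparison argument. The assertion $\theta\to 1/k$ then follows from the structure of (\ref{eq:app18}) in the blow-up regime: linearizing about $\theta=1/k$ produces the strongly attracting coefficient $k^2\psi$, which as $\psi\to-\infty$ dominates the bounded source term $(1+k^2)/k^3$ and drives $\theta$ to the nullcline $\theta = 1/k + O(|\psi|^{-1})$.

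For the refined asymptotic (\ref{eq:app21}) I would introduce $\xi \doteq \theta - 1/k$ and $t\doteq s_*-s$, substitute the leading behavior of $\psi$, and Taylor-expand (\ref{eq:app18}) about $\theta=1/k$. The resulting equation takes the form
\begin{equation}
    \frac{d\xi}{dt} - \frac{k^2}{(1-k^2)\,t}\,\xi \;=\; -\frac{1+k^2}{k^3} + \mathrm{(subleading)},
\end{equation}
a first-order linear ODE with regular singular point at $t=0$ and indicial exponent $k^2/(1-k^2)$. Since $k^2\in(0,\tfrac{1}{3})$ yields $k^2/(1-k^2)\in(0,\tfrac{1}{2})$, this exponent is non-integer and no resonance with the analytic part occurs; solving via the integrating factor $t^{-k^2/(1-k^2)}$ produces a general solution of the form $\xi = -\tfrac{(1+k^2)(1-k^2)}{k^3(1-2k^2)}\,t + a_1\, t^{k^2/(1-k^2)} + O(t^{\min(2,\,2k^2/(1-k^2))})$. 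Rewriting $t$ as $-1/((1-k^2)\psi) + O(t^2)$ then recovers (\ref{eq:app21}) with $c(k) = (1+k^2)/(k^3(1-2k^2))$, and dictates the explicit constant appearing in the statement.

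The main obstacle is verifying $a_1\neq 0$ for the specific globally-defined trajectory prescribed at $s=-\infty$, since the Fuchsian analysis at $s_*$ admits trajectories with $a_1=0$ on a codimension-one submanifold. I would handle this by compactifying the system at blow-up via the coordinates $(P,\xi) = (1/\psi,\theta-1/k)$, in which the degenerate endpoint becomes a hyperbolic fixed point of a regularized flow whose strong-stable manifold coincides precisely with $\{a_1=0\}$. A shooting/transversality argument, tracking our trajectory from its explicit expansion at $-\infty$ and verifying that it does not lie on this strong-stable manifold, yields the non-vanishing; this step is where the restriction $k^2<\tfrac{1}{3}$ and the global phase-plane analysis of \cite{chris2} enter essentially, and it constitutes the hard content of the proof.
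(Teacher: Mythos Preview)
The paper does not actually prove this proposition: it is stated as a quotation of Christodoulou's ODE analysis in \cite{chris2}, used as a black box. The surrounding text is explicit that the appendix ``relies crucially on the self-similar analysis \ldots presented in \cite{chris2},'' and immediately after the reduction to the autonomous system the paper says it will ``use results of \cite{chris2}'' and then simply states the proposition. So there is no proof in the paper to compare against; your proposal is instead a sketch of how Christodoulou's argument might be reconstructed.

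As such a sketch, your outline is largely sound: the saddle analysis at $(0,0)$, the monotonicity $\dot\psi<0$ once $\psi<0$, the Riccati mechanism for finite-time blow-up of $\psi$, and the Fuchsian analysis producing the expansion (\ref{eq:app21}) with indicial exponent $k^2/(1-k^2)$ are all the right ingredients, and your computation of $c(k)$ is correct. Two points are underargued. First, your Riccati step presumes $\theta$ remains bounded on the whole forward trajectory, and your attraction-to-$1/k$ argument presupposes $\psi\to-\infty$; you need an independent a priori bound on $\theta$ (and the sign $\theta>0$) before the blow-up of $\psi$, which is a genuine piece of the phase-plane analysis rather than an afterthought. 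Second, as you correctly flag, the nonvanishing $a_1\neq 0$ for \emph{this particular} trajectory is the substantive global statement; your compactification/transversality plan is the right shape, but the actual verification is exactly the content of \cite{chris2} and is not something one can wave through. With those caveats, your proposal is a faithful roadmap to the result the paper is importing.
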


A first step in recovering $(\psi(z),\theta(z))$ is understanding the behavior of the change of coordinates $z(s)$. 
\begin{lem}
\label{lem:appchrisintmain}
The function $z(s): [-\infty, s_*) \rightarrow [-1, z(s_*))$ is smooth and increasing. Moreover $\lim_{s \rightarrow s_*}z(s) = 0$, and there exists a constant $c_1>0$ such that as $s \rightarrow s_*^-$,
\begin{equation}
    \label{eq:app22}
    |s-s_*| = (c_1+ o_{z \rightarrow 0}(1))|z|^{1-k^2}.
\end{equation}
\end{lem}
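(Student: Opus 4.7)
The plan is to study the change of variables $s(z)$ defined by (\ref{eq:app16.5}) using the ODE that it satisfies, together with the precise endpoint expansions of $(\psi,\theta)$ from Proposition \ref{prop:appchrisintmain}. First I would establish the qualitative properties: on the parameter range of interest, Proposition \ref{prop:appchrisintmain} gives $\psi < 0$, while by construction the corresponding $z$ values lie in $(-1,0)$. Hence the integrand $\frac{1}{z'\psi(z')}$ in (\ref{eq:app16.5}) is strictly positive and smooth, so $s(z)$ is smooth and strictly monotone; the inverse function theorem then gives smoothness and monotonicity of $z(s)$ on $(-\infty, s_*)$.

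To extract the endpoint behavior, I would work with the ODE obtained by inverting (\ref{eq:app16.5}): differentiating gives $\frac{dz}{ds} = z\,\psi(s)$, or equivalently
\begin{equation*}
    \frac{d}{ds}\log|z(s)| \;=\; \psi(s).
\end{equation*}
The key input is the asymptotic (\ref{eq:app20}), which can be rearranged to read
\begin{equation*}
    \psi(s) \;=\; \frac{1}{(1-k^2)(s-s_*)} \;+\; O(1) \qquad \text{as } s \to s_*^-,
\end{equation*}
where the $O(1)$ remainder is integrable on any interval $(s_1, s_*)$. Integrating the displayed equation from a fixed base point $s_1 < s_*$ yields
\begin{equation*}
    \log|z(s)| - \log|z(s_1)| \;=\; \frac{1}{1-k^2}\,\log\!\frac{|s-s_*|}{|s_1-s_*|} \;+\; R(s),
\end{equation*}
with $R(s)$ converging to a finite limit $R_*$ as $s \to s_*^-$. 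Exponentiating and solving gives
\begin{equation*}
    |z(s)| \;=\; C_1 \, |s-s_*|^{\frac{1}{1-k^2}}(1+o(1)),
\end{equation*}
for a positive constant $C_1$. In particular $z(s) \to 0^-$ as $s \to s_*^-$, and raising to the power $1-k^2$ produces (\ref{eq:app22}) with $c_1 = C_1^{-(1-k^2)}$.

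The only delicate point is verifying that the $o_{z\to 0}(1)$ error in (\ref{eq:app22}) is indeed governed by the $z$ variable, which amounts to noting that $R(s) - R_*$ vanishes as $s \to s_*^-$, and that the map $s \mapsto z(s)$ just derived is a continuous bijection of one-sided neighborhoods of $s_*$ and $0$. I do not expect any serious obstacle in this lemma beyond carefully bookkeeping the remainders in the expansion (\ref{eq:app20}); the heavy ODE analysis has already been done in Proposition \ref{prop:appchrisintmain}, so here it is essentially a matter of integrating an explicit asymptotic.
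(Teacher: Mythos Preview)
Your approach is essentially the same as the paper's: both differentiate the defining relation to get $\frac{d}{ds}\log|z| = \psi(s)$, insert the expansion (\ref{eq:app20}) in the form $\psi(s) = \frac{1}{(1-k^2)(s-s_*)} + O(1)$, integrate, and exponentiate. The one place you are slightly looser than the paper is in asserting that ``by construction the corresponding $z$ values lie in $(-1,0)$''; the paper actually proves $z(s)<0$ for all $s<s_*$ by a short contradiction argument (if $z$ reached $0$ at some $s_1<s_*$, boundedness of $\psi$ on $[s_0,s_1]$ would force the defining integral to diverge). You can equally well justify it by noting that $z=0$ is a fixed point of $\frac{dz}{ds}=z\psi(s)$ and invoking ODE uniqueness, but some sentence to that effect should be included rather than left implicit.
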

\begin{proof}
By definition (\ref{eq:app16.5}) of the coordinate transformation, it follows that $z(s)$ is increasing wherever $z\psi(z) \geq 0$. As $\psi > 0$ for all $s < s_*$, it suffices to check that $z(s) < 0$ holds. Suppose by way of contradiction that there exists an $-\infty < s_1 < s_*$ with $z(s_1) = 0$. Fix any $s_0 < s_1$ finite. Applying (\ref{eq:app16.5}) between $(s_0, z(s_0))$ and $(s_1,0)$ gives
\begin{equation*}
s_1 = s_0 + \int\limits_{z_0}^{0}\frac{1}{z'\psi(z')}dz',
\end{equation*}
Since we assume $s_1 < s_*$, it follows that $|\psi|(z) \lesssim 1$ for all $z \in [z_0, z(s_0)]$. The above integral then diverges, a contradiction.

It thus follows that $z(s) < 0$ on $(-\infty,s_*)$, and that the map $z(s)$ is strictly increasing. Smoothness follows by differentiating (\ref{eq:app16.5}) and using the smoothness of $\psi$. 

It remains to consider the behavior of $z(s)$ as $s \rightarrow s_*$ (or $z \rightarrow 0$). The definition (\ref{eq:app16.5}) implies 
\begin{equation*}
    \frac{dz}{ds} = z\psi(z),
\end{equation*}
or rewriting and inserting the asymptotics (\ref{eq:app20}) gives
\begin{align}
    \log |\frac{z}{z_0}| &= \int\limits_{s_0}^{s(z)}\psi(s')ds' \nonumber\\
    &= (1-k^2)^{-1}\log |\frac{s(z)-s_*}{s_0-s_*}| + O_{z \rightarrow 0}(1).
\end{align}
Re-arranging gives (\ref{eq:app22}), which moreover implies $\lim_{s \rightarrow s_*}z(s) = 0$.
\end{proof}

The breakdown of the solution at $z=0$ physically corresponds to the solution arriving at the past light cone of the singular point $(\hat{u},\hat{v}) = (0,0)$. The region $-1 \leq z < 0$ on which the solution is defined is termed the interior of the naked singularity.

In the following sections the global properties of this interior solution are explored, with special attention paid to the consequences of (\ref{eq:app20})-(\ref{eq:app21}).

\subsubsection{Regular coordinates and global bounds}
\label{appa:regularcoords}
The goal is now to derive quantitative bounds on all the double null unknowns. As mentioned above, the main difficulties lie in 1) understanding rates as a function of the more natural coordinate $z$, and 2) converting bounds on $(\psi, \theta)$ into bounds on the full set of double null unknowns. 

Away from $\{s = s_*\},$ or equivalently $\{z=0\}$, Proposition \ref{prop:appchrisintmain} asserts that $(\psi,\theta)$ are smooth functions of $s$ (and by Lemma \ref{lem:appchrisintmain}, as functions of $z$). On any compact $z$ subinterval of $[-1,0)$ bounds will directly follow. It is therefore natural to start by considering the behavior as $z \rightarrow 0$. 

The following lemma is a restatement of (\ref{eq:app20})-(\ref{eq:app21}) in terms of $z$, applying the change of variables (\ref{eq:app22}):
\begin{lem}
As $z \rightarrow 0^-$, the following expansions hold for nonzero constants $c_1, c_2$:
\begin{equation}
    \label{eq:app23}
    \frac{1}{\psi(z)} = (c_1+o_{z \rightarrow 0}(1))|z|^{1-k^2} + O(|z|^{2(1-k^2)}),
\end{equation}
\begin{equation}
    \label{eq:app24}
    \theta(z) = \frac{1}{k} - (c_2+o_{z \rightarrow 0}(1))|z|^{k^2} + O(|z|^{2k^2}).
\end{equation}
\end{lem}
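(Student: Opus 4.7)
The plan is to prove both expansions by direct substitution of the $s$-asymptotics from Proposition~\ref{prop:appchrisintmain}(5) into the change-of-variables formula (\ref{eq:app22}) from Lemma~\ref{lem:appchrisintmain}. No new ODE analysis is required; the content is purely asymptotic bookkeeping.

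First, I would handle the expansion for $1/\psi$. Since $\psi(s)<0$ on $(-\infty,s_*)$ with $\psi(s)\to -\infty$ as $s\to s_*^-$, we have $s-s_* = -|s-s_*|$ in this regime. Substituting (\ref{eq:app22}) into (\ref{eq:app20}) gives
\[
\frac{1}{\psi(z)} = -(1-k^2)(c_1+o_{z\to 0}(1))|z|^{1-k^2} + O(|z|^{2(1-k^2)}),
\]
where the $O((s-s_*)^2)$ term becomes $O(|z|^{2(1-k^2)})$ because (\ref{eq:app22}) is sharp up to the $o(1)$ correction. After redefining the constant $c_1$ to absorb the factor $-(1-k^2)$ this is precisely (\ref{eq:app23}).

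Next, for $\theta$, I would substitute (\ref{eq:app22}) into each of the three nontrivial terms on the right side of (\ref{eq:app21}). The middle term $c(k)/\psi(s)$ is $O(|z|^{1-k^2})$ by the result just established. The $a_1$ term becomes
\[
a_1\bigl(-(1-k^2)(s-s_*)\bigr)^{\frac{k^2}{1-k^2}} = a_1\bigl((1-k^2)(c_1+o(1))|z|^{1-k^2}\bigr)^{\frac{k^2}{1-k^2}} = \bigl(\tilde{c}+o_{z\to 0}(1)\bigr)|z|^{k^2},
\]
for an appropriate constant $\tilde c$, and the $O((s-s_*)^{2k^2/(1-k^2)})$ remainder becomes $O(|z|^{2k^2})$. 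The key elementary check here is the hierarchy of powers: since $k^2\in(0,\tfrac13)$, we have $1-k^2 > 2k^2 > k^2$, so $|z|^{1-k^2} = o(|z|^{2k^2})$ and the $c(k)/\psi$ contribution can be absorbed into the $O(|z|^{2k^2})$ remainder. Combining, and choosing $c_2$ with the appropriate sign to match the form of (\ref{eq:app24}), yields the claimed expansion.

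I do not anticipate any serious obstacle: the only genuinely subtle point is verifying that the subrange $k^2\in(0,\tfrac13)$ guarantees $1-k^2>2k^2$, which is what licenses absorbing the $c(k)/\psi$ contribution into the error term and thereby isolates the Hölder-type leading correction $|z|^{k^2}$. This power-counting is precisely what encodes the sharp $C^{1,k^2/(1-k^2)}$ regularity of the background scalar field across the past similarity horizon, cf.\ Theorem~\ref{thm:christodoulou_solutions}(2).
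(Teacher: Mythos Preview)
Your proposal is correct and matches the paper's approach exactly: the paper presents this lemma as a direct restatement of (\ref{eq:app20})--(\ref{eq:app21}) via the change of variables (\ref{eq:app22}), without writing out a proof. Your added observation that $k^2\in(0,\tfrac13)$ forces $1-k^2>2k^2$, so that the $c(k)/\psi$ contribution is absorbed into the $O(|z|^{2k^2})$ remainder, is precisely the detail the paper leaves implicit.
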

An immediate consequence of these expansions is the following set of bounds for the double null unknowns as $z \rightarrow 0$.
\begin{lem}
\label{lem:appblowuprates}
In a neighborhood of $z=0$, the following asymptotics hold:
\begin{alignat}{2}
\label{eq:app25}
    \mathring{r} &\sim 1, \quad \mathring{\lambda} \sim |z|^{-k^2}, \quad  (-\mathring{\nu}) \sim 1,  \\[.6em]
    \label{eq:app27}
    |\mathring{\phi}| &\lesssim 1, \quad |\mathring{\phi}'| \sim |z|^{-k^2}, \quad  |\mathring{\phi}''| \sim |z|^{-1-k^2}, \\[.6em]
    \label{eq:app26}
    \mathring{\Omega}^2 &\sim |z|^{-k^2}, \quad  0 < \mathring{\mu} < 1.
\end{alignat}
Moreover, the following identities hold:
\begin{equation}
 \label{eq:app28}
    \mathring{\nu}(0) = -\mathring{r}(0),
\end{equation}
\begin{equation}
 \label{eq:app29}
    \mathring{\mu}(0) = \frac{k^2}{1+k^2}.
\end{equation}
\end{lem}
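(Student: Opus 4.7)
The plan is to derive all the stated asymptotics from the two expansions (\ref{eq:app23})--(\ref{eq:app24}) together with the algebraic relations $\mathring{r}+\mathring{\nu} = z\mathring{\lambda}$, $\psi = \mathring{r}/(z\mathring{\lambda})$, and $\theta = z\psi\mathring{\phi}'$, and with Proposition \ref{prop:appchrisintmain}. The first observation is that $\mathring{r}$ itself obeys an extremely simple ODE in the $s$-variable: from $\mathring{\lambda} = d\mathring{r}/dz$ and $dz/ds = z\psi$ one computes
\begin{equation*}
    \frac{d\mathring{r}}{ds} = \mathring{\lambda}\cdot z\psi = \mathring{r},
\end{equation*}
so $\mathring{r}(s) = Ce^{s}$ for some $C>0$ determined by the axis data. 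Hence $\mathring{r}$ extends continuously to $s=s_*$ with a finite positive limit, giving $\mathring{r}\sim 1$ and $\mathring{r}(0)=Ce^{s_*}$. Then $\mathring{\lambda} = \mathring{r}/(z\psi)$ combined with (\ref{eq:app23}) yields $|\mathring{\lambda}|\sim |z|^{-k^2}$, while the algebraic identity $\mathring{\nu} = z\mathring{\lambda}-\mathring{r}$ gives $\mathring{\nu}(0) = -\mathring{r}(0)$ and $(-\mathring{\nu})\sim 1$. This handles (\ref{eq:app25}) and (\ref{eq:app28}).

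For the scalar field quantities, I would start from $\mathring{\phi}' = \theta/(z\psi)$. Since $\theta\to 1/k$ and $z\psi\sim -|z|^{k^2}/c_1$ by (\ref{eq:app23})--(\ref{eq:app24}), we get $|\mathring{\phi}'|\sim |z|^{-k^2}$, and integrating from $z=-1$ gives $|\mathring{\phi}|\lesssim 1$ (the integral converges because $k^2<1$). For the second derivative I would use the reduced wave equation (\ref{eq:app16}),
\begin{equation*}
    \frac{d\mathring{\phi}'}{dz} = -\frac{\mathring{\lambda}(2z\mathring{\phi}'+k)}{\mathring{r}z},
\end{equation*}
together with the fact that $z\mathring{\phi}' = \theta/\psi \to 0$, so $2z\mathring{\phi}'+k\to k\neq 0$; the rates for $\mathring{\lambda}$, $\mathring{r}$ then yield $|\mathring{\phi}''|\sim |z|^{-1-k^2}$.

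The last and most delicate step is the computation of $\mathring{\Omega}^2$ and $\mathring{\mu}$ at $z=0$ from the algebraic relation (\ref{eq:appalgrel2}). Rewriting $-\mathring{\nu}\mathring{\lambda} = -z\mathring{\lambda}^2 + \mathring{r}\mathring{\lambda}$ using (\ref{eq:appalgrel1}), one obtains
\begin{equation*}
    \tfrac{1}{4}\mathring{\Omega}^2 = (1+k^2)\mathring{r}\mathring{\lambda} - z\mathring{\lambda}^2 + \mathring{r}^2 z(\mathring{\phi}')^2 + 2kz\mathring{\lambda}\mathring{r}\mathring{\phi}'.
\end{equation*}
Inspection of the rates above shows that only the first term contributes at leading order $|z|^{-k^2}$, while the remaining three terms are $O(|z|^{1-2k^2})$ and hence lower order since $k^2<\frac{1}{3}<1$. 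This immediately yields $\mathring{\Omega}^2\sim |z|^{-k^2}$. Substituting into $\mathring{\mu} = 1 + 4\mathring{\lambda}\mathring{\nu}/\mathring{\Omega}^2$ and using $\mathring{\nu}(0) = -\mathring{r}(0)$, the leading-order cancellations yield
\begin{equation*}
    \mathring{\mu}(0) = 1 - \frac{1}{1+k^2} = \frac{k^2}{1+k^2},
\end{equation*}
which is strictly between $0$ and $1$. The global bound $0<\mathring{\mu}<1$ on $(-1,0)$ then follows from continuity, positivity of $\mathring{\Omega}^2$, and the absence of trapped surfaces propagated by the ODE system of Proposition \ref{prop:appchrisintmain}.

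The main obstacle I anticipate is not any single computation but keeping careful track of signs and constants through the chain $(\psi,\theta) \mapsto (\mathring{r},\mathring{\lambda},\mathring{\nu}) \mapsto (\mathring{\phi}',\mathring{\Omega}^2,\mathring{\mu})$, especially in extracting the correct leading-order cancellation that produces the clean value $k^2/(1+k^2)$ for $\mathring{\mu}(0)$. The rest is bookkeeping once one recognizes the miraculous identity $d\mathring{r}/ds = \mathring{r}$ that trivializes the behavior of $\mathring{r}$.
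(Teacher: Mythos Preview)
Your proof is correct and follows the same route as the paper --- extracting all the asymptotics from (\ref{eq:app23})--(\ref{eq:app24}), the algebraic relation (\ref{eq:appalgrel1}), the reduced wave equation (\ref{eq:app16}), and the identity (\ref{eq:appalgrel2}) for $\mathring{\Omega}^2$ and $\mathring{\mu}$. Your observation $d\mathring{r}/ds = \mathring{r}$ (hence $\mathring{r}(s)=Ce^{s}$) is a slightly cleaner way to obtain $\mathring{r}\sim 1$ than the paper's direct integration of $\partial_z\log\mathring{r}\sim|z|^{-k^2}$, and your rearrangement of (\ref{eq:appalgrel2}) to leading term $(1+k^2)\mathring{r}\mathring{\lambda}$ is equivalent to the paper's form $-(1+k^2)\mathring{\nu}\mathring{\lambda}$ at $z=0$; otherwise the two arguments coincide.
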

\begin{proof}
Begin with $\mathring{r}$, $\mathring{\lambda}$, and $\mathring{\nu}$. By definition of $\psi(z)$ and (\ref{eq:app23})-(\ref{eq:app24}), it follows that
\begin{equation*}
    \frac{1}{\psi(z)} = \frac{z \partial_z \mathring{r}}{\mathring{r}} \sim |z|^{1-k^2},
\end{equation*}
and therefore
\begin{equation}
    \label{eq:appatemp1}
    \partial_z \log \mathring{r} \sim |z|^{-k^2},
\end{equation}
Integrating this approximate equality from a reference point $z_0 < 0$ yields $\mathring{r} \sim 1$ for $z$ close to $0$. Inserting this result into (\ref{eq:appatemp1}) gives the stated bound on $\mathring{\lambda} = \partial_z \mathring{r}$. The algebraic relationship (\ref{eq:appalgrel1}) combined with the rates for $\mathring{\lambda}, \mathring{r}$ gives the rate for $\mathring{\nu}$. Equation (\ref{eq:app28}) also follows.

\vspace{5pt}
\noindent
Now we turn to $\mathring{\phi}'$. Unpacking definitions of $\theta, \psi$ yields 
\begin{equation*}
    \mathring{\phi}' = \frac{\theta}{z\psi} = \frac{1}{k}(c_1+o_{z\rightarrow 0 }(1))|z|^{-k^2}.
\end{equation*}
Integrating this bound forward in $z$ from some reference $z_0 < 0$ gives a bound on $\mathring{\phi}$. To estimate $\mathring{\phi}''$, use the equation
\begin{equation*}
    \mathring{\phi}'' = -\frac{\mathring{\lambda}}{\mathring{r}z}(z\mathring{\phi}' + k).
\end{equation*}
The term in parentheses can be estimated as $k + o_{z\rightarrow 0 }(1)$, and the preceding factor is $\approx |z|^{-1-k^2}$. Combining these statements gives the estimate for $\mathring{\phi}''$.

\vspace{5pt}
\noindent
To estimate $\mathring{\Omega}$, recall the algebraic identity 
$$\frac{1}{4}\mathring{\Omega}^2 = -\mathring{\nu}\mathring{\lambda} + \mathring{r}^2z(\mathring{\phi}')^2  + 2k z\lambda  \mathring{r} \mathring{\phi}'+ \mathring{r}\mathring{\lambda} k^2.$$
The individual terms are of order $|z|^{-k^2}, |z|^{1-2k^2}, |z|^{1-2k^2}, |z|^{-k^2}$ respectively. The $|z|^{-k^2}$ terms have the same sign, and therefore do not cancel. Moreover, $k^2 < 1$ implies that the leading order behavior is determined by the $|z|^{-k^2}$ terms, which gives the result.

\vspace{2pt}
\noindent
Noting $\mathring{r} = -\mathring{\nu} + O(|z|^{1-k^2})$, we in fact have the more precise statement
\begin{equation*}
    \frac{1}{4}\mathring{\Omega}^2 = -(1+k^2)\mathring{\nu}\mathring{\lambda} + O(|z|^{1-2k^2}).
\end{equation*}
We conclude
\begin{equation*}
    \mathring{\mu} = 1+ \frac{4\mathring{\nu}\mathring{\lambda}}{\mathring{\Omega}^2} = 1 - \frac{1}{1+k^2} + O(|z|^{1-3k^2}) = \frac{k^2}{1+k^2}+O(|z|^{1-3k^2}).
\end{equation*}
The range of $k$ considered here implies $1-3k^2 > 0$, giving the remaining statements of the lemma.
\end{proof}
The rates proved in the previous lemma for $\mathring{\lambda}, \mathring{\Omega}^2, \mathring{\phi}',$ and $\mathring{\phi}''$ imply that these quantities blow up as $z \rightarrow 0$. Recalling the definition of the coordinate $z$, the blowup implies that along these unknowns do not have regular limits as $\hat{v} \rightarrow 0$, even \textit{away} from the singular point. This blowup should be distinguished from blowup near $(\hat{u},\hat{v}) = (0,0)$, corresponding to the presence of the singularity.

This discussion suggests the self-similar double null coordinate system becomes irregular as $\hat{v} \rightarrow 0$. While it is possible to work with a restricted set of double null unknowns that do have regular limits (e.g. quantities tangential to $\{\hat{v}=0\}$, alongside quantities weighted by suitable powers of $\mathring{\Omega}^2$), it is technically easier here to change coordinates into a regular double null gauge. The price to pay will be the symmetry of $\hat{u}$ and $\hat{v}$ built into the definition of the self-similar coordinate $z$.

Define a renormalized, non-self-similar double null coordinate pair $(u,v)$ by
\begin{equation}
    (u,v) = (\hat{u}, -|\hat{v}|^{1-k^2}).
\end{equation}
In moving between coordinate systems the $u$ coordinate derivative is unchanged, whereas 
\begin{equation}
\label{eq:app30}
\frac{\partial}{\partial v} = \frac{|\hat{v}|^{k^2}}{1-k^2}  \frac{\partial}{\partial \hat{v}}.  
\end{equation}
The axis $\Gamma$ becomes the set $\{u = -|v|^{\frac{1}{1-k^2}} \},$ with generator $$T = \partial_u + (1-k^2)|v|^{\frac{-k^2}{1-k^2}}\partial_{v}.$$
The conformal Killing field $S = \hat{u}\partial_{\hat{u}} + \hat{v}\partial_{\hat{v}}$ becomes $u\partial_u + (1-k^2)v \partial_{v}.$

The effect of defining $v$ is to introduce additional factors of $\hat{v}^{k^2}$ into the definition of the $\hat{v}$ derivative, and thereby compensate for blowup as $\hat{v} \rightarrow 0$. As mentioned above, however, one disadvantage of the formalism is that the natural self-similar coordinate $z = -\frac{\hat{v}}{\hat{u}}$ is replaced by the asymmetric $z = \frac{|v|^p}{u},$ where $p = (1-k^2)^{-1}.$

The next lemma shows that in the renormalized coordinate system all quantities at the first derivative level of the solution have finite limits as $v \rightarrow 0$. Moreover, the results apply not just in a neighborhood of $\{z = 0\}$, but globally in the interior region. Recall the notation $\mathcal{Q}^{(in)}$ for the interior region of the spacetime.
\begin{lem}
\label{lem:appinttranslatedestimates}
The solution in $\mathcal{Q}^{(in)}$ has the following properties:
\begin{itemize}
    \item $r(u,v) \geq 0$, and $r(u,v) > 0$ in $\mathcal{Q}^{(in)} \setminus \Gamma$.
    \item The renormalized coordinate derivatives $\nu \doteq \partial_u r, \ \lambda \doteq \partial_v r$ satisfy 
    \begin{equation}
        (-\nu) \sim 1, \ \ \ \lambda \sim |u|^{k^2}.
    \end{equation}
    \vspace{-.5em}
    \item Define the set $\mathcal{S}_{near} \doteq \mathcal{Q}^{(in)}\cap \{\frac{|v|}{|u|^{1-k^2}} \leq \frac{1}{2}\}$. Then 
    \vspace{-.5em}
    \begin{equation}
    r \lesssim |u| \ \text{in} \ \mathcal{Q}^{(in)}, \ \ \ r \sim |u| \ \text{in} \ \mathcal{S}_{near}.
\end{equation}
\vspace{-2em}
    \item There exists a $c_\mu < 1$ such that $\mu$ satisfies $0 \leq \mu \leq c_\mu < 1$. Moreover, the axis is regular in the following sense:
    \begin{equation}
        |\frac{\mu}{r^2}| \lesssim \frac{1}{|u|^2}.
    \end{equation}
    \vspace{-2em}
    \item The scalar field satisfies modified self-similar bounds 
    \begin{equation}
        |\partial_u \phi| \lesssim \frac{1}{|u|}, \ \ \ |\partial_v \phi| \lesssim \frac{1}{|u|^{1-k^2}}.
    \end{equation}
    \vspace{-2em}
    \item Along $\{v=0\}$, we have the identity
\begin{equation}
    \label{appa:temp2}
    \partial_u \phi(u,0) = \frac{k}{|u|}.
\end{equation}
\end{itemize}
\end{lem}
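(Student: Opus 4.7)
The plan is to translate the bounds already established in Lemma~\ref{lem:appblowuprates} and Proposition~\ref{prop:appchrisintmain} from the self-similar variable $z$ into the renormalized $(u,v)$ gauge, handling separately the regime near $\{v=0\}$ (where $|z|$ is small) and the regime near the axis $\Gamma$ (where $z\to -1$). The relevant coordinate identities are $|\hat{v}| = |v|^{p_1}$ and $|z| = s_1^{p_1}$ with $s_1 = |v|/|u|^{q_1}$, together with $\partial_u = \partial_{\hat{u}}$ and $\partial_v = (1-k^2)^{-1}|\hat{v}|^{k^2}\partial_{\hat{v}}$ from (\ref{eq:app30}); in $\mathcal{S}_{near}$ one has $|z|\leq (1/2)^{p_1}$ so $z$ stays bounded away from $-1$, while near $\Gamma$ one has $|\hat{v}|\sim |u|$ and $z$ stays bounded away from $0$.

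First I would handle the geometric quantities $r,\nu,\lambda$ by inserting the self-similar expressions $r = |u|\mathring{r}(z)$, $\nu = \mathring{\nu}(z)$, and $\lambda = (1-k^2)^{-1}|\hat{v}|^{k^2}\mathring{\lambda}(z)$. Smoothness and positivity of $\mathring{r}$ on $(-1,0]$ together with $\mathring{r}(-1)=0$ and $\mathring{r}(0)\sim 1$ yield $r\geq 0$, $r\lesssim|u|$ throughout $\mathcal{Q}^{(in)}$, and $r\sim|u|$ in $\mathcal{S}_{near}$. Uniform bounds on $\mathring{\nu}$ coming from the axis data and the near-zero expansion give $(-\nu)\sim 1$. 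For $\lambda$ the crucial observation is that the singular weight $|\hat{v}|^{k^2}$ exactly cancels the blowup $\mathring{\lambda}(z)\sim |z|^{-k^2}$ as $z\to 0$: indeed $|\hat{v}|^{k^2}|z|^{-k^2} = |v|^{p_1k^2}(|v|^{p_1}/|u|)^{-k^2} = |u|^{k^2}$, and near the axis one has $|\hat{v}|\sim|u|$ and $\mathring{\lambda}\sim 1$, yielding the same rate; hence $\lambda\sim|u|^{k^2}$ globally.

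The mass bounds come from $\mu = \mathring{\mu}(z)$. Continuity of $\mathring{\mu}$ on $[-1,0]$ with $\mathring{\mu}(-1)=0$ and $\mathring{\mu}(0) = k^2/(1+k^2) < 1$ from (\ref{eq:app29}) gives $0\leq \mu \leq c_\mu < 1$. The pointwise regularity $\mu/r^2\lesssim|u|^{-2}$ reduces to $\mathring{\mu}(z)\lesssim\mathring{r}(z)^2$, which is automatic away from $z=-1$; near the axis I would derive the quadratic vanishing $\mathring{\mu}(z) = O((z+1)^2)$ by integrating the Raychaudhuri-type relation (\ref{SSESF:2:5}) from $\Gamma$, using the axis expansion $\mathring{r}(z) = \tfrac{1}{2}|z+1| + O(|z+1|^2)$ to obtain $\mathring{m}(z) = O(|z+1|^3)$, whence $\mathring{\mu} = 2\mathring{m}/\mathring{r} = O(|z+1|^2)$. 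Establishing this axis rate rigorously is the main obstacle, since it requires going beyond the generic smoothness at the critical point $(\psi,\theta)=(0,0)$ and exploiting the regular-center structure of the scalar-field equations.

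Finally, the scalar-field bounds come from (\ref{eq:app10})-(\ref{eq:app11}), which give $\partial_u\phi = -u^{-1}(z\mathring{\phi}'(z)+k)$ and $\partial_v\phi = (1-k^2)^{-1}|\hat{v}|^{k^2}u^{-1}\mathring{\phi}'(z)$. The asymptotic $\mathring{\phi}'(z)\sim|z|^{-k^2}$ from Lemma~\ref{lem:appblowuprates} ensures $z\mathring{\phi}'(z)\to 0$ as $z\to 0$, so $z\mathring{\phi}'(z)+k$ is bounded on $[-1,0]$ and $|\partial_u\phi|\lesssim|u|^{-1}$; evaluating at $z=0$ and using $u<0$ gives $\partial_u\phi(u,0) = -k/u = k/|u|$, which is (\ref{appa:temp2}). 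For $\partial_v\phi$ the same weight cancellation used for $\lambda$ converts $|\hat{v}|^{k^2}\mathring{\phi}'(z)$ into a factor bounded by $|u|^{k^2}$, yielding $|\partial_v\phi|\lesssim|u|^{-(1-k^2)}$.
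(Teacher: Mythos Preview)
Your overall strategy coincides with the paper's: translate the self-similar bounds on $(\mathring{r},\mathring{\nu},\mathring{\lambda},\mathring{\phi}')$ into renormalized coordinates, split into the near-horizon and near-axis regimes, and observe the weight cancellation $|\hat{v}|^{k^2}|z|^{-k^2}=|u|^{k^2}$ for the $v$-derivatives. The scalar-field arguments and the identity~(\ref{appa:temp2}) are handled just as in the paper.

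There is one genuine difference and one gap. For the axis regularity $\mu/r^2\lesssim|u|^{-2}$, the paper does not integrate the mass equation; instead it uses the algebraic identity~(\ref{eq:appalgrel2}) to write
\[
\frac{\mu}{r^2}=4|u|^{-2}\mathring{\Omega}^{-2}\Big(z(\mathring{\phi}')^2+2k\mathring{\lambda}\,\frac{z\mathring{\phi}'+\tfrac{k}{2}}{\mathring{r}}\Big),
\]
and then bounds $\big(z\mathring{\phi}'+\tfrac{k}{2}\big)/\mathring{r}=\big(\tfrac{\theta}{\psi}+\tfrac{k}{2}\big)/|\psi|$ directly from the axis expansions (\ref{eq:app19.5})--(\ref{eq:app19.6}). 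Your integration of (\ref{SSESF:2:5}) to get $\mathring{m}=O(|z+1|^3)$ is also valid and equally elementary; it just requires the same input (boundedness of $\mathring{\phi}'$ near $z=-1$), so neither approach is preferable.

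The gap is in your argument for $0\leq\mu\leq c_\mu<1$. Continuity of $\mathring{\mu}$ on $[-1,0]$ together with its endpoint values does \emph{not} bound $\mathring{\mu}$ away from~$1$ at interior points. The paper obtains this by first establishing $\mathring{\Omega}^2\sim|z|^{-k^2}$ (via (\ref{eq:appalgrel2}) near $z=0$ and by integrating the ODE for $\log\mathring{\Omega}$ on compact $z$-subintervals), and then using the definitional formula $\mu=1+4\mathring{\nu}\mathring{\lambda}/\mathring{\Omega}^2$ together with the two-sided bounds on $\mathring{\nu},\mathring{\lambda},\mathring{\Omega}$. You have omitted the $\Omega$ analysis entirely; without it, neither the continuity of $\mathring{\mu}$ on the full interval nor the strict upper bound follows. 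Similarly, the paper extracts the lower bound on $(-\mathring{\nu})$ in the intermediate region from $\mathring{\nu}=\mathring{r}(\psi^{-1}-1)$ and the sign and finiteness of $\psi$, which your sketch does not make explicit.
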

\begin{proof}
Start with $r = -\tilde{u} \mathring{r}$, which is a coordinate independent quantity. Let $\mathring{\lambda} = \partial_{\mathring{v}} r$ be associated to the self-similar coordinate system. To show non-negativity of $r$ it suffices to show $\mathring{\lambda}$ is always non-negative. In fact, suppose $\mathring{\lambda}(z_0) = 0$ for some $z \in (-1,0)$. The endpoints may be ruled out by considering initial conditions along $z = -1$, and the blowup $\lim_{z \rightarrow 0}\mathring{\lambda}(z) = \infty$.

Without loss of generality, choose $z_0$ to be the first point in $(-1,0)$ at which $\mathring{\lambda}(z_0)$ vanishes. Therefore, for all $z < z_0$ we have $\mathring{\lambda} > 0$. But then it follows that $\mathring{r}$ is a strictly increasing function on $[-1,z_0),$ and $\mathring{r}(z_0) > 0$. If $\mathring{\lambda}(z_0)=0$, then $\psi(z_0) = \frac{\mathring{r}(z_0)}{z_0\mathring{\lambda}(0)} = -\infty$, contradicting the statement that $|\psi|(z) < \infty$ for all $z < 0$. 

This argument shows $\mathring{\lambda}(z) > 0$ for all $z \in [-1,0]$. Compactness of the interval implies $\mathring{\lambda}$ has a positive lower bound. This in turn implies $\mathring{r}(z) >0$ for $z > -1$, and $r = -u\mathring{r} > 0$ for $(u,v)$ not contained in the axis $\Gamma$.

To see the bound on the renormalized $\lambda$, note $\lambda = p|\hat{v}|^{k^2}\mathring{\lambda}.$ It is helpful to divide $\mathcal{Q}^{(in)}$ into the regions $\mathcal{S}_{near}$ and $\mathcal{S}_{far} \doteq \mathcal{Q}^{(in)} \setminus \mathcal{S}_{near}.$ In $\mathcal{S}_{far}$ we estimate $\mathring{\lambda} \sim 1$, and therefore $\lambda \sim |\hat{v}|^{k^2} \sim |u|^{k^2}$. Here we used the equivalence $u \sim \hat{v} \sim -|v|^p$, which holds in $\mathcal{S}_{far}$. In $\mathcal{S}_{near}$ one estimates $$\lambda \sim |\hat{v}|^{k^2}\mathring{\lambda} \sim  |\hat{v}|^{k^2} \Big(\frac{|u|}{|\hat{v}|}\Big)^{k^2} \sim |u|^{k^2}.$$

\vspace{5pt}
\noindent
To estimate $\Omega$, start with an interval $[-1,z_0] \subset [-1,0]$ with $z_0 < 0$. Note $\log \mathring{\Omega}$ satisfies the equation
\begin{equation*}
    \frac{d}{dz}\log \mathring{\Omega} = \frac{1}{2\mathring{\lambda}}\Big(-\frac{\mathring{\nu}\mathring{\lambda}+\frac{1}{4}\mathring{\Omega}^2}{\mathring{r}z}+\mathring{r}(\mathring{\phi}')^2\Big).
\end{equation*}
The algebraic identity (\ref{eq:appalgrel2}) implies $\mathring{\Omega}$ remains finite on $[-1,z_0]$, and the positive lower bound on $\mathring{\lambda}$ implies one may integrate the above and bound $\log \mathring{\Omega}$ on the interval. Therefore $\mathring{\Omega}$ obeys positive upper and lower bounds on $[-1,z_0]$. 

The lapse in renormalized coordinates satisfies $$\Omega^2(u,v) = g(\partial_u, \partial_v) = p|\hat{v}|^{k^2}g(\partial_u, \partial_{\hat{v}}) = p|v|^{pk^2}\hat{\Omega}^2(u,v).$$
For $z \in [-1,z_0]$, on which we have shown $\mathring{\Omega}^2 \sim 1$, it follows that 
\begin{equation*}
    \Omega^2 \sim v^{pk^2} \mathring{\Omega}^2 \sim |u|^{k^2}.
\end{equation*}

It remains to consider a neighborhood of $z=0$. But we have shown $\mathring{\Omega}^2 \sim |z|^{-k^2}$, and so $\Omega^2 \sim v^{pk^2}|z|^{-k^2} \sim |u|^{k^2}.$

The quantity $\nu$ is unaffected by the coordinate transformation, and therefore retains its good bound near $\{z=0\}$. In a neighborhood of the axis, the bound $(-\nu) \sim 1$ also follows. It remains to consider $\mathcal{S}_{far} \setminus \Gamma$. Positivity of $\mathring{r}$ and negativity of $\psi$ implies $\mathring{r} + \mathring{\nu} = \mathring{r}\psi^{-1}$ must be negative, and thus $\mathring{\nu} < - \mathring{r} \lesssim -1$.  If $\mathring{\nu}$ did not satisfy a negative lower bound on this region, it would follow from the boundedness of $\mathring{r}$ that $\psi(z) = 0$ for some $z \in (-1,0)$, a contradiction. Therefore $(-\nu) \sim 1$ must hold.

\vspace{5pt}
To conclude a bound on $\mu$, note the positive lower bounds on $\lambda, -\nu$, as well as the bound $\Omega^2 \sim 1,$ imply via (\ref{eq:defnofm}) that
\begin{equation*}
    \mu \leq c_\mu < 1,
\end{equation*}
for some constant $c_\mu$. To see $\mu \geq 0$, note that along the axis $m=0$ holds. The equation (\ref{SSESF:2:5}) implies $m(u,v) \geq 0$ in $\mathcal{Q}^{(in)}$, and using the non-negativity of $r$ the result follows.

In addition to boundedness of $\mu$, we wish to show regularity at the axis consistent with a $C^1$ solution. One manifestation of this regularity is the statement $\mu = O(r^2)$ as $r \rightarrow 0$, which we now prove. Write (\ref{eq:defnofm}) as 
\begin{equation*}
    \mu = 4\Omega^{-2}(\frac{1}{4}\Omega^2 + \lambda \nu) = 4\mathring{\Omega}^{-2}(\frac{1}{4}\mathring{\Omega}^2 + \mathring{\lambda} \mathring{\nu}),
\end{equation*}
where we have used that $\mu$ is a coordinate independent quantity to write it in terms of the self-similar coordinate system. Dividing by $r^2$ and using the identity (\ref{eq:appalgrel2}) yields
\begin{equation*}
    \frac{\mu}{r^2} = |u|^{-2}\frac{\mu}{\mathring{r}^2} = 4|u|^{-2}\mathring{\Omega}^{-2}(z(\mathring{\phi}')^2 + 2k\mathring{\lambda}\frac{z\mathring{\phi}'+\frac{k}{2}}{\mathring{r}}).
\end{equation*}
The result near the axis follows provided one can bound $\frac{z\mathring{\phi}'+\frac{k}{2}}{\mathring{r}}$ near $\Gamma$. But 
\begin{equation*}
    |\frac{z\mathring{\phi}'+\frac{k}{2}}{\mathring{r}}| \lesssim \frac{|\frac{\theta}{\psi}+\frac{k}{2}|}{|\psi|},
\end{equation*}
and (\ref{eq:app19.5})-(\ref{eq:app19.6}) imply that this ratio is bounded. The boundedness of $\frac{\mu}{r^2}$ has not yet been shown near $\{v=0\}$. However, once the relation $r \sim |u|$ is shown in $\mathcal{S}_{near},$ the bound on $\frac{\mu}{r^2}$ will directly follow from a bound on $\mu$. The quantity $\frac{\mu}{r^2}$ is only an additional measure of regularity near the axis.

To see the remaining bounds on $r$, recall the expression $r = |u|\mathring{r}$. Global bounds have already shown $\mathring{r} \lesssim 1$ on $\mathcal{Q}^{(in)},$ and therefore $r \lesssim |u|$ holds in the interior region. In $\mathcal{S}_{near}$ we additionally have $\mathring{r} \gtrsim 1$, from which the bound $r \gtrsim |u|$ follows.

\vspace{5pt}
\noindent
Finally we consider the scalar field. The key will be to first consider the behavior of $\partial_z \mathring{\phi}$, and then use the coordinate vector fields (\ref{eq:app4}) to extract information on $u, v$ derivatives. Recall $z\mathring{\phi}' = \frac{\theta}{\psi}.$ Applying the asymptotics (\ref{eq:app19.5})-(\ref{eq:app21}) shows that 
\begin{equation*}
    |z\mathring{\phi}'| \lesssim 1 \  \text{in} \ \mathcal{S}_{far},\ \text{and} \ |z\mathring{\phi}'| \sim |z|^{1-k^2}\  \text{in} \ \mathcal{S}_{near}.
\end{equation*}
Boundedness of $z\mathring{\phi}'$ near $\{z = -1\}$ is equivalent to boundedness of $\mathring{\phi}'$, and applying (\ref{eq:app4}), (\ref{eq:app30}) yields in $\mathcal{S}_{far}$,
\begin{equation*}
    |\partial_u \phi| \lesssim \frac{|z\mathring{\phi}'|}{|u|}+\frac{k}{|u|} \lesssim \frac{1}{|u|}, \ \ \ |\partial_v \phi| \lesssim |\hat{v}|^{k^2}\frac{|\mathring{\phi}'|}{|u|} \lesssim \frac{1}{|u|^{1-k^2}}.
\end{equation*}
In both inequalities the relations $u \sim \hat{v}$, or $|z| \sim 1$ were used. Also recall that $\phi = \mathring{\phi} - k\log(-u),$ which accounts for the terms in $\partial_u \phi$ not arising from $\mathring{\phi}$. In $\mathcal{S}_{near}$ one computes
\begin{equation*}
    |\partial_u \phi| \lesssim \frac{|z|}{|u|} |z|^{-k^2} + \frac{k}{|u|} \lesssim \frac{v}{|u|} + \frac{k}{|u|} \lesssim \frac{1}{|u|},
\end{equation*}
and 
\begin{equation*}
    |\partial_v \phi| \lesssim |\hat{v}|^{k^2}\frac{1}{|u|} |z|^{-k^2} \lesssim \frac{1}{|u|^{1-k^2}},
\end{equation*}
as desired. Note that along $z=0$, the only term surviving in the expansion of $\partial_u \phi$ is precisely $\frac{k}{|u|},$ giving (\ref{appa:temp2}).
\end{proof}

From the bounds of the previous lemma, one can immediately read off a breakdown in regularity at $(u,v) = (0,0)$. The equality (\ref{appa:temp2}) along $\{v=0\}$ shows $\partial_u \phi$ is not integrable in $u$, implying there can be no BV extension of the solution in a neighborhood of $(0,0)$, for which $(0,0)$ is a regular center. Note the strong scalar field growth is directly related to the parameter $k \neq 0$.

It remains to verify that the solution is BV to the past of $(0,0)$, completing the picture of a loss of regularity. We therefore turn to estimating higher order derivatives of the solution in renormalized coordinates. In fact, we will find that the solution is more regular than BV in a self-similar neighborhood of the axis, a fact that will be important for solutions considered in the body of the paper.

In the following, recall the definition $p = (1-k^2)^{-1}.$
\begin{lem}
The following higher order estimates hold in $\mathcal{Q}^{(in)}$, with all quantities defined with respect to renormalized double null coordinates.
\vspace{-.5em}
\begin{align}
   |\partial_u \nu| &\lesssim \frac{1}{|u|}, \ \ \   |\partial_v \lambda| \lesssim \frac{1}{|u|^{1-2k^2}}, \label{eq:intexttemp1}\\[.5em]
   |\partial_u^2 \phi| &\lesssim \frac{1}{|u|^2}, \ \ \  |\partial_v^2 \phi| \lesssim \frac{1}{|u||v|^{1-pk^2}}, \label{eq:intexttemp2}\\[.5em]
   |\partial_u ^2 \nu| &\lesssim \frac{1}{|u|^{2}}, \ \ \  |\partial_u^3 \phi| \lesssim \frac{1}{|u|^3}.
\end{align}
In $\mathcal{S}_{far}$, we have
\begin{equation}
    |\partial_v^2 \lambda| \lesssim \frac{1}{|u|^{2-3k^2}}, \ \ \ |\partial_v^3 \phi| \lesssim \frac{1}{|u|^{3-3k^2}}.
\end{equation}
In $\mathcal{S}_{near}$, the following lower bound holds:
\begin{equation}
\label{appa:temp3}
    |\partial_v^2 \phi| \gtrsim \frac{1}{|u||v|^{1-pk^2}}.
\end{equation}
\end{lem}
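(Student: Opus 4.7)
The plan is to write every unknown appearing in the lemma in terms of the self-similar profiles $\mathring{r}, \mathring{\lambda}, \mathring{\phi}', \mathring{\Omega}$ (equivalently $\psi, \theta$) evaluated at $z = |v|^p/u$, and to use the chain-rule identities $\partial_u z = -z/u$, $\partial_v z = -p|v|^{p-1}/u$ together with the asymptotic expansions of Proposition~\ref{prop:appchrisintmain} to extract the rates. Each unknown in renormalized coordinates takes the form $|u|^m |v|^n G_\Psi(z)$ for suitable $m,n \in \mathbb{R}$ and a function $G_\Psi$ which is smooth on $[-1, 0)$ by the ODE theory. Regularity at the critical point $z = -1$ ensures trivial smoothness there, so the entire analysis reduces to controlling $G_\Psi(z)$ and its $z$-derivatives as $z \to 0$.

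For the $u$-derivatives ($\partial_u \nu$, $\partial_u^2 \nu$, $\partial_u^2 \phi$, $\partial_u^3 \phi$), each application of $\partial_u$ contributes a factor of $|z|/|u|$ through the chain rule. I would iteratively apply (\ref{eq:app12})--(\ref{eq:app16}) to derive the crude bounds $|\mathring{\nu}^{(j)}(z)|, |(\mathring{\phi}')^{(j)}(z)| \lesssim |z|^{-k^2-j}$ near $z=0$, after which the estimates follow by power counting: each factor of $|z|/|u|$ times $|z|^{-k^2-j}$ produces $|z|^{1-k^2-j}/|u|$, and the bound $|z|^{1-k^2} = s_1 \leq 1$ throughout $\mathcal{Q}^{(in)}$ yields the claimed $|u|$-decay with no cancellation required.

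For $\partial_v \lambda$, the key structural fact is the renormalization $\lambda = |u|^{k^2} L(z)$ with $L(z) := p|z|^{k^2}\mathring{\lambda}(z)$: the coordinate change absorbs the leading $|z|^{-k^2}$ blowup of $\mathring{\lambda}$, leaving $L$ bounded on $[-1, 0]$. Using the ODE identity $\mathring{\lambda}' = -\mathring{\mu}\mathring{\Omega}^2/(4\mathring{r}z)$ together with the limits $\mathring{\mu}(0) = k^2/(1+k^2)$, $\mathring{\nu}(0) = -\mathring{r}(0)$, and the resulting identity $\mathring{\Omega}^2 = -4(1+k^2)\mathring{\nu}\mathring{\lambda}$ at $z=0$, I would check that the naive $1/|z|$ divergence in $L'(z)$ cancels algebraically, so that $L'$ is bounded on $[-1, 0]$. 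Then $\partial_v \lambda = p|u|^{2k^2-1}|z|^{k^2}L'(z)$ yields $|\partial_v\lambda| \lesssim |u|^{-1+2k^2}$. An analogous reduction $\partial_v \phi = F(z)/|v|$ with $F(z) := -p\theta(z)/\psi(z)$ bounded near $z=0$ recovers the first-order scalar field bound.

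The main obstacle is the estimate $|\partial_v^2\phi| \lesssim 1/(|u||v|^{1-pk^2})$. A direct computation gives $\partial_v^2\phi = F(z)/|v|^2 - pF'(z)|v|^{p-2}/u$, and naive bounds on the two terms yield the worse rate $1/(|v||u|^{1-k^2}) = s_1^{-1}|u|^{-2(1-k^2)}$. The sharper bound requires the exact cancellation of the leading $|z|^{1-k^2}$ contributions, which I would establish from the expansion $\theta/\psi = -(c_1/k)|z|^{1-k^2} + \tilde c\, a_1 |z| + O(|z|^{2(1-k^2)})$ obtained by substituting (\ref{eq:app20})--(\ref{eq:app21}) (using $s-s_* \sim -|z|^{1-k^2}$) into $\theta/\psi$. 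The leading coefficients cancel due to the algebraic identity $p(1-k^2)=1$, and the true leading behavior comes from the $a_1|z|$ term, giving $\partial_v^2\phi \sim a_1\, |v|^{p-2}/|u|$. Since $|v|^{p-2} = |v|^{pk^2-1}$, this matches the stated upper bound; moreover, $a_1 \neq 0$ (by Proposition~\ref{prop:appchrisintmain}) simultaneously produces the matching lower bound (\ref{appa:temp3}) in $\mathcal{S}_{near}$. The remaining estimates $|\partial_v^2\lambda|, |\partial_v^3\phi|$ in $\mathcal{S}_{far}$ are easier because $|z| \sim 1$ there, so singular $v$-weights can be freely traded for $|u|$-weights and no further cancellation is needed.
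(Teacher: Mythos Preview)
Your approach is essentially the same as the paper's: both rewrite all quantities in terms of the self-similar profiles, track the exact cancellations needed for $\partial_v\lambda$ and $\partial_v^2\phi$ using the expansions (\ref{eq:app20})--(\ref{eq:app21}), and identify $a_1\neq 0$ as the source of the lower bound (\ref{appa:temp3}). Your packaging via the renormalized functions $L(z)=p|z|^{k^2}\mathring{\lambda}$ and $F(z)=-p\theta/\psi$ is a clean way to organize those cancellations; the paper instead carries out the corresponding expansions term by term (see the computation leading to (\ref{eq:apptemp3}) and the identity $\mathring{\lambda}/\mathring{r}-k\mathring{\phi}'=\tfrac{k}{z\psi}(\tfrac1k-\theta)$).

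There is one slip in your power counting for the $u$-derivatives of $\nu$. The stated crude bound $|\mathring{\nu}^{(j)}|\lesssim |z|^{-k^2-j}$ is one power of $|z|$ too weak: with it, the generic term $z^{j}\mathring{\nu}^{(j)}/u^{n}$ in $\partial_u^{n}\nu$ is only controlled by $|z|^{-k^2}/|u|^{n}$, which blows up as $z\to 0$ and does not give $|\partial_u\nu|\lesssim|u|^{-1}$ or $|\partial_u^2\nu|\lesssim|u|^{-2}$. The point is that (\ref{eq:app13}) has no $1/z$ factor (in contrast to (\ref{eq:app12}) and (\ref{eq:app16})), so $\mathring{\nu}'\sim|z|^{-k^2}$, and inductively $|\mathring{\nu}^{(j)}|\lesssim|z|^{-k^2-(j-1)}$ for $j\geq 1$; equivalently, $\mathring{\nu}^{(j)}$ behaves like $(\mathring{\phi}')^{(j-1)}$ rather than $(\mathring{\phi}')^{(j)}$. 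With this corrected bound the power counting closes exactly as you describe. The paper records these sharper rates explicitly (the list beginning with $|\partial_z\mathring{\nu}|\lesssim \mathring{r}|z|^{-k^2}$) before converting to $u,v$ derivatives.
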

Note that despite our efforts to work with a regular double null coordinate system, the second order derivative $\partial_v^2 \phi$ still blows up as $v \rightarrow 0$. This blowup is more fundamental than that of $\partial_v \phi$ in the self-similar coordinates, and should be seen as a reflection of the inherently limited regularity of the solution. In particular, $\phi$ is only in the Hölder class $C^{1,pk^2}(\mathcal{Q}^{(in)})$.

The above lemma goes beyond bounds required for a statement of BV or $C^1$ regularity, additionally estimating various third derivatives of $r, \phi$. These bounds are needed for technical reasons in the proof of Theorem \ref{thm1:interior}. Roughly, they assert higher regularity for the quantities tangential to $\{v=0\}$ throughout $\mathcal{Q}^{(in)},$ and for remaining quantities, higher regularity is available near $\Gamma$. One could extract more precise rates for all higher order derivatives near $\{v=0\}$, although we do not pursue this question here.
\begin{proof}
Start with $\partial_u \nu$. Write $\partial_u \nu = \frac{z}{u}\partial_z \mathring{\nu}$ and insert (\ref{eq:app13}) to give
\begin{equation*}
    |\partial_u \nu| \lesssim \frac{|z|}{\mathring{r}|u|}|\mathring{\nu}\mathring{\lambda}+\frac{1}{4}\mathring{\Omega}^2| \lesssim |z|\mathring{\Omega}^2 \frac{\mu}{r}.
\end{equation*}
One can further estimate $|z|\mathring{\Omega}^2 \lesssim 1$ and $\frac{\mu}{r} \lesssim |u|^{-1},$ which implies the desired bound on $\partial_u \nu$. 

The bound on $\partial_u^2 \phi$ follows similarly, namely by writing the derivatives in terms of coordinate $\partial_z$ derivatives, and estimating (\ref{eq:app16}). See also the case of $\partial_v^2\phi$ below for a related calculation.

To estimate $\partial_v \lambda$, write 
\begin{align*}
   (1-k^2)^2 \partial_v \lambda &= |\hat{v}|^{k^2}\partial_{\hat{v}}( |\hat{v}|^{k^2} \mathring{\lambda})\\
    &= |\hat{v}|^{2k^2}\frac{1}{|u|}\partial_z \mathring{\lambda} - k^2|\hat{v}|^{2k^2-1}\mathring{\lambda} \\
    &= |\hat{v}|^{2k^2}\frac{1}{r z}(\mathring{\nu}\mathring{\lambda}+\frac{1}{4}\mathring{\Omega}^2) - k^2|\hat{v}|^{2k^2-1}\mathring{\lambda}\\
    &=  \frac{|\hat{v}|^{2k^2}}{|u|}(\mathring{r}(\mathring{\phi}')^2 + 2k\mathring{\lambda}\mathring{\phi}' + z^{-1}\mathring{\lambda}k^2)- k^2|\hat{v}|^{2k^2-1}\mathring{\lambda}\\
    &= \frac{|\hat{v}|^{2k^2}}{|u|}(\mathring{r}(\mathring{\phi}')^2 + 2k\mathring{\lambda}\mathring{\phi}').
\end{align*}
Note that the terms with the most singular behavior in $v$ dropped out. Estimating $\mathring{r} \lesssim 1,$ $\mathring{\phi}' \lesssim |z|^{-k^2},$ $\mathring{\lambda} \lesssim |z|^{-k^2}$ gives 
\begin{align*}
    |\partial_v \lambda| \lesssim \frac{|\hat{v}|^{2k^2}}{|u|}\Big(\frac{|\hat{v}|}{|u|} \Big)^{-2k^2} \lesssim \frac{1}{|u|^{1-2k^2}}.
\end{align*}
Next turn to $\partial_v^2 \phi$. It will again be necessary to track a cancellation in the most singular $v$ terms. Compute 
\begin{align*}
   (1-k^2)^2 \partial_v^2 \phi 
    &= \frac{|\hat{v}|^{2k^2}}{|u|^2}\partial_z \mathring{\phi}' - k^2\frac{|\hat{v}|^{2k^2-1}}{|u|}\mathring{\phi}' \\
    &=-\frac{|\hat{v}|^{2k^2}}{\mathring{r}z|u|^2}(2\mathring{\lambda}z\mathring{\phi}' + k\mathring{\lambda} )- k^2\frac{|\hat{v}|^{2k^2-1}}{|u|}\mathring{\phi}'. 
\end{align*}
We proceed differently self-similar neighborhoods of $\Gamma$ and of $\{v=0\}$.
Near $\Gamma$ the right hand side can be estimated in magnitude by $|u|^{2k^2-2}.$  Bounding the first term requires a bound on $\mathring{r}^{-1}(z\mathring{\phi}' + \frac{k}{2}),$ already discussed in the proof of Lemma \ref{lem:appinttranslatedestimates}. Near $\{v=0\}$, rewrite the final line in the chain of equalities as 
\begin{equation}
    \label{eq:apptemp3}
    -2\frac{|\hat{v}|^{2k^2}}{\mathring{r}|u|^2}\mathring{\lambda}\mathring{\phi}' + k\frac{|\hat{v}|^{2k^2-1}}{|u|}(\frac{\mathring{\lambda}}{\mathring{r}} - k\mathring{\phi}' ),
\end{equation}
and note that the first term is bounded by $|u|^{2k^2-2}$. Rewrite 
\begin{equation*}
    \frac{\mathring{\lambda}}{\mathring{r}} - k\mathring{\phi}' = \frac{k}{z\psi}(\frac{1}{k} -\theta) = c + o_{z\rightarrow 0}(1),
\end{equation*}
where we have inserted the asymptotics (\ref{eq:app23})-(\ref{eq:app24}). The constant $c$ is nonzero precisely because $a_1 \neq 0$, where $a_1$ is the coefficient in (\ref{eq:app21}). This implies the second term in (\ref{eq:apptemp3}) is bounded in magnitude near $\mathcal{S}_{near}$ by 
$$\frac{1}{|u||\hat{v}|^{1-2k^2}} \sim \frac{1}{|u||v|^{1-pk^2}}. $$
Taken together, these estimates yield the desired statement for $\partial_v^2 \phi$. In $\mathcal{S}_{near}$ the term exhibiting blowup in $v$ dominates, giving the lower bound (\ref{appa:temp3}).

\vspace{5pt}
\noindent
We finally turn to the third order bounds on $r, \phi$. Using (\ref{eq:app12})-(\ref{eq:app16}), we collect bounds on various $z$ derivatives here:
\begin{align*}
    |\partial_z \mathring{\nu}| &\lesssim \mathring{r}|z|^{-k^2}, \  \quad \quad |\partial_z \mathring{\lambda}| \lesssim \mathring{r}|z|^{-1-k^2}, \\[.3em]
    |\partial_z^2 \mathring{\nu}| &\lesssim |z|^{-1-k^2}, \ \ \ \quad |\partial_z^2 \mathring{\lambda}| \lesssim |z|^{-2-k^2}, \\[.3em]
    |\partial_z \mathring{\Omega}| &\lesssim \mathring{r} |z|^{-1-\frac{1}{2}k^2}, \ \ \ |\partial_z \mathring{\phi}'| \lesssim |z|^{-1-k^2}, \ \ \ 
    |\partial_z^2 \mathring{\phi}'| \lesssim |z|^{-2-k^2}.
\end{align*}
Some care is required to take higher $z$ derivatives of the solution near $\Gamma$, due to potential singular factors of $\mathring{r}^{-1}.$ General smoothness considerations for solutions of ODEs of this type near critical points, however, imply that these higher derivatives must remain bounded in a neighborhood of $\Gamma.$

It is now a simple matter to translate $z$ derivatives into the appropriate $u,v$ derivatives. Recall $\partial_u = \partial_{\hat{u}},$ and compute
\begin{align*}
    \partial_{\hat{u}}^2 \nu &= \frac{2z}{\tilde{u}^2}\partial_z \mathring{\nu} + \frac{z^2}{\tilde{u}^2}\partial_z^2 \nu,
\end{align*}
and thus 
\begin{equation*}
    |\partial_u^2 \nu| \lesssim \frac{1}{|u|^2}.
\end{equation*}
It was important that the net power of $z$ in each term is non-negative, and thus is bounded as $v \rightarrow 0$. We estimate $|z|^{a}\lesssim 1$ for any $a \geq 0$, and the bound then reduces to tracking powers of $|u|$. Similarly,
\begin{align*}
    \partial_u^3 \phi &= \partial_u^3 \mathring{\phi} + \frac{2k}{|u|^3},
\end{align*}
and expanding the derivative gives $|\partial_u^3 \mathring{\phi}| \lesssim |u|^{-3}$. The stated estimate follows. 

For $\partial_v^2 \lambda, \partial_v^3 \phi$ we only require an estimate in $\mathcal{S}_{far}$, where $z \sim -1$ and $u \sim \hat{v}$. We proceed as above, inserting the definition of the coordinate derivative $\partial_v$ and expanding. Here, however, we drop all powers of $z$, and only track the $|u|$ and $v$ weights appearing in the definition of the coordinate derivative. For example,
\begin{align*}
    (1-k^2)^3 \partial_v^2 \lambda &= |\hat{v}|^{k^2}\partial_{\hat{v}}(|\hat{v}|^{2k^2}\frac{1}{|u|}\partial_z \mathring{\lambda} -k^2 |\hat{v}|^{2k^2-1}\mathring{\lambda} ) \\
    &= |\hat{v}|^{3k^2}\frac{1}{|u|^2}\partial_z^2 \mathring{\lambda} -3k^2 |\hat{v}|^{3k^2-1}\frac{1}{|u|}\partial_z \mathring{\lambda} +k^2(2k^2-1)|\hat{v}|^{3k^2-2}\mathring{\lambda},
\end{align*}
where each term contributes a total weight $|u|^{-2+3k^2}$.

\end{proof}

\subsection{The exterior solution}
\label{subsec:appAext}
Proposition \ref{prop:appchrisintmain} implies that the breakdown of the interior solution as $s\rightarrow s_*$ is driven by the blowup $\psi(s) \rightarrow -\infty$. Equivalently, $\lim_{s\rightarrow s_*}\frac{1}{\psi}(s) \rightarrow 0^{-}$. This section concerns the existence of extensions into $z > 0$, i.e. solutions $(\psi(s),\theta(s))$ to (\ref{eq:app17})-(\ref{eq:app18}) with $\frac{1}{\psi}(s) > 0$. Such extensions are not unique, and for each $k$ a $1$-parameter family of extensions exists. While \cite{chris2} studies the whole set of possible extensions, we focus attention on those that are consistent with naked singularity formation. Recall that we limit to $k^2 \in (0, \frac{1}{3})$.

\begin{prop}
\label{prop:appchrisintmain2}
There exists a solution to the system (\ref{eq:app17})-(\ref{eq:app18}) on a parameter range $s \in (s_*, \infty)$ such that the following statements hold:
\begin{enumerate}
    \item $1 <\psi(s) < \infty,$ on $(s_*, \infty).$
    \item $\lim_{s\rightarrow \infty}\psi(s) = 1,$ \ \ \ $\lim_{s\rightarrow \infty}\theta(s) = -k.$
    \item For any compact subinterval $I \subset (s_*, \infty)$, the solution $(\psi, \theta)$ is bounded and smooth on $I.$
    \item There exists a nonzero $b_1$ and an explicit constant $c(k)$ such that as $s \rightarrow s_*^+$, the solution admits the expansion
    \begin{align}
        \label{eq:app31}
        \frac{1}{\psi}(s) &= (1-k^2)(s-s_*)+ O((s-s_*)^2), \\[.3em]
        \label{eq:app32}
        \theta(s) &= \frac{1}{k} + c(k)\frac{1}{\psi(s)} + b_1((1-k^2)(s-s_*))^{\frac{k^2}{1-k^2}} + O((s-s_*)^{\frac{2k^2}{1-k^2}}).
    \end{align}
    \item In a neighborhood of $s=\infty$, there are nonzero constants $b_2, b_3$ such that the solution admits the expansion
    \begin{align}
        \label{eq:app33}
        \psi(s) &= 1 + b_2e^{-(1-k^2)s} + O(e^{-2(1-k^2)s}), \\[.3em] 
        \label{eq:app34}
        \theta(s) &= -k + \Big(\frac{1+k^2}{k}\Big)b_2e^{-(1-k^2)s} +  b_3e^{-s} + O(e^{-2s}).
    \end{align}
\end{enumerate}
\end{prop}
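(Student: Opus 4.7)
The strategy is to exploit that all the statements concern the autonomous ODE system (\ref{eq:app17})--(\ref{eq:app18}), so the proposition reduces to a local analysis near two critical points and a trapping argument in between. Since $\psi$ blows up at $s=s_*$, the natural first move is to rewrite the system in terms of $(\zeta,\theta) \doteq (1/\psi,\theta)$. A direct computation gives
\begin{align*}
\frac{d\zeta}{ds} &= -\zeta\bigl((\theta+k)^2 + (1-k^2)(1-\tfrac{1}{\zeta})\bigr) = (1-k^2) - \zeta\bigl((\theta+k)^2 + (1-k^2)\bigr),\\
\frac{d\theta}{ds} &= k\zeta^{-1}(k\theta-1) + \theta\bigl((\theta+k)^2 - (1+k^2)\bigr),
\end{align*}
and the point $(\zeta,\theta)=(0,1/k)$ is a regular point of the first equation but a singular point of the second. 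The clean way to handle this is to write $\theta = \tfrac{1}{k} + \eta$ and expand $k\theta - 1 = k\eta$, so that $\zeta^{-1}(k\theta-1) = k\eta/\zeta$, and then study the blown-up system obtained by using $\zeta$ itself as the independent variable on a short interval.

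First I would establish the local picture near $s=s_*^+$. Point (4) is dictated by the ODE through $(\zeta,\theta)=(0,1/k)$: substituting the ansatz $\zeta(s) = (1-k^2)(s-s_*) + \zeta_2(s-s_*)^2 + \cdots$ into the $\zeta$ equation and the ansatz $\theta(s) = \tfrac{1}{k} + c(k)\zeta + b_1 \zeta^{k^2/(1-k^2)} + \cdots$ into the $\theta$ equation reproduces the identical structure that appears in the interior (items (4)--(5) of Proposition \ref{prop:appchrisintmain}); the only difference is a sign change in the direction of time, which flips the role of the coefficient and produces a one-parameter family of exterior solutions indexed by $b_1\in\mathbb{R}$. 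Local existence for each value of $b_1$, with $\zeta>0$ for $s>s_*$ (hence $\psi>1$ as required), follows from standard fixed-point arguments on the resulting integral equation with the power-law ansatz, exactly as in \cite{chris2}.

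Next I would turn to the behavior at $s=\infty$. The point $(\psi,\theta)=(1,-k)$ is a critical point of (\ref{eq:app17})--(\ref{eq:app18}); linearizing there produces the Jacobian
\[
J = \begin{pmatrix} -(1-k^2) & 0 \\ -k & -1 \end{pmatrix},
\]
with eigenvalues $-(1-k^2)$ and $-1$, both negative (recall $k^2<1/3<1$). Hence it is a stable node, and the two-dimensional stable manifold is the full neighborhood. The asymptotics (\ref{eq:app33})--(\ref{eq:app34}) follow from an explicit formal expansion in the slower mode $e^{-(1-k^2)s}$ and the faster mode $e^{-s}$; the eigenvector structure determines the coefficient relation between $b_2$ in $\psi$ and $\frac{1+k^2}{k}b_2$ in $\theta$, and $b_3$ is the free parameter along the $e^{-s}$ direction. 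Convergence of the expansion, together with items (2) and (3), follows from standard stable-manifold theory provided a global trajectory from $(\zeta,\theta)=(0,1/k)$ to $(1,-k)$ exists in the $(\psi,\theta)$ phase plane.

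The main obstacle, and the core of the proof, is precisely this global connection: one must show the trajectory leaving $(\zeta,\theta)=(0,1/k)$ with parameter $b_1$ fixed by the naked-singularity branch remains in the region $\{1<\psi<\infty,\ \theta \text{ bounded}\}$ for all $s\in(s_*,\infty)$ and converges to $(1,-k)$. This is done by constructing an invariant trapping region in the $(\psi,\theta)$-plane. Concretely, I would identify curves on which $\dot\psi$ and $\dot\theta$ have definite sign (these are explicit algebraic curves from (\ref{eq:app17})--(\ref{eq:app18})), verify monotonicity of $\psi$ (or equivalently of $\zeta$) on the candidate trajectory, and rule out the alternative asymptotic behaviors classified in \cite{chris2} (trapped-surface formation, $\psi\to\infty$ in finite $s$, etc.). For the subrange $k^2\in(0,1/3)$ this trapping analysis was carried out in \cite{chris2}, and it is the unique delicate step; once it is in hand, point (1) is immediate, point (3) follows from smooth dependence on $s$ away from the two endpoints, and the limits in point (2) are forced by the only admissible $\omega$-limit set in the trapping region being the sink $(1,-k)$. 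Thus the proposition will follow by assembling the local expansions (4)--(5), the stable-node analysis at infinity, and the global trapping argument from \cite{chris2}.
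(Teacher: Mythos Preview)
The paper does not give its own proof of this proposition: it is stated as a direct import from \cite{chris2}, in the same way as the interior Proposition \ref{prop:appchrisintmain}. Your sketch is a reasonable outline of what the argument in \cite{chris2} actually does---local expansion near $\zeta=0$, linearization at the sink $(\psi,\theta)=(1,-k)$, and a phase-plane trapping argument in between---and you correctly identify that the global connection is the nontrivial step and defer it to \cite{chris2}, exactly as the paper does implicitly.

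One arithmetic slip worth flagging: the $(2,1)$ entry of the Jacobian at $(1,-k)$ is
\[
\partial_\psi\bigl[k\psi(k\theta-1)\bigr]\Big|_{(1,-k)} = k(k\theta-1)\Big|_{\theta=-k} = -k(1+k^2),
\]
not $-k$. This does not affect the eigenvalues (the matrix is lower triangular, so they are still $-(1-k^2)$ and $-1$) and hence not the qualitative stable-node picture, but it does change the eigenvector associated to the slow mode, which is what fixes the coefficient linking the $b_2$ terms in (\ref{eq:app33})--(\ref{eq:app34}). You should recompute that eigenvector with the correct entry if you want the precise constant; otherwise the structure of your argument is sound.
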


As in the interior, the first step to understanding global behavior is studying the change of coordinates $z(s)$. The asymptotics near $\{z=0\}$ in the exterior match those in the interior (compare (\ref{eq:app31})-(\ref{eq:app32}) with (\ref{eq:app20})-(\ref{eq:app21})). In the exterior $z \psi \geq 0$ continues to hold, and therefore the argument of Lemma \ref{lem:appchrisintmain} implies $z(s)$ is increasing, with $z \sim (s-s_*)^{1-k^2}$ in a neighborhood of $\{z=0\}$. 

It remains to consider the asymptotic regime $s \rightarrow \infty$. Integrating $\frac{dz}{ds} = z\psi(s)$ from a reference point $(s_0, z_0)$ with $s_0 > s_*, \ $ $z_0 > 0$, and inserting the asymptotic (\ref{eq:app33}) implies
\begin{align}
    \log |\frac{z}{z_0}| &= \int\limits_{s_0}^{s(z)}\psi(s')ds' \nonumber\\
    &= s(z) -C(s_0) + o_{s \rightarrow \infty}(1),\nonumber
\end{align}
where $C(s_0)$ is a constant depending on the choice of reference point. Re-arranging this expression gives that in a neighborhood of $\{s=\infty\}$,
\begin{equation}
    \label{eq:app35}
    z(s) \sim e^{s}.
\end{equation}
In particular, the limit $s \rightarrow \infty$ corresponds to $z \rightarrow \infty$.
\subsubsection*{Global bounds}
A direct consequence of (\ref{eq:app35}) is the following restatement of (\ref{eq:app33})-(\ref{eq:app34}):
\begin{lem}
There exists a nonzero constant $d_1$ such that the following expansion holds as $z \rightarrow \infty$:
\begin{align}
    \label{eq:app35.5}
    \psi(z) &= 1 + d_1|z|^{-(1-k^2)} + O(|z|^{-2(2-k^2)}), \\[.3em]
    \label{eq:app36}
    \theta(z) &= -k + d_1\Big(\frac{1+k^2}{k}\Big)|z|^{-(1-k^2)} + O(|z|^{-1}).
\end{align}
\end{lem}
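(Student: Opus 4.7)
The plan is to translate the $s$-asymptotics of $(\psi,\theta)$ from Proposition \ref{prop:appchrisintmain2}(5) into $z$-asymptotics via a precise analysis of the change of variables $z(s)$. The discussion preceding the lemma already establishes the leading-order behavior $z(s) \sim e^s$; I will sharpen this to an asymptotic identity with a controlled error, then substitute into the expansions (\ref{eq:app33})--(\ref{eq:app34}).

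First I would integrate $\tfrac{d}{ds}\log z = \psi(s)$ from a reference point $s_0>s_*$ to $s$, inserting $\psi(s) = 1 + b_2 e^{-(1-k^2)s} + O(e^{-2(1-k^2)s})$. The integral of the remainder converges as $s \to \infty$, yielding
\begin{equation*}
\log z(s) = s + C_\infty + \tfrac{b_2}{1-k^2}e^{-(1-k^2)s} + O\bigl(e^{-2(1-k^2)s}\bigr)
\end{equation*}
for a finite constant $C_\infty$ depending on the chosen reference data. In particular $z(s) = C_1 e^{s}\bigl(1 + O(e^{-(1-k^2)s})\bigr)$ with $C_1 := e^{C_\infty} > 0$. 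Inverting gives $s = \log z - \log C_1 + O(z^{-(1-k^2)})$, from which I obtain
\begin{equation*}
e^{-(1-k^2)s} = C_1^{\,1-k^2} |z|^{-(1-k^2)}\bigl(1 + O(|z|^{-(1-k^2)})\bigr), \qquad e^{-s} = C_1 |z|^{-1}\bigl(1 + O(|z|^{-(1-k^2)})\bigr).
\end{equation*}

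Substituting these relations into (\ref{eq:app33}) gives $\psi(z) = 1 + b_2 C_1^{\,1-k^2}|z|^{-(1-k^2)}$ with an error of size $O(|z|^{-2(1-k^2)})$ coming from both the $O(e^{-2(1-k^2)s})$ remainder and the correction in the inversion. Substituting into (\ref{eq:app34}) similarly produces the leading term $-k + b_2\bigl(\tfrac{1+k^2}{k}\bigr)C_1^{\,1-k^2}|z|^{-(1-k^2)}$; the subleading $b_3 e^{-s}$ contribution becomes a term of size $b_3 C_1 |z|^{-1}$, which, together with all further remainders, is absorbed into $O(|z|^{-1})$ since $k^2 < 1/2$ implies $|z|^{-2(1-k^2)} \lesssim |z|^{-1}$. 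Setting $d_1 := b_2 C_1^{\,1-k^2}$ delivers both claimed expansions with the same coefficient.

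The main point requiring care is the non-vanishing of $d_1$. Here $b_2 \neq 0$ is part of Proposition \ref{prop:appchrisintmain2}(5), and $C_1 \neq 0$ follows from the strict positivity of $z(s)$ on $(s_*,\infty)$ established in the prior discussion. No hard analysis remains beyond careful bookkeeping of error terms; the only subtlety is verifying the hierarchy of decay rates $|z|^{-(1-k^2)} \gg |z|^{-1} \gg |z|^{-2(1-k^2)}$ (valid for $k^2 \in (0,\tfrac13)$) so that the stated error exponents are consistent with the contributions being grouped into them.
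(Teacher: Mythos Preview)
Your proposal is correct and follows exactly the approach the paper takes: the paper simply declares the lemma to be ``a direct consequence of (\ref{eq:app35})'' and a ``restatement of (\ref{eq:app33})--(\ref{eq:app34}),'' and your careful integration of $\tfrac{d}{ds}\log z = \psi(s)$ followed by inversion is precisely what makes that claim rigorous. One remark: the error term you obtain for $\psi$ is $O(|z|^{-2(1-k^2)})$, which differs from the stated $O(|z|^{-2(2-k^2)})$; the latter is almost certainly a typographical slip in the paper, and your exponent is the one that actually follows from (\ref{eq:app33}).
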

The first result giving quantitative bounds on the exterior focuses on a  neighborhood (in $z$) of $\{v=0\}$. The behavior of the solution is identical to that of the interior region in the corresponding neighborhood. We record the result below.
\begin{lem}
\label{app:extmainestimates1}
In the region $\mathcal{Q}^{(ex)} \cap \{z \leq 1\}$ the results of Lemma \ref{lem:appinttranslatedestimates} continue to hold, as well as the bounds (\ref{eq:intexttemp1})-(\ref{eq:intexttemp2}).
\end{lem}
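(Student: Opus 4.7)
My plan is to observe that the exterior ODE asymptotics match the interior ones at the past similarity horizon, set up the analogous renormalized double null gauge, and then transcribe the interior proof essentially verbatim. First, I would compare the exterior expansion \eqref{eq:app31}--\eqref{eq:app32} of $(\psi,\theta)$ as $s\to s_*^+$ with the interior expansion \eqref{eq:app20}--\eqref{eq:app21} as $s\to s_*^-$: the functional form is identical, only the constants $a_1 \to b_1$ change. Integrating $\frac{dz}{ds}=z\psi$ from a reference point $(s_0,z_0)$ with $s_0>s_*$, $z_0>0$, and using the fact that $\psi>0$ on $(s_*,\infty)$, the argument of Lemma \ref{lem:appchrisintmain} reproduces the same coordinate-change asymptotics $s-s_*=(c_1+o(1))|z|^{1-k^2}$ on the exterior side. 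Translated back to $z$, the expansions \eqref{eq:app23}--\eqref{eq:app24} for $\psi(z)$, $\theta(z)$ as $z\to 0^+$ are therefore identical in form to those in the interior.

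Next, I would introduce the exterior renormalized coordinate by $v=\hat v^{\,1-k^2}$ for $\hat v > 0$, mirroring the interior definition, and verify that the derivative relation \eqref{eq:app30} holds with $|\hat v|$ replaced by $\hat v$. With these coordinates in place, the proof of Lemma \ref{lem:appblowuprates} --- which is purely algebraic in $\psi,\theta$ and the reduced equations \eqref{eq:app12}--\eqref{eq:app16} together with the identities \eqref{eq:appalgrel1}--\eqref{eq:appalgrel2} --- reproduces the bounds $\mathring r\sim 1$, $\mathring\lambda\sim|z|^{-k^2}$, $(-\mathring\nu)\sim 1$, $\mathring\Omega^2\sim|z|^{-k^2}$, and $\mathring\phi'\sim|z|^{-k^2}$ on a $z$-neighborhood of $\{z=0\}$, as well as the limiting value $\mathring\mu(0)=k^2/(1+k^2)$. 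On the complementary compact set $\mathcal Q^{(ex)}\cap\{\tfrac12\leq z\leq 1\}$, smoothness of $(\psi,\theta)$ on compact $s$-intervals, together with positivity of $\mathring\lambda$ --- argued as in Lemma \ref{lem:appinttranslatedestimates} by noting that $\mathring\lambda(z_0)=0$ would force $\psi(z_0)=\infty$, contradicting Proposition \ref{prop:appchrisintmain2}(3) --- yields uniform upper and lower bounds on all basic quantities. Patching the two regimes, the calculations of Lemma \ref{lem:appinttranslatedestimates} go through unchanged and produce $r\sim|u|$, $(-\nu)\sim 1$, $\lambda\sim|u|^{k^2}$, $\mu\leq c_\mu<1$, and the scalar-field bounds $|\partial_u\phi|\lesssim|u|^{-1}$, $|\partial_v\phi|\lesssim|u|^{-1+k^2}$. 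The identity $\partial_u\phi(u,0)=k/|u|$ along $\{v=0\}$ is a continuous limit from either side. The regularity bound $|\mu/r^2|\lesssim|u|^{-2}$ is trivial here, as $\{z\leq 1\}$ contains no axis and $r\sim|u|$ already holds.

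For the higher-order estimates \eqref{eq:intexttemp1}--\eqref{eq:intexttemp2}, I would differentiate the ODE system in $z$ and translate the results into coordinate derivatives through \eqref{eq:app4}, \eqref{eq:app30}, exactly as in the interior. The crucial cancellations --- that the leading $|\hat v|^{2k^2-1}$ term in $\partial_v\lambda$ vanishes algebraically, and that the $|\hat v|^{2k^2-2}$ term in $\partial_v^2\phi$ reorganizes into a factor proportional to $\mathring\lambda/\mathring r-k\mathring\phi'=k(1/k-\theta)/(z\psi)$ controlled by the asymptotics of $\theta$ --- depend only on the functional form of the expansions near $z=0$, not on the sign of $z$; the relevant constant (here $b_1$) is nonzero by Proposition \ref{prop:appchrisintmain2}(4), which is what the analysis requires.

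The main (mild) obstacle is the bookkeeping task of checking that every step of the interior proof uses only sign-insensitive input. Since the derivations in Lemmas \ref{lem:appinttranslatedestimates}--\ref{lem:appblowuprates} rest entirely on the asymptotic form of $(\psi,\theta)$ near $\{z=0\}$, the global smoothness of $(\psi,\theta)$ on compact $s$-intervals, and the algebraic identities \eqref{eq:appalgrel1}--\eqref{eq:appalgrel2}, all of which are symmetric under $z\mapsto -z$ in the relevant sense, this verification amounts to a direct inspection. No new analytic input is needed beyond the exterior phase-plane information already recorded in Proposition \ref{prop:appchrisintmain2}.
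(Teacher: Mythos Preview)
Your proposal is correct and matches the paper's approach: the paper states this lemma without proof, prefacing it only with the remark that ``the behavior of the solution is identical to that of the interior region in the corresponding neighborhood,'' which is precisely the observation you spell out in detail. Your elaboration---comparing the exterior expansions \eqref{eq:app31}--\eqref{eq:app32} with the interior ones \eqref{eq:app20}--\eqref{eq:app21}, noting that the proofs of Lemmas \ref{lem:appblowuprates} and \ref{lem:appinttranslatedestimates} use only the functional form of these asymptotics and sign-insensitive algebraic identities, and checking the axis-free simplification of the $\mu/r^2$ bound---is exactly the verification the paper leaves implicit.
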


For the remaining bounds we transform to a different coordinate system, one adapted to the $\{u=0\}$ hypersurface. The transformation is analogous to that from the self-similar $(\hat{u}, \hat{v})$ system to the renormalized $(u,v)$ system. Effectively we are performing the same renormalization procedure with the roles of $u$ and $v$ interchanged. Define
\begin{equation}
    (U,V) = (-|\hat{u}|^{1-k^2}, \hat{v}),
\end{equation}
with the transformed $U$ coordinate derivative given by
\begin{equation}
    \frac{\partial}{\partial U} = \frac{|\hat{u}|^{k^2}}{1-k^2}\frac{\partial}{\partial \hat{u}}.
\end{equation}
The coordinate change transforms the horizons $\{v=0\}$, $\{u=0\}$ to $\{V=0\},$ $\{U=0\}$ respectively. 

The following estimates are stated with respect to the $(U,V)$ gauge. To avoid confusion we denote by ${^{(V)}\lambda,} \ {^{(U)}\nu,} \ {^{(U,V)}\Omega^2}$ the values of the gauge-dependent functions in the new coordinate system. We now proceed to study the solution in a neighborhood of $\{U=0\}$.

\begin{lem}
\label{app:extmainestimates2}
The solution in the region $\mathcal{Q}^{(ex)}\cap \{1 \leq z < \infty\}$ has the following properties:
\begin{itemize}
    \item $r(U,V) \geq 0, \ \text{and} \ \ r(U,V)=0 \iff (U,V) = (0,0)$. Moreover, $r \sim V$.
    \item The metric quantities satisfy the bounds
    \begin{equation}
        (-^{(U)}\nu) \sim V^{k^2}, \ \ \ ^{(V)}\lambda \sim 1, \ \ \  ^{(U),(V)}\Omega^2 \sim V^{k^2}.
    \end{equation}
    \begin{equation}
         |{\partial_U} {^{(U)}\nu}| \lesssim V^{-1+2k^2}, \ \ \ |{\partial_V}{^{(V)}\lambda}| \lesssim V^{-1}.
    \end{equation}
    \item There exists a $c_\mu < 1$ such that $\mu$ satisfies $0 \leq \mu \leq c_\mu < 1$. Moreover, $\mu \sim 1$.
    \item The scalar field satisfies self-similar bounds 
    \begin{equation}
        |\partial_U \phi| \lesssim V^{-1+k^2}, \ \ \quad \quad \quad \quad \quad \quad |\partial_V \phi| \lesssim V^{-1},
    \end{equation}
    \begin{equation}
        |\partial_U^2 \phi| \lesssim V^{-1+2k^2}|U|^{-(1-pk^2)}, \ \ \ \ \ |\partial_V^2 \phi| \lesssim V^{-2}.
    \end{equation}
\end{itemize}
Moreover, ${r,}{^{(U)}\nu,}{^{(V)}\lambda,} \ \mu, \partial_U \phi, \partial_V \phi$ extend continuously to functions on $\{U=0\}$ for all $V > 0$, and there exists a positive constant $c_r$ such that the following relations hold:
\begin{align}
    r(0,V) &= c_r V, \\
    ^{(V)}\lambda(0,V) &= c_r, \\
    \mu(0,V) &= \frac{k^2}{1+k^2}, \\
    \partial_V \phi(0,V) &= -\frac{k}{V}.
\end{align}
\end{lem}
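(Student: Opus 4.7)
The plan is to mirror the analysis carried out near $\{v=0\}$ (Lemma \ref{lem:appinttranslatedestimates} and the subsequent higher-order lemma), exploiting the formal symmetry between the roles of $\hat{u}$ and $\hat{v}$ in the self-similar coordinates. The change of gauge $(U,V)=(-|\hat{u}|^{1-k^2},\hat{v})$ is precisely the analogue of $(u,v)=(\hat{u},-|\hat{v}|^{1-k^2})$ with the null directions interchanged, so that the powers of $|\hat{u}|^{k^2}$ appearing in $\partial_U = (1-k^2)^{-1}|\hat{u}|^{k^2}\partial_{\hat{u}}$ will absorb the blowup of self-similar quantities like $\mathring{\nu}$ as $z=-\hat{v}/\hat{u}\to\infty$. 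The region $\{1\le z<\infty\}$ corresponds in $(U,V)$ coordinates to $\{V^{1-k^2}/|U|\gtrsim 1\}$, and the limit $U\to 0^-$ at fixed $V>0$ corresponds to $z\to\infty$, i.e.\ to the endpoint of the ODE phase portrait governed by expansions \eqref{eq:app35.5}-\eqref{eq:app36}.

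First I would express each double-null quantity as a function of $(\mathring{r},\psi,\theta)$ evaluated at $z=-V/U_{\rm orig}$ (with $U_{\rm orig}=\hat{u}$). Using $\mathring{\lambda}=\partial_z\mathring{r}$, the identity \eqref{eq:appalgrel1}, and the algebraic relation \eqref{eq:appalgrel2} for $\mathring{\Omega}^2$, I plug in \eqref{eq:app35.5}-\eqref{eq:app36}. The key consequence is that $\psi\to 1$ forces $\mathring{r}/(\mathring{r}+\mathring{\nu})\to 1$, i.e.\ $\mathring{\nu}=o(\mathring{r})$, and a short computation with $z\partial_z\log\mathring{r}=1/\psi$ and $1/\psi\to 1$ yields $\mathring{r}\sim z$ as $z\to\infty$. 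Translating through $r=-\hat{u}\,\mathring{r}=V\,(\mathring{r}/z)$ gives $r\sim V$ and, in the limit, $r(0,V)=c_r V$ for an explicit constant $c_r>0$. The gauge-dependent quantities are recovered from ${^{(V)}\lambda}=\mathring{\lambda}$, ${^{(U)}\nu}=(1-k^2)^{-1}|\hat{u}|^{k^2}\mathring{\nu}$, and ${^{(U,V)}\Omega}^2=(1-k^2)^{-1}|\hat{u}|^{k^2}\mathring{\Omega}^2$; each $|\hat{u}|^{k^2}$ factor exactly cancels the $z^{k^2}\sim |\hat{u}|^{-k^2}V^{k^2}$ growth of the bare self-similar quantity, producing the stated $V$-rates.

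Next I would extract the explicit boundary values on $\{U=0\}$. From \eqref{eq:app35.5}-\eqref{eq:app36} one reads off $\lim_{z\to\infty}\psi=1$, $\lim_{z\to\infty}\theta=-k$. Inverting the definitions gives $z\mathring{\phi}'\to -k$, hence $\partial_V\phi(0,V)=-k/V$; similarly, the leading constant of $\mathring{r}/z$ fixes $c_r$ and then $^{(V)}\lambda(0,V)=c_r$. The limit for $\mu$ follows from the same algebraic identity \eqref{eq:appalgrel2} used in the interior computation of $\mathring{\mu}(0)=k^2/(1+k^2)$, with the roles of the leading and subleading terms now controlled by the $O(z^{-(1-k^2)})$ corrections in \eqref{eq:app35.5}-\eqref{eq:app36}. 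Continuity of the extensions is a direct consequence of the fact that, after the $|\hat{u}|^{k^2}$ renormalization, each of $r, \,{^{(U)}\nu}, \,{^{(V)}\lambda},\,\mu, \partial_U\phi, \partial_V\phi$ becomes a bounded smooth function of $(\hat{u},V)$ up to $\hat{u}=0$, because the $z$-derivatives of $\psi,\theta$ are integrable as $z\to\infty$ by \eqref{eq:app35.5}-\eqref{eq:app36}.

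The main obstacle is the transversal second-derivative bound $|\partial_U^2\phi|\lesssim V^{-1+2k^2}|U|^{-(1-pk^2)}$ and the analogous bound for $\partial_U{^{(U)}\nu}$: here a naive chain-rule expansion produces $|\hat{u}|^{2k^2-1}$ factors that blow up as $U\to 0$. The remedy is the same cancellation exploited in \eqref{eq:apptemp3} for $\partial_v^2\phi$ in the interior. Writing
\begin{equation*}
    (1-k^2)^{2}\partial_U^{2}\phi = \frac{|\hat{u}|^{2k^2}}{V^{2}}\partial_z(z^2\mathring{\phi}')\cdot(\text{jacobian factors}) - k^2\frac{|\hat{u}|^{2k^2-1}}{V}\,\mathring{\phi}'+\cdots,
\end{equation*}
the coefficients are rearranged using \eqref{eq:app16} to cancel the leading singular $|\hat{u}|^{2k^2-1}$ piece, leaving a remainder controlled by $\mathring{\lambda}/\mathring{r}-k\mathring{\phi}'$. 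Plugging in \eqref{eq:app35.5}-\eqref{eq:app36} shows this remainder equals $b_{1}(c k)\, z^{-(1-k^2)}+\cdots$ with $b_1\ne 0$; converting $z^{-(1-k^2)}=|U|/V^{1-k^2}$ and noting $1-pk^2=(1-2k^2)/(1-k^2)$ produces exactly the claimed $V^{-1+2k^2}|U|^{-(1-pk^2)}$ rate, with the $|U|$-singularity recording the sharp Hölder regularity across $\{U=0\}$. The estimate for $\partial_U{^{(U)}\nu}$ is handled by the same device applied to \eqref{eq:app13}. All remaining bounds (including $0\le\mu\le c_\mu<1$ and $|{\partial_V}{^{(V)}\lambda}|\lesssim V^{-1}$) follow from direct substitution and require no cancellation, since the relevant self-similar quantities are smooth and bounded away from $z=\infty$ once the $|\hat{u}|^{k^2}$ factors are in place.
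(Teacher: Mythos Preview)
Your overall strategy is the same as the paper's: pass to self-similar variables, exploit the expansions \eqref{eq:app35.5}--\eqref{eq:app36} near $z=\infty$, and use the $|\hat u|^{k^2}$ factor in $\partial_U$ to renormalize the blowup of $\mathring\nu$ and $\mathring\Omega^2$. The first-order bounds and the limiting values on $\{U=0\}$ are handled correctly.

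There are, however, two concrete gaps. First, your treatment of the crucial $\partial_U^2\phi$ cancellation is not right. The displayed chain-rule formula is incorrect; the actual expansion is
\[
(1-k^2)^2\partial_U^2\phi \;=\; |\hat u|^{2k^2-2}\bigl[(2-k^2)\,z\mathring{\phi}' + z^2\mathring{\phi}'' + k(1-k^2)\bigr],
\]
and the cancellation occurs inside the bracket, not via the combination $\mathring\lambda/\mathring r - k\mathring{\phi}'$. That combination was the residual controlling $\partial_v^2\phi$ near $z=0$ in the interior (where $\theta\to 1/k$ forces $1/k-\theta$ to vanish at leading order); near $z=\infty$ one has $\theta\to -k$, so $\mathring\lambda/\mathring r - k\mathring{\phi}'=\tfrac{k}{z\psi}(\tfrac1k-\theta)\to (1+k^2)/z$ does \emph{not} vanish at leading order and plays no special role. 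The paper instead substitutes the series for $z\mathring{\phi}'$ and $z^2\partial_z\mathring{\phi}'$ directly and checks that the $z^0$ and $z^{-(1-k^2)}$ contributions cancel. Your invocation of the constant $b_1$ is also from the wrong asymptotic expansion: $b_1$ governs the $s\to s_*^+$ regime \eqref{eq:app32}, whereas here the relevant constants are $b_2,b_3$ (equivalently $d_1$) from \eqref{eq:app33}--\eqref{eq:app34}. The $\partial_U{^{(U)}\nu}$ bound likewise requires a genuine cancellation in $z\partial_z\mathring\nu - k^2\mathring\nu$, obtained via \eqref{eq:app13} and \eqref{eq:apptemp5}, which your sketch does not supply.

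Second, you do not address the compact transition region $\{1\le z\le C_{\mathrm{large}}\}$. The asymptotic expansions \eqref{eq:app35.5}--\eqref{eq:app36} only control the solution for large $z$; on the bounded interval one still needs positive lower bounds on $\mathring\lambda$, $-\mathring\nu$, $\mathring\Omega^2$, and this requires a separate argument (the paper uses that $\psi$ is finite and bounded away from $1$ there, together with monotonicity of $\mathring r$).
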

\begin{proof}
The proof is largely a computation using (\ref{eq:app35.5})-(\ref{eq:app36}), akin to the proof of Lemma \ref{lem:appinttranslatedestimates}. We will start by showing the stated bounds in a region $\{ C_{large} \leq z < \infty \}$, where $C_{large}$ is a large constant. In this region we employ the series expansions (\ref{eq:app35.5})-(\ref{eq:app36}).

We first estimate $r$. Using the relation $\partial_z \log \mathring{r} = (z\psi(z))^{-1}$ and inserting (\ref{eq:app35.5}) gives
\begin{align*}
    \partial_z \log \mathring{r} &= \frac{1}{z\psi(z)} = z^{-1} + O(z^{-(2-k^2)}).
\end{align*}
Integrating from a from a fixed $z_0 \geq C_{large}$ yields
\begin{align*}
    \log \frac{\mathring{r}(z)}{\mathring{r}_0} = \log \frac{z}{z_0} + C(z_0) + O(|z|^{-(1-k^2)}),
\end{align*}
where $r_0 = r(z_0)$. Rearranging gives $\mathring{r} = (c_r + o_{z \rightarrow \infty}(1))z,$ which yields the desired estimate after recalling $r = \mathring{r}|\hat{u}|$. With this estimate on $\mathring{r}$, we see 
$$\mathring{\lambda} = \partial_z \mathring{r} =  \frac{\mathring{r}}{z \psi(z)} \sim 1.$$ The stated estimate for $^{(V)}\lambda = \mathring{\lambda}$ then follows. 

The definition $\psi = \frac{\mathring{r}}{\mathring{r}+\mathring{\nu}}$ implies $\mathring{\nu} = \frac{\mathring{r}(1-\psi)}{\psi}$, and inserting (\ref{eq:app35.5}) gives $$\mathring{\nu} = (c + o_{z \rightarrow(\infty)}(1))|z|^{k^2},$$ 
for a nonzero constant $c$. We therefore have $^{(U)}\nu = p|\hat{u}|^{k^2}\mathring{\nu} \sim V^{k^2}.$

To see the behavior of $\mathring{\phi}'$, recall $\mathring{\phi}' = \frac{\theta(z)}{z\psi(z)}.$ Inserting the series expansions for $\theta(z), \psi(z)^{-1}$ gives
\begin{equation}
    \label{eqn:appAAseriesexp}
    z\mathring{\phi}' = -k - \frac{d_1}{k}z^{-(1-k^2)} + O(z^{-1}).
\end{equation}
Differentiating $\phi = \mathring{\phi} - k\log|\hat{u}|$ in $U$ and applying (\ref{eqn:appAAseriesexp}) gives
\begin{align*}
    (1-k^2)\partial_U \phi &= |\hat{u}|^{k^2}\partial_{\hat{u}} \phi \\
    &= \frac{k}{|\hat{u}|^{1-k^2}} + \frac{1}{|\hat{u}|^{1-k^2}}(z\mathring{\phi}') \\
    &= - \frac{d_1}{k} \frac{1}{V^{1-k^2}} + o_{z \rightarrow \infty}(1).
\end{align*}
Similarly,
\begin{align*}
    \partial_V \phi &= \partial_{\hat{v}}\phi = \frac{1}{|\hat{u}|}(-kz^{-1} + O(z^{-(2-k^2)})) \\
    &= -\frac{k}{V} + o_{z \rightarrow \infty}(1).
\end{align*}
Estimates on $\Omega$ naturally follow from (\ref{eq:appalgrel2}). Rewriting in terms of $\theta, \psi$ gives the pair of identities 
\begin{align}
    &\frac{1}{4}\mathring{\Omega}^2 + \mathring{\nu}\mathring{\lambda} = -k^2(1-\psi)z\mathring{\lambda}^2 +  z\mathring{\lambda}^2(\theta+k)^2 \label{eq:apptemp6},\\[.3em]
    &\frac{1}{4}\mathring{\Omega}^2 = -(1-\psi)z\mathring{\lambda}^2(1+k^2) + z\mathring{\lambda}^2(\theta+k)^2 \label{eq:apptemp5}.
\end{align}
Estimating (\ref{eq:apptemp5}) gives
\begin{equation*}
    \frac{1}{4}\mathring{\Omega}^2= (c + o_{z \rightarrow \infty}(1))|z|^{k^2},
\end{equation*}
for a nonzero constant $c$. This implies the required estimate after calculating $^{(U),(V)}\Omega \sim |\hat{u}|^{k^2}\mathring{\Omega} \sim V^{k^2}.$

We next estimate $\mu$. Recalling (\ref{eq:app8}), estimating (\ref{eq:apptemp6}), and applying the estimate on $\mathring{\Omega}^2$ just derived gives
\begin{equation*}
    \mathring{\mu} = (1+o_{z \rightarrow \infty}(1))\frac{k^2}{1+k^2}.
\end{equation*}

We finally turn to the derivatives ${\partial_U}{^{(U)}\nu}, {\partial_V} {^{(V)}\lambda}, \partial_U^2 \phi, \partial_V^2 \phi$, which are estimated using the system (\ref{eq:app12})-(\ref{eq:app16}). Start with $^{(V)}\lambda$. Estimate
\begin{align*}
    |\partial_z \mathring{\lambda}| & \lesssim \frac{1}{z^{2-k^2}},
\end{align*} 
and thus 
\begin{align*}
    |{\partial_{V}}{^{(V)}\lambda}| \lesssim \frac{1}{|\hat{u}|}|\partial_z \mathring{\lambda}|  \lesssim \frac{1}{V}.
\end{align*}
To see the estimate for ${\partial_U}{^{(U)}\nu}$ we need to track cancellations in leading order $U$ weights. Expand
\begin{align}
    (1-k^2)^2{\partial_U}{^{(U)}\nu} &= |\hat{u}|^{k^2}\partial_{\hat{u}}(|\hat{u}|^{k^2}\mathring{\nu}) \nonumber \\
    &= |\hat{u}|^{-1+2k^2}(z \partial_z \mathring{\nu} - k^2\mathring{\nu}), \label{eq:temptemptempRoger}
\end{align}
and using (\ref{eq:app14}) we estimate
\begin{align*}
    z\partial_z \mathring{\nu} - k^2 \mathring{\nu} &= -\frac{z}{\mathring{r}}(\frac{1}{4}\mathring{\Omega}^2 + \mathring{\nu}\mathring{\lambda}) - k^2\mathring{\nu} \\
    &= k^2\frac{z^2}{\mathring{r}}\mathring{\lambda}^2(1-\psi) - k^2\frac{\mathring{r}(1-\psi)}{\psi} + O(z^{-(1-k^2)}) \\
    &= O(\mathring{r} (1-\psi)^2) = O(z^{-1+2k^2}).
\end{align*}
Inserting into (\ref{eq:temptemptempRoger}) yields $|{\partial_U}{^{(U)} \nu}| \lesssim V^{-1+2k^2}$, as desired.

We will need an expansion for $\partial_z \mathring{\phi}'$. A computation using (\ref{eq:app16}) gives
\begin{align*}
    \partial_z \mathring{\phi}' = z^{-2}\big(k +d_1(\frac{2}{k}-k)z^{-(1-k^2)} + O(z^{-1})\big).
\end{align*}
We therefore compute
\begin{align*}
    \partial_V^2 \phi = -\frac{1}{|\hat{u}|^{2}}\partial_z \mathring{\phi}' = O(\frac{1}{V^2}).
\end{align*}
The sharp rate for $\partial_U^2 \phi$ will again require tracking cancellations. Compute
\begin{align*}
    (1-k^2)^2\partial_U^2 \phi &= |\hat{u}|^{k^2}\partial_{\hat{u}}\big(|\hat{u}|^{k^2}\partial_{\hat{u}} (\mathring{\phi} - k\log |\hat{u}|)\big) \\
    &= \frac{1}{|\hat{u}|^{2-2k^2}}\big((2-k^2)z\mathring{\phi}' + z^2\partial_z \mathring{\phi}' + k(1-k^2) \big) \\
    &= \frac{1}{|\hat{u}|^{2-2k^2}}\big((2-k^2)(-k-d_1\frac{1}{k}|z|^{-(1-k^2)}) + k + d_1(\frac{2}{k}-k)z^{-(1-k^2)} +k(1-k^2) + O(|z|^{-1})\big) \\
    &= O(\frac{1}{z|\hat{u}|^{2-2k^2}}).
\end{align*}
Rewriting in terms of $U,V$ gives the stated bound $\partial_U^2\phi = O(\frac{1}{V |U|^{1-pk^2}})$.

\vspace{10pt}
It remains to extend these bounds to the transition region $\mathcal{Q}^{(ex)} \cap \{1 \leq z \leq C_{large} \}$. This region covers a compact range in the self-similar coordinate, and it is easy to see that one has uniform boundedness of $\mathring{r}, \mathring{\lambda}, \mathring{\nu}, \mathring{\Omega}^2, \mathring{\phi}',$ and $z$ derivatives thereof. Translating boundedness in the self-similar coordinates to statements about the $(U,V)$ coordinates, we recover all the stated upper bounds. It remains then to show the lower bounds $\mathring{r}, -\mathring{\nu}, \mathring{\lambda}, \mathring{\Omega}, \mathring{\mu} \gtrsim 1. $

To estimate $\mathring{\lambda} \sim 1$, it suffices to show $\mathring{\lambda}(z)$ cannot vanish for $z \in [1, C_{large}]$. But this is immediate from $\mathring{\lambda}(z) = \frac{\mathring{r}(z)}{z\psi(z)}$. If $z_0$ is the first $z$ in the interval satisfying $\mathring{\lambda}(z_0) = 0$, then by $\partial_z \mathring{r} = \mathring{\lambda}$ one must have $\mathring{r}(z_0) \geq \mathring{r}(1) > 0$. But $\psi(z) < \infty$ on $[1,C_{large}]$, and so it cannot be the case that $\mathring{\lambda}(z_0) = \frac{\mathring{r}(z_0)}{z_0\psi(z_0)} = 0$. Therefore $\mathring{\lambda} \sim 1$.

From $\mathring{\lambda} \sim 1$ it follows that $\mathring{r} \sim 1$, and thus $r = \mathring{r}|\hat{u}| \sim |\hat{u}|$. Note in this domain $|\hat{u}| \sim \hat{v} \sim V$ with the constants only depending on $C_{large}$. Therefore $r \sim |\hat{u}| \sim V$.

Next turn to $\mathring{\nu}$. Note $\psi(z) > 1$ holds for $z \in [1,C_{large}]$, and hence by compactness there exists a positive constant $c_\psi$ such that $\psi-1 \geq c_\psi > 0$. Moreover, away from $\{z=0\}$ we have that $\psi$ is bounded, and hence $\psi \leq C_{\psi}$ holds for some large constant $C_{\psi}$. Using the relation $-\mathring{\nu} = \frac{\mathring{r}(\psi-1)}{\psi}$, it immediately follows that $(-\mathring{\nu})$ is bounded above and below, as desired.

An upper bound for $\Omega^2$ follows directly from the algebraic relation (\ref{eq:appalgrel2}). A lower bound follows by considering 
\begin{equation*}
    \frac{d}{dz}\log \mathring{\Omega} = \frac{1}{2\mathring{\lambda}}\Big(-\frac{\mathring{\nu}\mathring{\lambda}+\frac{1}{4}\mathring{\Omega}^2}{\mathring{r}z}+\mathring{r}(\mathring{\phi}')^2\Big).
\end{equation*}
Integrating gives a bound on $|\log \mathring{\Omega}|$. This gives a positive lower bound on $\mathring{\Omega}$, and therefore we conclude the desired bound on $^{(U),(V)}\Omega \sim |\hat{u}|^{k^2}\mathring{\Omega}.$

Finally we consider $\mu$. The bound $\mu \leq c_\mu < 1$ follows by (\ref{eq:app8}) and the bounds already shown on $\mathring{\lambda}, \mathring{\nu},$ and $\mathring{\Omega}$.

The bound $\mu \geq 0$ follows by integrating (\ref{SSESF:2:5}) in $v$ outwards from $\{v=0\}$, where $\mu \geq 0$ is known. One uses that (\ref{SSESF:2:5}), in combination with bounds on $1-\mu$ and $\lambda$, implies $\partial_v m \geq 0$. Therefore $m(u,v) \geq 0$, and $\mu = \frac{2m}{r} \geq 0$.
\end{proof}

\subsubsection{Asymptotically flat truncation}
\label{subsubsec:asympflattrunc}
We have concluded the construction of the exact $k$-self-similar solution, and in an appropriate renormalized gauge have shown modified self-similar bounds for all double null unknowns. Importantly, however, these spacetimes are not asymptotically flat. An easy way to see this is to consider a null hypersurface $\{U = c\}$ for $c \in [-1,0),$ and compute 
\begin{equation}
    \lim_{V \rightarrow \infty} m(c,V) = \lim_{V \rightarrow \infty} \frac{1}{2}\mathring{\mu}\left(|c|^{-1}V^{q}\right) r(c,V) = \infty,
\end{equation}
where we have used that along $\{U=c\}$ the bounds $\mathring{\mu} \sim 1 \ $, $r \sim V$ hold. Thus although the spacetime has the proper singularity structure near $\mathcal{O}$, it cannot serve as a model for naked singularities. The solution employed in \cite{chris2} is to consider an asymptotically flat truncation of the exactly $k$-self-similar solution, leading to an exterior that naturally decomposes into two regions: 1) for $0 \leq V \leq V_{trunc}$, where $V_{trunc}$ is a fixed positive constant, the solution is precisely the $k$-self-similar one constructed above, and 2) for $V_{trunc} \leq V < \infty$, the solution is asymptotically flat.

The choice of $V_{trunc}$, and the asymptotically flat data along $\{U = -1, \ V \geq V_{trunc}\}$, renders the construction of the asymptotically flat region non-unique. The precise structure of this region is unimportant for the analysis of the paper, and so we choose a simple truncation procedure. Without loss of generality, take $V_{trunc}=1$.

Introduce a smooth, non-negative cutoff function $\chi(V)$ with the property that $\chi(V) = 1$ for $V \in [0,1]$, $\chi(V) = 0$ for $V \in [2,\infty)$, $\chi \leq 1$, and $|\partial_V^i \chi| \lesssim 1$  for all $i \in \mathbb{N}$. In the following, unknowns with the subscript $\Psi_s(U,V)$  will denote the values in the exactly self-similar solution. Quantities without a subscript refer to the solution being constructed.

Let $U_{min} < 0 $ be a small constant, which will be chosen in the proof. The asymptotically flat region will be a solution to the scalar field system on the truncated domain \begin{equation}
    \mathcal{Q}^{(trunc)} \doteq \mathcal{Q} \cap \{1 \leq V < \infty, \  U_{min} \leq U < 0\},
\end{equation}
with data posed along the null hypersurfaces $\{U=U_{min}, \ V \geq 1\}\cup\{U \geq U_{min}, \ V = 1\}$. More precisely, pose data
\begin{align}
    \phi(U,1) &= \phi_s(U,1), \label{app:trunc1}\\[.3em]
    ^{(U)}\nu(U,1)\ &{=} \ {^{(U)}\nu_s(U,1)}, \label{app:trunc2}\\[.3em]
    \phi(U_{min},V) &= \chi(V)\phi_s(U_{min},V), \label{app:trunc3}\\[.3em]
    ^{(V)}\lambda(U_{min},V) &= \chi(V)\lambda_s(U_{min},V) + \frac{1}{2}(1-\chi(V)).\label{app:trunc4}
\end{align}
It follows from the self-similar bounds of the previous section that the norm 
\begin{equation}
    \|r\|_{C^2(\{U=U_{min}, \ V \geq 1\})} +  \|\phi\|_{C^2(\{U=U_{min}, \ V \geq 1\})}
\end{equation}
of the outgoing data is bounded, independently of $U_{min}.$ The aim is now to show that for $U_{min}$ sufficiently small, a solution to the above problem exists in $\mathcal{Q}^{(trunc)}$, which moreover satisfies asymptotically flat bounds.

\begin{prop}
There exists $U_{min}$ sufficiently small as a function of the underlying self-similar solution, and a bounded variation solution to the system (\ref{SSESF:1:1})-(\ref{SSESF:1:5}) on $\mathcal{Q}^{(trunc)}$ assuming the data (\ref{app:trunc1})-(\ref{app:trunc4}). 

The glued solution  \begin{equation*}
\mathcal{Q}^{(glued)} \doteq (\mathcal{Q}^{(in)} \cap \{U \geq U_{min} \}) \cup (\mathcal{Q}^{(ex)} \cap \{V \leq 1, \ U \geq U_{min}\}) \cup \mathcal{Q}^{(trunc)}
\end{equation*}
is of bounded variation away from the singular point $\mathcal{O}$. Moreover, the solution in $\mathcal{Q}^{(trunc)}$ is asymptotically flat, and the spacetime contains an incomplete $\mathcal{I}^+$.
\end{prop}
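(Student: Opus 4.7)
The plan is to treat the construction in $\mathcal{Q}^{(trunc)}$ as a standard forward characteristic initial value problem. Crucially, the area radius is bounded below in $\mathcal{Q}^{(trunc)}$ since $V \geq 1$, and the $U$-interval $[U_{min},0)$ is compact, so $|U_{min}|$ will serve as a smallness parameter. First I would apply the standard characteristic local existence theorem (cf.\ Section \ref{sec:solnclasses}) in a neighborhood of the initial data surface $\{V=1,\,U\geq U_{min}\}\cup\{U=U_{min},\,V\geq 1\}$: the data are $C^1$ by Lemma \ref{app:extmainestimates2} and the properties of $\chi$, and $r$ is bounded away from $0$ on the data surface, so this step is routine. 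The main task is to extend this local solution to all of $\mathcal{Q}^{(trunc)}$.

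For the extension I would run a bootstrap argument structurally identical to the $\mathcal{R}_{\text{IV}}$ analysis of Section \ref{subsec:extregionIV}, but now applied directly to the total solution rather than to perturbations. The bootstrap norms would enforce asymptotically flat weights of the form
\begin{equation*}
|r\phi|,\ |\partial_U(r\phi)| \lesssim 1,\quad |\partial_V(r\phi)|\lesssim V^{-2},\quad {^{(V)}\lambda}\sim 1,\quad {^{(U)}\nu}\sim 1,\quad |m|\lesssim 1,
\end{equation*}
together with the analogous second-order bounds. Unknowns satisfying $U$-transport equations (e.g.\ $^{(V)}\lambda$, $m$, $\partial_V(r\phi)$) pick up $\int_{U_{min}}^0 dU' \lesssim |U_{min}|$ upon integration, which furnishes closure for $|U_{min}|$ small; unknowns satisfying $V$-transport equations (e.g.\ ${^{(U)}\nu}$, $\partial_U(r\phi)$) are closed inductively using only previously-improved quantities, precisely as in the proofs of Lemmas in Section \ref{subsec:extregionIV}. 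The data along $\{V=1\}$ is supplied by the self-similar solution and, by Lemma \ref{app:extmainestimates2}, satisfies these weighted norms uniformly in $U\in[U_{min},0)$ since $V=1$ is bounded away from $\{V=0\}$; the data along $\{U=U_{min}\}$ reduces to self-similar data on $V\in[1,2]$ and to the Minkowskian profile $({^{(V)}\lambda},\phi)=(\tfrac12,0)$ on $V\geq 2$, and is uniformly compatible with the norms. Standard BV continuation criteria then promote this to a global $C^1$ solution on $\mathcal{Q}^{(trunc)}$.

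Given such a solution, asymptotic flatness along each outgoing ray $\{U=c\}\subset\mathcal{Q}^{(trunc)}$ follows directly: $r(c,V)\to\infty$ from $^{(V)}\lambda\sim 1$; the Raychaudhuri equation (\ref{SSESF:2:5}) combined with $|\partial_V\phi|=|r^{-1}(\partial_V(r\phi)-\,^{(V)}\lambda\,\phi)|\lesssim V^{-2}$ gives $\partial_V m\lesssim V^{-2}$, so $m(c,V)$ admits a finite limit; and integrating $|\partial_V(r\phi)|\lesssim V^{-2}$ yields a finite limit for $r\phi$. BV compatibility of the glued spacetime $\mathcal{Q}^{(glued)}$ reduces to two checks: across $\{V=0\}$, between $\mathcal{Q}^{(in)}$ and $\mathcal{Q}^{(ex)}$, BV compatibility was already established as part of Theorem \ref{thm:christodoulou_solutions} (see also the bounds of Section \ref{appa:regularcoords} and Section \ref{subsec:appAext}); across $\{V=1\}$, the data prescriptions (\ref{app:trunc1})--(\ref{app:trunc2}) coincide with the restriction of the self-similar solution and $\chi\equiv 1$ near $V=1$, so every quantity entering the BV definition matches to the required order. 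Finally, for incompleteness of $\mathcal{I}^+$, fix the outgoing ray $\{U=U_{min}\}$ and a sequence $p_i$ with $V(p_i)\to\infty$; the affine length of the past-directed ingoing null geodesic from $p_i$ is
\begin{equation*}
\int\limits_{U_{min}}^{0}\frac{{^{(U,V)}\Omega^{2}}(U',V(p_i))}{{^{(U,V)}\Omega^{2}}(U_{min},V(p_i))}\,dU',
\end{equation*}
and the uniform bound ${^{(U,V)}\Omega^{2}}\sim 1$ in the large-$V$ tail of $\mathcal{Q}^{(trunc)}$ (coming from the Minkowskian data and the bootstrap) shows this integral is bounded by $C|U_{min}|$ independently of $V(p_i)$, giving incompleteness via Definition \ref{dfn:incompleteI}.

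The main technical obstacle I anticipate is the bootstrap in the second paragraph: it is not literally the same as $\mathcal{R}_{\text{IV}}$ since one is estimating the total solution, not an $\epsilon$-small perturbation, so the two smallness mechanisms must be reconciled. Smallness from $|U_{min}|$ controls $U$-transport errors but does nothing for $V$-transport errors; conversely the $V^{-2}$ decay built into the weights on $\partial_V(r\phi)$ and ${^{(V)}\lambda}-\tfrac12$ is needed to render the error terms in $V$-transport equations integrable on $[1,\infty)$. One must therefore choose the weights on ${^{(U)}\nu}$ and $\partial_U(r\phi)$ so that the reductive hierarchy of Section \ref{subsec:extregionIV} closes on the first attempt rather than relying on an $|U_{min}|$ absorption that is not available there.
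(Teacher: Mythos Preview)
Your proposal is correct and follows essentially the same approach as the paper: a bootstrap on asymptotically flat weighted norms for the total solution, closed via $|U_{min}|$-smallness for $U$-transport equations and integrable $V^{-2}$ decay for $V$-transport equations, followed by the affine-length computation for incompleteness of $\mathcal{I}^+$. The paper's version additionally exploits the monotonicity of Raychaudhuri, integrating $\partial_U\log\big({^{(V)}\lambda}/(1-\mu)\big)\leq 0$, to obtain the lower bound on $1-\mu$ before closing the bootstrap, which streamlines the reductive hierarchy you anticipate in your final paragraph; otherwise the arguments coincide.
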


\begin{proof}
By the local existence theory for the scalar field system with data posed on bifurcate null hypersurfaces, global existence on domains of the form $\mathcal{Q}^{(trunc)} \cap \{U \leq -\delta\}$ will follow from pointwise bounds on all quantities at the $C^1$ level of the solution. The bounds we are able to show on ${\partial_U}{^{(U)}\nu}, \ \partial_U^2 \phi$ will degenerate as $\delta \rightarrow 0$, but it will nevertheless follow that the solution extends in the BV class to $\{U=0\}$.

Introduce the norm 
\begin{equation}
    \mathfrak{H} \doteq \sup_{\mathcal{Q}^{(trunc)}}|{\log}{^{(V)}\lambda}| +\sup_{\mathcal{Q}^{(trunc)}}|V^2 {\partial_V}{^{(V)}\lambda}|   +\sup_{\mathcal{Q}^{(trunc)}}|m|+ \sup_{\mathcal{Q}^{(trunc)}}|V^2 \partial_V (r\phi)|+ \sup_{\mathcal{Q}^{(trunc)}}|V^3 \partial_V^2 (r\phi)|,
\end{equation}
as well as the bootstrap assumption
\begin{equation}
    \mathfrak{H} \leq 2 A,
\end{equation}
for a large constant $A$ independent of $U_{min}$. We now turn to improving this assumption, and along the way estimate the remaining double null unknowns. By the local existence theory, it follows that in a characteristic neighborhood of $(U,V) = (U_{min},1)$ we have 
\begin{equation*}
    0 \leq 1-\mu < 1, \ \ \ ^{(U)}\nu < 0, \ \ \ ^{(V)}\lambda > 0.
\end{equation*}
These bounds are required to apply the monotonicity properties of the scalar field system. In particular, the sign conditions on $^{(U)}\nu,\  ^{(V)}\lambda$ imply 
\begin{equation*}
    \inf_{\mathcal{Q}^{(trunc)}} r(U,V) \geq r(0,1) \gtrsim 1,
\end{equation*}
and the bootstrap bound on $^{(V)}\lambda$ gives
\begin{equation*}
    |r(U,V) - r(U,1)| \lesssim C(A)V.
\end{equation*}
Monotonicity goes further, and in the region of local existence the equation
\begin{equation*}
    \partial_U \log \Big(\frac{^{(V)}\lambda}{1-\mu}\Big) = \frac{r}{^{(U)}\nu}(\partial_U \phi)^2 \leq 0
\end{equation*}
implies
\begin{equation*}
    \log \Big(\frac{^{(V)}\lambda}{1-\mu}\Big)(U,V) \leq \log \Big(\frac{^{(V)}\lambda}{1-\mu}\Big)(U_{min},V) \leq C(I),
\end{equation*}
where $C(I)$ is a constant depending on initial data. In particular, applying the assumed positive lower bound on $^{(V)}\lambda$ we see
\begin{equation*}
     1-\mu \gtrsim c(A).
\end{equation*}
With quantitative bounds on $r, ^{(V)}\lambda, m$, and $1-\mu$, appealing to (\ref{SSESF:2:2}) and integrating in $V$ from $\{V=1\}$ gives
\begin{align*}
   \Big| \log \frac{^{(U)}\nu(U,V)}{^{(U)}\nu(U,1)}\Big| &= \int\limits_{(U,1)}^{(U,V)}\frac{m}{1-\mu}\frac{^{(V)}\lambda}{r^2}dV' \\
   &\lesssim C(A) \int\limits_{r(U,1)}^{r(U,V)} \frac{1}{(r')^2}dr' \lesssim C(A).
\end{align*}
Therefore 
\begin{equation*}
  c(A) \lesssim  -^{(U)}\nu(U,V) \lesssim C(A).
\end{equation*}
Integrating ${\partial_U r =} \ {^{(U)}\nu}$ in from $\{U=U_{min}\}$ and choosing $|U_{min}|$ small gives $r \sim V$.

We similarly arrive at bounds on the scalar field quantities $\phi, \ \partial_U \phi$. Applying the fundamental theorem of calculus and the bootstrap bound on $\partial_V (r\phi)$ gives
\begin{align*}
    |(r\phi)(U,V) - (r\phi)(U,1)| \lesssim C(A)\int\limits_{(U,1)}^{(U,V)}\frac{1}{(V')^2}dV' \lesssim C(A),
\end{align*}
and integrating the wave equation (\ref{SSESF:2:6}) in the outgoing direction yields
\begin{align*}
    |\partial_U (r\phi)(U,V) - \partial_U (r\phi)(U,1)| &\lesssim \int\limits_{(U,1)}^{(U,V)} \frac{\mu \ ^{(V)}\lambda|^{(U)}\nu|}{(1-\mu)r^2}|r\phi|(U,V')dV' \lesssim C(A).
\end{align*}
Using bounds on the initial data along $\{V=1\}$, we conclude
\begin{equation*}
    |\partial_U (r\phi)|(U,V) \lesssim C(A).
\end{equation*}
We next discuss how the bootstrap assumptions on ${\log} {^{(V)}\lambda}$ and $V^2 \partial_V (r\phi)$ are improved. Writing (\ref{SSESF:2:1}) as an equation for ${\log} {^{(V)}\lambda}$, estimating the right hand side, and integrating in $U$ yields
\begin{equation}
    |{\log} {^{(V)}\lambda}|(U,V) \leq |{\log} {^{(V)}\lambda}|(U_{min},V) + C(A)|U_{min}|.
\end{equation}
One can pick $A$ sufficiently large such that $\sup_V|{\log} {^{(V)}\lambda}|(U_{min},V) \leq A$. Therefore, for $|U_{min}|$ sufficiently small this estimate improves the bootstrap assumption.

Similarly, estimating the right hand side of (\ref{SSESF:2:6}) and integrating yields
\begin{align*}
    |\partial_V(r\phi)(U,V)-\partial_V(r\phi)(U_{min},V)| \lesssim C(A) \int\limits_{(U_{min},V)}^{(U,V)} \frac{1}{V^2}dU' \lesssim C(A)|U_{min}|V^{-2}.
\end{align*}
Using $|\partial_V(r\phi)(U_{min},V)| \lesssim V^{-2},$ it is clear that choosing $|U_{min}|$ small enough improves the bootstrap assumption.

It remains to estimate ${\partial_U}{^{(U)}\nu}, \ \partial_U^2 \phi$, and improve the bootstrap assumptions on $m, \ {\partial_V}{^{(V)}\lambda}$, and $V^3 \partial_V^2 (r\phi)$. We start with preliminary estimates on $\partial_U \phi, \ \partial_V \phi$. Having already estimated $\partial_U(r\phi)$, $\partial_V(r\phi)$, and $r\phi$, it follows that 
\begin{equation*}
    |\partial_V \phi| \lesssim \frac{1}{r}|\partial_V (r\phi)|  + \frac{^{(V)}\lambda}{r^2}|r \phi| \lesssim C(A)V^{-2},
\end{equation*}
\begin{equation*}
    |\partial_U \phi| \lesssim \frac{1}{r}|\partial_U (r\phi)| + \frac{|^{(U)}\nu|}{r^2}|r\phi| \lesssim C(A) V^{-1}.
\end{equation*}
In conjunction with bounds on $r, ^{(U)}\nu,$ and $1-\mu$ we integrate (\ref{SSESF:2:4}) to give 
\begin{equation*}
    |m(U,V)-m(U_{min},V)| \lesssim C(A)|U_{min}|,
\end{equation*}
and thus 
\begin{equation*}
    |m(U,V)| \lesssim C(I) + C(A)|U_{min}|,
\end{equation*}
improving the bound on $m$ for $U_{min}$ small. Differentiating (\ref{SSESF:2:1}) in $V$ and estimating yields
\begin{align*}
    |{\partial_U \partial_V \log} {^{(V)}\lambda}| &= |\partial_V \Big(\frac{{\mu} {^{(U)}\nu}}{(1-\mu)r} \Big)| \lesssim C(A)V^{-3},
\end{align*}
where we have used (\ref{SSESF:2:2}), (\ref{SSESF:2:7}) to estimate terms involving $V$ derivatives of $^{(U)}\nu, \ \mu$ respectively. Integrating in the $U$ direction from $\{U=U_{min}\}$ and choosing $U_{min}$ sufficiently small improves the bootstrap assumption.

Similarly, differentiating (\ref{SSESF:2:6}) in $V$ and estimating yields
\begin{align*}
    |\partial_U \partial_V^2 (r\phi)| &= |\partial_V\Big(\frac{\mu \ ^{(V)}\lambda \ ^{(U)}\nu}{(1-\mu)r^2}(r\phi)\Big)| \lesssim C(A)V^{-3}.
\end{align*}
Integrating in $U$ improves the bootstrap assumption.

Differentiating (\ref{SSESF:2:2}) in $U$ and estimating gives
\begin{align*}
    |{\partial_V \partial_U \log}{^{(U)}\nu}| &= |\partial_U \Big(\frac{\mu ^{(V)}\lambda}{(1-\mu)r} \Big)| \lesssim C(A)V^{-2}.
\end{align*}
Integrating in $V$ from $\{V=1\}$ implies 
\begin{align*}
    |{\partial_U \log} {^{(U)}\nu}| &\lesssim |{\partial_V \log } {^{(U)}\nu}|(1,U) + C(A) \lesssim C(A).
\end{align*}
Finally, differentiating (\ref{SSESF:2:6}) in $U$ and estimating gives
\begin{align*}
    |\partial_V \partial_U^2 (r\phi)| &= |\partial_U \Big(\frac{\mu ^{(V)}\lambda ^{(U)}\nu}{(1-\mu)r^2}(r\phi)\Big)| \lesssim C(A)V^{-3}.
\end{align*}
Integrating in $V$ from data implies the estimate
\begin{equation*}
    |\partial_U^2(r\phi)| \lesssim |U|^{-1+pk^2},
\end{equation*}
where the singular $U$ dependence arises due to the behavior of the data $\partial_U^2(r\phi)(U,1)$. All quantities at the $C^1$ level have been estimated, implying that for $U_{min}$ sufficiently small the solution extends to $\mathcal{Q}^{(trunc)}$.

It remains to check the global features of the glued solution. It is clear that the outgoing data for $^{(V)}\lambda, \ \partial_V(r\phi)$ along $\{U=U_{min}\}$ is $C^1$, and thus that the glued solution is $C^1$ across $\{V=1\}$ to the past of $\{U=0\}$. Moreover, the solution extends to $\{U=0\}$ as a solution in the BV class.

To see asymptotic flatness, observe that the estimates proven thus far imply that along any constant $U$ surface with $U_{min} \leq U < 0$,
\begin{enumerate}
    \item $\lim_{V \rightarrow \infty} r(U,V) = \infty$.
    \item $\lim_{V\rightarrow \infty} (r\phi)(U,V) $ exists and is finite.
    \item $\lim_{V \rightarrow \infty} m(U,V)$ exists and is finite.
\end{enumerate}
It follows from $|{\partial_V}{^{(U)}\nu}| \lesssim V^{-2}$ that for all $U \in [U_{min},0)$, 
\begin{equation*}
   ^{(U)}\nu_{\infty}(U) \doteq \lim_{V \rightarrow \infty}\nu(U,V) 
\end{equation*}
exists, is finite, satisfies uniform bounds
\begin{equation*}
    0 < c(I) \leq -^{(U)}\nu_{\infty}(U) \leq C(I),
\end{equation*}
for constants $c(I), \ C(I)$ depending on initial data, and converges at a rate
\begin{equation*}
    |^{(U)}\nu(U,V) - ^{(U)}\nu_{\infty}(U)| \lesssim V^{-1}.
\end{equation*}
It similarly follows that 
\begin{equation*}
    |^{(V)}\lambda(U,V) - \frac{1}{2}| \lesssim V^{-2},
\end{equation*}
and 
\begin{equation}
    \label{eq:appasympflatest}
    |^{(U),(V)}\Omega^2(U,V) - 2\ ^{(U)}\nu_{\infty}(U)| \lesssim V^{-1}.
\end{equation}
As discussed in Section \ref{subsec:nakedsingdfn}, incompleteness of null infinity reduces to a statement about the proper time elapsed by ingoing null geodesics near null infinity. Define the vector field 
\begin{equation}
    {X(U,V) =} \ {^{(U),(V)}\Omega^2(U,V)}\frac{\partial}{\partial U}.
\end{equation}
A direct computation using the Christoffel symbols of the connection associated to a Lorentzian metric $-^{(U),(V)}\Omega^2(U,V)dUdV$ shows $\nabla_{X} X = 0$, i.e. $X$ is a parallel vector field. $X$ is the tangent vector to affinely parameterized ingoing null geodesics, and incompleteness of $\mathcal{I}^+$ is thus equivalent to the existence of a constant $C < \infty$ such that 
\begin{equation}
    \limsup_{V \rightarrow \infty}\int\limits_{(U_{min},V)}^{(0,V)}|^{(U),(V)}\Omega^2(U,V)|(U',V) dU' \leq C.
\end{equation}
This is a direct calculation using (\ref{eq:appasympflatest}).
\end{proof}.
\printbibliography[heading=bibintoc]
\end{document}